\numberwithin{equation}{section}
\numberwithin{asmp}{section}
\numberwithin{theorem}{section}
\numberwithin{corollary}{section}
\numberwithin{definition}{section}
\renewcommand{\baselinestretch}{1.5}
\begin{document}

\title{\LARGE An Efficient Iterative Least Squares Algorithm for Large-dimensional Matrix Factor Model via Random Projection}

\author{
	Yong He\footnotemark[1],~~Ran Zhao\footnotemark[1],~~and~Wen-Xin Zhou\footnotemark[2]
	}
\renewcommand{\thefootnote}{\fnsymbol{footnote}}
\footnotetext[1]{Institute of Financial Studies, Shandong University, China. E-mail: {\tt heyong@sdu.edu.cn, Zhaoran@mail.sdu.edu.cn}}
\footnotetext[2]{Department of Mathematical Sciences, University of California, San Diego, USA. E-mail: {\tt wez243@ucsd.edu}}
\maketitle

The matrix factor model has drawn growing attention for its advantage in achieving two-directional dimension reduction simultaneously for matrix-structured observations.	In this paper, we propose a simple iterative least squares algorithm for matrix factor models, in contrast to the Principal Component Analysis (PCA)-based methods in the literature. In detail, we first propose to estimate the latent factor matrices by projecting the observations with two deterministic weight matrices, which are chosen to diversify away the idiosyncratic components. We show that the inferences on factors are still asymptotically valid even if we overestimate both the row/column factor numbers.
We then estimate the row/column loading matrices by minimizing the squared loss function under certain identifiability conditions. The resultant estimators of the loading matrices are treated as the new weight/projection matrices and thus the above update procedure can be iteratively performed until convergence. Theoretically, given the true dimensions of the factor matrices, we derive the convergence rates of the estimators for loading matrices and common components at any  $s$-th step iteration. Additionally, we propose an eigenvalue-ratio method to estimate the pair of factor numbers consistently. Thorough numerical simulations are conducted to investigate the finite-sample performance of the proposed methods and two real datasets associated with financial portfolios and multinational macroeconomic indices are used to illustrate our algorithm's  practical usefulness.
\vspace{2em}

\textbf{Keyword:}   Latent low rank; Least squares; Matrix factor model; Random Projection.
	
	\section{Introduction}

Factor modeling is an extremely popular approach for  dimension reduction in large-dimensional time series analysis, which has been successfully applied to large panels of time series for forecasting macroeconomic variables \citep{stockwatson02JASA}, building low-dimensional indicators of the whole economic activity \citep{stock2002macroeconomic}. In the last two decades, there has been a flourish of literature on large-dimensional factor models; see, for example, \cite{bai2003inferential},\cite{onatski09}, \cite{ahn2013eigenvalue}, \cite{fan2013large},  \cite{Trapani2018A}, \cite{Sahalia2017Using}, \cite{kong2017number}, \cite{Barigozzi2018Simultaneous}, \cite{yu2019robust}, \cite{barigozzi2020consistent}, \cite{Chen2021Quantile}, \cite{He2020large} and  \cite{fan2022learning}.

In economics and finance,  observations are usually well structured to be an array/matrix, such as a time list of tables recording several macroeconomic variables across a number of countries or a series of customers' ratings on a large number of items in an online platform.
In the last few years, the literature has paid increasing attention to factor analysis for matrix time series.  \cite{wang2019factor} for the first time proposed the following factor model for matrix time series:
\begin{equation}\label{MFM}
		\Xb_{t}=\Rb\Fb_{t}\Cb^{\top}+\Eb_{t}, \ t=1,\dots,T,
	\end{equation}
    where $\Rb$ is the $p_{1} \times k_{1}$ row factor loading matrix exploiting the variations of $\Xb_{t}$ across the rows, $\Cb$ is the $p_{2} \times k_{2}$ column factor loading matrix reflecting the differences in the columns of $\Xb_{t}$, $\Fb_{t}$ is the $k_{1} \times k_{2}$ common factor matrix and $\Eb_{t}$ is the idiosyncratic component.
\cite{wang2019factor} proposed
estimators of the factor loading matrices by an eigen-analysis of the auto-cross-covariance matrix;  \cite{fan2021}  proposed an $\alpha$-PCA method by exploiting an eigen-analysis
of a weighted average of the mean and the column (row) covariance matrix of the data;
\citet{Yu2021Projected} proposed a Projection Estimation (PE) method that further improved the estimation efficiency of the factor loading matrices. \cite{He2021Matrix} established the equivalence between minimizing the squared loss and the PE method by \citet{Yu2021Projected} and further proposed a robust method  by replacing the squared loss with the Huber loss. The resultant estimators of factor loading matrices can be simply obtained by an eigen-analysis of weighted sample covariance matrices of the projected data. \cite{he2022matrix} proposed to recover the loading
spaces of the matrix elliptical factor model by an eigen-analysis of the generalized row/column matrix Kendall's tau, which generalizes the multivariate Kendall's tau to the random matrix setting. However, to our knowledge, all the theoretical studies of the matrix factor model in the literature crucially rely on the assumption that the pair of factor numbers $k_1$ and $k_2$ is consistently estimated, which typically requires that the factors are relatively strong, data have weak serial correlation or the number of observations is large. In practical applications, these  requirements may fail to hold due to weak signal-to-noise ratio or  non-stationarity, making the first top eigenvalues of the row/column covariance matrix less separated from the remaining ones; see also the discussions in \cite{fan2022learning} for vector factor models. Over-estimating the number of factors would be a promising remedy as discussed in \cite{moon2015linear,westerlund2015cross,barigozzi2020consistent} for classical vector factor models. The impact of over-estimating the pair of factor numbers for matrix factors models remains unknown.

%As far as we know, there exists no literature on matrix factor models, which discusses the impact of over-estimating the pair of factor numbers.

In this article, we propose a simple iterative least squares algorithm for matrix factor models, in contrast to the Principal Component Analysis (PCA)-based methods in the literature.
In detail, in the first step, we  propose to estimate the latent factor matrices by projecting the matrix observations with two deterministic weight matrices, which are chosen to diversify away the idiosyncratic components. This idea is similar in spirit to that in \cite{fan2022learning} and the estimator does not rely on eigenvectors.
In the second step,  we update the row/column loading matrices by minimizing the squared loss function under the identifiability condition. Then the estimators of the loading matrices are treated as the new weight matrices and we iteratively proceed with the above two steps  until a convergence criterion is reached. The contributions of the current work lie in the following aspects. Firstly,
to our knowledge, this is the first work on matrix factor analysis that does not involve any eigen-decomposition of large matrices. The proposed iterative least squares algorithm is quite simple and computationally efficient, with computational complexity $O\left(Tp_{1}p_{2}\right)$ in contrast to the typical $O\left(Tp_{1}^{2}+Tp_{2}^{2}\right)$ complexity of the eigen-decomposition based methods. Secondly, we show that even if both numbers of row and column factors are over-estimated,  the inferences on factor scores are still asymptotically valid, which is new to the literature on matrix factor models. Thirdly, given the factor numbers are correctly specified, we establish the convergence rates of the estimators for loading matrices and common components at the $s$-th iteration for any $s\geq 1$ under some strong factor identifiability condition. Compared to one-step estimation \citep{Yu2021Projected}, the multi-step estimator is proven to be less sensitive to the initial estimator.
 %which is also the first time for the matrix factor model to our knowledge.
 In addition, the iterative least squares algorithm also reduces the magnitudes of the idiosyncratic error components in each step, thereby increasing the signal-to-noise ratio and enjoying the same advantage as the projection estimation method by \cite{Yu2021Projected}. At last, we introduce an eigenvalue-ratio method to determine the number of factors, which is shown to be consistent.

 The rest of the article is organized as follows. In Section 2, we introduce the factor estimation method via two-directional diversified projections. We derive the consistency of the vectorized factor space even when the factor numbers are over-estimated. In Section 3, we propose the iterative least squares estimators for the loading spaces and derive the convergence rates of the estimators at any  $s$-th step iteration.  We also present the consistency of our model selection criterion. In Section 4, we conduct thorough numerical studies to illustrate the advantages of the proposed methods  over the state-of-the-art methods. In Section 5, we analyze a financial dataset and a multinational macroeconomic indices dataset to illustrate the empirical usefulness of the proposed methods. We discuss possible future research directions and conclude the paper in Section 6. The proofs of the main theorems and additional details are collected in the supplementary materials.

We end this section by introducing some notations that will be used throughout the paper. For any vector $\bmu=(\mu_1,\ldots,\mu_p)^\top \in \RR^p$, let $\|\bmu\|_2=(\sum_{i=1}^p\mu_i^2)^{1/2}$, $\|\bmu\|_\infty=\max_i|\mu_i|$. For a real number $a$, denote  $[a]$ as the largest integer smaller than or equal to $a$. $\otimes$ denotes the Kronecker product. For a matrix $\Ab$, let $\mathrm{A}_{ij}$ (or $\mathrm{A}_{i,j}$) be the $(i,j)$-th entry of $\Ab$, $\Ab^\top$ the transpose of $\Ab$, ${\rm tr}(\Ab)$ the trace of $\Ab$, $\text{rank}(\Ab)$ the rank of $\Ab$, $\Ab^+$ the  Moore-Penrose generalized inverse of $\Ab$ and $\text{diag}(\Ab)$ a vector composed of the diagonal elements of $\Ab$. Let $\text{Vec}(\Ab)$ be the vector obtained by stacking the columns of $\Ab$. Denote $\Pb_{\Ab}$ as the projection matrix $\Pb_{\Ab}=\Ab(\Ab^\top\Ab)^{-1}\Ab^\top$ and $\text{span}(\Ab)$  as the space spanned by the columns of $\Ab$. Denote $\lambda_j(\Ab)$ as the $j$-th largest eigenvalue of a nonnegative definitive matrix $\Ab$, and let $\|\Ab\|_{2}$ be the spectral norm of matrix $\Ab$ and $\|\Ab\|_F$ be the Frobenius norm of $\Ab$. For two series of random variables, $X_n \lesssim Y_n$ means that $X_n=O_p\left(Y_n\right)$, and $X_n \gtrsim Y_n$ means that $Y_n=O_p\left(X_n\right)$. The notation $X_n$ and $Y_n$, $X_n\asymp Y_n$ means $X_n=O_p(Y_n)$ and $Y_n=O_p(X_n)$. The constants $c, C_1, C_2$ in different lines can be different.

\section{Factor Estimation via Two-directional Diversified Projections}\label{sec:factormatrix}
In this section, we introduce a simple two-directional diversified projection method to estimate the factor score matrices. We also investigate the theoretical properties of the estimators under  cases of finite samples and overestimation of the factor numbers.
	\subsection{The estimation of factors}
	In this section, we propose a simple   two-directional diversified projection method to estimate the factor score matrices, which does not involve eigen-decomposition. Let $\Wb_{i}=(\bw_{i,.1},\bw_{i,.2},\dots,\bw_{i,.m_{i}})$ be a given exogenous (or deterministic) $p_{i} \times m_{i}$ matrix $i=1,2$, where $\bw_{i,.j} $, the $j$-th column of the matrix $\Wb_{i}$, is  a vector of ``diversified weights" in the sense that its strength should be approximately equally distributed across most of its components.
We call $\Wb_{1}$ and $\Wb_{2}$ the ``left  projection matrix'' and ``right  projection matrix'', respectively.
We propose to estimate $\Fb_{t}$  simply by
	
\begin{equation}\label{equ:fhat}
\hat{\Fb}_{t}=\dfrac{1}{p_{1}p_{2}}\Wb_{1}^{\top}\Xb_{t}\Wb_{2}.
\end{equation}
	By the matrix factor model (\ref{MFM}), we have
\begin{equation}\label{equ:fhat}
\hat{\Fb}_{t}=\dfrac{1}{p_{1}p_{2}}\Wb_{1}^{\top}\Rb\Fb_{t}\Cb^{\top}\Wb_{2}+\dfrac{1}{p_{1}p_{2}}\Wb_{1}^{\top}\Eb_{t}\Wb_{2}:=\Hb_{1}\Fb_{t}\Hb_{2}^{\top}+\bm{\cE}_{t},
\end{equation}
	where $\Hb_{1}=\Wb_{1}^{\top}\Rb/{p_{1}}$, $\Hb_{2}=\Wb_{2}^{\top}\Cb/{p_{2}}$ and $\bm{\cE}_{t}=\Wb_{1}^{\top}\Eb_{t}\Wb_{2}/{(p_{1}p_{2})}$. Thus $\hat{\Fb}_{t}$ estimates $\Fb_{t}$ up to two affine transformation with $\bm{\cE}_{t}$ as the estimation error. The assumption that $\Wb_{1},\Wb_{2}$ should be diversified guarantees that as $\min(p_{1},p_{2})\to \infty$, $\bm{\cE}_{t}$ is diversified away (converging to zero in probability). We call the new factor matrix estimator ``bi-diversified factors", which  reduces the dimensions of $\Xb_{t}$ from $p_1\times p_2$ to $m_1\times m_2$ by two-directional diversified projections. Due to the clean expansion (\ref{equ:fhat}), the mathematics for theoretical analysis is much simpler than the most benchmark
estimators. Intuitively, $\hat{\Fb}_{t}$ would lead to valid inferences in
factor-augmented models so long as $m_i\geq k_i, i=1,2$, in the same spirit as \cite{fan2022learning} for vector factor model,
and we leave this to future work as the matrix factor-augmented model is still in its infancy.

	\subsection{Theoretical analysis for the estimators of Factors}
In this section, we investigate the theoretical properties of the estimators  of factors under finite sample cases and  the overestimation of the factor numbers. We assume that
   the predetermined constants $m_{i},i=1,2$  do not grow with $p_{i}$, which are named as ``the working (pseudo) numbers of row (column) factors''. Since in practice we do not know the true number of factors, we often take slightly large numbers $m_{i}$ such that $m_{i} \geq k_{i}$ are likely to hold. Let $\Wb_{i}$  be either deterministic or random but independent of the $\sigma$-algebra generated by $\{\Eb_{t}:t\in [T]\}$. We further assume that the projection matrices $\Wb_{i}, i=1,2$ satisfy the following:
	
	\begin{asmp} \label{Assumption 1}
		There are positive constants $c_{1}$ and $c_{2}$, such that (almost surely if $\Wb_{1}, \Wb_{2}$ are random) as $p_{1}, p_{2} \to \infty$,
		
		(1) $\max \limits_{i \leq p_{1}}|w_{1,ij}| < c_{1}$, $\max \limits_{i \leq p_{2}}|w_{2,ij}| < c_{2}$;
		
		(2) the $m_{1} \times m_{1}$ matrix $\Wb_{1}^{\top}\Wb_{1}/p_{1}$ and $m_{2} \times m_{2}$ matrix $\Wb_{2}^{\top}\Wb_{2}/p_{2}$ satisfy $\lambda_{\min}(\Wb_{1}^{\top}\Wb_{1}/p_{1}) \geq c_{1}$,  $\lambda_{\min}(\Wb_{2}^{\top}\Wb_{2}/{p_{2}}) \geq c_{2}$;
		
		(3) $\Wb_{1},\Wb_{2}$ are independent of $\Eb_{t},t \in [T]$.
	\end{asmp}
	
	Vectorizing the matrix $\hat{\Fb}_{t}$ in (\ref{equ:fhat}), we have $$\text{Vec}(\hat{\Fb}_{t})=(\Hb_{2}\otimes\Hb_{1})\text{Vec}(\Fb_{t})+\dfrac{1}{p_{1}p_{2}}(\Wb_{2}\otimes\Wb_{1})^{\top}\text{Vec}(\Eb_{t}):=\Hb\text{Vec}(\Fb_{t})+\dfrac{1}{p_{1}p_{2}}\Wb^{\top}\text{Vec}(\Eb_{t}),$$ where $\Hb=\Hb_{2}\otimes\Hb_{1}$ and $\Wb=\Wb_{2}\otimes\Wb_{1}$.
Therefore,  $\text{Vec}(\hat{\Fb}_{t})$ can be treated as an estimate of $\text{Vec}(\Fb_{t})$ up to  a transformation matrix, where $\Hb \in \RR^{m_{1}m_{2} \times k_{1}k_{2}}$, with estimation error equal to $\Wb^{\top}\text{Vec}(\Eb_{t})/(p_{1}p_{2})$. When each element of $\Eb_{t}$ is cross-sectional weakly dependent, Assumption \ref{Assumption 1} guarantees the cross-sectional central limit theorem of $\Wb^{\top}\text{Vec}(\Eb_{t})/(p_{1}p_{2})$. For example, assume each element of $\Eb_{t}$ is cross-sectional independent, under Assumption \ref{Assumption 1}, as $p_{1}p_{2} \to \infty$, we get

 \begin{equation}\label{equ:ane}
 	\dfrac{1}{\sqrt{p_{1}p_{2}}}\Wb^{\top}\text{Vec}(\Eb_{t}) \stackrel{d}{\rightarrow} N(\bm{0},\Vb),
\end{equation}
where $\Vb=\lim_{p_{1}p_{2}\to \infty}\Wb^{\top}\text{Var}(\text{Vec}(\Eb_{t}))\Wb/(p_{1}p_{2})$, assuming it exists. The convergence (\ref{equ:ane}) shows that for each $t \leq T$, $\sqrt{p_{1}p_{2}}(\text{Vec}(\hat{\Fb}_{t})-\Hb\text{Vec}(\Fb_{t}))$ is asymptotically normal regardless of whether $T$ goes to infinity,  $m_{1}=k_{1}, m_{2}=k_{2}$ or not. It only requires $p_{1}p_{2}\to \infty$, which is particularly useful for analyzing short matrix time series.
	
	In addition, the factor components should not be diversified away, which entails the following conditions on transformation matrix $\Hb_{i}, i=1,2$. Let $\nu_{\min}(\Hb_{i}), \nu_{\max}(\Hb_{i})$ denote the minimum and maximum nonzero singular values of $\Hb_{i}$, respectively.
	
	\begin{asmp} \label{Assumption 2}
		Suppose $m_{1} \geq k_{1}, m_{2} \geq k_{2}$. Almost surely
		
		(1) $\text{rank}(\Hb_{1})=k_{1}$,  $\text{rank}(\Hb_{2})=k_{2}$;
		
		(2) there exist constants $c_{1}$ and $c_{2}$ such that $$\nu_{\min}^{2}(\Hb_{i}) \gg \dfrac{1}{p_{i}}, \, \nu_{\max}(\Hb_{i}) \leq c_{1}\nu_{\min}(\Hb_{i}), \ \ i=1,2.$$
	\end{asmp}
	
	Assumption \ref{Assumption 2} (1) requires $\Wb_{1}$ to have at least $k_{1}$ columns that are not orthogonal to $\Rb$ and $\Wb_{2}$ to have at least $k_{2}$ columns that are not orthogonal to $\Cb$ so that $\Rb$ and $\Cb$ are not diversified away. It also ensures that the space spanned by $\text{Vec}(\hat{\Fb}_{t})$ is asymptotically equal to the space spanned by $\text{Vec}(\Fb_{t})$. This is the key assumption imposed, but not stringent in the context of over-estimating factors \citep{barigozzi2020consistent,fan2022learning}. Assumption \ref{Assumption 2} (2) determines the rate of convergence in recovering the space spanned by the factors and ensures that the weight matrix and the loading matrix are not orthogonal. Assume $\nu_{\min}(\Wb_{1}^{\top}\Rb)=p_{1}^{\alpha_{1}},  \nu_{\min}(\Wb_{2}^{\top}\Cb)=p_{2}^{\alpha_{2}}$,  then  Assumption \ref{Assumption 2} (2) entails that $\alpha_{1}, \alpha_{2} \geq 1/2$.	
	Assumption \ref{Assumption 1} and Assumption \ref{Assumption 2} are direct generalizations of Assumptions  2.1 and 2.2 in \cite{fan2022learning} to the matrix factor models.

	For the matrix factor model, once we vectorize the observations, the model reduces to a vector factor model
	$$\Xb=\Fb(\Cb \otimes \Rb)^{\top}+\Eb,$$
	where $\Fb=
		(\text{Vec}(\Fb_{1}), \text{Vec}(\Fb_{2}), \ldots, \text{Vec}(\Fb_{T}))^{\top}$,
	$\Xb=(\text{Vec}(\Xb_{1}), \text{Vec}(\Xb_{2}), \ldots, \text{Vec}(\Xb_{T}))^{\top}$.
	$
	\Eb=(\text{Vec}(\Eb_{1}), \text{Vec}(\Eb_{2}),$ $ \ldots, \text{Vec}(\Eb_{T}))^{\top}$. By (\ref{equ:fhat}), we have
	$$\hat{\Fb}=\Fb\Hb^{\top}+\bm{\cE},
	$$
	where
	$\hat{\Fb}=(\text{Vec}(\hat{\Fb}_{1}), \text{Vec}(\hat{\Fb}_{2}), \ldots, \text{Vec}(\hat{\Fb}_{T}))^{\top}$, $\bm{\cE}=(\text{Vec}(\bm{\cE}_{1}), \text{Vec}(\bm{\cE}_{2}),\ldots, \text{Vec}(\bm{\cE}_{T}))^{\top}$.
	To derive the theoretical properties of the factor estimators, we further impose the following assumptions.
	
	\begin{asmp} \label{Assumption 3}
	There are constants $c,C$ such that\\
	(1) $p_2^{-1}\|\EE(\Eb_{t}\Eb_{t}^{\top})\|_{2}\leq c$; \\
	(2) $ 1/{T}\sum_{s=1}^{T}\sum_{T=1}^{T}\EE\|\Fb_{t}\|_{F}\|\Fb_{s}\|_{F}\text{tr}(\EE(\Eb_{t}\Eb_{s}^{\top}|\Fb)) < c$;\\	
	(3) $\lambda_{\min}\left(\EE\left(\Eb^{\top}\Eb/{T}\right) \right) \geq c_{0}, \EE\|\EE\left(\text{Vec}(\Eb_{t})\text{Vec}(\Eb_{t})^{\top}|\Fb\right)\|_{2} <c$; \\
	(4) $c < \lambda_{\min}(\sum_{t=1}^{T}\text{Vec}(\Fb_{t})\text{Vec}(\Fb_{t})^{\top}/{T}) \leq \lambda_{\max}(\sum_{t=1}^{T}\text{Vec}(\Fb_{t})\text{Vec}(\Fb_{t})^{\top}/{T}) < C$, a.s.;\\	
	(5) for any $t\in[T], i_{1},i_{1} \in [p_{1}], j_{1},j_{2} \in [p_{2}]$,  $\sum_{s=1}^{T}\sum_{u_{1},v_{1}=1}^{p_{1}}\sum_{u_{2},v_{2}=1}^{p_{2}}|\text{Cov}(e_{t,i_{1}i_{2}}e_{t,j_{1}j_{2}},e_{s,u_{1}u_{2}}$ $e_{s,v_{1}v_{2}})| \leq c$.
\end{asmp}

Assumption \ref{Assumption 3} requires weak dependence among idiosyncratic errors, which is common in the literature; see, for example, \cite{He2021Vector}, \cite{fan2021} and \cite{Yu2021Projected}. The following theorem establishes the convergence of $\hat{\Fb}_{t}$ to the true ${\Fb}_{t}$ up to transformation matrices, regardless of whether $T\rightarrow\infty$ or overestimating the factor numbers.
    \begin{theorem} \label{Convergence}
   	Suppose Assumptions \ref{Assumption 1} (1), \ref{Assumption 1} (3),  \ref{Assumption 2} and \ref{Assumption 3} (1) hold. As  $\min\{p_{1},p_{2}\} \to \infty$ and $T$ is finite or $T \to \infty$. Then for any bounded $m_{1} \geq k_{1}$, $m_{2} \geq k_{2}$, for any $t \in [T]$,
   	$$\big\|\Mb_{1}^{\top}\hat{\Fb}_{t}\Mb_{2}-\Fb_{t}\big\|_{2}=O_{p}\left(\dfrac{1}{\sqrt{p_{1}}}\nu_{\min}^{-1}(\Hb_{1})\nu_{\min}^{-1}(\Hb_{2})\right),$$
   	where $\Mb_{1}=(\Hb_{1}\Hb_{1}^{\top})^{+}\Hb_{1}\in \RR^{m_{1} \times k_{1}}$, $\Mb_{2}=(\Hb_{2}\Hb_{2}^{\top})^{+}\Hb_{2}\in \RR^{m_{2} \times k_{2}}$.
   \end{theorem}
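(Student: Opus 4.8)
The plan is to show that $\Mb_1$ and $\Mb_2$ act as exact left inverses of $\Hb_1$ and $\Hb_2$, so that the ``signal'' term in $\Mb_1^{\top}\hat\Fb_t\Mb_2$ collapses to $\Fb_t$ identically and the entire error is the projected idiosyncratic term $\Mb_1^{\top}\bm{\cE}_t\Mb_2$; the theorem then reduces to an $L^2$ bound on $\bm{\cE}_t$. To set this up, take the thin singular value decomposition $\Hb_i=\Ub_i\bm{\Sigma}_i\Vb_i^{\top}$ with $\Ub_i\in\RR^{m_i\times k_i}$ having orthonormal columns, $\bm{\Sigma}_i\in\RR^{k_i\times k_i}$ diagonal carrying the nonzero singular values of $\Hb_i$ (there are $k_i$ of them since $\text{rank}(\Hb_i)=k_i$ by Assumption \ref{Assumption 2}(1), which is exactly what makes the identity below exact), and $\Vb_i\in\RR^{k_i\times k_i}$ orthogonal. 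Then $\Hb_i\Hb_i^{\top}=\Ub_i\bm{\Sigma}_i^{2}\Ub_i^{\top}$, so $(\Hb_i\Hb_i^{\top})^{+}=\Ub_i\bm{\Sigma}_i^{-2}\Ub_i^{\top}$ and hence $\Mb_i=(\Hb_i\Hb_i^{\top})^{+}\Hb_i=\Ub_i\bm{\Sigma}_i^{-1}\Vb_i^{\top}$, from which $\Mb_i^{\top}\Hb_i=\Vb_i\Vb_i^{\top}=\Ib_{k_i}$ and $\|\Mb_i\|_2=\|\bm{\Sigma}_i^{-1}\|_2=\nu_{\min}^{-1}(\Hb_i)$.

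Next, substitute $\hat\Fb_t=\Hb_1\Fb_t\Hb_2^{\top}+\bm{\cE}_t$ with $\bm{\cE}_t=\Wb_1^{\top}\Eb_t\Wb_2/(p_1p_2)$ and use $\Mb_1^{\top}\Hb_1=\Ib_{k_1}$, $\Hb_2^{\top}\Mb_2=\Ib_{k_2}$ to obtain the exact identity $\Mb_1^{\top}\hat\Fb_t\Mb_2-\Fb_t=\Mb_1^{\top}\bm{\cE}_t\Mb_2$. Submultiplicativity of the spectral norm and $\|\bm{\cE}_t\|_2\le\|\bm{\cE}_t\|_F$ then give
\begin{equation*}
\big\|\Mb_1^{\top}\hat\Fb_t\Mb_2-\Fb_t\big\|_2\le\|\Mb_1\|_2\,\|\bm{\cE}_t\|_F\,\|\Mb_2\|_2=\nu_{\min}^{-1}(\Hb_1)\,\nu_{\min}^{-1}(\Hb_2)\,\|\bm{\cE}_t\|_F ,
\end{equation*}
so it remains to prove $\|\bm{\cE}_t\|_F=O_p\big((p_1p_2)^{-1/2}\big)$; the stated rate follows by dividing through, and this also accommodates random $\Wb_1,\Wb_2$ because $\nu_{\min}(\Hb_i)$ is then random but is carried outside the stochastic-order symbol.

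For the error bound, $\bm{\cE}_t=\Wb_1^{\top}\Eb_t\Wb_2/(p_1p_2)$ is an $m_1\times m_2$ matrix with $m_1,m_2$ bounded, so it suffices to control each entry $(\bm{\cE}_t)_{ab}=(p_1p_2)^{-1}\bw_{1,.a}^{\top}\Eb_t\bw_{2,.b}$ in $L^2$. Conditioning on $\Wb_1,\Wb_2$ (independent of $\{\Eb_t\}$) and using that the idiosyncratic components are centered,
\begin{align*}
\EE\big[(\bw_{1,.a}^{\top}\Eb_t\bw_{2,.b})^2\mid\Wb_1,\Wb_2\big]
&=\sum_{i,i'=1}^{p_1}\sum_{j,j'=1}^{p_2}w_{1,ia}w_{1,i'a}w_{2,jb}w_{2,j'b}\,\text{Cov}(e_{t,ij},e_{t,i'j'})\\
&\le c_1^{2}c_2^{2}\sum_{i,i'=1}^{p_1}\sum_{j,j'=1}^{p_2}\big|\text{Cov}(e_{t,ij},e_{t,i'j'})\big| ,
\end{align*}
where the inequality uses the entrywise bounds of Assumption \ref{Assumption 1}(1). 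The weak cross-sectional dependence of the idiosyncratic errors (Assumption \ref{Assumption 3}(2i)) bounds the last double sum by $Cp_1p_2$, whence $\EE[\|\bm{\cE}_t\|_F^2\mid\Wb_1,\Wb_2]\le C'm_1m_2(p_1p_2)^{-1}$ almost surely; a conditional Markov inequality plus the tower rule give $\|\bm{\cE}_t\|_F=O_p\big((p_1p_2)^{-1/2}\big)$, and since this bound is free of $t$ it holds whether $T$ is fixed or diverges. Combining with the display above finishes the proof.

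The linear-algebra step (the identity $\Mb_i^{\top}\Hb_i=\Ib_{k_i}$ and the norm $\|\Mb_i\|_2=\nu_{\min}^{-1}(\Hb_i)$) is routine; the substance is the $L^2$ bound on $\bm{\cE}_t$, which is where the ``diversification'' hypothesis is indispensable. The obstacle is that the crude estimate $\|\Wb_1^{\top}\Eb_t\Wb_2\|_2\le\|\Wb_1\|_2\,\|\Eb_t\|_2\,\|\Wb_2\|_2=O\big((p_1p_2)^{1/2}\big)\,\|\Eb_t\|_2$ loses a factor $\|\Eb_t\|_2\asymp\max\{p_1,p_2\}^{1/2}$ and is useless here; one must instead use the entrywise boundedness of each weight column so that $\bw_{1,.a}^{\top}\Eb_t\bw_{2,.b}$ genuinely averages the $e_{t,ij}$'s, and only then does weak dependence deliver the $(p_1p_2)^{-1/2}$ rate. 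One further tacit ingredient worth flagging is $\text{rank}(\Hb_i)=k_i$: without it $\Mb_i^{\top}\Hb_i$ is only an orthogonal projection of rank below $k_i$, and the leading term would not reduce exactly to $\Fb_t$.
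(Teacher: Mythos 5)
Your reduction is exactly the paper's: using $\mathrm{rank}(\Hb_i)=k_i$ you obtain the exact identity $\Mb_1^{\top}\hat{\Fb}_t\Mb_2-\Fb_t=\Mb_1^{\top}\bm{\cE}_t\Mb_2$ together with $\|\Mb_i\|_2=\nu_{\min}^{-1}(\Hb_i)$, so everything hinges on $\|\bm{\cE}_t\|=O_p\big((p_1p_2)^{-1/2}\big)$; your remark that the full-rank condition (Assumption \ref{Assumption 2} (1)) is tacitly needed is fair, since the paper uses it in the same way without listing it. The one step that does not hold as written is your justification of the error bound: you bound $\sum_{i,i'}\sum_{j,j'}\big|\text{Cov}(e_{t,ij},e_{t,i'j'})\big|$ by $Cp_1p_2$ and attribute this to Assumption \ref{Assumption 3} (2i). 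That condition only involves $\text{tr}\big(\EE(\Eb_t\Eb_s^{\top}\mid\Fb)\big)=\sum_{i,j}\EE(e_{t,ij}e_{s,ij}\mid\Fb)$, i.e.\ same-index products, time-averaged and weighted by $\|\Fb_t\|_F\|\Fb_s\|_F$; it says nothing about covariances across distinct pairs $(i,j)\neq(i',j')$, so the claimed $Cp_1p_2$ bound does not follow from it. An absolute-summability condition of the kind you invoke appears only in Assumption \ref{Assumption 5} (2), which is not among the hypotheses of this theorem.

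The gap is easily repaired, and the repair is essentially the paper's own route (Lemma \ref{lemma1} (1)), which rests on the spectral-norm condition $\|\EE(\Eb_t\Eb_t^{\top})\|_2\le c$ of Assumption \ref{Assumption 3} (2ii) --- the theorem's citation of (2i) appears to be a typo in the paper. Indeed, for fixed weight columns,
$$\EE\big[(\bw_{1,.a}^{\top}\Eb_t\bw_{2,.b})^{2}\big]=\bw_{1,.a}^{\top}\,\EE\big[\Eb_t\bw_{2,.b}\bw_{2,.b}^{\top}\Eb_t^{\top}\big]\,\bw_{1,.a}\le \|\bw_{1,.a}\|_2^{2}\,\|\bw_{2,.b}\|_2^{2}\,\big\|\EE(\Eb_t\Eb_t^{\top})\big\|_2=O(p_1p_2),$$
using only Assumption \ref{Assumption 1} (1) for $\|\bw_{i,.j}\|_2^{2}=O(p_i)$ (and Cauchy--Schwarz for the middle inequality); dividing by $(p_1p_2)^{2}$ and summing over the boundedly many entries gives $\EE\|\bm{\cE}_t\|_F^{2}=O\big((p_1p_2)^{-1}\big)$, hence the stated rate for each $t$, whether $T$ is fixed or diverges, and without needing mean-zero errors or any covariance summability. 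With this substitution your argument coincides with the paper's proof; the rest of your write-up (the SVD computation of $\Mb_i$, the submultiplicativity step, and the observation that the crude bound through $\|\Eb_t\|_2$ would lose a factor) is correct.
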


The following theorem shows that the linear space spanned by $\hat{\Fb}$  equals to the linear space spanned by $\Fb$ asymptotically.
	
	\begin{theorem} \label{Vec}
		Suppose Assumptions \ref{Assumption 1}-\ref{Assumption 3} hold. For any bounded $m_{1} \geq k_{1}$, $m_{2} \geq k_{2}$, we have as $\min\{p_{1},p_{2}\} \to \infty$ that
		$$\|\Pb_{\hat{\Fb}}\Pb_{\Fb}-\Pb_{\Fb}\|_{2}=O_{p}\left(\dfrac{1}{\sqrt{p_{1}p_{2}}}\nu_{\min}^{-1}(\Hb_{1})\nu_{\min}^{-1}(\Hb_{2})\right),$$
		$$\|\Pb_{\hat{\Fb}\Mb}-\Pb_{\Fb}\|_{2}=O_{p}\left(\dfrac{1}{\sqrt{p_{1}p_{2}}}\nu_{\min}^{-1}(\Hb_{1})\nu_{\min}^{-1}(\Hb_{2})\right),$$
		where  $\Mb=(\Hb\Hb^{\top})^{+}\Hb$ and $\Hb=\Hb_{2}\otimes\Hb_{1}$.
		
	\end{theorem}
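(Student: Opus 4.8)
The plan is to reduce both assertions to elementary column-space perturbation arguments, the only stochastic ingredient being a crude $L^{2}$ bound on the projected noise matrix $\bm{\cE}=(\text{Vec}(\bm{\cE}_{1}),\dots,\text{Vec}(\bm{\cE}_{T}))^{\top}$. I first record three deterministic facts, writing $\Hb=\Hb_{2}\otimes\Hb_{1}$, $\Gb:=\Fb\Hb^{\top}$, and $\sigma_{\min}(\cdot)$ for the smallest nonzero singular value. Since $\text{rank}(\Hb_{i})=k_{i}$ by Assumption \ref{Assumption 2}(1), $\Hb$ has full column rank $k_{1}k_{2}$, so $\Hb^{\top}$ maps $\RR^{m_{1}m_{2}}$ onto $\RR^{k_{1}k_{2}}$; hence $\text{span}(\Gb)=\text{span}(\Fb)$ and $\Pb_{\Gb}=\Pb_{\Fb}$. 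Moreover $\Mb=(\Hb\Hb^{\top})^{+}\Hb=\Hb(\Hb^{\top}\Hb)^{-1}$, so $\Hb^{\top}\Mb$ equals the $k_{1}k_{2}\times k_{1}k_{2}$ identity, $\Gb\Mb=\Fb$, and $\|\Mb\|_{2}=\nu_{\min}^{-1}(\Hb)=\nu_{\min}^{-1}(\Hb_{1})\nu_{\min}^{-1}(\Hb_{2})$ by the Kronecker identity $\nu_{\min}(\Hb_{2}\otimes\Hb_{1})=\nu_{\min}(\Hb_{1})\nu_{\min}(\Hb_{2})$. Finally, for a unit vector $v\perp\ker(\Gb)=\ker(\Hb^{\top})$ one has $\|\Gb v\|_{2}\ge\sigma_{\min}(\Fb)\|\Hb^{\top}v\|_{2}\ge\sigma_{\min}(\Fb)\,\nu_{\min}(\Hb)$, and since $\sigma_{\min}^{2}(\Fb)=T\,\lambda_{\min}\big(\sum_{t}\text{Vec}(\Fb_{t})\text{Vec}(\Fb_{t})^{\top}/T\big)\ge cT$ by Assumption \ref{Assumption 3}(1), this gives $\sigma_{\min}(\Gb)\ge\sqrt{cT}\,\nu_{\min}(\Hb_{1})\nu_{\min}(\Hb_{2})$.

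Next I would establish $\|\bm{\cE}\|_{2}\le\|\bm{\cE}\|_{F}=O_{p}(\sqrt{T}/\sqrt{p_{1}p_{2}})$. Conditioning on $\Wb_{1},\Wb_{2}$, which are independent of $\{\Eb_{t}\}$ by Assumption \ref{Assumption 1}(3), we have $\EE\|\bm{\cE}_{t}\|_{F}^{2}=(p_{1}p_{2})^{-2}\,\text{tr}\big(\Wb_{1}^{\top}\EE[\Eb_{t}\Wb_{2}\Wb_{2}^{\top}\Eb_{t}^{\top}]\Wb_{1}\big)$; bounding $\Eb_{t}\Wb_{2}\Wb_{2}^{\top}\Eb_{t}^{\top}\preceq\|\Wb_{2}\|_{2}^{2}\,\Eb_{t}\Eb_{t}^{\top}$ and invoking $\|\EE(\Eb_{t}\Eb_{t}^{\top})\|_{2}\le c$ from Assumption \ref{Assumption 3}(2ii), together with $\|\Wb_{1}\|_{F}^{2}=O(p_{1})$ and $\|\Wb_{2}\|_{2}^{2}\le\|\Wb_{2}\|_{F}^{2}=O(p_{2})$ from Assumption \ref{Assumption 1}(1), gives $\EE\|\bm{\cE}_{t}\|_{F}^{2}=O((p_{1}p_{2})^{-1})$; summing over $t\in[T]$ and applying Markov's inequality completes this step. (This $L^{2}$ bound is exactly the one obtained inside the proof of Theorem \ref{Convergence}.)

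The two perturbation estimates then follow. For the first, let $\Pb_{\hat{\Fb}}^{\perp}$ denote the orthogonal projection onto $\text{span}(\hat{\Fb})^{\perp}$; then $\Pb_{\hat{\Fb}}\Pb_{\Fb}-\Pb_{\Fb}=-\Pb_{\hat{\Fb}}^{\perp}\Pb_{\Fb}$, and from $\hat{\Fb}=\Gb+\bm{\cE}$ and $\Pb_{\hat{\Fb}}^{\perp}\hat{\Fb}=0$ we get $\Pb_{\hat{\Fb}}^{\perp}\Gb=-\Pb_{\hat{\Fb}}^{\perp}\bm{\cE}$, so for a unit $v\in\text{span}(\Fb)=\text{span}(\Gb)$ the choice $u=\Gb^{+}v$ satisfies $\Gb u=v$ and $\|u\|_{2}\le\sigma_{\min}^{-1}(\Gb)$, whence $\|\Pb_{\hat{\Fb}}^{\perp}v\|_{2}=\|\Pb_{\hat{\Fb}}^{\perp}\bm{\cE}u\|_{2}\le\|\bm{\cE}\|_{2}\,\sigma_{\min}^{-1}(\Gb)=O_{p}\big(\tfrac{1}{\sqrt{p_{1}p_{2}}}\nu_{\min}^{-1}(\Hb_{1})\nu_{\min}^{-1}(\Hb_{2})\big)$ after the $\sqrt{T}$'s cancel. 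For the second, $\hat{\Fb}\Mb=\Fb+\bm{\cE}\Mb$ with $\|\bm{\cE}\Mb\|_{2}\le\|\bm{\cE}\|_{2}\,\nu_{\min}^{-1}(\Hb_{1})\nu_{\min}^{-1}(\Hb_{2})$, which is $o_{p}(\sigma_{\min}(\Fb))$ because Assumption \ref{Assumption 2}(2) forces $\nu_{\min}^{-1}(\Hb_{i})=o(\sqrt{p_{i}})$; hence $\Fb$ and $\hat{\Fb}\Mb$ have the same rank $k_{1}k_{2}$ with probability tending to one, and Wedin's $\sin\Theta$ theorem yields $\|\Pb_{\hat{\Fb}\Mb}-\Pb_{\Fb}\|_{2}\le C\,\|\bm{\cE}\Mb\|_{2}/\sigma_{\min}(\Fb)$ for an absolute constant $C$, again equal to $O_{p}\big(\tfrac{1}{\sqrt{p_{1}p_{2}}}\nu_{\min}^{-1}(\Hb_{1})\nu_{\min}^{-1}(\Hb_{2})\big)$ since $\sigma_{\min}(\Fb)\asymp\sqrt{T}$. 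The one genuinely delicate step is the $L^{2}$ control of $\bm{\cE}$ — the quadratic form in the idiosyncratic errors must be handled under only the weak-dependence hypotheses of Assumption \ref{Assumption 3} with possibly random $\Wb_{i}$ — while everything else is Kronecker-product and singular-value bookkeeping.
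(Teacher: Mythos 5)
Your proof is correct, but it follows a genuinely different route from the paper's. The paper works through its Proposition S2.1: it first lower-bounds $\lambda_{\min}\bigl(\tfrac{1}{T}\Mb^{\top}\hat{\Fb}^{\top}\hat{\Fb}\Mb\bigr)$ (to make $\Pb_{\hat{\Fb}\Mb}$ well defined), then uses the finer moment bounds of its Lemma S2.1 — in particular $\|\tfrac{1}{T}\Fb^{\top}\bm{\cE}\|_{2}$ and the concentration of $\tfrac{1}{T}\bm{\cE}^{\top}\bm{\cE}$ — to control the Gram matrix $\tfrac{1}{T}\Mb^{\top}\hat{\Fb}^{\top}\hat{\Fb}\Mb$ and its inverse, expands $\Pb_{\hat{\Fb}\Mb}-\Pb_{\Fb}$ into three explicit terms, and finally deduces the first display from the second via $\Pb_{\hat{\Fb}}\Pb_{\hat{\Fb}\Mb}=\Pb_{\hat{\Fb}\Mb}$ and the triangle inequality. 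You instead prove both displays by direct subspace-perturbation arguments: the second via $\hat{\Fb}\Mb=\Fb+\bm{\cE}\Mb$, Weyl (which simultaneously settles well-definedness, playing the role of Proposition S2.1) and a Wedin/$\sin\Theta$ bound against $\sigma_{\min}(\Fb)\gtrsim\sqrt{T}$; the first independently, via the identity $\Pb_{\hat{\Fb}}^{\perp}\Gb=-\Pb_{\hat{\Fb}}^{\perp}\bm{\cE}$ with $\Gb=\Fb\Hb^{\top}$ and the lower bound $\sigma_{\min}(\Gb)\geq\sigma_{\min}(\Fb)\,\nu_{\min}(\Hb_{1})\nu_{\min}(\Hb_{2})$. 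The only stochastic input you need is the crude bound $\|\bm{\cE}\|_{2}=O_{p}(\sqrt{T/(p_{1}p_{2})})$, which you derive correctly from Assumptions \ref{Assumption 1}(1),(3) and \ref{Assumption 3}(2ii); you do not use Assumption \ref{Assumption 3}(3)--(4), Lemma S2.1(3)--(4), or Proposition S2.1, and your argument does not require $T\to\infty$ (the paper's Proposition S2.1 does). What the paper's route buys in exchange is the collection of intermediate quantitative facts — the rate for $\tfrac{1}{\sqrt{T}}\|\hat{\Fb}\Mb-\Fb\|_{2}$, for $\tfrac{1}{T}\|\Fb^{\top}(\hat{\Fb}\Mb-\Fb)\|_{2}$, and for the Gram-matrix inverse — which are reusable elsewhere and mirror the standard factor-model machinery, whereas your argument is shorter, more elementary, and slightly more general in its hypotheses. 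The only places to be explicit in a polished write-up are (i) that $\Fb$ has full column rank almost surely (Assumption \ref{Assumption 3}(1)), which you need both for $\ker(\Gb)=\ker(\Hb^{\top})$ and for $\sigma_{\min}(\Fb)\geq\sqrt{cT}$, and (ii) the precise form of the Wedin-type inequality on the event $\|\bm{\cE}\Mb\|_{2}\leq\sigma_{\min}(\Fb)/2$, whose probability tends to one by Assumption \ref{Assumption 2}(2); both are routine.
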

	
	Theorem \ref{Vec} establishes that when $m_{1} \geq k_{1}$, $m_{2} \geq k_{2}$, the linear space spanned by $\hat{\Fb}$ is asymptotically the same as the linear space spanned by $\Fb$. Theorem \ref{Vec} also shows that a particular subspace of $\text{span}{(\hat\Fb)}$ is consistent for $\text{span}{(\Fb)}$. In particular, when $m_{1}=k_{1}, m_{2}=k_{2}$, we have $\Pb_{\hat{\Fb}\Mb}=\Pb_{\hat{\Fb}}$ as $\Mb$ is invertible. It then degenerates to the usual space consistency.
	
Consistent estimation of the number of factors $k_{1}, k_{2}$ typically requires strong conditions, which are difficult to fulfill in finite samples case. One  advantage of the proposed method for estimating the factor matrices is that it is still robust against overestimating the number of factors in many statistical inference problems.
	As for the choices of weight matrices, one can select $\Wb_{1}$ and $\Wb_{2}$ following the same strategies by \cite{fan2022learning}, such as the Hadamard Projection. By Theorem \ref{Vec}, $\{\hat{\Fb}_{t}, t=1,\ldots,T\}$ would lead to valid inferences in
factor-augmented models so long as $m_i\geq k_i, i=1,2$
and we leave this to our future work as the matrix factor-augmented model is still in its infancy.

\section{Iterative Least Squares Estimators for Loading Spaces}
In this section, we introduce the iterative least squares estimators for the column/row loading spaces. We also derive the  convergence rates of the estimators for loading matrices at the $s$-th iteration (for any $s\geq 1$) provided that the pair of factor numbers are correctly specified. In case that the pair of factor numbers are unknown, we propose a novel eigenvalue-ratio method  to estimate $k_1$ and $k_2$.

\subsection{The Estimation of Loading Spaces}

In Section \ref{sec:factormatrix}, we introduce the way to estimate the factor matrices with two diversified projection matrices $\Wb_1,\Wb_2$ and denote the estimators in (\ref{equ:fhat}) as  $\{\hat{\Fb}_{t}^{(1)}, t=1,\ldots,T\}$, which are of dimension $m_1\times m_2$ with $m_1\geq k_1,m_2\geq k_2$.
Given $\{\hat{\Fb}_{t}^{(1)}\}$, it is straightforward to estimate the row factor loading matrix $\Rb$ by minimizing the squared Frobenius loss under the identifiability condition:

\begin{equation}\label{M2.1}
	\begin{aligned}
		& \min \limits_{\Rb} L_{1}(\Rb)=\min \limits_{\Rb} \dfrac{1}{T}\sum_{t=1}^{T}\|\Xb_{t}-\Rb\hat{\Fb}_{t}^{(1)}\Wb_{2}^{\top}\|_{F}^{2},\\
		& \text{s.t}~~	\dfrac{1}{p_{1}}\Rb^{\top}\Rb=\Ib_{m_{1}}, m_{1} \geq k_{1},	
	\end{aligned}
\end{equation}
where $\Wb_2$ is the column projection matrix.
The objective function  in (\ref{M2.1}) can be simplified as $$L_{1}(\Rb)= \dfrac{1}{T}\sum_{t=1}^{T}\Big[\text{tr}\big(\Xb_{t}^{\top}\Xb_{t}\big)-2\text{tr}\big(\Xb_{t}^{\top}\Rb\hat{\Fb}_{t}^{(1)}\Wb_{2}^{\top}\big)+p_{1}\text{tr}\big(\Wb_{2}\hat{\Fb}_{t}^{(1)\top}\hat{\Fb}_{t}^{(1)}\Wb_{2}^{\top}\big)\Big],$$
and the Lagrangian function is
$$\min \limits_{\Rb,\bTheta}\cL_{1}(\Rb,\bTheta)=\min \limits_{\Rb,\bTheta}\left\{L_{1}(\Rb)+ \text{tr}\left
[\bTheta\left(\dfrac{1}{p_{1}}\Rb^{\top}\Rb-\Ib_{m_{1}}\right)\right]\right\},$$
where the Lagrangian multipliers $\bTheta$ is a symmetric matrix.
Taking  $\partial \cL_{1}(\Rb,\bTheta)/ \partial \Rb =0$ and $\partial \cL_{1}(\Rb,\bTheta) / \partial \bTheta =0$, we obtain
\begin{equation}\label{equ:derivative}
	\dfrac{\partial \cL_{1}(\Rb,\bTheta)}{\partial \Rb}= \dfrac{1}{T}\sum_{t=1}^{T}\big(-2\Xb_{t}\Wb_{2}\hat{\Fb}_{t}^{(1)\top}+\dfrac{2}{p_{1}}\Rb\bTheta\big)=0,\ \ \dfrac{\partial \cL_{1}(\Rb,\bTheta)}{\partial \bTheta}=\dfrac{1}{p_{1}}\Rb^{\top}\Rb-\Ib_{m_{1}}=0.
\end{equation}
Further, we can derive the explicit expression for $\hat\bTheta$ and $\hat{\Rb}$ satisfying (\ref{equ:derivative}), that is,
$$\hat \bTheta=\dfrac{\sqrt{p_{1}}}{T}\left[\left(\sum_{t=1}^{T}\hat{\Fb}_{t}^{(1)}\Wb_{2}^{\top}\Xb_{t}^{\top}\right)\left(\sum_{t=1}^{T}\Xb_{t}\Wb_{2}\hat{\Fb}_{t}^{(1)\top}\right)\right]^{1/2},$$
\begin{equation}\label{Rhat1}
	\hat{\Rb}^{(1)}=\sqrt{p_{1}}\left(\sum_{t=1}^{T}\Xb_{t}\Wb_{2}\hat{\Fb}_{t}^{(1)\top}\right)\left[\left(\sum_{t=1}^{T}\hat{\Fb}_{t}^{(1)}\Wb_{2}^{\top}\Xb_{t}^{\top}\right)\left(\sum_{t=1}^{T}\Xb_{t}\Wb_{2}\hat{\Fb}_{t}^{(1)\top}\right)\right]^{-1/2},
\end{equation}
and we set $\hat{\Rb}^{(1)}$ as the one-step estimator of  the row loading matrix. Similarly, once we get the estimator $\hat{\Rb}^{(1)}$, we can obtain $\hat{\Cb}^{(1)}$ by minimizing the following loss function under the identifiability condition:
$$\begin{aligned}
	& \min \limits_{\Cb} L_{2}(\Cb)=\min \limits_{\Cb} \dfrac{1}{T}\sum_{t=1}^{T}\big\|\Xb_{t}-\hat{\Rb}^{(1)}\hat{\Fb}_{t}^{(1)}\Cb^{\top}\big\|_{F}^{2},\\
	& \text{s.t}	\dfrac{1}{p_{2}}\Cb^{\top}\Cb=\Ib_{m_{2}}, m_{2} \geq k_{2},
\end{aligned}$$
with the Lagrangian function
$$\min \limits_{\Cb,\bLambda} \cL_{2}(\Cb,\bLambda)=\min \limits_{\Cb,\bLambda}\left\{\dfrac{1}{T}\sum_{t=1}^{T}\big\|\Xb_{t}-\hat{\Rb}^{(1)}\hat{\Fb}_{t}^{(1)}\Cb^{\top}\big\|_{F}^{2}+\text{tr}\left[\bLambda\left(\dfrac{1}{p_{2}}\Cb^{\top}\Cb-\Ib_{m_{2}}\right)\right]\right\},$$
we get the following estimator of the column factor loading matrix
\begin{equation}\label{Chat1} \hat{\Cb}^{(1)}=\sqrt{p_{2}}\left(\sum_{t=1}^{T}\Xb_{t}^{\top}\hat{\Rb}^{(1)}\hat{\Fb}_{t}^{(1)}\right)\left[\left(\sum_{t=1}^{T}\hat{\Fb}_{t}^{(1)\top}\hat{\Rb}^{(1)\top}\Xb_{t}\right)\left(\sum_{t=1}^{T}\Xb_{t}^{\top}\hat{\Rb}^{(1)}\hat{\Fb}_{t}^{(1)}\right)\right]^{-1/2}.
\end{equation}
Given $\hat{\Cb}^{(1)}$,  we update the estimator of $\Fb_{t}$ as $\hat{\Fb}_{t}^{(2)}=\hat{\Rb}^{(1)\top}\Xb_{t}\hat{\Cb}^{(1)}/(p_{1}p_{2})$, thereby  updating the estimators of $\Rb$ and $\Cb$. In detail,  update the estimator of $\Rb$ as $\hat{\Rb}^{(2)}$ by replacing $\hat{\Fb}_{t}^{(1)}, \Wb_{2}$ in  (\ref{Rhat1}) with $\hat{\Fb}_{t}^{(2)}, \hat{\Cb}^{(1)}$, respectively; update the estimator of  $\Cb$ as $\hat{\Cb}^{(2)}$ by replacing $\hat{\Fb}_{t}^{(1)}, \hat{\Rb}^{(1)}$ in  (\ref{Chat1}) with $\hat{\Fb}_{t}^{(2)}, \hat{\Rb}^{(2)}$, respectively;
we repeat the above steps until a convergence criterion is met.
At the $(s+1)$-th iteration, the estimators $\hat{\Fb}_{t}^{(s+1)}$, $\hat{\Rb}^{(s+1)}$ and $\hat{\Cb}^{(s+1)}$ have the following expressions:
$$\hat{\Fb}_{t}^{(s+1)}=\dfrac{1}{p_{1}p_{2}}\hat{\Rb}^{(s)\top}\Xb_{t}\hat{\Cb}^{(s)},$$
$$\hat{\Rb}^{(s+1)}=\sqrt{p_{1}}\left(\sum_{t=1}^{T}\Xb_{t}\hat{\Cb}^{(s)}\hat{\Fb}_{t}^{(s+1)\top}\right)\left[\left(\sum_{t=1}^{T}\hat{\Fb}_{t}^{(s+1)}\hat{\Cb}^{(s)\top}\Xb_{t}^{\top}\right)\left(\sum_{t=1}^{T}\Xb_{t}\hat{\Cb}^{(s)}\hat{\Fb}_{t}^{(s+1)\top}\right)\right]^{-1/2},$$
$$\hat{\Cb}^{(s+1)}=\sqrt{p_{2}}\left(\sum_{t=1}^{T}\Xb_{t}^{\top}\hat{\Rb}^{(s+1)}\hat{\Fb}_{t}^{(s+1)}\right)\left[\left(\sum_{t=1}^{T}\hat{\Fb}_{t}^{(s+1)\top}\hat{\Rb}^{(s+1)\top}\Xb_{t}\right)\left(\sum_{t=1}^{T}\Xb_{t}^{\top}\hat{\Rb}^{(s+1)}\hat{\Fb}_{t}^{(s+1)}\right)\right]^{-1/2},$$
where $\hat{\Rb}^{(s)}$ and $\hat{\Cb}^{(s)}$ are the estimators from the $s$-th iteration. The Random Projection-based Iterative Least Squares (RPILS) procedure  for the matrix factor model is summarized in Algorithm \ref{alg:1} below and the theoretical analysis is presented in the following section.

As for the convergence criterion, denote the common component matrix at the $(s+1)$-th step as $\hat{\Sbb}^{(s+1)}_{t}=\hat{\Rb}^{(s)}\hat{\Fb}_{t}^{(s+1)}\hat{\Cb}^{(s)\top}$.
In our simulation studies, the iterative procedure is terminated either when a pre-specified maximum iteration number ($\text{maxiter}=100$) is reached or  when $$\|\hat{\Sbb}^{(s+1)}_{t}-\hat{\Sbb}^{(s)}_{t}\|_{F} \leq \epsilon,$$
where $\epsilon$ is a small constant ($10^{-6}$) given in advance.
\begin{algorithm}
	\renewcommand{\algorithmicrequire}{\textbf{Input:}}
	\renewcommand{\algorithmicensure}{\textbf{Output:}}
	\caption{Random Projection based Iterative Least Squares (RPILS) procedure for matrix factor model}
	\label{alg:1}
	\begin{algorithmic}[1]
		\Require Data matrices $\{\Xb_{t}\},t \leq T$, the pair of pseudo row and column factor numbers $m_{1} (m_{1} \geq k_{1}$) and $m_{2} ( m_{2} \geq k_{2})$, the diversified projection matrices $\Wb_{1},\Wb_{2}$
		\Ensure Factor loading matrices $\tilde{\Rb}\in \RR^{p_1\times m_1}, \tilde{\Cb}\in\RR^{p_2\times m_2}$ and factor matrix $\tilde{\Fb}_{t}\in\RR^{m_1\times m_2},t \leq T$
		\State obtain the initial estimator $\hat{\Fb}_{t}^{(1)}$ by  $\hat{\Fb}_{t}^{(1)}=\Wb_{1}^{\top}\Xb_{t}\Wb_{2}/(p_{1}p_{2})$;
		\State given $\hat{\Fb}_{t}^{(1)}$ and $\Wb_{2}$, get an estimator $\hat{\Rb}^{(1)}$ by Equation (\ref{Rhat1}); given $\hat{\Fb}_{t}^{(1)}$ and $\hat{\Rb}^{(1)}$, get an estimator $\hat{\Cb}^{(1)}$ by Equation (\ref{Chat1});
		\State update $\hat{\Fb}_{t}^{(2)}$ by $\hat{\Fb}_{t}^{(2)}=\hat{\Rb}^{(1)\top}\Xb_{t}\hat{\Cb}^{(1)}/(p_{1}p_{2})$;
		\State update $\hat{\Rb}^{(2)}$ by replacing $\hat{\Fb}_{t}^{(1)}, \Wb_{2}$ in the Equation (\ref{Rhat1}) with $\hat{\Fb}_{t}^{(2)}, \hat{\Cb}^{(1)}$, respectively; update $\hat{\Cb}^{(2)}$ by replacing $\hat{\Fb}_{t}^{(1)}, \hat{\Rb}^{(1)}$ in the Equation (\ref{Chat1}) with $\hat{\Fb}_{t}^{(2)}, \hat{\Rb}^{(2)}$, respectively;
		\State repeat steps 3-4 until convergence and output the estimators from the last step  denoted as $\tilde{\Rb}$, $\tilde{\Cb}$ and $\{\tilde{\Fb}_{t},t \leq T\}$, respectively.
		
	\end{algorithmic}
\end{algorithm}

   \subsection{Theoretical Properties}\label{sec:3.2}
	
	In this section, we establish the convergence rates of the estimators of loading matrices at  $s$-th iteration for any $s \geq 1$. To this end, we first impose some additional  conditions that are common in the literature.
	
    \begin{asmp} \label{Assumption 4}

    The factor matrix satisfies $\EE(\Fb_{t})=\bm{0}, \EE\|\Fb_{t}\|^{4} \leq c <\infty$ for some constant $c>0$, and $$\dfrac{1}{T}\sum_{t=1}^{T}\Fb_{t}\Fb_{t}^{\top}\stackrel{p}{\rightarrow}\bm{\Sigma}_{1}\ ~~\text{and}~~ \ \dfrac{1}{T}\sum_{t=1}^{T}\Fb_{t}^{\top}\Fb_{t}\stackrel{p}{\rightarrow}\bm{\Sigma}_{2},$$ where $\bm{\Sigma}_{i}, i=1,2$ is a $k_{i}\times k_{i}$ positive definite matrix with bounded eigenvalues.
    \end{asmp}

    \begin{asmp} \label{Assumption 5}
    There exist positive constants $c,c_{0} <\infty$ such that
    (1) $\EE(e_{t,ij})=0.$
    (2) for any $t \in [T], i \in [p_{1}], j \in [p_{2}]$,
    $\sum_{s=1}^{T} \sum_{l=1}^{p_{1}} \sum_{h=1}^{p_{2}}|\EE(e_{t,ij}e_{s,lh})| \leq c \  \text{and} \ \sum_{l=1}^{p_{1}}\sum_{h=1}^{p_{2}}|\EE(e_{t,lj}e_{t,ih})| \leq c.$
    (3) for any $T\in[T], i_{1},i_{1} \in [p_{1}], j_{1},j_{2} \in [p_{2}]$,  $\sum_{s=1}^{T}\sum_{u_{1},v_{1}=1}^{p_{1}}\sum_{u_{2},v_{2}=1}^{p_{2}}|\text{Cov}(e_{t,i_{1}i_{2}}e_{t,j_{1}j_{2}},e_{s,u_{1}u_{2}}e_{s,v_{1}v_{2}})| \leq c$.
    	
    \end{asmp}

    \begin{asmp}\label{Assumption 6}
    There exist positive constants $c_{1}, c_{2}$ such that $\|\Rb\|_{\max} \leq c_{1}, \|\Cb\|_{\max} \leq c_{2}$.
    \end{asmp}

    \begin{asmp}\label{Assumption 7}
    There exists a constant $c>0$ such that
    (1) for any deterministic vectors $\bv$ and $\bw$ satisfying $\|\bv\|=1$ and $\|\bw\|=1$, $\EE\big\|\dfrac{1}{\sqrt{T}}\sum_{t=1}^{T}(\Fb_{t}\bv^{\top}\Eb_{t}\bw)\big\|^{2}\leq c;$
    (2) for any $i\in [p_{1}], j \in [p_{2}]$,  $\|\sum_{i^{\prime}=1}^{p_{1}}\sum_{j^{\prime}=1}^{p_{2}}\EE(\bxi_{i,j}\otimes\bxi_{i^{\prime},j^{\prime}})\|_{\max} \leq c$, where $\bxi_{i,j}=\text{Vec}(T^{-1/2}\sum_{t=1}^{T}\Fb_{t}e_{t,ij})$;
    (3) for any $i_{1}, i_{2} \in [p_{1}], j_{1}, j_{2} \in [p_{2}]$, $\|\sum_{i_{1}^{\prime},i_{1}^{\prime}=1}^{p_{1}}\sum_{j_{1}^{\prime},j_{2}^{\prime}=1}^{p_{2}}$ $\text{Cov}(\bxi_{i_{1},j_{1}}\otimes\bxi_{i_{2},j_{2}},\bxi_{i_{1}^{\prime},j_{1}^{\prime}}\otimes\bxi_{i_{2}^{\prime},j_{2}^{\prime}})\|_{\max} \leq c.$
    \end{asmp}

  Assumptions \ref{Assumption 4}--\ref{Assumption 7}  are standard in the literature on matrix factor models; see for example \cite{fan2021}, \cite{Yu2021Projected} and \cite{He2021Vector}. Assumption \ref{Assumption 5} ensures the (cross-sectional and time series) summability of the idiosyncratic terms $\Eb_t$, which allows for (weak) dependence in both space and time domains.
 Assumption \ref{Assumption 6} requires that the common factors are pervasive.  Assumption \ref{Assumption 7} allows the common factors $\Fb_t$ and errors $\Eb_t$ to be weakly correlated, which is satisfied, e.g., when  $\{\Fb_t\}$ and $\{\Eb_t\}$  are two mutually independent groups. One may refer to the detailed discussions on these assumptions by \cite{He2021Vector}. In the following theorem, we first establish the convergence rates of the  one-step estimator $\hat{\Rb}^{(1)}$ and  $\hat{\Cb}^{(1)}$ defined in (\ref{Rhat1}) and (\ref{Chat1}), respectively. We suppose that $m_{1}=k_{1}, m_{2}=k_{2}$ in this section and the factor numbers $k_1,k_2$ are fixed. In Section \ref{numbers}, we proposed a method to estimate the numbers of factors.

	\begin{theorem}\label{no_iter_convergence}
	Under Assumptions \ref{Assumption 1} (1), \ref{Assumption 1} (3), \ref{Assumption 2}, Assumptions \ref{Assumption 4}-\ref{Assumption 7}, further assume that $\nu_{\min}^{2}(\Hb_{2}) \gg \max({1}/{T}, {1}/{p_{2}})$, as $T, p_{1}, p_{2}$ go to infinity,  there exists an asymptotic orthogonal matrix $\hat{\Hb}_{r}^{(1)}$, such that
	\begin{equation}\label{Rhatrate}
		\dfrac{1}{p_{1}}\|\hat{\Rb}^{(1)}-\Rb\hat{\Hb}_{r}^{(1)}\|_{F}^{2}=O_{p}\left(\dfrac{1}{Tp_{2}\nu_{\min}^{2}(\Hb_{2})}+\dfrac{1}{Tp_{1}p_{2}\nu_{\min}^{2}(\Hb_{1})\nu_{\min}^{4}(\Hb_{2})}+\dfrac{1}{p_{1}^{2}p_{2}^{2}\nu_{\min}^{2}(\Hb_{1})\nu_{\min}^{4}(\Hb_{2})}\right),
	\end{equation}
	where $\Hb_{1}=\Wb_{1}^{\top}\Rb/{p_{2}}, \Hb_{2}=\Wb_{2}^{\top}\Cb/{p_{2}}$.
	
\end{theorem}

    The condition $\nu_{\min}^{2}(\Hb_{2}) \gg \max({1}/{T}, {1}/{p_{2}})$ guarantees that the matrices $\hat{\Hb}_{r}$ are asymptotic orthogonal. This condition in essence requires that the space spanned by the columns of the initial projection matrix $\Wb_2$ does not deviate far from that spanned by the columns of $\Cb$, and fails to hold if the two spaces are orthogonal. This is conceivable that the iterative algorithm would never converge to the true space if we start from its orthogonal space.
Theorem \ref{no_iter_convergence} shows that the closer the initial projection directions (space spanned by the columns of $\Wb_2$) are to the true loading directions (space spanned by the columns of $\Cb$), the faster the estimated loading matrix $\hat\Rb^{(1)}$ converges to the true loading matrix $\Rb$ up to an orthogonal matrix. In particular, if $\nu_{\min}(\Hb_{1})=\nu_{\min}(\Hb_{2})=O_{p}(1)$ (as long as we start from the consistent $\alpha$-PCA estimators in \cite{fan2021} as projection directions), then we have  $$\dfrac{1}{p_{1}}\|\hat{\Rb}^{(1)}-\Rb\hat{\Hb}_{r}^{(1)}\|_{F}^{2}=O_{p}\left(\dfrac{1}{Tp_{2}}+\dfrac{1}{p_{1}^{2}p_{2}^{2}}\right).$$

In the following theorem, we establish the convergence rate of the one-step estimator $\hat\Cb^{(1)}$ defined in (\ref{Chat1}).

   \begin{theorem}\label{convergence of an iteration}
   Under the same conditions as in Theorem \ref{no_iter_convergence},  there exists an asymptotic orthogonal matrix $\hat{\Hb}_{c}^{(1)}$, such that
    $$w_{c}^{(1)}=\dfrac{1}{p_{2}}\|\hat{\Cb}^{(1)}-\Cb\hat{\Hb}_{c}^{(1)}\|_{F}^{2}=O_{p}\left(\dfrac{1}{Tp_{1}}+\dfrac{1}{p_{1}^{2}p_{2}^{2}\nu_{\min}^{2}(\Hb_{1})\nu_{\min}^{2}(\Hb_{2})}+\dfrac{1}{T^{2}p_{2}^{2}\nu_{\min}^{2}(\Hb_{2})}+\dfrac{1}{Tp_{1}p_{2}\nu_{\min}^{2}(\Hb_{1})\nu_{\min}^{2}(\Hb_{2})}\right).$$
  % $$\|\hat{\bC}^{(1)}_{j\cdot}-\hat{\Hb}_{c}^{(1)\top}\bC_{j\cdot}\|_{2}^{2}=O_{p}\left(\dfrac{1}{Tp_{1}}+\dfrac{1}{p_{1}^{2}p_{2}^{2}\nu_{\min}^{2}(\Hb_{1})\nu_{\min}^{2}(\Hb_{2})}+\dfrac{1}{T^{2}p_{2}^{2}\nu_{\min}^{2}(\Hb_{2})}+\dfrac{1}{Tp_{1}p_{2}\nu_{\min}^{2}(\Hb_{1})\nu_{\min}^{2}(\Hb_{2})}\right), j\in[p_{2}].$$
   In addition, $w_{r}^{(1)}=\dfrac{1}{p_{1}}\|\hat{\Rb}^{(1)}-\Rb\hat{\Hb}_{r}^{(1)}\|_{F}^{2}$ is the rate derived in (\ref{Rhatrate}).
   \end{theorem}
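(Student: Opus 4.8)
The plan is to rerun the argument behind the rate for $\hat{\Cb}$ in Theorem~\ref{no_iter_convergence}, but with the exogenous projection $\Wb_{1}$ replaced by the estimated loading $\hat{\Rb}^{(1)}$, and then to absorb the extra error coming from the fact that, unlike $\Wb_{1}$, the matrix $\hat{\Rb}^{(1)}$ is correlated with $\{\Eb_{t}\}$. Since $\hat{\Cb}^{(1)}$ in (\ref{Chat1}) satisfies $\hat{\Cb}^{(1)\top}\hat{\Cb}^{(1)}/p_{2}=\Ib_{m_{2}}$, I would write $\sum_{t}\Xb_{t}^{\top}\hat{\Rb}^{(1)}\hat{\Fb}_{t}^{(1)}=\hat{\Cb}^{(1)}\bm{B}$ with the symmetric positive definite $\bm{B}=p_{2}^{-1/2}[(\sum_{t}\hat{\Fb}_{t}^{(1)\top}\hat{\Rb}^{(1)\top}\Xb_{t})(\sum_{t}\Xb_{t}^{\top}\hat{\Rb}^{(1)}\hat{\Fb}_{t}^{(1)})]^{1/2}$, substitute $\Xb_{t}=\Rb\Fb_{t}\Cb^{\top}+\Eb_{t}$, and obtain $\hat{\Cb}^{(1)}=\Cb\hat{\Hb}_{c}^{(1)}+(\sum_{t}\Eb_{t}^{\top}\hat{\Rb}^{(1)}\hat{\Fb}_{t}^{(1)})\bm{B}^{-1}$ with $\hat{\Hb}_{c}^{(1)}:=(\sum_{t}\Fb_{t}^{\top}\Rb^{\top}\hat{\Rb}^{(1)}\hat{\Fb}_{t}^{(1)})\bm{B}^{-1}$. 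Then $w_{c}^{(1)}\le p_{2}^{-1}\|\bm{B}^{-1}\|^{2}\,\|\sum_{t}\Eb_{t}^{\top}\hat{\Rb}^{(1)}\hat{\Fb}_{t}^{(1)}\|_{F}^{2}$, and it remains to (a) lower bound $\bm{B}$, (b) upper bound the numerator, and (c) check that $\hat{\Hb}_{c}^{(1)}$ is asymptotically orthogonal, the last being standard once (a)--(b) give the rate, by combining $\hat{\Cb}^{(1)\top}\hat{\Cb}^{(1)}/p_{2}=\Ib_{m_{2}}$ with the pervasiveness of $\Cb$. The second assertion of the theorem, that $w_{r}^{(1)}$ equals the rate in (\ref{Rhatrate}), is just a restatement of Theorem~\ref{no_iter_convergence}.

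For (a), using $\hat{\Rb}^{(1)}=\Rb\hat{\Hb}_{r}^{(1)}+\bm{\Delta}_{r}$ with $\hat{\Hb}_{r}^{(1)}$ asymptotically orthogonal and $\frac{1}{p_{1}}\|\bm{\Delta}_{r}\|_{F}^{2}=w_{r}^{(1)}=o_{p}(1)$ (Theorem~\ref{no_iter_convergence}), together with $\hat{\Fb}_{t}^{(1)}=\Hb_{1}\Fb_{t}\Hb_{2}^{\top}+\bm{\cE}_{t}$ and Assumption~\ref{Assumption 4}, one gets $\hat{\Rb}^{(1)\top}\Rb/p_{1}\to$ a well-conditioned limit, hence $\lambda_{\min}(\bm{B}^{2})\gtrsim p_{1}^{2}T^{2}\nu_{\min}^{2}(\Hb_{1})\nu_{\min}^{2}(\Hb_{2})$ with high probability and $\|\bm{B}^{-1}\|^{2}=O_{p}(p_{1}^{-2}T^{-2}\nu_{\min}^{-2}(\Hb_{1})\nu_{\min}^{-2}(\Hb_{2}))$. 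This is precisely where the rate improves on that for $\hat{\Cb}$: there the analogous normalization involves $\Wb_{1}^{\top}\Rb/p_{1}=\Hb_{1}$, only bounded below by $\nu_{\min}^{2}(\Hb_{1})$, producing the extra factor $\nu_{\min}^{-2}(\Hb_{1})$, whereas $\hat{\Rb}^{(1)}$ removes it. For (b), expand $\hat{\Rb}^{(1)}$ and $\hat{\Fb}_{t}^{(1)}$ inside $\sum_{t}\Eb_{t}^{\top}\hat{\Rb}^{(1)}\hat{\Fb}_{t}^{(1)}$ into four pieces: (i) $\sum_{t}\Eb_{t}^{\top}\Rb\hat{\Hb}_{r}^{(1)}\Hb_{1}\Fb_{t}\Hb_{2}^{\top}$, a clean cross term handled exactly as in Theorem~\ref{no_iter_convergence} via Assumptions~\ref{Assumption 5} and \ref{Assumption 7} (since $\Eb_{t}^{\top}\Rb$ pairs a deterministic loading against the errors), of order $O_{p}(Tp_{1}p_{2}\nu_{\min}^{2}(\Hb_{1})\nu_{\min}^{2}(\Hb_{2}))$ and contributing the $(Tp_{1})^{-1}$ term; (ii) the quadratic-in-$\Eb_{t}$ term $\frac{1}{p_{1}p_{2}}\sum_{t}\Eb_{t}^{\top}\Rb\hat{\Hb}_{r}^{(1)}\Wb_{1}^{\top}\Eb_{t}\Wb_{2}$, bounded by the summability in Assumption~\ref{Assumption 5}, contributing the $(p_{1}^{2}p_{2}^{2}\nu_{\min}^{2}(\Hb_{1})\nu_{\min}^{2}(\Hb_{2}))^{-1}$ term; and (iii)--(iv) the two pieces carrying $\bm{\Delta}_{r}$, namely $\sum_{t}\Eb_{t}^{\top}\bm{\Delta}_{r}\Hb_{1}\Fb_{t}\Hb_{2}^{\top}$ and $\frac{1}{p_{1}p_{2}}\sum_{t}\Eb_{t}^{\top}\bm{\Delta}_{r}\Wb_{1}^{\top}\Eb_{t}\Wb_{2}$, to be controlled via $\frac{1}{p_{1}}\|\bm{\Delta}_{r}\|_{F}^{2}=w_{r}^{(1)}$ and, after multiplying by $p_{2}^{-1}\|\bm{B}^{-1}\|^{2}$ and inserting (\ref{Rhatrate}), yielding the remaining terms, including the genuinely new $(T^{2}p_{2}^{2}\nu_{\min}^{2}(\Hb_{2}))^{-1}$. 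Adding the four contributions gives the stated rate for $w_{c}^{(1)}$.

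The main obstacle is pieces (iii)--(iv): because $\hat{\Rb}^{(1)}$ is built from the same data, $\bm{\Delta}_{r}$ is correlated with every $\Eb_{t}$, so one cannot simply pull $\|\bm{\Delta}_{r}\|$ out of the $t$-sum at the cost of a factor $T$ (a crude triangle inequality over $t$ loses a $\sqrt{T}$ and is too lossy). Instead one should exploit time-domain cancellation — either by conditioning on $\bm{\Delta}_{r}$ and using that it lies in an $O(1)$-dimensional random subspace, so that the relevant error functionals are controlled uniformly through Assumption~\ref{Assumption 7}, or by substituting the representation $\bm{\Delta}_{r}=(\sum_{s}\Eb_{s}\Wb_{2}\hat{\Fb}_{s}^{(1)\top})\bm{B}_{r}^{-1}$ from the proof of Theorem~\ref{no_iter_convergence} and handling the resulting higher-order products of idiosyncratic errors with the covariance-summability conditions in Assumptions~\ref{Assumption 5}(3) and \ref{Assumption 7}(3). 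The accompanying bookkeeping — checking that $\|\hat{\Rb}^{(1)}\|_{\max}$ stays bounded so that $\hat{\Rb}^{(1)}$ may legitimately play the role of $\Wb_{1}$, and verifying via the standing conditions $\nu_{\min}^{2}(\Hb_{2})\gg\max(1/T,1/p_{2})$ and $\nu_{\min}^{2}(\Hb_{i})\gg 1/p_{i}$ that the $\bm{\Delta}_{r}$-terms contribute nothing larger than $(T^{2}p_{2}^{2}\nu_{\min}^{2}(\Hb_{2}))^{-1}$ — is the more routine remainder.
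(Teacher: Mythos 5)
Your plan follows essentially the same route as the paper's own proof: the same least-squares representation with the normalizer $\left(\sum_{t}\hat{\Fb}_{t}^{\top}\hat{\Rb}^{(1)\top}\Xb_{t}\right)\left(\sum_{t}\Xb_{t}^{\top}\hat{\Rb}^{(1)}\hat{\Fb}_{t}\right)$ bounded below at order $T^{2}p_{1}^{2}p_{2}\nu_{\min}^{2}(\Hb_{1})\nu_{\min}^{2}(\Hb_{2})$, the same split $\hat{\Rb}^{(1)}=\Rb\hat{\Hb}_{r}^{(1)}+\bm{\Delta}_{r}$ inside the error terms, and the same remedy for the correlation between $\bm{\Delta}_{r}$ and $\{\Eb_{t}\}$, namely substituting the explicit expansion of $\bm{\Delta}_{r}$ from the proof of Theorem \ref{no_iter_convergence} and invoking the covariance-summability assumptions, which is exactly what the paper's Lemma \ref{need-in-proof} carries out. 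Apart from the cosmetic choice of absorbing the cross term (the paper's $\bdelta_{3}^{(1)}$) into the rotation $\hat{\Hb}_{c}^{(1)}$ rather than into the error, your decomposition, normalizer bound, and rate accounting coincide with the paper's argument, and the second assertion is indeed just a restatement of Theorem \ref{no_iter_convergence}.
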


   Assume that $\nu_{\min}(\Hb_{1})=\nu_{\min}(\Hb_{2})=O_{p}(1)$, then the derived convergence rate of $\hat{\Cb}^{(1)}$ is
   $$\dfrac{1}{p_{2}}\|\hat{\Cb}^{(1)}-\Cb\hat{\Hb}_{c}^{(1)}\|_{F}^{2}=O_{p}\left(\dfrac{1}{Tp_{1}}+\dfrac{1}{p_{1}^{2}p_{2}^{2}}+\dfrac{1}{T^{2}p_{2}^{2}}\right),$$
   which is the same as that derived in Corollary 3.1 of \cite{Yu2021Projected}.

   As long as we get the estimators of row/column loading matrices, i.e., $\hat{\Rb}^{(1)}$ and $\hat{\Cb}^{(1)}$, the update of the factor matrix can be obtained by $\hat{\Fb}_{t}^{(2)}=\hat{\Rb}^{(1)\top}\Xb_{t}\hat{\Cb}^{(1)}/(p_{1}p_{2})$ and the corresponding common-components matrix is then updated by $\hat{\Sbb}^{(2)}_{t}=\hat{\Rb}^{(1)}\hat{\Fb}_{t}^{(2)}\hat{\Cb}^{(1)\top}=(\hat{S}_{t,ij}^{(2)})$. The following theorem provides the convergence rates of the estimated factors and common components after one iteration.
   \begin{theorem}\label{factor and common component}
   Under the same conditions as in Theorem \ref{no_iter_convergence}, as $\min\{T,p_{1},p_{2}\}\to \infty$, for any $t\in [T]$, $i\in[p_{1}]$ and $j\in[p_{2}]$, we have
   $$ \left\|\hat{\Fb}_{t}^{(2)}-(\hat{\Hb}_{r}^{(1)})^{-1}\Fb_{t}\left((\hat{\Hb}_{c}^{(1)})^{-1}\right)^{\top}\right\|_{F}=O_{p}\left(\dfrac{1}{\sqrt{T}p_{1}}+\dfrac{1}{\sqrt{T}p_{2}\nu_{\min}(\Hb_{2})}+\dfrac{1}{\sqrt{p_{1}p_{2}}}+\gamma_{f}\right),$$
   where $$\gamma_{f}=\dfrac{1}{\sqrt{Tp_{1}p_{2}}\nu_{\min}(\Hb_{1})\nu_{\min}(\Hb_{2})}+\dfrac{1}{p_{1}p_{2}\nu_{\min}(\Hb_{1})\nu_{\min}^{2}(\Hb_{2})},$$
   and for the common components, we have
   $$ |\hat{S}_{t,ij}^{(2)}-S_{t,ij}|=O_{p}\left(\dfrac{1}{\sqrt{Tp_{1}}}+\dfrac{1}{\sqrt{p_{1}p_{2}}}+\dfrac{1}{\sqrt{Tp_{2}}\nu_{\min}(\Hb_{2})}+\dfrac{1}{\sqrt{Tp_{1}p_{2}}\nu_{\min}(\Hb_{1})\nu_{\min}^{2}(\Hb_{2})}+\dfrac{1}{p_{1}p_{2}\nu_{\min}(\Hb_{1})\nu_{\min}^{2}(\Hb_{2})}\right),$$
   where $S_{t,ij}$ is the $(i,j)$-th entry of $\Sbb_{t}=\Rb\Fb_t\Cb^\top$.
   \end{theorem}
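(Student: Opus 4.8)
The plan is to start from the closed form $\hat{\Fb}_{t}^{(2)}=(p_{1}p_{2})^{-1}\hat{\Rb}^{(1)\top}\Xb_{t}\hat{\Cb}^{(1)}$ and substitute $\Xb_{t}=\Rb\Fb_{t}\Cb^{\top}+\Eb_{t}$, which yields
$$\hat{\Fb}_{t}^{(2)}=\Big(\frac{1}{p_{1}}\hat{\Rb}^{(1)\top}\Rb\Big)\Fb_{t}\Big(\frac{1}{p_{2}}\hat{\Cb}^{(1)\top}\Cb\Big)^{\top}+\frac{1}{p_{1}p_{2}}\hat{\Rb}^{(1)\top}\Eb_{t}\hat{\Cb}^{(1)}.$$
Adding and subtracting the target, $\hat{\Fb}_{t}^{(2)}-(\hat{\Hb}_{r}^{(1)})^{-1}\Fb_{t}((\hat{\Hb}_{c}^{(1)})^{-1})^{\top}$ becomes the sum of (i) $[p_{1}^{-1}\hat{\Rb}^{(1)\top}\Rb-(\hat{\Hb}_{r}^{(1)})^{-1}]\Fb_{t}(p_{2}^{-1}\hat{\Cb}^{(1)\top}\Cb)^{\top}$, (ii) the symmetric term carrying the $\Cb$-side remainder, and (iii) the idiosyncratic term $(p_{1}p_{2})^{-1}\hat{\Rb}^{(1)\top}\Eb_{t}\hat{\Cb}^{(1)}$. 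Since $\|\Fb_{t}\|=O_{p}(1)$ by Assumption \ref{Assumption 4} and $p_{2}^{-1}\hat{\Cb}^{(1)\top}\Cb=O_{p}(1)$, the problem reduces to bounding the two cross-product remainders and the idiosyncratic term.

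The inputs are the expansions $\hat{\Rb}^{(1)}=\Rb\hat{\Hb}_{r}^{(1)}+\bm{\Delta}_{r}$ and $\hat{\Cb}^{(1)}=\Cb\hat{\Hb}_{c}^{(1)}+\bm{\Delta}_{c}$ with $p_{1}^{-1}\|\bm{\Delta}_{r}\|_{F}^{2}=w_{r}^{(1)}$, $p_{2}^{-1}\|\bm{\Delta}_{c}\|_{F}^{2}=w_{c}^{(1)}$ from Theorems \ref{no_iter_convergence}--\ref{convergence of an iteration}, together with the explicit leading forms of $\bm{\Delta}_{r},\bm{\Delta}_{c}$ derived there (weighted averages of the $\Eb_{t}$ arrays against $\Rb,\Cb,\Fb_{t}$ and against the first-step error $\bm{\cE}_{t}$). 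Combining the constraint $p_{1}^{-1}\hat{\Rb}^{(1)\top}\hat{\Rb}^{(1)}=\Ib_{k_{1}}$ with $\hat{\Rb}^{(1)}=\Rb\hat{\Hb}_{r}^{(1)}+\bm{\Delta}_{r}$ gives the identity $p_{1}^{-1}\hat{\Rb}^{(1)\top}\Rb=[\Ib_{k_{1}}-p_{1}^{-1}\hat{\Rb}^{(1)\top}\bm{\Delta}_{r}](\hat{\Hb}_{r}^{(1)})^{-1}$, and since $\hat{\Rb}^{(1)\top}\bm{\Delta}_{r}=\hat{\Hb}_{r}^{(1)\top}\Rb^{\top}\bm{\Delta}_{r}+\bm{\Delta}_{r}^{\top}\bm{\Delta}_{r}$, the remainder is governed by $p_{1}^{-1}\bm{\Delta}_{r}^{\top}\bm{\Delta}_{r}=O_{p}(w_{r}^{(1)})$ and by the signal-aligned quantity $p_{1}^{-1}\Rb^{\top}\bm{\Delta}_{r}$. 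The crux is that the Cauchy--Schwarz bound $p_{1}^{-1}\|\Rb^{\top}\bm{\Delta}_{r}\|_{F}=O_{p}(\sqrt{w_{r}^{(1)}})$ is too crude: substituting the explicit form of $\bm{\Delta}_{r}$ and invoking the summability of the idiosyncratic moments in Assumptions \ref{Assumption 5} and \ref{Assumption 7} (with $\|\Rb\|_{\max}\le c_{1}$ from Assumption \ref{Assumption 6}), one finds that contracting each idiosyncratic block against the pervasive loading buys an extra factor $p_{1}^{-1/2}$ or $p_{2}^{-1/2}$, so the remainder attains the faster rates $\frac{1}{\sqrt{T}p_{1}}+\frac{1}{\sqrt{T}p_{2}\nu_{\min}(\Hb_{2})}+\frac{1}{\sqrt{p_{1}p_{2}}}+\gamma_{f}$ rather than their square roots; the $\Cb$-side remainder is handled symmetrically. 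Term (iii) is split through both expansions: its leading piece $(p_{1}p_{2})^{-1}\hat{\Hb}_{r}^{(1)\top}\Rb^{\top}\Eb_{t}\Cb\hat{\Hb}_{c}^{(1)}$ is $O_{p}((p_{1}p_{2})^{-1/2})$ because $\EE\|\Rb^{\top}\Eb_{t}\Cb\|_{F}^{2}=O(p_{1}p_{2})$ under Assumptions \ref{Assumption 5}--\ref{Assumption 6}, and the three pieces involving $\bm{\Delta}_{r}$ and/or $\bm{\Delta}_{c}$ are of strictly smaller order. Summing (i)--(iii) gives the claimed rate for $\hat{\Fb}_{t}^{(2)}$.

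For the common components I would expand entrywise: with $\hat{\Rb}_{i,\cdot}^{(1)}$ the $i$-th row of $\hat{\Rb}^{(1)}$, write $\hat{S}_{t,ij}^{(2)}-S_{t,ij}=\hat{\Rb}_{i,\cdot}^{(1)}\hat{\Fb}_{t}^{(2)}\hat{\Cb}_{j,\cdot}^{(1)\top}-\Rb_{i,\cdot}\Fb_{t}\Cb_{j,\cdot}^{\top}$, insert the expansion of $\hat{\Fb}_{t}^{(2)}$ just obtained, and set $\bm{\delta}_{r,i}:=(\hat{\Rb}^{(1)}(\hat{\Hb}_{r}^{(1)})^{-1}-\Rb)_{i,\cdot}$, $\bm{\delta}_{c,j}:=(\hat{\Cb}^{(1)}(\hat{\Hb}_{c}^{(1)})^{-1}-\Cb)_{j,\cdot}$; the error then splits into $\bm{\delta}_{r,i}\Fb_{t}\Cb_{j,\cdot}^{\top}+\Rb_{i,\cdot}\Fb_{t}\bm{\delta}_{c,j}^{\top}+\bm{\delta}_{r,i}\Fb_{t}\bm{\delta}_{c,j}^{\top}$ plus $\hat{\Rb}_{i,\cdot}^{(1)}(\hat{\Fb}_{t}^{(2)}-(\hat{\Hb}_{r}^{(1)})^{-1}\Fb_{t}((\hat{\Hb}_{c}^{(1)})^{-1})^{\top})\hat{\Cb}_{j,\cdot}^{(1)\top}$. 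Since $\|\Rb_{i,\cdot}\|,\|\Cb_{j,\cdot}\|,\|\hat{\Rb}_{i,\cdot}^{(1)}\|,\|\hat{\Cb}_{j,\cdot}^{(1)}\|=O_{p}(1)$ by Assumption \ref{Assumption 6} and the expansions, the last term contributes at the $\hat{\Fb}_{t}^{(2)}$-rate (in particular the $(p_{1}p_{2})^{-1/2}$ term), while the first three need only the single-row rates $\|\bm{\delta}_{r,i}\|=O_{p}(\sqrt{w_{r}^{(1)}})$ and $\|\bm{\delta}_{c,j}\|=O_{p}(\sqrt{w_{c}^{(1)}})$ for each fixed $i,j$; here there is no signal alignment (a single row of $\Cb$ is not pervasive), so these enter at the square-root level, which one extracts by bounding the $i$-th (resp. $j$-th) row of the explicit expansion of $\hat{\Rb}^{(1)}$ (resp. $\hat{\Cb}^{(1)}$) via Assumptions \ref{Assumption 5}--\ref{Assumption 7}. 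Collecting all terms delivers the stated bound on $|\hat{S}_{t,ij}^{(2)}-S_{t,ij}|$.

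The hardest part is the refined estimate of the signal-aligned cross terms $p_{1}^{-1}\Rb^{\top}\bm{\Delta}_{r}$ and $p_{2}^{-1}\Cb^{\top}\bm{\Delta}_{c}$: one cannot afford the $\sqrt{w}$-rate and must carry the full leading-order expansions of $\hat{\Rb}^{(1)}$ and $\hat{\Cb}^{(1)}$ (only implicitly needed for Theorems \ref{no_iter_convergence}--\ref{convergence of an iteration}), verifying block by block that contracting each idiosyncratic array against a pervasive loading halves the exponent of the relevant dimension. A secondary, essentially bookkeeping, burden is tracking how the factors $\nu_{\min}(\Hb_{1})$ and $\nu_{\min}(\Hb_{2})$ accumulate: the first-step error $\bm{\cE}_{t}$ carries the factor $\nu_{\min}^{-1}(\Hb_{1})\nu_{\min}^{-1}(\Hb_{2})$ and propagates through the $\hat{\Rb}^{(1)},\hat{\Cb}^{(1)}$ constructions, which is what produces the mixed powers appearing in $\gamma_{f}$ and in the $\hat{S}_{t,ij}^{(2)}$ bound.
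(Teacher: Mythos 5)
Your outline follows essentially the same route as the paper: the same expansion of $\hat{\Fb}_{t}^{(2)}=(p_{1}p_{2})^{-1}\hat{\Rb}^{(1)\top}\Xb_{t}\hat{\Cb}^{(1)}$ using the orthonormality constraints, the same key refinement — bounding the signal-aligned quantities $\Rb^{\top}(\hat{\Rb}^{(1)}-\Rb\hat{\Hb}_{r}^{(1)})$ and $\Cb^{\top}(\hat{\Cb}^{(1)}-\Cb\hat{\Hb}_{c}^{(1)})$ by substituting the explicit first-step expansions rather than via Cauchy--Schwarz, which is exactly the paper's Lemma \ref{Product of loadings} — and the same entrywise decomposition with row-wise loading rates for $|\hat{S}_{t,ij}^{(2)}-S_{t,ij}|$.

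One intermediate claim is wrong, though it does not invalidate the final bound. You assert that in the idiosyncratic term $(p_{1}p_{2})^{-1}\hat{\Rb}^{(1)\top}\Eb_{t}\hat{\Cb}^{(1)}$ the three pieces involving $\hat{\Rb}^{(1)}-\Rb\hat{\Hb}_{r}^{(1)}$ and/or $\hat{\Cb}^{(1)}-\Cb\hat{\Hb}_{c}^{(1)}$ are of strictly smaller order than the leading piece. They are not: e.g. $(p_{1}p_{2})^{-1}\hat{\Hb}_{r}^{(1)\top}\Rb^{\top}\Eb_{t}(\hat{\Cb}^{(1)}-\Cb\hat{\Hb}_{c}^{(1)})=O_{p}\bigl(\sqrt{w_{c}^{(1)}/p_{1}}\bigr)$, whose leading part is $1/(\sqrt{T}p_{1})$, and $(p_{1}p_{2})^{-1}(\hat{\Rb}^{(1)}-\Rb\hat{\Hb}_{r}^{(1)})^{\top}\Eb_{t}\Cb\hat{\Hb}_{c}^{(1)}=O_{p}\bigl(\sqrt{w_{r}^{(1)}/p_{2}}\bigr)$, whose leading part is $1/(\sqrt{T}p_{2}\nu_{\min}(\Hb_{2}))$; these can dominate $1/\sqrt{p_{1}p_{2}}$ (for instance when $p_{2}\gg Tp_{1}$) and are in fact the source of the first two terms in the theorem's factor rate, which you instead attribute to the signal-aligned remainders (the refined bounds there only yield the $\gamma_{f}$-type and $1/(p_{1}p_{2}\nu_{\min}(\Hb_{1})\nu_{\min}^{2}(\Hb_{2}))$ contributions plus $w_{r}^{(1)},w_{c}^{(1)}$). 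Since these cross pieces are still within the stated rate, your upper bound remains correct once the bookkeeping is done honestly, but the "strictly smaller order" step as written would fail.
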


   The derived convergence rates in Theorem \ref{factor and common component} are the same as those derived in Theorem 3.5 of \cite{Yu2021Projected} when $\nu_{\min}(\Hb_{1})=\nu_{\min}(\Hb_{2})=O_{p}\left(1\right)$.  In the following theorem, we establish the recurrence formula of the convergence rate for the estimators $\hat{\Rb}^{(s+1)}$ and $\hat{\Cb}^{(s+1)}$ given any integer $s\geq 1$.

   \begin{theorem} \label{the iteration results}
   Under the same conditions stated in Theorem \ref{no_iter_convergence}, there exist asymptotic orthogonal matrices $\hat{\Hb}_{r}^{(s+1)}$ and $\hat{\Hb}_{c}^{(s+1)}$ such that for any
   integer $s\geq1$, we have
   $$
   w_{r}^{(s+1)}=\dfrac{1}{p_{1}}\|\hat{\Rb}^{(s+1)}-\Rb\hat{\Hb}_{r}^{(s+1)}\|_{F}^{2}
   =O_{p}\left(\dfrac{1}{Tp_{2}}+\dfrac{1}{p_{1}^{2}p_{2}^{2}}+\gamma_{r}^{(s+1)}\right),$$
   $$
   w_{c}^{(s+1)}=\dfrac{1}{p_{2}}\|\hat{\Cb}^{(s+1)}-\Cb\hat{\Hb}_{c}^{(s+1)}\|_{F}^{2}=O_{p}
   \left(\dfrac{1}{Tp_{1}}+\dfrac{1}{p_{1}^{2}p_{2}^{2}}+\gamma_{c}^{(s+1)}\right),$$
   where $$\gamma_{r}^{(s+1)}=\dfrac{1}{p_{2}^{2}}w_{r}^{(s)}+\dfrac{1}{T}w_{c}^{(s)}+\dfrac{w_{r}^{(s)}w_{c}^{(s)}}{p_{2}}+w_{r}^{(s)}w_{c}^{(s)2}+\dfrac{1}{p_{1}^{2}}w_{c}^{(s)2},
   $$
   $$\gamma_{c}^{(s+1)}=\dfrac{1}{Tp_{2}}w_{r}^{(s)}+\dfrac{1}{T}w_{r}^{(s+1)}+\dfrac{1}{p_{1}^{2}}w_{c}^{(s)}+\dfrac{1}{p_{1}}w_{r}^{(s)}w_{c}^{(s)}+\dfrac{1}{T}w_{r}^{(s)}w_{c}^{(s)}+w_{r}^{(s)}w_{r}^{(s+1)}w_{c}^{(s)}+\dfrac{1}{p_{2}^{2}}w_{r}^{(s)}w_{r}^{(s+1)}.$$
   \end{theorem}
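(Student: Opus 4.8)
The plan is to prove the two recursions by induction on $s$, handling the pair $(w_r^{(s+1)},w_c^{(s+1)})$ inside the inductive step in that order (the bound for $w_c^{(s+1)}$ will need $w_r^{(s+1)}$, which is why $w_r^{(s+1)}$ shows up in $\gamma_c^{(s+1)}$). The key device is to treat $\hat\Rb^{(s)}$ and $\hat\Cb^{(s)}$ as data-dependent analogues of the projection matrices $\Wb_1,\Wb_2$ and to re-run the one-step analysis of Theorems~\ref{no_iter_convergence}--\ref{factor and common component} ``around'' them. Write $\hat\Rb^{(s)}=\Rb\hat\Hb_r^{(s)}+\bm{\Delta}_r^{(s)}$ and $\hat\Cb^{(s)}=\Cb\hat\Hb_c^{(s)}+\bm{\Delta}_c^{(s)}$, so $\|\bm{\Delta}_r^{(s)}\|_F^2=p_1w_r^{(s)}$, $\|\bm{\Delta}_c^{(s)}\|_F^2=p_2w_c^{(s)}$. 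The base cases $w_r^{(1)},w_c^{(1)}=o_p(1)$ are supplied by Theorems~\ref{no_iter_convergence} and \ref{convergence of an iteration}; the induction hypothesis is $w_r^{(s)},w_c^{(s)}=o_p(1)$, which forces $\tfrac1{p_1}\hat\Rb^{(s)\top}\Rb$ and $\tfrac1{p_2}\hat\Cb^{(s)\top}\Cb$ to have singular values bounded away from $0$ and $\infty$ (using the norm/pervasiveness conditions on $\Rb,\Cb$ and the asymptotic orthogonality of $\hat\Hb_r^{(s)},\hat\Hb_c^{(s)}$). Hence the transformation matrices $\hat\Hb_r^{(s+1)},\hat\Hb_c^{(s+1)}$ constructed below are asymptotically orthogonal, the normalization factors $[\cdot]^{-1/2}$ in the definitions of $\hat\Rb^{(s+1)},\hat\Cb^{(s+1)}$ have bounded spectral norm, and the induction closes because every monomial of $\gamma_r^{(s+1)},\gamma_c^{(s+1)}$ is a product of $o_p(1)$ factors times bounded quantities.

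For $\hat\Rb^{(s+1)}$, substitute $\hat\Fb_t^{(s+1)\top}=\tfrac1{p_1p_2}\hat\Cb^{(s)\top}\Xb_t^\top\hat\Rb^{(s)}$ into its defining formula to obtain $\hat\Rb^{(s+1)}=\hat\Mb^{(s)}\hat\Rb^{(s)}\Vb^{(s)}$, where $\hat\Mb^{(s)}=\tfrac1{Tp_1p_2}\sum_{t=1}^{T}\Xb_t\hat\Cb^{(s)}\hat\Cb^{(s)\top}\Xb_t^\top$ and $\Vb^{(s)}$ is a normalization matrix that is spectrally bounded under the induction hypothesis. Plug $\Xb_t=\Rb\Fb_t\Cb^\top+\Eb_t$ and the decompositions of $\hat\Rb^{(s)},\hat\Cb^{(s)}$ into $\hat\Mb^{(s)}\hat\Rb^{(s)}$ and expand. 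The leading ``signal'' piece is $\Rb\Qb^{(s)}$ with $\Qb^{(s)}$ a well-conditioned $k_1\times m_1$ matrix, which defines $\hat\Hb_r^{(s+1)}$ after orthonormalization; the remaining pieces form an error matrix $\Bb_r^{(s+1)}$, and a standard polar-decomposition (or $\sin\Theta$-type) perturbation argument gives $w_r^{(s+1)}\lesssim\tfrac1{p_1}\|\Bb_r^{(s+1)}\|_F^2$. The terms of $\Bb_r^{(s+1)}$ split into (i) those containing no $\bm{\Delta}$, which coincide with the error terms of Theorem~\ref{no_iter_convergence} with $\nu_{\min}(\Hb_1),\nu_{\min}(\Hb_2)$ set to $O_p(1)$ and hence contribute the base rate $\tfrac1{Tp_2}+\tfrac1{p_1^2p_2^2}$ via Assumptions~\ref{Assumption 4}, \ref{Assumption 5} and \ref{Assumption 7}; and (ii) those containing at least one $\bm{\Delta}_r^{(s)}$ or $\bm{\Delta}_c^{(s)}$, each bounded using $\|\bm{\Delta}_r^{(s)}\|_F=\sqrt{p_1w_r^{(s)}}$, $\|\bm{\Delta}_c^{(s)}\|_F=\sqrt{p_2w_c^{(s)}}$, $\|\Rb\|=O(\sqrt{p_1})$, $\|\Cb\|=O(\sqrt{p_2})$, $\EE\|\Fb_t\|^4\le c$, and the summability bounds on $\{\Eb_t\}$. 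Tracking which matrix each $\bm{\Delta}$ multiplies yields precisely the five monomials $\tfrac1{p_2^2}w_r^{(s)}$, $\tfrac1Tw_c^{(s)}$, $\tfrac{w_r^{(s)}w_c^{(s)}}{p_2}$, $w_r^{(s)}w_c^{(s)2}$, $\tfrac1{p_1^2}w_c^{(s)2}$ of $\gamma_r^{(s+1)}$.

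The bound for $\hat\Cb^{(s+1)}$ is symmetric, with one twist: $\hat\Cb^{(s+1)}$ uses $\hat\Rb^{(s+1)}$ as the left projection while $\hat\Fb_t^{(s+1)}$ still carries $\hat\Rb^{(s)}$ and $\hat\Cb^{(s)}$. Substituting $\hat\Rb^{(s+1)}=\Rb\hat\Hb_r^{(s+1)}+\bm{\Delta}_r^{(s+1)}$ with $\|\bm{\Delta}_r^{(s+1)}\|_F=\sqrt{p_1w_r^{(s+1)}}$ alongside the $\bm{\Delta}_r^{(s)},\bm{\Delta}_c^{(s)}$ expansions, and separating the $\bm{\Delta}$-free terms (base rate $\tfrac1{Tp_1}+\tfrac1{p_1^2p_2^2}$) from the $\bm{\Delta}$-carrying terms, reproduces the seven monomials of $\gamma_c^{(s+1)}$; in particular $w_r^{(s+1)}$ enters exactly because $\hat\Cb^{(s+1)}$ is built from $\hat\Rb^{(s+1)}$ rather than $\hat\Rb^{(s)}$, and it is already controlled by the first part of the inductive step.

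I expect the main obstacle to be the sharp bookkeeping in step (ii): showing that each $\bm{\Delta}$-carrying term is no larger than the claimed monomial, which in several cases needs more than the triangle inequality. For instance, a term such as $\tfrac1{Tp_1p_2^2}\sum_t\Rb\Fb_t\Cb^\top\bm{\Delta}_c^{(s)}\hat\Cb^{(s)\top}\Eb_t^\top\hat\Rb^{(s)}$ must be shown to be $O_p\big(w_c^{(s)}/T\big)$ rather than a larger power, which uses $\EE\Fb_t=\bm{0}$ and the weak $\Fb$--$\Eb$ correlation bound of Assumption~\ref{Assumption 7}, while cross-$t$ products of $\Eb$-terms must telescope via Assumption~\ref{Assumption 5}. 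A secondary, more mechanical difficulty is keeping $\hat\Hb_r^{(s)},\hat\Hb_c^{(s)},\Qb^{(s)},\Vb^{(s)}$ uniformly well-conditioned along the induction, i.e.\ making the $o_p(1)$-propagation quantitative enough that the perturbation bounds remain valid at every step.
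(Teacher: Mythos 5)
Your plan follows essentially the same route as the paper's proof: substitute $\hat{\Fb}_t^{(s+1)}=\hat{\Rb}^{(s)\top}\Xb_t\hat{\Cb}^{(s)}/(p_1p_2)$ into the definition of $\hat{\Rb}^{(s+1)}$, expand $\Xb_t=\Rb\Fb_t\Cb^{\top}+\Eb_t$ together with $\hat{\Rb}^{(s)}=\Rb\hat{\Hb}_r^{(s)}+\bm{\Delta}_r^{(s)}$ and $\hat{\Cb}^{(s)}=\Cb\hat{\Hb}_c^{(s)}+\bm{\Delta}_c^{(s)}$, isolate the signal term that defines $\hat{\Hb}_r^{(s+1)}$ (the paper defines it directly so the error identity is exact, making your polar-decomposition/$\sin\Theta$ step unnecessary but harmless), and bound the $\bm{\Delta}$-carrying cross terms via the weak-dependence assumptions, with $w_r^{(s+1)}$ entering $\gamma_c^{(s+1)}$ exactly as you note. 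The sharp bookkeeping you flag as the main obstacle is precisely what the paper's supplementary Lemmas \ref{lemma9}--\ref{lemma11} carry out.
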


In the following theorem, we also establish the recurrence formula of the convergence rate for the estimators $\hat{\Fb}^{(s+1)}$ and $\hat{\Sbb}^{(s+1)}_{t}$ given any integer $s\geq 1$.

   \begin{theorem}\label{the iteration results2}
   	Under the same conditions stated in Theorem \ref{no_iter_convergence}, as $\min\{T,p_{1},p_{2}\}\to \infty$, for any $t\in [T]$, $i\in[p_{1}]$ and $j\in[p_{2}]$ and any integer $s>1$, we have
    $$\left\|\hat{\Fb}_{t}^{(s+1)}-(\hat{\Hb}_{r}^{(s)})^{-1}\Fb_{t}\left((\hat{\Hb}_{c}^{(s)})^{-1}\right)^{\top}\right\|_{F}=O_{p}\left(\sqrt{\dfrac{{w_{r}^{(s)}}}{{p_{2}}}}+\sqrt{\dfrac{{w_{r}^{(s-1)}}}{{Tp_{2}}}}+\sqrt{\dfrac{{w_{c}^{(s)}}}{{p_{1}}}}+\sqrt{\dfrac{{w_{c}^{(s-1)}}}{{Tp_{1}}}}+\dfrac{1}{\sqrt{p_{1}p_{2}}}+\gamma_{f}^{(s+1)}\right),$$
   where $$\gamma_{f}^{(s+1)}=\sqrt{w_{r}^{(s)}w_{c}^{(s)}}+\sqrt{\dfrac{{w_{r}^{(s-1)}w_{c}^{(s-1)}}}{{T}}}+\dfrac{\sqrt{w_{r}^{(s-1)}}w_{c}^{(s-1)}}{p_{1}},$$
and for the common components, we have

   $$|\hat{S}_{t,ij}^{(s+1)}-S_{t,ij}|=O_{p}\left(\sqrt{w_{r}^{(s)}}+\sqrt{\dfrac{{w_{r}^{(s-1)}}}{{Tp_{2}}}}+\sqrt{w_{c}^{(s)}}+\sqrt{\dfrac{{w_{c}^{(s-1)}}}{{Tp_{1}}}}+\dfrac{1}{\sqrt{p_{1}p_{2}}}+\gamma^{(s+1)}\right),$$
   where $\gamma^{(s+1)}=\sqrt{\dfrac{{w_{r}^{(s-1)}w_{c}^{(s-1)}}}{{T}}}+\dfrac{\sqrt{w_{r}^{(s-1)}}w_{c}^{(s-1)}}{p_{1}}$ and  $S_{t,ij}$ is the $(i,j)$-th entry of $\Sbb_{t}=\Rb\Fb_t\Cb^\top$.
   \end{theorem}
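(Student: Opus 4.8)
The plan is to start from the identity $\hat{\Fb}_t^{(s+1)}=(p_1p_2)^{-1}\hat{\Rb}^{(s)\top}\Xb_t\hat{\Cb}^{(s)}$, substitute the model $\Xb_t=\Rb\Fb_t\Cb^\top+\Eb_t$, and write $\hat{\Rb}^{(s)}=\Rb\hat{\Hb}_r^{(s)}+\bm{\Delta}_r^{(s)}$, $\hat{\Cb}^{(s)}=\Cb\hat{\Hb}_c^{(s)}+\bm{\Delta}_c^{(s)}$, where Theorem \ref{the iteration results} gives $\|\bm{\Delta}_r^{(s)}\|_F^2=p_1w_r^{(s)}$ and $\|\bm{\Delta}_c^{(s)}\|_F^2=p_2w_c^{(s)}$. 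Expanding produces a ``signal'' block $(p_1p_2)^{-1}\hat{\Hb}_r^{(s)\top}(\Rb^\top\Rb)\Fb_t(\Cb^\top\Cb)\hat{\Hb}_c^{(s)}$ together with seven lower-order blocks. For the signal block I would use the identifiability constraints $p_1^{-1}\hat{\Rb}^{(s)\top}\hat{\Rb}^{(s)}=\Ib_{m_1}$, $p_2^{-1}\hat{\Cb}^{(s)\top}\hat{\Cb}^{(s)}=\Ib_{m_2}$ and the Cauchy--Schwarz bound $p_1^{-1}\|\Rb^\top\bm{\Delta}_r^{(s)}\|\le p_1^{-1}\|\Rb\|_F\|\bm{\Delta}_r^{(s)}\|_F=O_p(\sqrt{w_r^{(s)}})$ (using $\|\Rb\|_F=O(\sqrt{p_1})$ from Assumption \ref{Assumption 6}, and its column analogue) to obtain $p_1^{-1}\hat{\Rb}^{(s)\top}\Rb=(\hat{\Hb}_r^{(s)})^{-1}+O_p(\sqrt{w_r^{(s)}})$ and $p_2^{-1}\hat{\Cb}^{(s)\top}\Cb=(\hat{\Hb}_c^{(s)})^{-1}+O_p(\sqrt{w_c^{(s)}})$ in spectral norm; since $\|\Fb_t\|_F=O_p(1)$ by Assumption \ref{Assumption 4}, the signal block then equals $(\hat{\Hb}_r^{(s)})^{-1}\Fb_t((\hat{\Hb}_c^{(s)})^{-1})^\top$ up to an error of order $\sqrt{w_r^{(s)}}+\sqrt{w_c^{(s)}}+\sqrt{w_r^{(s)}w_c^{(s)}}$, which is dominated by $\gamma_f^{(s+1)}$ and the leading rate.

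Next I would bound the four noise blocks arising from $(p_1p_2)^{-1}\hat{\Rb}^{(s)\top}\Eb_t\hat{\Cb}^{(s)}$. The aligned piece $(p_1p_2)^{-1}\hat{\Hb}_r^{(s)\top}\Rb^\top\Eb_t\Cb\hat{\Hb}_c^{(s)}$ is $O_p((p_1p_2)^{-1/2})$ since $\EE\|\Rb^\top\Eb_t\Cb\|_F^2=O(p_1p_2)$ under Assumptions \ref{Assumption 5}(2) and \ref{Assumption 6}; the mixed pieces $(p_1p_2)^{-1}\hat{\Hb}_r^{(s)\top}\Rb^\top\Eb_t\bm{\Delta}_c^{(s)}$ and $(p_1p_2)^{-1}\bm{\Delta}_r^{(s)\top}\Eb_t\Cb\hat{\Hb}_c^{(s)}$ are at most $O_p(\sqrt{w_c^{(s)}/p_1})$ and $O_p(\sqrt{w_r^{(s)}/p_2})$ by Cauchy--Schwarz with $\|\Rb^\top\Eb_t\|_F,\|\Eb_t\Cb\|_F=O_p(\sqrt{p_1p_2})$, and $(p_1p_2)^{-1}\bm{\Delta}_r^{(s)\top}\Eb_t\bm{\Delta}_c^{(s)}$ is of smaller order and absorbed. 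To recover the $T^{-1/2}$-improved terms $\sqrt{w_r^{(s-1)}/(Tp_2)}$ and $\sqrt{w_c^{(s-1)}/(Tp_1)}$ (and the $\gamma_f^{(s+1)}$ pieces) I would replace the crude bound on $\bm{\Delta}_r^{(s)},\bm{\Delta}_c^{(s)}$ in those mixed pieces by the explicit expansions of $\hat{\Rb}^{(s)}$ and $\hat{\Cb}^{(s)}$ in terms of $\hat{\Fb}_t^{(s)}$, $\hat{\Rb}^{(s-1)}$, $\hat{\Cb}^{(s-1)}$ and $\Eb_t$, so that the interaction with $\Eb_t$ reduces to a double sum over time indices of cross-products of idiosyncratic matrices, which is $O_p(\sqrt{T})$ rather than $O_p(T)$ under Assumptions \ref{Assumption 5}(3) and \ref{Assumption 7}; this is the same algebra as in the proof of Theorem \ref{factor and common component} for $s=1$, with $(\Wb_1,\Wb_2)$ replaced by $(\hat{\Rb}^{(s-1)},\hat{\Cb}^{(s-1)})$ and $\nu_{\min}^{-2}(\Hb_i)$ replaced by the step-$(s-1)$ rates. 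Collecting all bounds yields the stated rate for $\hat{\Fb}_t^{(s+1)}$, so that no actual induction inside this theorem is needed---only the rates $w_r^{(s)},w_c^{(s)},w_r^{(s-1)},w_c^{(s-1)}$ supplied by Theorem \ref{the iteration results}.

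For the common components, write $\hat{S}_{t,ij}^{(s+1)}-S_{t,ij}=\bm{e}_i^\top\big(\hat{\Rb}^{(s)}\hat{\Fb}_t^{(s+1)}\hat{\Cb}^{(s)\top}-\Rb\Fb_t\Cb^\top\big)\bm{e}_j$ with $\bm{e}_i,\bm{e}_j$ canonical basis vectors, and telescope through $\Rb\hat{\Hb}_r^{(s)}(\hat{\Hb}_r^{(s)})^{-1}\Fb_t((\hat{\Hb}_c^{(s)})^{-1})^\top\hat{\Cb}^{(s)\top}$ and $\Rb\Fb_t((\hat{\Hb}_c^{(s)})^{-1})^\top\hat{\Cb}^{(s)\top}$. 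This splits the error into: (i) $\bm{e}_i^\top\hat{\Rb}^{(s)}\big(\hat{\Fb}_t^{(s+1)}-(\hat{\Hb}_r^{(s)})^{-1}\Fb_t((\hat{\Hb}_c^{(s)})^{-1})^\top\big)\hat{\Cb}^{(s)\top}\bm{e}_j$, bounded by the factor-matrix rate just established together with $\|\bm{e}_i^\top\hat{\Rb}^{(s)}\|_2=O_p(1)$ and $\|\bm{e}_j^\top\hat{\Cb}^{(s)}\|_2=O_p(1)$ (which follow from $\|\Rb\|_{\max},\|\Cb\|_{\max}=O(1)$ and $\hat{\Rb}^{(s)}\approx\Rb\hat{\Hb}_r^{(s)}$, $\hat{\Cb}^{(s)}\approx\Cb\hat{\Hb}_c^{(s)}$); (ii) a row-loading correction $\bm{e}_i^\top(\hat{\Rb}^{(s)}-\Rb\hat{\Hb}_r^{(s)})(\hat{\Hb}_r^{(s)})^{-1}\Fb_t((\hat{\Hb}_c^{(s)})^{-1})^\top\hat{\Cb}^{(s)\top}\bm{e}_j$; and (iii) a symmetric column-loading correction. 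Terms (ii)--(iii) require an entrywise analysis of $\hat{\Rb}^{(s)}-\Rb\hat{\Hb}_r^{(s)}$: using Assumption \ref{Assumption 6} and the explicit form of $\hat{\Rb}^{(s)}$, its $i$-th row has $\ell_2$ norm bounded by $O_p(\sqrt{w_r^{(s)}})$ plus averaged quantities carrying the extra $T^{-1}$ and $(p_1p_2)^{-1/2}$ factors, which produces the $\sqrt{w_r^{(s)}}$, $\sqrt{w_r^{(s-1)}/(Tp_2)}$ and $\gamma^{(s+1)}$ contributions (and symmetrically for the columns). Combining gives the entrywise rate for $\hat{S}_{t,ij}^{(s+1)}$.

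The main obstacle is the bookkeeping in the second and third steps: carefully tracking how the step-$(s-1)$ estimation errors embedded in $\hat{\Rb}^{(s)}$ and $\hat{\Cb}^{(s)}$ interact with $\Eb_t$ and $\Fb_t$ so as to reproduce exactly the $\sqrt{w^{(s-1)}_{\bullet}/(Tp_{\bullet})}$ terms and the $\gamma_f^{(s+1)},\gamma^{(s+1)}$ corrections, while checking that no cross term of larger order has been dropped; the entrywise control of the common components in the last step is the other delicate point, since it forces one to trade spectral-norm bounds for coordinate-wise bounds that exploit the pervasiveness of $\Rb$ and $\Cb$.
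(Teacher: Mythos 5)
Your overall skeleton matches the paper's proof: start from $\hat{\Fb}_t^{(s+1)}=(p_1p_2)^{-1}\hat{\Rb}^{(s)\top}\Xb_t\hat{\Cb}^{(s)}$, expand around $\Rb\hat{\Hb}_r^{(s)}$ and $\Cb\hat{\Hb}_c^{(s)}$ using the constraints $\hat{\Rb}^{(s)\top}\hat{\Rb}^{(s)}=p_1\Ib$, $\hat{\Cb}^{(s)\top}\hat{\Cb}^{(s)}=p_2\Ib$, and bound the noise blocks $(p_1p_2)^{-1}\hat{\Hb}_r^{(s)\top}\Rb^\top\Eb_t\Cb\hat{\Hb}_c^{(s)}$, $(p_1p_2)^{-1}(\hat{\Rb}^{(s)}-\Rb\hat{\Hb}_r^{(s)})^\top\Eb_t\Cb\hat{\Hb}_c^{(s)}$, etc., by Cauchy--Schwarz; your identification of the $1/\sqrt{p_1p_2}$, $\sqrt{w_r^{(s)}/p_2}$ and $\sqrt{w_c^{(s)}/p_1}$ terms and the common-component decomposition are the same as the paper's.

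However, there is a genuine gap in your treatment of the ``signal'' block. You bound $p_1^{-1}\|\Rb^\top(\hat{\Rb}^{(s)}-\Rb\hat{\Hb}_r^{(s)})\|_F$ by $p_1^{-1}\|\Rb\|_F\|\hat{\Rb}^{(s)}-\Rb\hat{\Hb}_r^{(s)}\|_F=O_p(\sqrt{w_r^{(s)}})$ and claim the resulting error $\sqrt{w_r^{(s)}}+\sqrt{w_c^{(s)}}$ is dominated by the leading rate and $\gamma_f^{(s+1)}$. It is not: $\sqrt{w_r^{(s)}}$ exceeds $\sqrt{w_r^{(s)}/p_2}$ by a factor $\sqrt{p_2}$ and exceeds $\sqrt{w_r^{(s)}w_c^{(s)}}$ by a factor $1/\sqrt{w_c^{(s)}}\to\infty$, so with this crude bound you only prove the factor rate $\sqrt{w_r^{(s)}}+\sqrt{w_c^{(s)}}+\cdots$, i.e.\ the common-component rate, not the stated factor rate. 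The place where the crude Cauchy--Schwarz bound must be replaced by the explicit estimator representation is precisely these loading--loading interaction terms (the paper's terms $p_1^{-1}\hat{\Rb}^{(s)\top}(\Rb-\hat{\Rb}^{(s)}(\hat{\Hb}_r^{(s)})^{-1})\Fb_t(\cdot)$ and its column analogue), which the paper controls through Lemma \ref{loading product of (s+1)-th}: sharp bounds on $\|\Rb^\top(\hat{\Rb}^{(s)}-\Rb\hat{\Hb}_r^{(s)})\|_F$ and $\|\Cb^\top(\hat{\Cb}^{(s)}-\Cb\hat{\Hb}_c^{(s)})\|_F$ obtained from the closed-form of $\hat{\Rb}^{(s)},\hat{\Cb}^{(s)}$, and these are exactly the source of the $\sqrt{w_r^{(s-1)}/(Tp_2)}$, $\sqrt{w_c^{(s-1)}/(Tp_1)}$ and $\gamma_f^{(s+1)}$ terms. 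In your write-up you instead direct the ``explicit expansion'' refinement at the mixed noise pieces $\Rb^\top\Eb_t(\hat{\Cb}^{(s)}-\Cb\hat{\Hb}_c^{(s)})$ and $(\hat{\Rb}^{(s)}-\Rb\hat{\Hb}_r^{(s)})^\top\Eb_t\Cb$, where the crude bounds $\sqrt{w_c^{(s)}/p_1}$ and $\sqrt{w_r^{(s)}/p_2}$ are already the stated leading terms and need no sharpening. So the key missing ingredient is an argument of the type of Lemma \ref{loading product of (s+1)-th} (together with its supporting bounds, Lemmas \ref{lemma9}--\ref{lemma11}) applied to the signal block; without it, the claimed domination fails and the factor-matrix rate does not follow. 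The common-component part of your plan is essentially fine, since $\sqrt{w_r^{(s)}}$ and $\sqrt{w_c^{(s)}}$ do appear in that rate and the row-wise bounds you need are supplied by the proof of Theorem \ref{the iteration results}.
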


Theoretical analysis for the estimators of loading matrices above relies on the correct specification of factor numbers (note that we suppose $m_1=k_1,m_2=k_2$ in Theorem \ref{no_iter_convergence}) and the  strong factor conditions $\Rb^{\top}\Rb/{p_{1}}=\Ib_{k_{1}},\Cb^{\top}\Cb/{p_{2}}=\Ib_{k_{2}}$, which  means that the row and column factors are pervasive along both dimensions and is an
extension of the pervasive assumption in \cite{stockwatson02JASA} to the matrix regime.

 \subsection{Determining the Number of Factors}\label{numbers}
The dimensions $k_1$ and $k_2$ of the common factor matrix $\Fb_t$ in (\ref{MFM}) are unknown in practice and needs to be determined. In this study, we specify the numbers of row and column factors by borrowing the ideas of eigenvalue-ratio  discussed in \cite{lam2012factor} and \cite{ahn2013eigenvalue}. However, our method is based on the (eigenvalue-ratios of)  row/column sample covariance matrices of the estimated factor matrices  with  a pseudo large pair of factor numbers, in contrast to the row/column sample covariance matrices of the original/projected observations in the literature such as \cite{wang2019factor,fan2021,Yu2021Projected}, which is of independent interest. In detail, we first set $m_1=m_2=k_{\max}$ ($k_{\max}>\max\{k_1,k_2\}$) and thus by
Algorithm \ref{alg:1}, we can obtain $\tilde{\Fb}_{t}$, which is of dimension $k_{\max}\times k_{\max}$. Then the dimensions $k_1$ and $k_2$ are further  determined as follows:

\begin{equation}\label{equ:fn}
	\hat{k}_{1}=\arg\max_{j \leq k_{\max}}\dfrac{\lambda_{j}\left(\dfrac{1}{T}\sum_{t=1}^{T}\tilde{\Fb}_{t}\tilde{\Fb}_{t}^{\top}\right)}{\lambda_{j+1}\left(\dfrac{1}{T}\sum_{t=1}^{T}\tilde{\Fb}_{t}\tilde{\Fb}_{t}^{\top}\right)}, \ \
	\hat{k}_{2}=\arg\max_{j \leq k_{\max}}\dfrac{\lambda_{j}\left(\dfrac{1}{T}\sum_{t=1}^{T}\tilde{\Fb}_{t}^{\top}\tilde{\Fb}_{t}\right)}{\lambda_{j+1}\left(\dfrac{1}{T}\sum_{t=1}^{T}\tilde{\Fb}_{t}^{\top}\tilde{\Fb}_{t}\right)}.
\end{equation}

We have the following theorem which guarantee the consistency of the estimators in equation (\ref{equ:fn}).

\begin{theorem}\label{ER}
	Under the same assumptions as in Theorem \ref{no_iter_convergence}, if $k_{\max} > \max\{k_1,k_2\}$, when $T,p_1,p_2 \to \infty$, then we have $\PP(\hat{k}_1=k_{1})\to 1$ and $\PP(\hat{k}_2=k_2)\to 1$.
\end{theorem}

The factor numbers are assumed to be known in advance in the theoretical analysis of Section \ref{sec:3.2}, while in Section \ref{sec:factormatrix}, the inferences on the factor spaces  are  robust against overestimating the number of factors. Theorem \ref{ER} fills this gap as we can always determine the factor numbers correctly with probability tending to 1.

	\section{Simulation Study}
\subsection{Data Generation}\label{datagenerate}
In this section, we investigate the empirical performance of the Random Projection-based Iterative Least Squares (RPILS) procedure in terms of estimating the loading and factor spaces. We first introduce the data generation mechanism of the synthetic dataset, which is similar to \cite{he2022matrix}.
	We set $k_{1}=3, k_{2}=3$, draw the entries of $\Rb$ and $\Cb$ independently from uniform distribution $U(-1,1)$, and let
	$$\Fb_{t}=\phi\Fb_{t-1}+\sqrt{1-\phi^{2}}\bepsilon_{t},  \ \ \Eb_{t}=\psi\Eb_{t-1}+\sqrt{1-\psi^{2}}\Ub_{t},$$
	where $\text{Vec}(\bepsilon_{t}) \stackrel{i.i.d}{\sim} \cN(\bm{0}, \Ib_{k_{1} \times k_{2}}), \Ub_{t} \stackrel{i.i.d}{\sim} \cM\cN(\bm{0},\Ub_{E},\Vb_{E}),$  i.e., $\text{Vec}(\Ub_{t}) \stackrel{i.i.d}{\sim} \cN(\bm{0},\Vb_{E}\otimes\Ub_{E}).$ The parameters $\phi$ and  $\psi$ control the temporal correlations, and $\Ub_{E}$ and $\Vb_{E}$ are matrices with ones on the diagonal, and the off-diagonal entries are $1/p_{1}$ and $1/p_{2}$, respectively.

\subsection{Estimation error for loading spaces}
We compare  the performances of our Random Projection based Iterative Least Squares (RPILS) method with the $\alpha$-PCA method ($\alpha=0$) by \cite{fan2021} and the PE method by \cite{Yu2021Projected} in terms of estimating the loading and factor spaces. For the RPILS method, the initial weight matrices  $\Wb_{1},\Wb_{2}$ are Walsh-Hadamard matrices \citep{fan2022learning} and the column dimensions $m_1,m_2$ are set as the true dimensions of factor matrices, i.e, $m_1=k_1=3,m_2=k_2=3$.
To  show the impact of the initial weight matrices, we also compare with the One-Step Estimators (OSE), either with Walsh-Hadamard matrices as the initial weight matrices or with $\alpha$-PCA estimators as initial weight matrices, denoted as $\text{OSE}_1$ and  $\text{OSE}_2$ respectively. We point out that in unreported simulations, we have tried to use the matrices with all their entries from a standard normal distribution as the initial weight matrices for the RPILS method, and find that the performances are almost the same as using initial weight matrices with Walsh-Hadamard matrices. We consider the following two scenarios of parameter settings:

\vspace{0.5em}

    \textbf{Scenario A}:   $p_{1}=20, T=p_{2} \in \{20,50,100,150,200\}, \phi=0.1, \psi=0.1.$

\vspace{0.5em}

    \textbf{Scenario B}:  $p_{2}=20, T=p_{1} \in \{20,50,100,150,200\}, \phi=0.1, \psi=0.1.$

\vspace{0.5em}
	\begin{table}[htbp]
		\caption{Averaged estimation errors (standard errors in parentheses) in terms of $\cD(\hat{\Rb},\Rb)$, $\cD(\hat{\Cb},\Cb)$ and $\cD(\text{Vec}(\Fb_{t}),\text{Vec}(\hat{\Fb}_{t}))$ for Scenarios A and B under Matrix Normal distribution over 500 replications.} \label{table:1}
		\renewcommand{\arraystretch}{1.6}
		\centering
		\scalebox{0.85}{
			\begin{tabular}{ccccccccc}
				\toprule[2pt]
				Evaluation & $T$ & $p_{1}$ & $p_{2}$ &$\text{OSE}_1$ &$\text{OSE}_2$ &RPILS & $\alpha$-PCA &PE\\
				\cmidrule(r){1-9}	
				\multicolumn{9}{c}{Setting A: $p_{1}=20, p_{2}=T$}\\			
				\multirow{5}{*}{$\cD(\hat{\Rb},\Rb)$}&20&20&20&0.6178(0.1205) &0.0954(0.0173) &0.0938(0.0158) &0.1151(0.0308) &0.0947(0.0162) \\
				&50&20&50&0.5992(0.1197) &0.0357(0.0053) &0.0355(0.0052) &0.0588(0.0219) &0.0355(0.0052) \\
				&100&20&100&0.5797(0.1330) &0.0177(0.0025) &0.0176(0.0025) &0.0459(0.0221) &0.0176(0.0025)  \\
				&150&20&150&0.5835(0.1315) &0.0117(0.0017) &0.0117(0.0016) &0.0426(0.0199) &0.0117(0.0016)  \\
				&200&20&200&0.5762(0.1305) &0.0088(0.0013) &0.0088(0.0012) &0.0415(0.0203) &0.0088(0.0012)  \\
				\cmidrule(r){2-9}
				
				\multirow{5}{*}{$\cD(\hat{\Cb},\Cb)$}&20&20&20&0.6138(0.1182) &0.0947(0.0166)  &0.0933(0.0156) &0.1148(0.0290) &0.0942(0.0160) \\
				&50&20&50&0.5458(0.1446) &0.0572(0.0063) &0.0568(0.0062) &0.0594(0.0068) &0.0572(0.0063)  \\
				&100&20&100&0.4785(0.1508) &0.0402(0.0034) &0.0399(0.0033) &0.0407(0.0035) &0.0402(0.0034)  \\
				&150&20&150&0.4400(0.1643) &0.0326(0.0026) &0.0324(0.0025) &0.0328(0.0025) &0.0326(0.0026)  \\
				&200&20&200&0.4217(0.1619) &0.0282(0.0021) &0.0280(0.0021) &0.0282(0.0021) &0.0282(0.0021)  \\
				\cmidrule(r){2-9}
				
				\multirow{5}{*}{$\cD(\text{Vec}(\Fb_{t}),\text{Vec}(\hat{\Fb}_{t}))$}&20&20&20&0.6740(0.0380) &0.1837(0.0371) &0.1783(0.0338) &0.1837(0.0371) &0.1784(0.0339) \\
				&50&20&50&0.7928(0.0440) &0.1074(0.0122) &0.1059(0.0111) &0.1074(0.0122) &0.1059(0.0111) \\
				&100&20&100&0.8213(0.0494) &0.0747(0.0079) &0.0737(0.0070) &0.0747(0.0079) &0.0738(0.0070) \\
				&150&20&150&0.8342(0.0503) &0.0606(0.0059) &0.0599(0.0053) &0.0606(0.0059) &0.0599(0.0053)  \\
				&200&20&200&0.8394(0.0499) &0.0521(0.0049) &0.0515(0.0044) &0.0521(0.0049) &0.0515(0.0044)  \\
				\cmidrule(r){1-9}	
				\multicolumn{9}{c}{Setting B: $p_{2}=20, p_{1}=T$}\\
				\multirow{5}{*}{$\cD(\hat{\Rb},\Rb)$}&20&20&20&0.6178(0.1205) &0.0954(0.0173) &0.0938(0.0158) &0.1151(0.0308) &0.0947(0.0162) \\
				&50&50&20&0.5514(0.1440) &0.0574(0.0058) &0.0569(0.0056) &0.0595(0.0061) &0.0574(0.0058)  \\
				&100&100&20&0.4945(0.1526) &0.0402(0.0037) &0.0399(0.0036) &0.0406(0.0038) &0.0402(0.0037) \\
				&150&150&20&0.4481(0.1573) &0.0327(0.0027) &0.0325(0.0027) &0.0328(0.0027) &0.0327(0.0027) \\
				&200&200&20&0.4193(0.1671) &0.0281(0.0021) &0.0279(0.0021) &0.0281(0.0021) &0.0281(0.0021) \\
				\cmidrule(r){2-9}
				
				\multirow{5}{*}{$\cD(\hat{\Cb},\Cb)$}&20&20&20&0.6138(0.1182) &0.0947(0.0166) &0.0933(0.0156) &0.1148(0.0290) &0.0942(0.0160)  \\
				&50&50&20&0.5903(0.1269) &0.0359(0.0054) &0.0357(0.0053) &0.0599(0.0194) &0.0357(0.0053)  \\
				&100&100&20&0.5797(0.1291) &0.0177(0.0025) &0.0177(0.0025) &0.0469(0.0206) &0.0177(0.0025) \\
				&150&150&20&0.5728(0.1292) &0.0116(0.0017) &0.0116(0.0016) &0.0424(0.0185) &0.0116(0.0016)  \\
				&200&200&20&0.5761(0.1287) &0.0088(0.0012) &0.0088(0.0012) &0.0425(0.0204) &0.0088(0.0012)  \\
				\cmidrule(r){2-9}
				
				\multirow{5}{*}{$\cD(\text{Vec}(\Fb_{t}),\text{Vec}(\hat{\Fb}_{t}))$}&20&20&20&0.6740(0.0380) &0.1837(0.0371) &0.1783(0.0338) &0.1837(0.0371) &0.1784(0.0339) \\
				&50&50&20&0.7925(0.0454) &0.1080(0.0123) &0.1064(0.0113) &0.1080(0.0123) &0.1064(0.0113)  \\
				&100&100&20&0.8253(0.0480) &0.0751(0.0073) &0.0742(0.0066) &0.0751(0.0073) &0.0742(0.0066) \\
				&150&150&20&0.8349(0.0466) &0.0604(0.0061) &0.0597(0.0057) &0.0604(0.0061) &0.0597(0.0057)  \\
				&200&200&20&0.8382(0.0504) &0.0524(0.0051) &0.0518(0.0047) &0.0524(0.0051) &0.0518(0.0047)  \\
							
				\bottomrule[2pt]
			\end{tabular}
		}
	\end{table}

To measure the performances of various methods in terms of estimating loading/factor spaces, we adopt a metric between linear spaces which was also utilized in \cite{Yu2021Projected,He2021Matrix}. For two column-wise orthogonal matrices $(\bQ_1)_{p\times q_1}$ and $(\bQ_2)_{p\times q_2}$, we define
\[
\mathcal{D}(\bQ_1,\bQ_2)=\bigg(1-\frac{1}{\max{(q_1,q_2)}}\mbox{Tr}\Big(\bQ_1\bQ_1^{\top}\bQ_2\bQ_2^{\top}\Big)\bigg)^{1/2}.
\]
By the definition of $\mathcal{D}(\bQ_1,\bQ_2)$, we can easily see that $0\leq \mathcal{D}(\bQ_1,\bQ_2)\leq 1$, which measures the distance between the column spaces spanned by  $\bQ_1$ and $\bQ_2$, i.e., $\text{span}(\bQ_1)$ and $\text{span}(\bQ_2)$. In particular, $\text{span}(\bQ_1)$ and $\text{span}(\bQ_2)$  are the same when $\mathcal{D}(\bQ_1,\bQ_2)=0$, while  $\text{span}(\bQ_1)$ and $\text{span}(\bQ_2)$   are orthogonal when $\mathcal{D}(\bQ_1,\bQ_2)=1$.  The Gram-Schmidt orthogonalization can be used to make $\bQ_1$ and $\bQ_2$  column-orthogonal matrices.
	
Table \ref{table:1} reported the averaged estimation errors (standard errors in parentheses) in terms of $\cD(\hat{\Rb},\Rb)$, $\cD(\hat{\Cb},\Cb)$ and $\cD(\text{Vec}(\Fb_{t}),\text{Vec}(\hat{\Fb}_{t}))$ for Scenarios A and B over 500 replications. All methods benefit from large dimensions in terms of estimating loading spaces. By comparing the results for $\text{OSE}_1$ and $\text{OSE}_2$, we conclude that the better the initial projection directions, the faster the loading/factor spaces converge to the corresponding true ones. As for RPILS, the results indicate that even if we start from a random guess of the projection directions, we can finally get  satisfactory estimators via the iterative procedure in Algorithm \ref{alg:1}. In addition, the RPILS performs comparably with the PE method and better than the $\alpha$-PCA method, which is consistent with our theoretical analysis.  In other words, the RPILS method also reduces the magnitudes of the
idiosyncratic error components, thereby increasing the signal-to-noise ratio and enjoying the same advantage as PE.
However, compared with the eigen-decomposition-based PE method, the  RPILS method is computationally simpler.

\subsection{Estimating the numbers of factors}
In this section, we compare the empirical performances of the proposed RPILS-ER method with the $\alpha$-PCA based ER method ($\alpha$-PCA-ER) by \cite{fan2021}, the IterER method by \cite{Yu2021Projected}, the  ER method based on iTOPUP and iTIPUP by \cite{han2022rank} (denote as iTOP-ER and iTIP-ER) and the TCorTh method by \cite{lam2021rank} in terms of estimating the pair of factor numbers. Table \ref{table:2}  presents the frequencies of exact estimation and underestimation over 500 replications under Scenario A and Scenario B by different methods. We adopt the same data generating scheme as introduced in Section \ref{datagenerate} except that we set $k_1=3,k_2=2$. And we set $k_{\max}$ = 8 for all the methods. From Table \ref{table:2}, we see that the RPILS-ER and IterER perform comparably and both perform better than the others. As $p_{1}$ and $p_2$ increase, it can be seen that all the methods' performances get better except the TCorTh method. And it can also be seen that as the dimension of $p_1$ and $p_2$ increase, the proportion of exact estimation by RPILS-ER has the tendency to converge to 1, which is consistent with our theoretical analysis.

\begin{table}[htbp]
	\caption{The frequencies of exact estimation and underestimation of the numbers of factors under Settings A and B over 500 replications.} \label{table:2}
	\renewcommand{\arraystretch}{1.4}
	\centering
	\scalebox{1}{
		\begin{tabular}{ccccccc}
			\toprule[2pt]
			$T$ &RPILS-ER &IterER &$\alpha$-PCA-ER &iTOP-ER &iTIP-ER & TCorTh  \\
			\cmidrule(r){1-7}
			\multicolumn{7}{c}{Scenario A: $p_{1}=20, p_{2}=T, k_{1}=3, k_{2}=2, k_{\max}=8$}\\
			20 &0.6880(0.3120) &0.9800(0.0200) &0.4660(0.5340) &0.5920(0.4080) &0.1160(0.8840) &0.5100(0.4900) \\
			50 &0.9840(0.0160) &1.0000(0.0000) &0.7040(0.2960) &0.8600(0.1400) &0.1880(0.8120) &0.9560(0.0440) \\
			100 &0.9940(0.0060) &1.0000(0.0000) &0.6900(0.3100) &0.9240(0.0760) &0.4140(0.5860) &0.9900(0.0100) \\
			150 &1.0000(0.0000) &1.0000(0.0000) &0.7040(0.2960) &0.9680(0.0320) &0.5340(0.4660) &0.9880(0.0120) \\
			200 &1.0000(0.0000) &1.0000(0.0000) &0.6960(0.3040) &0.9800(0.0200) &0.7200(0.2800) &0.9400(0.0600) \\
			\cmidrule(r){1-7}
			\multicolumn{7}{c}{Scenario B: $p_{2}=20, p_{1}=T, k_{1}=3, k_{2}=2, k_{\max}=8$}\\
			20 &0.6880(0.3120) &0.9800(0.0200) &0.4660(0.5340) &0.5920(0.4080) &0.1160(0.8840) &0.5100(0.4900) \\
			50 &0.9980(0.0020) &1.0000(0.0000) &0.9440(0.0560) &0.8980(0.1020) &0.2360(0.7640) &1.0000(0.0000) \\
			100 &1.0000(0.0000) &1.0000(0.0000) &0.9600(0.0400) &0.9720(0.0280) &0.3560(0.6440) &0.9760(0.0240) \\
			150 &1.0000(0.0000) &1.0000(0.0000) &0.9640(0.0360) &0.9860(0.0140) &0.4760(0.5240) &0.8600(0.1400) \\
			200 &1.0000(0.0000) &1.0000(0.0000) &0.9400(0.0600) &0.9940(0.0060) &0.6100(0.3900) &0.7200(0.2800) \\
			\bottomrule[2pt]
		\end{tabular}
	}
\end{table}

    \section{Real Data Example}

  \subsection{Fama-French $10\times10 $ portfolios}
  In this section, we study a financial portfolio dataset studied in   \cite{wang2019factor} and \cite{Yu2021Projected}. The dataset is composed of monthly returns of 100 portfolios, well structured into a $10 \times 10$ matrix at each time point, with rows corresponding to 10 levels of market capital size (denoted as S1-S10) and columns corresponding to 10 levels of book-to-equity ratio (denoted as BE1-BE10). The dataset collects monthly returns from January 1964 to December 2019 covering a total of 672 months. The details are available at the website \url{ http:
//mba.tuck.dartmouth.edu/pages/faculty/ken.french/data_library.html}.
  Following the same preprocessing as in \cite{Yu2021Projected} and \cite{wang2019factor}, we adjusted the return series by first subtracting the corresponding monthly market
excess returns and then standardizing each of the series. We imputed the missing values by the factor-model-based
method introduced in \cite{xiong2019large}. All augmented Dickey-Fuller tests reject the null hypothesis, which indicates the stationarity of  all
the series.

  As for the factor numbers, the RPILS-ER, $\alpha$-PCA-ER, iTOP-IC and iTIP-IC all suggest $(k_1, k_2) = (1, 1)$, while the IterER suggests that $(k_1, k_2) = (2, 1)$, iTOP-ER, iTIP-ER and TCorTh all suggest $(k_1, k_2) = (2, 2)$. For better illustration, we take $(k_{1},k_{2})=(2,2)$ as in \cite{Yu2021Projected}. The estimated loading matrices after varimax rotation and scaling are reported in Table \ref{table:3}.

  Table \ref{table:3} shows the estimated row and column loading matrices after varimax rotation and scaling. From the table, we can see the  proposed RPILS  method performs similarly  to PE, $\alpha$-PCA and ACCE methods in terms of the estimated loading matrices. From the perspective of Size, small Size portfolios load heavily on the first factor while large Size portfolios load on the second. From the perspective of Book-to-Equity, small BE portfolios load heavily on the second factor while large BE portfolios load mainly on the first factor.

  \begin{table}[!h]
  	\caption{Loading matrices for Fama-French data set after varimax rotation and scaling by 30.} \label{table:3}
  	\renewcommand{\arraystretch}{1}
  	\centering
  	\scalebox{1}{
  		\begin{tabular}{cccccccccccc}
  			\toprule[2pt]
  			\multicolumn{12}{c}{Size}\\
  			\cmidrule(r){1-12}
  			Method &Factor &S1 &S2 &S3 &S4 &S5 &S6 &S7 &S8 &S9 &S10  \\
  			\cmidrule(r){1-12}
  			\multirow{2}{*}{RPILS} &1 &\cellcolor {Lavender}16 &\cellcolor {Lavender}15 &\cellcolor {Lavender}12 &\cellcolor {Lavender}10 &\cellcolor {Lavender}8 &5 &3 &1 &-4 &\cellcolor {Lavender}-7 \\
  			&2 &5 &1 &-3 &-5 &\cellcolor {Lavender}-8 &\cellcolor {Lavender}-10 &\cellcolor {Lavender}-12 &\cellcolor {Lavender}-13 &\cellcolor {Lavender}-15 &\cellcolor {Lavender}-11 \\
  			\cmidrule(r){1-12}
  			\multirow{2}{*}{PE} &1 &\cellcolor {Lavender}-16 &\cellcolor {Lavender}-15 &\cellcolor {Lavender}-12 &\cellcolor {Lavender}-10 &\cellcolor {Lavender}-8 &-5 &-3 &-1 &4 &\cellcolor {Lavender}7  \\
  			&2 &\cellcolor {Lavender}-6 &-1 &3 &5 &\cellcolor {Lavender}8 &\cellcolor {Lavender}10 &\cellcolor {Lavender}12 &\cellcolor {Lavender}13 &\cellcolor {Lavender}15 &\cellcolor {Lavender}11 \\
  			\cmidrule(r){1-12}
  			\multirow{2}{*}{$\alpha$-PCA} &1 &\cellcolor {Lavender}-14 &\cellcolor {Lavender}-14 &\cellcolor {Lavender}-13 &\cellcolor {Lavender}-11 &\cellcolor {Lavender}-9 &\cellcolor {Lavender}-7 &-4 &-2 &3 &\cellcolor {Lavender}7 \\
  			&2 &-4 &-2 &1 &3 &6 &\cellcolor {Lavender}9 &\cellcolor {Lavender}12 &\cellcolor {Lavender}13 &\cellcolor {Lavender}16 &\cellcolor {Lavender}14  \\
  			\cmidrule(r){1-12}
  			\multirow{2}{*}{ACCE} &1 &\cellcolor {Lavender}-12 &\cellcolor {Lavender}-14 &\cellcolor {Lavender}-12 &\cellcolor {Lavender}-13 &\cellcolor {Lavender}-10 &\cellcolor {Lavender}-6 &-3 &-1 &4 &\cellcolor {Lavender}9 \\
  			&2 &-1 &-1 &-1 &2 &5 &\cellcolor {Lavender}10 &\cellcolor {Lavender}11 &\cellcolor {Lavender}18 &\cellcolor {Lavender}15 &\cellcolor {Lavender}11 \\
  			\cmidrule(r){1-12}
  			\multicolumn{12}{c}{Book-to-Equity}\\
  			\cmidrule(r){1-12}
  			Method &Factor &BE1 &BE2 &BE3 &BE4 &BE5 &BE6 &BE7 &BE8 &BE9 &BE10  \\
  			\cmidrule(r){1-12}
  			\multirow{2}{*}{RPILS} &1 &\cellcolor {Lavender}-6 &-1 &4 &\cellcolor {Lavender}7 &\cellcolor {Lavender}10 &\cellcolor {Lavender}11 &\cellcolor {Lavender}12 &\cellcolor {Lavender}12 &\cellcolor {Lavender}12 &\cellcolor {Lavender}10 \\
  			&2 &\cellcolor {Lavender}20 &\cellcolor {Lavender}17 &\cellcolor {Lavender}11 &\cellcolor {Lavender}8 &4 &2 &0 &-1 &-1 &0 \\
  			\cmidrule(r){1-12}
  			\multirow{2}{*}{PE} &1 &\cellcolor {Lavender}6 &1 &-4 &\cellcolor {Lavender}-7 &\cellcolor {Lavender}-10 &\cellcolor {Lavender}-11 &\cellcolor {Lavender}-12 &\cellcolor {Lavender}-12 &\cellcolor {Lavender}-12 &\cellcolor {Lavender}-10 \\
  			&2 &\cellcolor {Lavender}20 &\cellcolor {Lavender}17 &\cellcolor {Lavender}11 &\cellcolor {Lavender}8 &4 &2 &0 &-1 &-1 &0 \\
  			\cmidrule(r){1-12}
  			\multirow{2}{*}{$\alpha$-PCA} &1 &\cellcolor {Lavender}6 &2 &-4 &\cellcolor {Lavender}-7 &\cellcolor {Lavender}-10 &\cellcolor {Lavender}-11 &\cellcolor {Lavender}-12 &\cellcolor {Lavender}-13 &\cellcolor {Lavender}-12 &\cellcolor {Lavender}-11 \\
  			&2 &\cellcolor {Lavender}19 &\cellcolor {Lavender}18 &\cellcolor {Lavender}12 &\cellcolor {Lavender}8 &4 &2 &0 &-1 &-1 &-1 \\
  			\cmidrule(r){1-12}
  			\multirow{2}{*}{ACCE} &1 &\cellcolor {Lavender}6 &-1 &-4 &\cellcolor {Lavender}-8 &\cellcolor {Lavender}-8 &\cellcolor {Lavender}-9 &\cellcolor {Lavender}-10 &\cellcolor {Lavender}-13 &\cellcolor {Lavender}-15 &\cellcolor {Lavender}-12 \\
  			&2 &\cellcolor {Lavender}21 &\cellcolor {Lavender}15 &\cellcolor {Lavender}11 &\cellcolor {Lavender}6 &5 &2 &1 &-2 &-3 &1\\
  			
  			\bottomrule[2pt]
  		\end{tabular}
  	}
  \end{table}

  \begin{table}[!h]
  	\caption{Rolling validation for the Fama-French portfolios.  The sample size of the training set is $12n$ and $k_1 = k_2 = k$. $\bar{\text{MSE}},\bar{\rho},\bar {v} $ are the
mean pricing error, mean unexplained proportion of total variances and mean variation of the estimated loading space.} \label{table:4}
  	\renewcommand{\arraystretch}{1.4}
  	\centering
  	\scalebox{0.85}{
  		\begin{tabular}{cccccccccccccc}
  			\toprule[2pt]

  			\multicolumn{6}{c}{$\overline{\text{MSE}}$} &\multicolumn{4}{c}{$\bar{\rho}$} &\multicolumn{4}{c}{$\bar{v}$} \\
  			\cmidrule(r){3-6} \cmidrule(r){7-10} \cmidrule(r){11-14}
  			$n$ &$k$ &RPILS &PE &$\alpha$-PCA &ACCE &RPILS &PE &$\alpha$-PCA &ACCE &RPILS &PE &$\alpha$-PCA &ACCE \\
  \hline
  			5 &1 &0.8766 &0.8703 &$\mathbf{0.8624}$ &0.8846 &0.8001 &0.8022 &$\mathbf{0.7960}$ &0.8284 &0.2653 &$\mathbf{0.1757}$ &0.2412 &0.3032 \\
  			10 &1 &0.8735 &$\mathbf{0.8548}$ &0.8596 &0.8797 &0.7948 &$\mathbf{0.7836}$ &0.7913 &0.8149 &0.2341 &$\mathbf{0.0847}$ &0.2027 &0.1654 \\
  			15 &1 &0.8570 &$\mathbf{0.8530}$ &0.8599 &0.8844 &0.7828 &$\mathbf{0.7822}$ &0.7918 &0.8118 &$\mathbf{0.0606}$ &0.0636 &0.2335 &0.1520 \\
  			5 &2 &$\mathbf{0.5954}$ &0.5965 &0.6010 &0.6673 &$\mathbf{0.6231}$ &0.6248 &0.6284 &0.6727 &$\mathbf{0.2331}$ &0.2390 &0.3497 &0.4605  \\
  			10 &2 &0.6014 &$\mathbf{0.6013}$ &0.6108 &0.6545 &$\mathbf{0.6273}$ &0.6276 &0.6364 &0.6684 &$\mathbf{0.0892}$ &0.0924 &0.2606 &0.2568 \\
  			15 &2 &0.6027 &$\mathbf{0.6025}$ &0.6115 &0.6375 &$\mathbf{0.6261}$ &0.6262 &0.6302 &0.6516 &$\mathbf{0.0564}$ &0.0573 &0.1735 &0.1894  \\
  			5 &3 &$\mathbf{0.5204}$ &0.5216 &0.5291 &0.5639 &$\mathbf{0.5473}$ &0.5495 &0.5558 &0.5900 &$\mathbf{0.2781}$ &0.2865 &0.4321 &0.4974  \\
  			10 &3 &$\mathbf{0.5181}$ &0.5193 &0.5262 &0.5728 &$\mathbf{0.5465}$ &0.5481 &0.5549 &0.5936 &$\mathbf{0.1074}$ &0.1142 &0.3532 &0.3029  \\
  			15 &3 &$\mathbf{0.5166}$ &0.5172 &0.5220 &0.5601 &$\mathbf{0.5438}$ &0.5446 &0.5444 &0.5825 &$\mathbf{0.0783}$ &0.0839 &0.3082 &0.2986  \\
  		
  			\bottomrule[2pt]
  		\end{tabular}
  	}
  \end{table}

  We also use a rolling-validation scheme as in \cite{Yu2021Projected} and \cite{wang2019factor} to further compare the methods. For each year $t$ from 1996 to 2019, we repeatedly use $n$ (bandwidth) years before $t$ to fit the matrix-variate factor model and
  estimate the loading matrices. The estimated loadings are then used to estimate the factors and corresponding residuals of the
12 months in the current year. In detail,
  let $\Yb_{t}^{i}$ and $\hat{\Yb}_{t}^{i}$ be the observed and estimated price matrix of month $i$ in year $t$, denote $\bar{\Yb}_{t}$ as the mean price matrix, and define $$\text{MSE}_{t}=\dfrac{1}{12\times10\times10}\sum_{i=1}^{12}\|\hat{\Yb}_{t}^{i}-\Yb_{t}^{i}\|_{F}^{2}, \ \ \rho_{t}=\dfrac{\sum_{i=1}^{12}\|\hat{\Yb}_{t}^{i}-\Yb_{t}^{i}\|_{F}^{2}}{\sum_{i=1}^{12}\|\Yb_{t}^{i}-\bar{\Yb}_{t}\|_{F}^{2}},$$ as the mean squared pricing error and unexplained proportion of total variances, respectively. The variation of loading space is measured by $$v_{t}=\cD(\hat{\Cb}_{t}\otimes\hat{\Rb}_{t},\hat{\Cb}_{t-1}\otimes\hat{\Rb}_{t-1}), $$ during the rolling-validation procedure.

  Table \ref{table:4} reports the results of the means of $\text{MSE}, \rho, v$ by the proposed RPILS method and the competitors. For different bandwidth $n$ and the number of factors $k_{1}, k_{2}$, our RPILS method is comparable to the PE method and better than the other methods in terms of the averaged $\text{MSE}, \rho, v$.

  \subsection{Multinational macroeconomic indices}
  In this section, we analyze a multinational macroeconomic index dataset collected from Organization for Economic Co-operation and Development (OECD), which contains 10 macroeconomic indices across 8 countries over 130 quarters from 1988-Q1 to 2020-Q2. The 8 countries are the United States, the United Kingdom, Canada, France, Germany, Norway, Australia and New Zealand. The indices are from 4 major groups, namely consumer price, interest rate, production, and international trade. For the preprocessing procedure of the dataset, we refer to \cite{Yu2021Projected} for details.

  As for the factor numbers, the RPILS-ER and $\alpha$-PCA suggest $(k_1,k_2)=(1,2)$ , while IterER suggests $(k_1,k_2)=(1,5)$, we take the advice  $(k_{1},k_{2})=(3,4)$ by \cite{Yu2021Projected} for better illustration. The estimated loading matrices are shown in Table \ref{table:5} and Table \ref{table:6}. The  proposed RPILS method behaves almost the same as the PE method. As concluded in \cite{Yu2021Projected}, the countries excluding Germany naturally divide into 3 groups, Oceania, North American and European. On the other hand, the macroeconomic indices divide into 4 groups, consumer price, interest rate, production and international trade, which coincide with economic interpretations.

  \begin{table}[!h]
  	\caption{Row loading matrices by different methods for multinational macroeconomic index dataset, varimax rotated and multiplied by 10.} \label{table:5}
  	\renewcommand{\arraystretch}{1}
  	\centering
  	\scalebox{1}{
  		\begin{tabular}{cccccccccc}
  			\toprule[2pt]
  			Method &Factor &AUS &NZL &USA &CAN &NOR &DEU &FRA &GBR  \\
  			\cmidrule(r){1-10}
  			\multirow{3}{*}{RPILS} &1 &5 &-4 &1 &-4 &\cellcolor {Lavender}-20 &-7 &\cellcolor {Lavender}-15 &\cellcolor {Lavender}-12 \\
  			&2 &0 &1 &\cellcolor {Lavender}-20 &\cellcolor {Lavender}-16 &\cellcolor {Lavender}11 &-8 &-5 &-4 \\
  			&3 &\cellcolor {Lavender}24 &\cellcolor {Lavender}16 &1 &-3 &0 &6 &-1 &2 \\
  			\cmidrule(r){1-10}
  			\multirow{3}{*}{PE} &1 &0 &1 &\cellcolor {Lavender}-7 &\cellcolor {Lavender}-6 &3 &-3 &-2 &-1 \\
  			&2 &2 &-2 &1 &-1 &\cellcolor {Lavender}-7 &-2 &\cellcolor {Lavender}-5 &\cellcolor {Lavender}-5 \\
  			&3 &\cellcolor {Lavender}8 &\cellcolor {Lavender}6 &0 &-1 &0 &2 &-1 &1 \\
  			\cmidrule(r){1-10}
  			\multirow{3}{*}{$\alpha$-PCA} &1 &-1 &1 &\cellcolor {Lavender}-7 &\cellcolor {Lavender}-5 &3 &-3 &-2 &-1 \\
  			&2 &1 &-1 &0 &-1 &\cellcolor {Lavender}-7 &-2 &\cellcolor {Lavender}-5 &\cellcolor {Lavender}-4 \\
  			&3 &\cellcolor {Lavender}-7 &\cellcolor {Lavender}-7 &0 &1 &0 &-1 &1 &-1 \\
  			\cmidrule(r){1-10}
  			\multirow{3}{*}{ACCE} &1 &2 &-2 &1 &-2 &\cellcolor {Lavender}-6 &0 &\cellcolor {Lavender}-6 &\cellcolor {Lavender}-5 \\
  			&2 &\cellcolor {Lavender}7 &\cellcolor {Lavender}5 &0 &0 &0 &\cellcolor {Lavender}5 &0 &0 \\
  			&3 &0 &-2 &\cellcolor {Lavender}8 &\cellcolor {Lavender}4 &-2 &1 &1 &2 \\
  			
  			\bottomrule[2pt]
  		\end{tabular}
  	}
  \end{table}

  \begin{table}[!h]
  	\caption{Column loading matrices by different methods for multinational macroeconomic index dataset, varimax rotated and multiplied by 10.} \label{table:6}
  	\renewcommand{\arraystretch}{1}
  	\centering
  	\scalebox{0.85}{
  		\begin{tabular}{cccccccccccc}
  			\toprule[2pt]
  			Method &Factor &CPI:Tot &CPI:Enter &CPI:NFNE &IR:3-Mon &IR:Long &P:TIEC &P:TM &GDP &IT:Ex &IT:Im \\
  			\cmidrule(r){1-12}
  			\multirow{4}{*}{RPILS} &1 &-2 &4 &-7 &-3 &3 &\cellcolor {Lavender}-20 &\cellcolor {Lavender}-20 &-5 &2 &0 \\
  			&2 &1 &1 &-2 &\cellcolor {Lavender}-18 &\cellcolor {Lavender}-24 &1 &-1 &2 &1 &-3 \\
  			&3 &-2 &6 &-7 &1 &0 &1 &1 &\cellcolor {Lavender}16 &\cellcolor {Lavender}19 &\cellcolor {Lavender}14 \\
  			&4 &\cellcolor {Lavender}-19 &\cellcolor {Lavender}-20 &\cellcolor {Lavender}-10 &4 &-4 &1 &-1 &-1 &-1 &1 \\
  			\cmidrule(r){1-12}
  			\multirow{3}{*}{PE} &1 &1 &-2 &3 &1 &-1 &\cellcolor {Lavender}6 &\cellcolor {Lavender}7 &2 &-1 &0 \\
  			&2 &\cellcolor {Lavender}6 &\cellcolor {Lavender}7 &3 &-1 &1 &0 &0 &0 &0 &0 \\
  			&3 &0 &0 &-1 &\cellcolor {Lavender}-6 &\cellcolor {Lavender}-8 &0 &0 &1 &0 &-1 \\
  			&4 &1 &-2 &3 &0 &0 &-1 &0 &\cellcolor {Lavender}-5 &\cellcolor {Lavender}-6 &\cellcolor {Lavender}-5 \\
  			\cmidrule(r){1-12}
  			\multirow{3}{*}{$\alpha$-PCA} &1 &0 &-1 &1 &1 &-1 &\cellcolor {Lavender}7 &\cellcolor {Lavender}6 &\cellcolor {Lavender}4 &0 &0 \\
  			&2 &\cellcolor {Lavender}7 &\cellcolor {Lavender}5 &\cellcolor {Lavender}5 &-1 &1 &0 &1 &0 &0 &0 \\
  			&3 &0 &0 &0 &\cellcolor {Lavender}-7 &\cellcolor {Lavender}-7 &1 &0 &-1 &1 &0 \\
  			&4 &0 &2 &-2 &0 &0 &0 &0 &2 &\cellcolor {Lavender}7 &\cellcolor {Lavender}6 \\
  			\cmidrule(r){1-12}
  			\multirow{4}{*}{ACCE} &1 &0 &0 &0 &0 &1 &\cellcolor {Lavender}-7 &\cellcolor {Lavender}-7 &0 &0 &0 \\
  			&2 &1 &0 &0 &\cellcolor {Lavender}-5 &\cellcolor {Lavender}-4 &1 &-1 &-2 &\cellcolor {Lavender}-4 &\cellcolor {Lavender}-6 \\
  			&3 &\cellcolor {Lavender}-4 &2 &\cellcolor {Lavender}-9 &0 &0 &2 &-1 &2 &0 &0 \\
  			&4 &\cellcolor {Lavender}6 &\cellcolor {Lavender}7 &0 &-1 &3 &0 &0 &2 &1 &-1 \\
  			
  			\bottomrule[2pt]
  		\end{tabular}
  	}
  \end{table}

  We also adopt a rolling prediction procedure to further investigate  the practical use of different methods. First, we consider  the change of inflation (second-order difference of the log level of the total consumer price index--CPI:Tot) of a selected country at time $t$, denoted as $y_t$. Let $\bx_t$ be the vector of all the other 9 indices of the selected country at time $t$, and $\Zb_t$ be the $8\times 10$ panel at time $t$, with rows corresponding to the countries and column corresponding to all macroeconomic indices. We predict $y_{t+1}$ by the following Auto-Regression (AR) model (Model 1) and Factor-Augmented-Auto-Regression (FAAR) models (Models 2--4), similar to the Diffusion Index forecasting by \cite{stock2002macroeconomic}.

 \begin{description}
	\item[Model 1]  $y_{t+1}=a+by_t+\epsilon_{t+1}$,
	\item[Model 2]  $y_{t+1}=a+by_t+\bbeta^\top\bbf_{1t}+\epsilon_{t+1}$, where $\bbf_{1t}$'s are estimated  from the vector factor model with observations $\{\bx_t\}$.
	\item[Model 3] $y_{t+1}=a+by_t+\bbeta^\top\bbf_{2t}+\epsilon_{t+1}$, where $\bbf_{2t}$'s are estimated from the vector factor model with observations $\{\text{Vec}(\Zb_t)\}$.
	\item[Model 4]  $y_{t+1}=a+by_t+\bbeta^\top\text{Vec}(\Fb_t)+\epsilon_{t+1}$, where $\Fb_t$'s are estimated from the matrix factor model with observations $\{\Zb_t\}$, by the RPILS, PE, ACCE, and $\alpha$-PCA, respectively.
\end{description}

The models for comparison here are exactly the same with \cite{Yu2021Projected}, we explain these models here again for completeness.
First, Model 1 is a simple auto-regression model. Model 2 adds  common index factors of the selected country into the auto-regression model in Model 1. In Model 3 and Model 4, both index and country factors are taken into account. The difference is that Model 4 considers the more parsimonious matrix factor structure while Model 3 vectorizes the matrix time series and considers the vector factor structure. To avoid possible over-fitting in prediction, we also use the LASSO  \citep{tibshirani1996regression} to select factors and estimate the coefficients for Models 2-4.

  \begin{table}[!h]
  	\caption{MAPEs for inflation and the growth rate of GDP (both at an annual rate) for different countries with different methods, $(k_{1},k_{2})=(3,4)$.} \label{table:8}
  	\renewcommand{\arraystretch}{1.6}
  	\centering
  	\scalebox{1}{
  		\begin{tabular}{ccccccccc}
  			\toprule[2pt]
  			Model &AUS &NZL &USA &CAN &NOR &DEU &FRA &GBR  \\
  			\cmidrule(r){1-9}
  			\multicolumn{9}{c}{MAPEs for inflation rates}\\
  			Model 1 &1.5880 &$\mathbf{1.8866}$ &2.8346 &2.4605 &$\mathbf{1.9359}$ &1.8751 &1.8060 &1.5776 \\
  			Model 2 &1.6258 &1.9086 &2.5835 &2.2349 &2.2607 &1.9574 &1.8322 &1.5019 \\
  			Model 3 &1.5989 &2.0507 &2.4709 &1.9653 &2.3865 &$\mathbf{1.7279}$ &$\mathbf{1.2411}$ &$\mathbf{1.1980}$ \\
  			Model 4 (RPILS) &$\mathbf{1.5607}$ &1.9319 &2.6829 &2.2736 &2.1377 &1.8756 &1.7236 &1.3863 \\
  			Model 4 (PE) &1.5886 &2.0038 &2.3527 &1.8515 &2.1550 &$\mathbf{1.7330}$ &1.4074 &1.3241 \\
  			Model 4 (ACCE) &1.5853 &1.9292 &$\mathbf{2.2466}$ &$\mathbf{1.8273}$ &2.4442 &1.8040 &1.3822 &1.3348 \\
  			Model 4 ($\alpha$-PCA) &1.5981 &1.9177 &2.3777 &1.9023 &2.3132 &1.7402 &1.4335 &1.3490 \\
  				\cmidrule(r){1-9}
  			\multicolumn{9}{c}{MAPEs for the growth rate of GDP}\\
  	Model 1 &1.8897 &2.9211 &$\mathbf{2.4317}$ &2.8279 &3.7794 &$\mathbf{3.2970}$ &$\mathbf{2.6777}$ &$\mathbf{3.2515}$ \\
  			Model 2 &$\mathbf{1.8709}$ &2.9117 &2.4683 &2.8584 &3.8977 &3.3413 &2.8395 &3.3231 \\
  			Model 3 &$\mathbf{1.8669}$ &$\mathbf{2.8391}$ &2.5664 &2.6364 &3.7375 &3.9167 &2.8535 &3.3285 \\
  			Model 4 (RPILS) &1.8904 &3.0531 &2.5087 &2.7084 &3.7467 &$\mathbf{3.3315}$ &2.7672 &3.3198 \\
  			Model 4 (PE) &1.8731 &2.9268 &2.5819 &$\mathbf{2.5505}$ &3.9012 &3.4485 &2.8079 &3.3523 \\
  			Model 4 (ACCE) &1.9118 &3.0370 &2.5067 &2.6869 &$\mathbf{3.6739}$ &3.5838 &2.7446 &3.4251 \\
  			Model 4 ($\alpha$-PCA) &$\mathbf{1.8709}$ &3.0364 &2.6078 &2.6544 &3.8955 &3.5226 &2.8701 &3.3843  \\
  			\bottomrule[2pt]
  		\end{tabular}
  	}
  \end{table}

  For each quarter $t$ from 2008-Q1 to 2020-Q2, we use the 80 neighboring observations before $t$ to train the models and predict $y_{t+1}$ (denoted as $\hat{y}_{t+1}$). As $y_{t}$ was standardized in preprocessing, we transformed the predicted $y_{t+1}$ to match the change of inflation rate by multiplying the standard deviation and adding back the sample mean. For simplicity of notation, we still denote the transformed predictor as $\hat{y}_{t+1}$. The inflation $I_{t+1}$ is then predicted by integrating $\hat y_{t+1}$ and $I_t$, i.e., $\hat{I}_{t+1}=\hat{y}_{t+1}+I_t$.
In Model 2 and Model 3, the factor numbers before model selection are set as $k_2$ and $k_1\times k_2$, respectively. We also focus on the case that $(k_1, k_2)=(3, 4)$.
The top panel of
  Table \ref{table:8} shows the mean absolute prediction errors (MAPEs) for the annualized inflation rates. For the largest Oceania country, Australia, Model 4 with RPILS has the best prediction performance in terms of MAPE. For Norway and New Zealand,  Model 1 performs the best, indicating that the index and country factors act as noises in Model 2-4. For the USA, Canada, France, Great Britain and Germany, both index and country factors are useful for improving prediction performance. Note that the results also show that the matrix factor structure can further improve the prediction for two American countries.
  We also consider the rolling prediction of the GDP growth rate (first-order difference of the log level of GDP) for all countries.  The results shown in the bottom panel of Table  \ref{table:8} demonstrate that for the strong manufacturing American and European countries, USA, Germany, France and Great Britain, the simple AR model suffices to predict the GDP growth rates well.  For the other countries, the country and the index factors  contribute to improving the prediction performance of the GDP growth rates, while for Canada and Norway, the advantage of the matrix factor structure is more obvious.

	\section{Discussion}
	
We propose a simple iterative least squares algorithm for the matrix factor model. In the first step, we   estimate the latent factor matrices by projecting the observations with two deterministic weight matrices. We show that the inferences on  factors  are
still asymptotically valid under some regularity conditions, even if both row and column factor numbers are overestimated.
In the second step, we  estimate the row/column loading matrices by minimizing the squared Frobenius loss function under some identifiability conditions. The resultant estimators of the loading matrices are further treated as the new weight/projection matrices and we iteratively perform the above two steps until convergence. Given the true dimensions of the factor matrices, we establish the convergence rates of the estimators for loading matrices and common components at the $s$-th iteration for any $s\geq 1$. To determine the pair of factor numbers, we proposed an eigenvalue-ratio method based on the iterated results, and the resultant estimators are proven to be consistent.

As a future direction, our methodology could be generalized
 to tensor-valued time series.
 %, which is a high priority on our agenda.
Intuitively, if one substitutes the squared loss in (\ref{M2.1}) with the Huber loss, it would lead to a more robust estimator, which is of independent interest because real-world financial returns and macroeconomic indexes often exhibit heavy tails. Since a significant amount of additional work is still needed, we leave this to  future work.
%and put this a high priority on our agenda.

	\bibliographystyle{Ref.bib}
	\bibliography{Ref}

\setlength{\bibsep}{1pt}

\renewcommand{\baselinestretch}{1}
\setcounter{footnote}{0}
\clearpage
\setcounter{page}{1}
\setcounter{section}{0}
\renewcommand{\thesection}{S\arabic{section}}
\renewcommand{\thetable}{S\arabic{table}}

\title{
	\begin{center}
		\Large Supplementary Materials for ``An Efficient Iterative Least Squares Algorithm for Large-dimensional Matrix Factor Model via Random Projection"
	\end{center}
}
\date{}
\begin{center}
	\author{
	Yong He
		\footnotemark[1],
Ran Zhao\footnotemark[1],
Wen-Xin Zhou\footnotemark[2],
	}
\renewcommand{\thefootnote}{\fnsymbol{footnote}}
\footnotetext[1]{Institute of Financial Studies, Shandong University, China. E-mail:{\tt heyong@sdu.edu.cn, Zhaoran@mail.sdu.edu.cn }}
\footnotetext[2]{Department of Mathematical Sciences, University of California, San Diego, USA. E-mail:{\tt wez243@ucsd.edu}}

\end{center}
\maketitle

This document provides the  detailed proofs of the main theorems and additional lemmas and propositions.

\vspace{1em}

	%\begin{appendices}
\section{Proofs of the main theorems}

\subsection{Proof of Theorem \ref{Convergence}}
	
	\begin{proof}
	By the fact that $\Hb_{1}^{\top}(\Hb_{1}\Hb_{1}^{\top})^{+}\Hb_{1}=\Ib_{k_{1}}$, $\Hb_{2}^{\top}(\Hb_{2}\Hb_{2}^{\top})^{+}\Hb_{2}=\Ib_{k_{2}}$, then $\hat{\Fb}_{t}=\Hb_{1}\Fb_{t}\Hb_{2}^{\top}+\bm{\cE}_{t}$ implies $\Mb_{1}^{\top}\hat{\Fb}_{t}\Mb_{2}-\Fb_{t}=\Mb_{1}^{\top}\bm{\cE}_{t}\Mb_{2}$ with $\Mb_{1}=(\Hb_{1}\Hb_{1}^{\top})^{+}\Hb_{1}, \Mb_{2}=(\Hb_{2}\Hb_{2}^{\top})^{+}\Hb_{2}$. As $\|(\Hb_{1}\Hb_{1}^{\top})^{+}\Hb_{1}\|_{2}=O_{p}(\nu_{\min}^{-1}(\Hb_{1}))$ and $\|(\Hb_{2}\Hb_{2}^{\top})^{+}\Hb_{2}\|_{2}=O_{p}(\nu_{\min}^{-1}(\Hb_{2}))$, by Lemma \ref{lemma1} (1), we have
	$$\|\Mb_{1}^{\top}\hat{\Fb}_{t}\Mb_{2}-\Fb_{t}\|_{2}=O_{p}\left(\dfrac{1}{\sqrt{p_{1}}}\nu_{\min}^{-1}(\Hb_{1})\nu_{\min}^{-1}(\Hb_{2})\right).$$

	\end{proof}
	
	\subsection{Proof of Theorem \ref{Vec}}
	\begin{proof}
	By Proposition \ref{Pro}, $\lambda_{\min}(\dfrac{1}{T}\Mb^{\top}\hat{\Fb}^{\top}\hat{\Fb}\Mb) \geq \lambda_{\min}(\dfrac{1}{T}\hat{\Fb}^{\top}\hat{\Fb})\lambda_{\min}(\Mb^{\top}\Mb) \geq c(p_{1}p_{2})^{-1}\lambda_{\min}(\bD_{\Hb}^{-2})$ with large probability. By the SVD of $\Hb^{\top}$, i.e., $\Hb^{\top}=\Ub_{\Hb}(\Db_{\Hb},0)\Eb_{\Hb}^{\top}$,
we conclude that $\Pb_{\hat{\Fb}\Mb}$ is well defined.
	
	As $\hat{\Fb}=\Fb\Hb^{\top}+\bm{\cE}$, then we have $\hat{\Fb}\Mb-\Fb=\bm{\cE}(\Hb\Hb^{\top})^{+}\Hb$ with $\Mb=(\Hb\Hb^{\top})^{+}\Hb$. Further by the fact that $\|(\Hb\Hb^{\top})^{+}\Hb\|_{2}=O_{p}\left(\nu_{\min}^{-1}\right)$ and Lemma \ref{lemma1} (2), (3), we have
	$$\dfrac{1}{\sqrt{T}}\|\hat{\Fb}\Mb-\Fb\|_{2}=\dfrac{1}{\sqrt{T}}\|\bm{\cE}(\Hb\Hb^{\top})^{+}\Hb\|_{2}\leq\dfrac{1}{\sqrt{T}}\|\bm{\cE}\|_{2}\|(\Hb\Hb^{\top})^{+}\Hb\|_{2}=O_{p}\left(\dfrac{1}{\sqrt{p_{1}p_{2}}}\nu_{\min}^{-1}\right),$$
	$$\dfrac{1}{T}\|\Fb^{\top}(\hat{\Fb}\Mb-\Fb)\|_{2}\leq\dfrac{1}{T}\|\Fb^{\top}\bm{\cE}\|_{2}\|(\Hb\Hb^{\top})^{+}\Hb\|_{2}=O_{p}\left(\dfrac{1}{\sqrt{Tp_{1}p_{2}}}\nu_{\min}^{-1}\right).$$
Further by $$
	\begin{aligned}
	\left\|\dfrac{1}{T}\Mb^{\top}\hat{\Fb}^{\top}\hat{\Fb}\Mb-\dfrac{1}{T}\Fb^{\top}\Fb\right\|_{2}=&\left\|\dfrac{1}{T}(\hat{\Fb}\Mb-\Fb)^{\top}(\hat{\Fb}\Mb-\Fb)+\dfrac{1}{T}\Fb^{\top}(\hat{\Fb}\Mb-\Fb)+\dfrac{1}{T}(\hat{\Fb}\Mb-\Fb)^{\top}\Fb\right\|_{2}\\
	&\leq \dfrac{1}{T}\|\hat{\Fb}\Mb-\Fb\|_{2}^{2}+\dfrac{2}{T}\|\Fb^{\top}(\hat{\Fb}\Mb-\Fb)\|_{2}=O_{p}\left(\dfrac{1}{\sqrt{Tp_{1}p_{2}}}\nu_{\min}^{-1}+\dfrac{1}{p_{1}p_{2}}\nu_{\min}^{-2}\right),
	\end{aligned}$$
we have
	$$\left\|(\dfrac{1}{T}\Mb^{\top}\hat{\Fb}^{\top}\hat{\Fb}\Mb)^{-1}\right\|_{2}=O_{p}\left(1\right)$$ and due to $\Ab^{-1}-\Bb^{-1}=\Ab^{-1}(\Bb-\Ab)\Bb^{-1}$, we get
	$$\left\|\left(\dfrac{1}{T}\Mb^{\top}\hat{\Fb}^{\top}\hat{\Fb}\Mb\right)^{-1}-\left(\dfrac{1}{T}\Fb^{\top}\Fb\right)^{-1}\right\|_{2}=O_{p}\left(\dfrac{1}{\sqrt{Tp_{1}p_{2}}}\nu_{\min}^{-1}+\dfrac{1}{p_{1}p_{2}}\nu_{\min}^{-2}\right).$$

	As a result,
	$$\begin{aligned}
	\Pb_{\hat{\Fb}\Mb}-\Pb_{\Fb}&=\dfrac{1}{\sqrt{T}}\hat{\Fb}\Mb\left[(\dfrac{1}{T}\Mb^{\top}\hat{\Fb}^{\top}\hat{\Fb}\Mb)^{-1}-(\dfrac{1}{T}\Fb^{\top}\Fb)^{-1}\right](\dfrac{1}{\sqrt{T}}\hat{\Fb}\Mb)^{\top}\\
	&+\dfrac{1}{\sqrt{T}}\hat{\Fb}\Mb(\dfrac{1}{T}\Fb^{\top}\Fb)^{-1}\dfrac{1}{\sqrt{T}}(\hat{\Fb}\Mb-\Fb)^{\top}+\dfrac{1}{\sqrt{T}}(\hat{\Fb}\Mb-\Fb)(\dfrac{1}{T}\Fb^{\top}\Fb)^{-1}\dfrac{1}{\sqrt{T}}\Fb^{\top},
	\end{aligned}$$
	$$\|\Pb_{\hat{\Fb}\Mb}-\Pb_{\Fb}\|_{2}=O_{p}\left(\dfrac{1}{\sqrt{p_{1}p_{2}}}\nu_{\min}^{-1}\right).$$	
	Finally, noting that $\Pb_{\hat{\Fb}}\Pb_{\hat{\Fb}\Mb}=\Pb_{\hat{\Fb}\Mb}$, we have
	$$\|\Pb_{\hat{\Fb}}\Pb_{\Fb}-\Pb_{\Fb}\|_{2}\leq\|\Pb_{\hat{\Fb}}(\Pb_{\Fb}-\Pb_{\hat{\Fb}\Mb})\|_{2}+\|\Pb_{\hat{\Fb}\Mb}-\Pb_{\Fb}\|_{2}=O_{p}\left(\dfrac{1}{\sqrt{p_{1}p_{2}}}\nu_{\min}^{-1}\right).$$
		
	\end{proof}
	
\subsection{Proof of Theorem \ref{no_iter_convergence}}

\begin{proof}
	Without loss of generality, we assume in the following that $m_{1}=k_{1}=1, m_{2}=k_{2}=1.$
	$$
	\begin{aligned}
	\dfrac{1}{T}\sum_{t=1}^{T}\Xb_{t}\Wb_{2}\hat{\Fb}_{t}^{\top}&=\dfrac{1}{Tp_{1}p_{2}}\sum_{t=1}^{T}\Xb_{t}\Wb_{2}\Wb_{2}^{\top}\Xb_{t}^{\top}\Wb_{1}\\
	&=\dfrac{1}{Tp_{1}p_{2}}\sum_{t=1}^{T}(\Rb\Fb_{t}\Cb^{\top}+\Eb_{t})\Wb_{2}\Wb_{2}^{\top}(\Rb\Fb_{t}\Cb^{\top}+\Eb_{t})^{\top}\Wb_{1}\\
	&=\dfrac{1}{Tp_{1}p_{2}}\sum_{t=1}^{T}\Rb\Fb_{t}\Cb^{\top}\Wb_{2}\Wb_{2}^{\top}\Cb\Fb_{t}^{\top}\Rb^{\top}\Wb_{1}+\dfrac{1}{Tp_{1}p_{2}}\sum_{t=1}^{T}\Rb\Fb_{t}\Cb^{\top}\Wb_{2}\Wb_{2}^{\top}\Eb_{t}^{\top}\Wb_{1}\\
	&+\dfrac{1}{Tp_{1}p_{2}}\sum_{t=1}^{T}\Eb_{t}\Wb_{2}\Wb_{2}^{\top}\Cb\Fb_{t}^{\top}\Rb^{\top}\Wb_{1}+\dfrac{1}{Tp_{1}p_{2}}\sum_{t=1}^{T}\Eb_{t}\Wb_{2}\Wb_{2}^{\top}\Eb_{t}^{\top}\Wb_{1}\\
	&:=\bdelta_{1}+\bdelta_{2}+\bdelta_{3}+\bdelta_{4}.
	\end{aligned}
	$$
In the following, we analyze $\bdelta_{1},\bdelta_{2},\bdelta_{3},\bdelta_{4}$ term by term.

For the first term $\bdelta_{1}$, on the one hand, we have
	$$\begin{aligned}
	\|\bdelta_{1}\|_{F}&=\left\|\dfrac{1}{Tp_{1}p_{2}}\sum_{t=1}^{T}\Rb\Fb_{t}\Cb^{\top}\Wb_{2}\Wb_{2}^{\top}\Cb\Fb_{t}^{\top}\Rb^{\top}\Wb_{1}\right\|_{F}=
	\left\|\dfrac{p_{2}}{T}\sum_{t=1}^{T}\Rb\Fb_{t}\Hb_{2}^{\top}\Hb_{2}\Fb_{t}^{\top}\Hb_{1}\right\|_{F}\\
	&\leq p_{2}\|\Rb\|_{F}\left\|\dfrac{1}{T}\sum_{t=1}^{T}\Fb_{t}\Fb_{t}^{\top}\right\|_{F}\|\|\Hb_{2}\|_{2}^{2}\|\Hb_{1}\|_{2}=O_{p}\left(\sqrt{p_{1}}p_{2}\nu_{\max}(\Hb_{1})\nu_{\max}^{2}(\Hb_{2})\right)\\
	&=O_{p}\left(\sqrt{p_{1}}p_{2}\nu_{\min}(\Hb_{1})\nu_{\min}^{2}(\Hb_{2})\right).
	\end{aligned}$$
On the other hand, $$\begin{aligned}
	\|\bdelta_{1}\|_{F} &\geq\ \|\bdelta_{1}\|_{2}=\left\|\dfrac{p_{2}}{T}\sum_{t=1}^{T}\Rb\Fb_{t}\Hb_{2}^{\top}\Hb_{2}\Fb_{t}^{\top}\Hb_{1}\right\|_{2}=\sqrt{p_{1}}p_{2}\left\|\dfrac{1}{T}\sum_{t=1}^{T}\Fb_{t}\Hb_{2}^{\top}\Hb_{2}\Fb_{t}^{\top}\Hb_{1}\right\|_{2}\\
	&\geq \sqrt{p_{1}}p_{2}\left\|\dfrac{1}{T}\sum_{t=1}^{T}\Fb_{t}\Fb_{t}^{\top}\right\|_{2}\nu_{\min}(\Hb_{1})\nu_{\min}^{2}(\Hb_{2}) \gtrsim \sqrt{p_{1}}p_{2}\nu_{\min}(\Hb_{1})\nu_{\min}^{2}(\Hb_{2}).
\end{aligned}$$
Thus, we have $$\|\bdelta_{1}\|_{F} \asymp O_{p}\left(\sqrt{p_{1}}p_{2}\nu_{\min}(\Hb_{1})\nu_{\min}^{2}(\Hb_{2})\right).$$

	For the second term $\bdelta_{2}$, we have
	$$\begin{aligned}
	\|\bdelta_{2}\|_{F}&=\left\|\dfrac{1}{Tp_{1}p_{2}}\sum_{t=1}^{T}\Rb\Fb_{t}\Cb^{\top}\Wb_{2}\Wb_{2}^{\top}\Eb_{t}^{\top}\Wb_{1}\right\|_{F} \leq \left\|\dfrac{1}{Tp_{1}p_{2}}\sum_{t=1}^{T}\Rb\Fb_{t}\Wb_{2}^{\top}\Eb_{t}^{\top}\Wb_{1}\right\|_{F}\|\Cb^{\top}\Wb_{2}\|_{2}\\
	&\leq \dfrac{1}{Tp_{1}}\|\Rb\|_{F}\|\sum_{t=1}^{T}\Fb_{t}\Wb_{2}^{\top}\Eb_{t}^{\top}\Wb_{1}\|_{F}\|\Hb_{2}\|_{2}=O_{p}\left(\sqrt{\dfrac{p_{2}}{T}}\nu_{\min}(\Hb_{2})\right),\\		
	\end{aligned}$$
	where the last equation is derived according to Lemma \ref{basic-lemma} (3).
	
	By  Lemma \ref{basic-lemma} (3), we have $\EE\|\sum_{t=1}^{T}\Fb_{t}\Wb_{2}^{\top}\Eb_{t}^{\top}\|_{F}^{2}=O(Tp_{1}p_{2})$, thus,
	$$\begin{aligned}
	\|\bdelta_{3}\|_{F}&=\left\|\dfrac{1}{Tp_{1}p_{2}}\sum_{t=1}^{T}\Eb_{t}\Wb_{2}\Wb_{2}^{\top}\Cb\Fb_{t}^{\top}\Rb^{\top}\Wb_{1}\right\|_{F}\leq \dfrac{1}{Tp_{1}p_{2}}\|\sum_{t=1}^{T}\Eb_{t}\Wb_{2}\Fb_{t}^{\top}\|_{F}\|\Wb_{2}^{\top}\Cb\|_{2}\|\Rb^{\top}\Wb_{1}\|_{2}\\
	&=O_{p}\left(\sqrt{\dfrac{p_{1}p_{2}}{T}}\nu_{\min}(\Hb_{1})\nu_{\min}(\Hb_{2})\right).
	\end{aligned}$$
	
	By Lemma \ref{basic-lemma} (5),
	$$\|\bdelta_{4}\|_{F}=\left\|\dfrac{1}{Tp_{1}p_{2}}\sum_{t=1}^{T}\Eb_{t}\Wb_{2}\Wb_{2}^{\top}\Eb_{t}^{\top}\Wb_{1}\right\|_{F}=O_{p}\left(\sqrt{\dfrac{p_{2}}{T}}+\dfrac{1}{\sqrt{p_{1}}}\right).$$
	
	Let $\Zb=\dfrac{1}{T^{2}}\left(\sum_{t=1}^{T}\hat{\Fb}_{t}\Wb_{2}^{\top}\Xb_{t}^{\top}\right)\left(\sum_{t=1}^{T}\Xb_{t}\Wb_{2}\hat{\Fb}_{t}^{\top}\right),$ then
	$$
	\begin{aligned}
	\Zb&=\dfrac{1}{p_{1}^{2}p_{2}^{2}T^{2}}(\sum_{t=1}^{T}\Wb_{1}^{\top}\Xb_{t}\Wb_{2}\Wb_{2}^{\top}\Xb_{t}^{\top})(\sum_{t=1}^{T}\Xb_{t}\Wb_{2}\Wb_{2}^{\top}\Xb_{t}^{\top}\Wb_{1})\\
	&=(\bdelta_{1}+\bdelta_{2}+\bdelta_{3}+\bdelta_{4})^{\top}(\bdelta_{1}+\bdelta_{2}+\bdelta_{3}+\bdelta_{4})=\sum_{i=1}^{4}\sum_{j=1}^{4}\bdelta_{i}^{\top}\bdelta_{j} .
	\end{aligned}$$
	We can prove that when $\nu_{\min}^{2}(\Hb_{2}) \gg \max(\dfrac{1}{T},\dfrac{1}{p_{2}})$, $\|\Zb\|_{F}\geq \|\Zb\|_{2} \gtrsim \|\bdelta_{1}\|_{2}^{2} \gtrsim p_{1}p_{2}^{2}\nu_{\min}^{2}(\Hb_{1})\nu_{\min}^{4}(\Hb_{2})$, thus  $$\|\Zb^{-1/2}\|_{F}=O_{p}\left(\dfrac{1}{\sqrt{p_{1}}p_{2}\nu_{\min}(\Hb_{1})\nu_{\min}^{2}(\Hb_{2})}\right).$$
	
	$$\begin{aligned}
	\hat{\Rb}^{(1)}&=\sqrt{p_{1}}\left(\sum_{t=1}^{T}\Xb_{t}\Wb_{2}\hat{\Fb}_{t}^{\top}\right)\left[\left(\sum_{t=1}^{T}\hat{\Fb}_{t}\Wb_{2}^{\top}\Xb_{t}^{\top}\right)\left(\sum_{t=1}^{T}\Xb_{t}\Wb_{2}\hat{\Fb}_{t}^{\top}\right)\right]^{-1/2}\\
	&=\sqrt{p_{1}}\left(\dfrac{1}{T}\sum_{t=1}^{T}\Xb_{t}\Wb_{2}\hat{\Fb}_{t}^{\top}\right)\Zb^{-1/2}\\
	&=\Rb\left[\dfrac{1}{\sqrt{p_{1}}p_{2}}\left(\dfrac{1}{T}\sum_{t=1}^{T}\Fb_{t}\Cb^{\top}\Wb_{2}\Wb_{2}^{\top}\Cb\Fb_{t}^{\top}\Rb^{\top}\Wb_{1}\right)\Zb^{-1/2}\right]+\dfrac{1}{\sqrt{p_{1}}p_{2}}\left(\dfrac{1}{T}\sum_{t=1}^{T}\Rb\Fb_{t}\Cb^{\top}\Wb_{2}\Wb_{2}^{\top}\Eb_{t}^{\top}\Wb_{1}\right)\Zb^{-1/2}\\
	&+\dfrac{1}{\sqrt{p_{1}}p_{2}}\left(\dfrac{1}{T}\sum_{t=1}^{T}\Eb_{t}\Wb_{2}\Wb_{2}^{\top}\Cb\Fb_{t}^{\top}\Rb^{\top}\Wb_{1}\right)\Zb^{-1/2}+\dfrac{1}{\sqrt{p_{1}}p_{2}}\left(\dfrac{1}{T}\sum_{t=1}^{T}\Eb_{t}\Wb_{2}\Wb_{2}^{\top}\Eb_{t}^{\top}\Wb_{1}\right)\Zb^{-1/2}\\
	&:=\cI+\cI\cI+\cI\cI\cI+\cI\cV.
	\end{aligned}
	$$
	
	Let $\hat{\Hb}_{r}^{(1)}=\dfrac{1}{\sqrt{p_{1}}p_{2}T}\left(\sum_{t=1}^{T}\Fb_{t}\Cb^{\top}\Wb_{2}\Wb_{2}^{\top}\Cb\Fb_{t}^{\top}\Rb^{\top}\Wb_{1}\right)\Zb^{-1/2}$
	then,
	\begin{equation}\label{S1.1}
		\hat{\Rb}^{(1)}-\Rb\hat{\Hb}_{r}^{(1)}=\cI\cI+\cI\cI\cI+\cI\cV.
	\end{equation}
	As $$
	\begin{aligned}
	&\left\|\dfrac{1}{T}\sum_{t=1}^{T}\Fb_{t}\Cb^{\top}\Wb_{2}\Wb_{2}^{\top}\Cb\Fb_{t}^{\top}\Rb^{\top}\Wb_{1}\right\|_{F} \leq \left\|\dfrac{1}{T}\sum_{t=1}^{T}\Fb_{t}\Fb_{t}^{\top}\right\|_{F}\|\Cb^{\top}\Wb_{2}\|_{2}\|\Wb_{2}^{\top}\Cb\|_{2}\|\Rb^{\top}\Wb_{1}\|_{2}\\
	&= p_{1}p_{2}^{2}\left\|\dfrac{1}{T}\sum_{t=1}^{T}\Fb_{t}^{\top}\Fb_{t}\right\|_{2}\|\Hb_{2}\|_{2}^{2}\|\Hb_{1}\|_{2}
	=O_{p}\left(p_{1}p_{2}^{2}\nu_{\min}(\Hb_{1})\nu_{\min}^{2}(\Hb_{2})\right),
	\end{aligned}$$
	thus we have the following results:
	$$
	\begin{aligned}
	\|\hat{\Hb}_{r}^{(1)}\|_{F}&=\left\|\dfrac{1}{\sqrt{p_{1}}p_{2}}(\dfrac{1}{T}\sum_{t=1}^{T}\Fb_{t}\Cb^{\top}\Wb_{2}\Wb_{2}^{\top}\Cb\Fb_{t}^{\top}\Rb^{\top}\Wb_{1})\Zb^{-1/2}\right\|_{F}\\
	&\leq \dfrac{1}{\sqrt{p_{1}}p_{2}}\left\|\dfrac{1}{T}\sum_{t=1}^{T}\Fb_{t}\Cb^{\top}\Wb_{2}\Wb_{2}^{\top}\Cb\Fb_{t}^{\top}\Rb^{\top}\Wb_{1}\right\|_{F}\|\Zb^{-1/2}\|_{F}=O_{p}\left(1\right);
	\end{aligned}$$
	
	$$
	\|\cI\cI\|_{F}=\left\|\dfrac{1}{\sqrt{p_{1}}p_{2}}\left(\dfrac{1}{T}\sum_{t=1}^{T}\Rb\Fb_{t}\Cb^{\top}\Wb_{2}\Wb_{2}^{\top}\Eb_{t}^{\top}\Wb_{1}\right)\Zb^{-1/2}\right\|_{F}=O_{p}\left(\dfrac{1}{\sqrt{Tp_{2}}}\nu_{\min}^{-1}(\Hb_{1})\nu_{\min}^{-1}(\Hb_{2})\right);$$
	
	$$\|\cI\cI\cI\|_{F}=\left\|\dfrac{1}{\sqrt{p_{1}}p_{2}}\left(\dfrac{1}{T}\sum_{t=1}^{T}\Eb_{t}\Wb_{2}\Wb_{2}^{\top}\Cb\Fb_{t}^{\top}\Rb^{\top}\Wb_{1}\right)\Zb^{-1/2}\right\|_{F}=O_{p}\left(\sqrt{\dfrac{p_{1}}{Tp_{2}}}\nu_{\min}^{-1}(\Hb_{2})\right);
	$$
	
	$$\|\cI\cV\|_{F}=\left\|\dfrac{1}{\sqrt{p_{1}}p_{2}}\left(\dfrac{1}{T}\sum_{t=1}^{T}\Eb_{t}\Wb_{2}\Wb_{2}^{\top}\Eb_{t}^{\top}\Wb_{1}\right)\Zb^{-1/2}\right\|_{F}=O_{p}\left(\dfrac{1}{\sqrt{Tp_{2}}}\nu_{\min}^{-1}(\Hb_{1})\nu_{\min}^{-2}(\Hb_{2})+\dfrac{1}{\sqrt{p_{1}}p_{2}}\nu_{\min}^{-1}(\Hb_{1})\nu_{\min}^{-2}(\Hb_{2})\right);$$
	Finally, we can get
	$$\dfrac{1}{p_{1}}\|\hat{\Rb}^{(1)}-\Rb\hat{\Hb}_{r}^{(1)}\|_{F}^{2}=O_{p}\left(\dfrac{1}{Tp_{2}\nu_{\min}^{2}(\Hb_{2})}+\dfrac{1}{Tp_{1}p_{2}\nu_{\min}^{2}(\Hb_{1})\nu_{\min}^{4}(\Hb_{2})}+\dfrac{1}{p_{1}^{2}p_{2}^{2}\nu_{\min}^{2}(\Hb_{1})\nu_{\min}^{4}(\Hb_{2})}\right).$$
	
	It remains to show that $\hat{\Hb}^{(1)\top}_{r}\hat{\Hb}_{r}^{(1)} \stackrel{p}{\rightarrow} \Ib_{k_{1}}$.
	Under the condition that $\nu_{\min}^{2}(\Hb_{2}) \gg \max(\dfrac{1}{T}, \dfrac{1}{p_{2}})$, we can obtain $$\dfrac{1}{p_{1}}\|\hat{\Rb}^{(1)}-\Rb\hat{\Hb}_{r}^{(1)}\|_{F}^{2} = o_{p}\left(1\right),$$
	$$\|\dfrac{1}{p_{1}}\Rb^{\top}(\hat{\Rb}^{(1)}-\Rb\hat{\Hb}_{r}^{(1)})\|_{F} \leq (\dfrac{\|\Rb\|_{F}^{2}}{p_{1}}\dfrac{\|\hat{\Rb}^{(1)}-\Rb\hat{\Hb}_{r}^{(1)}\|_{F}^{2}}{p_{1}})^{1/2}=o_{p}(1), \ \|\dfrac{1}{p_{1}}\hat{\Rb}^{(1)\top}(\hat{\Rb}^{(1)}-\Rb\hat{\Hb}_{r}^{(1)})\|_{F}=o_{p}(1).$$
	Note that $p_{1}^{-1}\hat{\Rb}^{(1)\top}\hat{\Rb}^{(1)}=\Ib_{k_{1}}$, while $p_{1}^{-1}\Rb^{\top}\Rb=\Ib_{k_{1}}$, then $$\Ib_{k_{1}}=\dfrac{1}{p_{1}}\hat{\Rb}^{(1)\top}\Rb\hat{\Hb}_{r}^{(1)}+o_{p}(1)=\hat{\Hb}_{r}^{(1)\top}\hat{\Hb}_{r}^{(1)}+o_{p}(1).$$
	
	In the following we show the row-wise consistency of $\hat{\Rb}^{(1)}$.
	By equation (\ref{S1.1}), we have
	$$\begin{aligned}
	\hat{\bR}_{i\cdot}^{(1)}-\hat{\Hb}_{r}^{(1)\top}\bR_{i\cdot}&=\dfrac{1}{T\sqrt{p_{1}}p_{2}}\Zb^{-1/2}\left(\sum_{t=1}^{T}\Wb_{1}^{\top}\Eb_{t}\Wb_{2}\Wb_{2}^{\top}\Cb\Fb_{t}^{\top}\bR_{i\cdot}\right)+\dfrac{1}{T\sqrt{p_{1}}p_{2}}\Zb^{-1/2}\left(\sum_{t=1}^{T}\Wb_{1}^{\top}\Rb\Fb_{t}\Cb^{\top}\Wb_{2}\Wb_{2}^{\top}\be_{t,i\cdot}\right)\\
	&+\dfrac{1}{T\sqrt{p_{1}}p_{2}}\Zb^{-1/2}\left(\sum_{t=1}^{T}\Wb_{1}^{\top}\Eb_{t}\Wb_{2}\Wb_{2}^{\top}\be_{t,i\cdot}\right).
	\end{aligned}$$
	By Lemma \ref{basic-lemma} (3), we have
	$$\begin{aligned} \left\|\dfrac{1}{T\sqrt{p_{1}}p_{2}}\Zb^{-1/2}\left(\sum_{t=1}^{T}\Wb_{1}^{\top}\Eb_{t}\Wb_{2}\Wb_{2}^{\top}\Cb\Fb_{t}^{\top}\bR_{i\cdot}\right)\right\|_{2}&\leq \dfrac{1}{T\sqrt{p_{1}}p_{2}}\left\|\Zb^{-1/2}\right\|_{F}\left\|\sum_{t=1}^{T}\Wb_{1}^{\top}\Eb_{t}\Wb_{2}\Fb_{t}^{\top}\right\|_{F}\left\|\Wb_{2}^{\top}\Cb\right\|_{2}\\
	&=O_{p}\left(\dfrac{1}{\sqrt{Tp_{1}p_{2}}\nu_{\min}(\Hb_{1})\nu_{\min}(\Hb_{2})}\right).
	\end{aligned}$$
	Similar to the proof of Lemma \ref{basic-lemma} (3), we can get $\left\|\sum_{t=1}^{T}\Fb_{t}\Wb_{2}^{\top}\be_{t,i\cdot}\right\|_{F}^{2}=O_{p}\left(Tp_{2}\right)$, then
	$$\begin{aligned}
	\left\|\dfrac{1}{T\sqrt{p_{1}}p_{2}}\Zb^{-1/2}\left(\sum_{t=1}^{T}\Wb_{1}^{\top}\Rb\Fb_{t}\Cb^{\top}\Wb_{2}\Wb_{2}^{\top}\be_{t,i\cdot}\right)\right\|_{2}&\leq\dfrac{1}{T\sqrt{p_{1}}p_{2}}\left\|\Zb^{-1/2}\right\|_{F}\left\|\Wb_{1}^{\top}\Rb\right\|_{2}\left\|\Cb^{\top}\Wb_{2}\right\|_{2}\left\|\sum_{t=1}^{T}\Fb_{t}\Wb_{2}^{\top}\be_{t,i\cdot}\right\|_{F}\\
	&=O_{p}\left(\dfrac{1}{\sqrt{Tp_{2}}\nu_{\min}(\Hb_{2})}\right).
	\end{aligned}$$
	Similar to the proof of Lemma \ref{basic-lemma} (5), we can also get $\left\|\sum_{t=1}^{T}\Wb_{1}^{\top}\Eb_{t}\Wb_{2}\Wb_{2}^{\top}\be_{t,i\cdot}\right\|_{F}^{2}=O_{p}\left(Tp_{1}p_{2}^{3}+T^{2}p_{2}^{2}\right)$,
	
	$$\begin{aligned}
	\left\|\dfrac{1}{T\sqrt{p_{1}}p_{2}}\Zb^{-1/2}\left(\sum_{t=1}^{T}\Wb_{1}^{\top}\Eb_{t}\Wb_{2}\Wb_{2}^{\top}\be_{t,i\cdot}\right)\right\|_{2}&\leq\dfrac{1}{T\sqrt{p_{1}}p_{2}}\left\|\Zb^{-1/2}\right\|_{F}\left\|\sum_{t=1}^{T}\Wb_{1}^{\top}\Eb_{t}\Wb_{2}\Wb_{2}^{\top}\be_{t,i\cdot}\right\|_{F}\\
	&=O_{p}\left(\dfrac{1}{\sqrt{Tp_{1}p_{2}}\nu_{\min}(\Hb_{1})\nu_{\min}^{2}(\Hb_{2})}+\dfrac{1}{p_{1}p_{2}\nu_{\min}(\Hb_{1})\nu_{\min}^{2}(\Hb_{2})}\right).
	\end{aligned}$$

   Combining the above results, for any $i \leq p_{1}$, we can get
   $$\|\hat{\bR}_{i\cdot}^{(1)}-\hat{\Hb}_{r}^{(1)\top}\bR_{i\cdot}\|_{2}^{2}=O_{p}\left(\dfrac{1}{Tp_{2}\nu_{\min}^{2}(\Hb_{2})}+\dfrac{1}{Tp_{1}p_{2}\nu_{\min}^{2}(\Hb_{1})\nu_{\min}^{4}(\Hb_{2})}+\dfrac{1}{p_{1}^{2}p_{2}^{2}\nu_{\min}^{2}(\Hb_{1})\nu_{\min}^{4}(\Hb_{2})}\right).$$
	
    \end{proof}
	
	\subsection{Proof of Theorem \ref{convergence of an iteration}}

\begin{proof}
	First we decompose $\sum_{t=1}^{T}\Xb_{t}^{\top}\hat{\Rb}^{(1)}\hat{\Fb}_{t}$ into four terms:
	$$\begin{aligned}
		\sum_{t=1}^{T}\Xb_{t}^{\top}\hat{\Rb}^{(1)}\hat{\Fb}_{t}&=\dfrac{1}{p_{1}p_{2}}\sum_{t=1}^{T}\Xb_{t}^{\top}\hat{\Rb}^{(1)}\Wb_{1}^{\top}\Xb_{t}\Wb_{2}\\
		&=\dfrac{1}{p_{1}p_{2}}\sum_{t=1}^{T}(\Rb\Fb_{t}\Cb^{\top}+\Eb_{t})^{\top}\hat{\Rb}^{(1)}\Wb_{1}^{\top}(\Rb\Fb_{t}\Cb^{\top}+\Eb_{t})\Wb_{2}\\
		&=\dfrac{1}{p_{1}p_{2}}\sum_{t=1}^{T}\Cb\Fb_{t}^{\top}\Rb^{\top}\hat{\Rb}^{(1)}\Wb_{1}^{\top}\Rb\Fb_{t}\Cb^{\top}\Wb_{2}+\dfrac{1}{p_{1}p_{2}}\sum_{t=1}^{T}\Eb_{t}^{\top}\hat{\Rb}^{(1)}\Wb_{1}^{\top}\Rb\Fb_{t}\Cb^{\top}\Wb_{2}\\
		&+\dfrac{1}{p_{1}p_{2}}\sum_{t=1}^{T}\Cb\Fb_{t}^{\top}\Rb^{\top}\hat{\Rb}^{(1)}\Wb_{1}^{\top}\Eb_{t}\Wb_{2}+\dfrac{1}{p_{1}p_{2}}\sum_{t=1}^{T}\Eb_{t}^{\top}\hat{\Rb}^{(1)}\Wb_{1}^{\top}\Eb_{t}\Wb_{2}\\
		&=\bdelta_{1}^{(1)}+\bdelta_{2}^{(1)}+\bdelta_{3}^{(1)}+\bdelta_{4}^{(1)}.
	\end{aligned}$$
	For the first term,
	$$\begin{aligned}
		\left\|\dfrac{1}{T}\bdelta_{1}^{(1)}\right\|_{F}&=\left\|\dfrac{1}{Tp_{1}p_{2}}\sum_{t=1}^{T}\Cb\Fb_{t}^{\top}\Rb^{\top}\hat{\Rb}^{(1)}\Wb_{1}^{\top}\Rb\Fb_{t}\Cb^{\top}\Wb_{2}\right\|_{F} \asymp \left\|\dfrac{1}{Tp_{2}}\sum_{t=1}^{T}\Cb\Fb_{t}^{\top}\Wb_{1}^{\top}\Rb\Fb_{t}\Cb^{\top}\Wb_{2}\right\|_{F}\\
		&=p_{1}\|\Cb\|_{F}\left\|\dfrac{1}{T}\sum_{t=1}^{T}\Fb_{t}^{\top}\Fb_{t}\right\|_{F}\left\|\Hb_{1}\right\|_{2}\left\|\Hb_{2}\right\|_{2}=O_{p}\left(p_{1}\sqrt{p_{2}}\nu_{\min}(\Hb_{1})\nu_{\min}(\Hb_{2})\right).
	\end{aligned}$$
In addition, we also have $\|\bdelta_{1}^{(1)}\|_{F} \geq \|\bdelta_{1}^{(1)}\|_{2} \gtrsim Tp_{1}\sqrt{p_{2}}\nu_{\min}(\Hb_{1})\nu_{\min}(\Hb_{2})$, thus, we have  $$\|\bdelta_{1}^{(1)}\|_{F}\asymp
O_{p}\left(Tp_{1}\sqrt{p_{2}}\nu_{\min}(\Hb_{1})\nu_{\min}(\Hb_{2})\right).$$
	
	By Lemma \ref{need-in-proof}, we get  $$\left\|\sum_{s=1}^{T}\Eb_{s}^{\top}(\hat{\Rb}^{(1)}-\Rb\hat{\Hb}_{r}^{(1)})\Fb_{s}\right\|_{F}^{2}=O_{p}\left(\dfrac{p_{1}^{2}}{p_{2}\nu_{\min}^{2}(\Hb_{2})}+\dfrac{p_{1}}{\nu_{\min}^{2}(\Hb_{1})\nu_{\min}^{4}(\Hb_{2})}+\dfrac{T}{p_{2}\nu_{\min}^{2}(\Hb_{1})\nu_{\min}^{4}(\Hb_{2})}\right),$$ then
	$$\begin{aligned}
		\|\bdelta_{2}^{(1)}\|_{F}&=\left\|\dfrac{1}{p_{1}p_{2}}\sum_{t=1}^{T}\Eb_{t}^{\top}\hat{\Rb}^{(1)}\Wb_{1}^{\top}\Rb\Fb_{t}\Cb^{\top}\Wb_{2}\right\|_{F}=\left\|\dfrac{1}{p_{1}p_{2}}\sum_{t=1}^{T}\Eb_{t}^{\top}(\hat{\Rb}^{(1)}-\Rb\hat{\Hb}_{r}^{(1)}+\Rb\hat{\Hb}_{r}^{(1)})\Wb_{1}^{\top}\Rb\Fb_{t}\Cb^{\top}\Wb_{2}\right\|_{F}\\
		&\leq \left\|\dfrac{1}{p_{1}p_{2}}\sum_{t=1}^{T}\Eb_{t}^{\top}(\hat{\Rb}^{(1)}-\Rb\hat{\Hb}_{r}^{(1)})\Wb_{1}^{\top}\Rb\Fb_{t}\Cb^{\top}\Wb_{2}\right\|_{F}+\left\|\dfrac{1}{p_{1}p_{2}}\sum_{t=1}^{T}\Eb_{t}^{\top}\Rb\hat{\Hb}_{r}^{(1)}\Wb_{1}^{\top}\Rb\Fb_{t}\Cb^{\top}\Wb_{2}\right\|_{F}\\
		& \leq \left\|\sum_{t=1}^{T}\Eb_{t}^{\top}(\hat{\Rb}^{(1)}-\Rb\hat{\Hb}_{r}^{(1)})\Fb_{t}\right\|_{F}\left\|\Hb_{1}\right\|_{2}\left\|\Hb_{2}\right\|_{2}+\left\|\sum_{t=1}^{T}\Eb_{t}^{\top}\Rb\Fb_{t}\right\|_{F}\left\|\Hb_{1}\right\|_{2}\left\|\Hb_{2}\right\|_{2}\left\|\hat{\Hb}_{r}^{(1)}\right\|_{F}\\
		&=O_{p}\left(\sqrt{Tp_{1}p_{2}}\nu_{\min}(\Hb_{1})\nu_{\min}(\Hb_{2})+\dfrac{p_{1}}{\sqrt{p_{2}}}\nu_{\min}(\Hb_{1})+\dfrac{\sqrt{p_{1}}}{\nu_{\min}(\Hb_{2})}\right).
	\end{aligned}$$
	
	Similar to the proof of Lemma \ref{basic-lemma} (3), we also have
	$\EE\|\sum_{t=1}^{T}\Fb_{t}^{\top}\Wb_{1}^{\top}\Eb_{t}\Wb_{2}\|_{F}^{2}=O\left(Tp_{1}p_{2}\right)$, thus
	
	$$\begin{aligned}
		\left\|\bdelta_{3}^{(1)}\right\|_{F}&=\left\|\dfrac{1}{p_{1}p_{2}}\sum_{t=1}^{T}\Cb\Fb_{t}^{\top}\Rb^{\top}\hat{\Rb}^{(1)}\Wb_{1}^{\top}\Eb_{t}\Wb_{2}\right\|_{F} \asymp\dfrac{1}{p_{2}}\left\|\sum_{t=1}^{T}\Cb\Fb_{t}^{\top}\Wb_{1}^{\top}\Eb_{t}\Wb_{2}\right\|_{F}\\
		&\leq\dfrac{1}{p_{2}}\|\Cb\|_{F}\left\|\sum_{t=1}^{T}\Fb_{t}^{\top}\Wb_{1}^{\top}\Eb_{t}\Wb_{2}\right\|_{F}=O_{p}\left(\sqrt{Tp_{1}}\right).
	\end{aligned}$$
	
	In the proof of Lemma \ref{basic-lemma} (5), we have proved $\|\sum_{t=1}^{T}\Eb_{t}^{\top}\Wb_{1}^{\top}\Eb_{t}\Wb_{2}\|_{F}^{2}=O_{p}\left(Tp_{1}^{2}p_{2}^{2}+T^{2}p_{1}p_{2}\right)$, hence
	$$\begin{aligned}
		\left\|\bdelta_{4}^{(1)}\right\|_{F}&=\left\|\dfrac{1}{p_{1}p_{2}}\sum_{t=1}^{T}\Eb_{t}^{\top}\hat{\Rb}^{(1)}\Wb_{1}^{\top}\Eb_{t}\Wb_{2}\right\|_{F}=\left\|\dfrac{1}{p_{1}p_{2}}\sum_{t=1}^{T}\Eb_{t}^{\top}(\hat{\Rb}^{(1)}-\Rb\hat{\Hb}_{r}^{(1)}+\Rb\hat{\Hb}_{r}^{(1)})\Wb_{1}^{\top}\Eb_{t}\Wb_{2}\right\|_{F}\\
		&\leq \left\|\dfrac{1}{p_{1}p_{2}}\sum_{t=1}^{T}\Eb_{t}^{\top}(\hat{\Rb}^{(1)}-\Rb\hat{\Hb}_{r}^{(1)})\Wb_{1}^{\top}\Eb_{t}\Wb_{2}\right\|_{F}+\left\|\dfrac{1}{p_{1}p_{2}}\sum_{t=1}^{T}\Eb_{t}^{\top}\Rb\hat{\Hb}_{r}^{(1)}\Wb_{1}^{\top}\Eb_{t}\Wb_{2}\right\|_{F}\\
		&\leq \dfrac{1}{p_{1}p_{2}}\left\|\sum_{t=1}^{T}\Eb_{t}^{\top}\Wb_{1}^{\top}\Eb_{t}\Wb_{2}\right\|_{F}\left\|\hat{\Rb}^{(1)}-\Rb\hat{\Hb}_{r}^{(1)}\right\|_{F}+\dfrac{1}{p_{1}p_{2}}\left\|\sum_{t=1}^{T}\Eb_{t}^{\top}\Wb_{1}^{\top}\Eb_{t}\Wb_{2}\right\|_{F}\left\|\Rb\right\|_{F}\|\hat{\Hb}_{r}^{(1)}\|_{F}\\
		&=O_{p}\left(\sqrt{Tp_{1}}+\dfrac{T}{\sqrt{p_{2}}}\right).
	\end{aligned}$$
	
	Let $\Zb^{(1)}=(\sum_{t=1}^{T}\hat{\Fb}_{t}^{\top}\hat{\Rb}^{(1)\top}\Xb_{t})(\sum_{t=1}^{T}\Xb_{t}^{\top}\hat{\Rb}^{(1)}\hat{\Fb}_{t})$, then $\Zb^{(1)}=\sum_{i=1}^{4}\sum_{j=1}^{4}\bdelta_{i}^{(1)\top}\bdelta_{j}^{(1)}$. We can prove that $\|\Zb^{(1)}\|_{F} \gtrsim T^{2}p_{1}^{2}p_{2}\nu_{\min}^{2}(\Hb_{1})\nu_{\min}^{2}(\Hb_{2})$, hence $\|(\Zb^{(1)})^{-1/2}\|_{F}=O_{p}\left(\dfrac{1}{Tp_{1}\sqrt{p_{2}}\nu_{\min}(\Hb_{1})\nu_{\min}(\Hb_{2})}\right)$.
	
	$$\begin{aligned}
		\hat{\Cb}^{(1)}&=\sqrt{p_{2}}\left(\sum_{t=1}^{T}\Xb_{t}^{\top}\hat{\Rb}^{(1)}\hat{\Fb}_{t}\right)\left[\left(\sum_{t=1}^{T}\hat{\Fb}_{t}^{\top}\hat{\Rb}^{(1)\top}\Xb_{t}\right)\left(\sum_{t=1}^{T}\Xb_{t}^{\top}\hat{\Rb}^{(1)}\hat{\Fb}_{t}\right)\right]^{-1/2}\\
		&=\sqrt{p_{2}}\left(\dfrac{1}{p_{1}p_{2}}\sum_{t=1}^{T}\Cb\Fb_{t}^{\top}\Rb^{\top}\hat{\Rb}^{(1)}\Wb_{1}^{\top}\Rb\Fb_{t}\Cb^{\top}\Wb_{2}\right)\left(\Zb^{(1)}\right)^{-1/2}\\
		&+\sqrt{p_{2}}\left(\dfrac{1}{p_{1}p_{2}}\sum_{t=1}^{T}\Eb_{t}^{\top}\hat{\Rb}^{(1)}\Wb_{1}^{\top}\Rb\Fb_{t}\Cb^{\top}\Wb_{2}\right)\left(\Zb^{(1)}\right)^{-1/2}\\
		&+\sqrt{p_{2}}\left(\dfrac{1}{p_{1}p_{2}}\sum_{t=1}^{T}\Cb\Fb_{t}^{\top}\Rb^{\top}\hat{\Rb}^{(1)}\Wb_{1}^{\top}\Eb_{t}\Wb_{2}\right)\left(\Zb^{(1)}\right)^{-1/2}\\
		&+\sqrt{p_{2}}\left(\dfrac{1}{p_{1}p_{2}}\sum_{t=1}^{T}\Eb_{t}^{\top}\hat{\Rb}^{(1)}\Wb_{1}^{\top}\Eb_{t}\Wb_{2}\right)\left(\Zb^{(1)}\right)^{-1/2}.\\
		&=\sqrt{p_{2}}\left(\bdelta_{1}^{(1)}(\Zb^{(1)})^{-1/2}+\bdelta_{2}^{(1)}(\Zb^{(1)})^{-1/2}+\bdelta_{3}^{(1)}(\Zb^{(1)})^{-1/2}+\bdelta_{4}^{(1)}(\Zb^{(1)})^{-1/2}\right).
	\end{aligned}$$
	Let $\hat{\Hb}_{c}^{(1)}=\left(\dfrac{1}{p_{1}\sqrt{p_{2}}}\sum_{t=1}^{T}\Fb_{t}^{\top}\Rb^{\top}\hat{\Rb}^{(1)}\Wb_{1}^{\top}\Rb\Fb_{t}\Cb^{\top}\Wb_{2}\right)\left(\Zb^{(1)}\right)^{-1/2}$, then
	$$\hat{\Cb}^{(1)}-\Cb\hat{\Hb}_{c}^{(1)}=\sqrt{p_{2}}\bdelta_{2}^{(1)}(\Zb^{(1)})^{-1/2}+\sqrt{p_{2}}\bdelta_{3}^{(1)}(\Zb^{(1)})^{-1/2}+\sqrt{p_{2}}\bdelta_{4}^{(1)}(\Zb^{(1)})^{-1/2}.$$
	
	$$\begin{aligned}
		\|\hat{\Hb}_{c}^{(1)}\|_{F}&=\left\|\left(\dfrac{1}{p_{1}\sqrt{p_{2}}}\sum_{t=1}^{T}\Fb_{t}^{\top}\Rb^{\top}\hat{\Rb}^{(1)}\Wb_{1}^{\top}\Rb\Fb_{t}\Cb^{\top}\Wb_{2}\right)\left(\Zb^{(1)}\right)^{-1/2}\right\|_{F}\\
		&\leq \left\|\dfrac{1}{p_{1}\sqrt{p_{2}}}\sum_{t=1}^{T}\Fb_{t}^{\top}\Rb^{\top}\hat{\Rb}^{(1)}\Wb_{1}^{\top}\Rb\Fb_{t}\Cb^{\top}\Wb_{2}\right\|_{F}\left\|\left(\Zb^{(1)}\right)^{-1/2}\right\|_{F}\\
		&=O_{p}\left(1\right).
	\end{aligned}$$
	$$\|\bdelta_{2}^{(1)}(\Zb^{(1)})^{-1/2}\|_{F}=O_{p}\left(\dfrac{1}{\sqrt{Tp_{1}}}+\dfrac{1}{Tp_{2}}\nu_{\min}^{-1}(\Hb_{2})+\dfrac{1}{T\sqrt{p_{1}p_{2}}}\nu_{\min}^{-1}(\Hb_{1})\nu_{\min}^{-2}(\Hb_{2})\right).$$
	$$\|\bdelta_{3}^{(1)}(\Zb^{(1)})^{-1/2}\|_{F}=O_{p}\left(
	\dfrac{1}{\sqrt{Tp_{1}p_{2}}}\nu_{\min}^{-1}(\Hb_{1})\nu_{\min}^{-1}(\Hb_{2})\right).$$
	$$\|\bdelta_{4}^{(1)}(\Zb^{(1)})^{-1/2}\|_{F}=O_{p}\left(\dfrac{1}{\sqrt{Tp_{1}p_{2}}}\nu_{\min}^{-1}(\Hb_{1})\nu_{\min}^{-1}(\Hb_{2})+\dfrac{1}{p_{1}p_{2}}\nu_{\min}^{-1}(\Hb_{1})\nu_{\min}^{-1}(\Hb_{2})\right).$$
	As a result $$\dfrac{1}{p_{2}}\|\hat{\Cb}^{(1)}-\Cb\hat{\Hb}_{c}^{(1)}\|_{F}^{2}=O_{p}\left(\dfrac{1}{Tp_{1}}+\dfrac{1}{p_{1}^{2}p_{2}^{2}\nu_{\min}^{2}(\Hb_{1})\nu_{\min}^{2}(\Hb_{2})}+\dfrac{1}{T^{2}p_{2}^{2}\nu_{\min}^{2}(\Hb_{2})}+\dfrac{1}{Tp_{1}p_{2}\nu_{\min}^{2}(\Hb_{1})\nu_{\min}^{2}(\Hb_{2})}\right).$$
	
	$$\Big\|\dfrac{1}{p_{2}}\Cb^{\top}(\hat{\Cb}^{(1)}-\Cb\hat{\Hb}_{c}^{(1)})\Big\|_{F}\leq (\dfrac{\|\Cb\|_{F}^{2}}{p_{2}}\dfrac{\|\hat{\Cb}^{(1)}-\Cb\hat{\Hb}_{c}^{(1)}\|_{F}^{2}}{p_{2}})^{1/2}=o_{p}\left(1\right), \Big\|\dfrac{1}{p_{2}}\hat{\Cb}^{(1)\top}(\hat{\Cb}^{(1)}-\Cb\hat{\Hb}_{c}^{(1)})\Big\|_{F}=o_{p}\left(1\right).$$
	Note that $p_{2}^{-1}\Cb^{\top}\Cb=\Ib_{k_{2}}, \ p_{2}^{-1}\hat{\Cb}^{(1)\top}\hat{\Cb}^{(1)}=\Ib_{k_{2}}$, then
	$$\Ib_{k_{2}}=\dfrac{1}{p_{2}}\hat{\Cb}^{(1)\top}\Cb\hat{\Hb}_{c}^{(1)}+o_{p}\left(1\right)=\hat{\Hb}_{c}^{(1)\top}\hat{\Hb}_{c}^{(1)}+o_{p}\left(1\right), \ \hat{\Hb}_{c}^{(1)\top}\hat{\Hb}_{c}^{(1)}\stackrel{p}{\rightarrow} \Ib_{k_{2}}.$$
	
	In the  following, we shows the row-wise consistency of $\hat{\Cb}^{(1)}$.
	
	For $j \leq p_{2}$,
	$$\begin{aligned}
		\hat{\bC}_{j\cdot}^{(1)}-\hat{\Hb}_{c}^{(1)\top}\bC_{j\cdot}&=\dfrac{1}{p_{1}\sqrt{p_{2}}}(\Zb^{(1)})^{-1/2}\left(\sum_{t=1}^{T}\Wb_{2}^{\top}\Cb\Fb_{t}^{\top}\Rb^{\top}\Wb_{1}\hat{\Rb}^{(1)\top}\be_{t,\cdot j}\right)\\
		&+\dfrac{1}{p_{1}\sqrt{p_{2}}}(\Zb^{(1)})^{-1/2}\left(\sum_{t=1}^{T}\Wb_{2}^{\top}\Eb_{t}^{\top}\Wb_{1}\hat{\Rb}^{(1)\top}\Rb\Fb_{t}\bC_{j\cdot}\right)\\
		&+\dfrac{1}{p_{1}\sqrt{p_{2}}}(\Zb^{(1)})^{-1/2}\left(\sum_{t=1}^{T}\Wb_{2}^{\top}\Eb_{t}^{\top}\Wb_{1}\hat{\Rb}^{(1)\top}\be_{t,\cdot j}\right).
	\end{aligned}$$
	
	Firstly, similar to the proof of Lemma \ref{need-in-proof}, we can obtain
	$$\left\|\sum_{t=1}^{T}\Fb_{t}^{\top}(\hat{\Rb}^{(1)}-\Rb\hat{\Hb}_{r}^{(1)})^{\top}\be_{t,\cdot j}\right\|_{F}^{2}=O_{p}\left(\dfrac{p_{1}^{2}}{p_{2}^{2}\nu_{\min}^{2}(\Hb_{2})}+\dfrac{p_{1}}{p_{2}\nu_{\min}^{2}(\Hb_{1})\nu_{\min}^{4}(\Hb_{2})}+\dfrac{T}{p_{2}^{2}\nu_{\min}^{2}(\Hb_{1})\nu_{\min}^{4}(\Hb_{2})}\right),$$
	
	$$\begin{aligned}
		\left\|\dfrac{1}{p_{1}\sqrt{p_{2}}}\sum_{t=1}^{T}\Wb_{2}^{\top}\Cb\Fb_{t}^{\top}\Rb^{\top}\Wb_{1}\hat{\Rb}^{(1)\top}\be_{t,\cdot j}\right\|_{2}&\leq\dfrac{1}{p_{1}\sqrt{p_{2}}}\left\|\sum_{t=1}^{T}\Fb_{t}^{\top}\hat{\Rb}^{(1)\top}\be_{t,\cdot j}\right\|_{F}\|\Wb_{2}^{\top}\Cb\|_{2}\|\Rb^{\top}\Wb_{1}\|_{2}\\
		&\leq\dfrac{1}{p_{1}\sqrt{p_{2}}}\left\|\sum_{t=1}^{T}\Fb_{t}^{\top}(\hat{\Rb}^{(1)}-\Rb_{t}\hat{\Hb}_{r}^{(1)})^{\top}\be_{t,\cdot j}\right\|_{F}\|\Wb_{2}^{\top}\Cb\|_{2}\|\Rb^{\top}\Wb_{1}\|_{2}\\
		&+\dfrac{1}{p_{1}\sqrt{p_{2}}}\left\|\sum_{t=1}^{T}\Fb_{t}^{\top}\Rb_{t}^{\top}\be_{t,\cdot j}\right\|_{F}\|\Wb_{2}^{\top}\Cb\|_{2}\|\Rb^{\top}\Wb_{1}\|_{2}\|\hat{\Hb}_{c}^{(1)}\|_{F},
	\end{aligned}$$
	then,
	$$\begin{aligned}
		\left\|\dfrac{1}{p_{1}\sqrt{p_{2}}}(\Zb^{(1)})^{-1/2}\left(\sum_{t=1}^{T}\Wb_{2}^{\top}\Cb\Fb_{t}^{\top}\Rb^{\top}\Wb_{1}\hat{\Rb}^{(1)\top}\be_{t,\cdot j}\right)\right\|_{2}&\leq \dfrac{1}{p_{1}\sqrt{p_{2}}}\left\|\sum_{t=1}^{T}\Wb_{2}^{\top}\Cb\Fb_{t}^{\top}\Rb^{\top}\Wb_{1}\hat{\Rb}^{(1)\top}\be_{t,\cdot j}\right\|_{F}\left\|(\Zb^{(1)})^{-1/2}\right\|_{F}\\
		&=O_{p}\left(\dfrac{1}{Tp_{2}\nu_{\min}(\Hb_{2})}+\dfrac{1}{T\sqrt{p_{1}p_{2}}\nu_{\min}(\Hb_{1})\nu_{\min}^{2}(\Hb_{2})}+\dfrac{1}{\sqrt{Tp_{1}}}\right).
	\end{aligned}$$
	Secondly, $$\begin{aligned}
		\left\|\dfrac{1}{p_{1}\sqrt{p_{2}}}(\Zb^{(1)})^{-1/2}\left(\sum_{t=1}^{T}\Wb_{2}^{\top}\Eb_{t}^{\top}\Wb_{1}\hat{\Rb}^{(1)\top}\Rb\Fb_{t}\bC_{j\cdot}\right)\right\|_{2}&\asymp\left\|\dfrac{1}{\sqrt{p_{2}}}(\Zb^{(1)})^{-1/2}\left(\sum_{t=1}^{T}\Wb_{2}^{\top}\Eb_{t}^{\top}\Wb_{1}\Fb_{t}\right)\right\|_{2}\\
		&=O_{p}\left(\dfrac{1}{\sqrt{Tp_{1}p_{2}}\nu_{\min}(\Hb_{1})\nu_{\min}(\Hb_{2})}\right).
	\end{aligned}$$
	Thirdly,
	$$\begin{aligned}
		\left\|\sum_{t=1}^{T}\Wb_{2}^{\top}\Eb_{t}^{\top}\Wb_{1}\hat{\Rb}^{(1)\top}\be_{t,\cdot j}\right\|_{F}&=\left\|\sum_{t=1}^{T}\Wb_{2}^{\top}\Eb_{t}^{\top}\Wb_{1}(\hat{\Rb}^{(1)}-\Rb\hat{\Hb}_{r}^{(1)}+\Rb\hat{\Hb}_{r}^{(1)})^{\top}\be_{t,\cdot j}\right\|_{F}\\
		&\leq\left\|\sum_{t=1}^{T}\Wb_{2}^{\top}\Eb_{t}^{\top}\Wb_{1}\be_{t,\cdot j}^{\top}\right\|_{F}\left\|\hat{\Rb}^{(1)}-\Rb\hat{\Hb}_{r}^{(1)}\right\|_{F}\\
		&+\left\|\sum_{t=1}^{T}\Wb_{2}^{\top}\Eb_{t}^{\top}\Wb_{1}\be_{t,\cdot j}^{\top}\right\|_{F}\left\|\Rb\right\|_{F}\|\hat{\Hb}_{r}^{(1)\top}\|_{F},
	\end{aligned}$$
	
	$$\begin{aligned}
		\left\|\dfrac{1}{p_{1}\sqrt{p_{2}}}(\Zb^{(1)})^{-1/2}\left(\sum_{t=1}^{T}\Wb_{2}^{\top}\Eb_{t}^{\top}\Wb_{1}\hat{\Rb}^{(1)\top}\be_{t,\cdot j}\right)\right\|_{F}&\leq \dfrac{1}{p_{1}\sqrt{p_{2}}}\left\|(\Zb^{(1)})^{-1/2}\left(\sum_{t=1}^{T}\Wb_{2}^{\top}\Eb_{t}^{\top}\Wb_{1}\hat{\Rb}^{(1)\top}\be_{t,\cdot j}\right)\right\|_{F}\\
		&=O_{p}\left(\dfrac{1}{p_{1}p_{2}\nu_{\min}(\Hb_{1})\nu_{\min}(\Hb_{2})}+\dfrac{1}{\sqrt{Tp_{1}p_{2}}\nu_{\min}(\Hb_{1})\nu_{\min}(\Hb_{2})}\right).
	\end{aligned}$$

    As a result, we get $$\|\hat{\bC}_{j\cdot}^{(1)}-\hat{\Hb}_{c}^{(1)\top}\bC_{j\cdot}\|_{2}^{2}=O_{p}\left(\dfrac{1}{Tp_{1}}+\dfrac{1}{p_{1}^{2}p_{2}^{2}\nu_{\min}^{2}(\Hb_{1})\nu_{\min}^{2}(\Hb_{2})}+\dfrac{1}{T^{2}p_{2}^{2}\nu_{\min}^{2}(\Hb_{2})}+\dfrac{1}{Tp_{1}p_{2}\nu_{\min}^{2}(\Hb_{1})\nu_{\min}^{2}(\Hb_{2})}\right).$$
	
    \end{proof}

\subsection{	 Proof of Theorem \ref{factor and common component}}
	\begin{proof}
		By definition,
		$$\begin{aligned}
			\hat{\Fb}_{t}^{(2)}&=\dfrac{1}{p_{1}p_{2}}\hat{\Rb}^{(1)\top}\Xb_{t}\hat{\Cb}^{(1)}=\dfrac{1}{p_{1}p_{2}}\hat{\Rb}^{(1)\top}\Rb\Fb_{t}\Cb^{\top}\hat{\Cb}^{(1)}+\dfrac{1}{p_{1}p_{2}}\hat{\Rb}^{(1)\top}\Eb_{t}\hat{\Cb}^{(1)}\\
			&=\dfrac{1}{p_{1}p_{2}}\hat{\Rb}^{(1)\top}\left(\Rb-\hat{\Rb}^{(1)}(\hat{\Hb}_{r}^{(1)})^{-1}+\hat{\Rb}^{(1)}(\hat{\Hb}_{r}^{(1)})^{-1}\right)\Fb_{t}\left(\Cb-\hat{\Cb}^{(1)}(\hat{\Hb}_{c}^{(1)})^{-1}+\hat{\Cb}^{(1)}(\hat{\Hb}_{c}^{(1)})^{-1}\right)^{\top}\hat{\Cb}^{(1)}\\
			&+\dfrac{1}{p_{1}p_{2}}\left(\hat{\Rb}^{(1)}-\Rb\hat{\Hb}_{r}^{(1)}+\Rb\hat{\Hb}_{r}^{(1)}\right)^{\top}\Eb_{t}\left(\hat{\Cb}^{(1)}-\Cb\hat{\Hb}_{c}^{(1)}+\Cb\hat{\Hb}_{c}^{(1)}\right).
		\end{aligned}$$
		
		Note that $\hat{\Rb}^{(1)\top}\hat{\Rb}^{(1)}=p_{1}\Ib_{k_{1}}$, $\hat{\Cb}^{(1)\top}\hat{\Cb}^{(1)}=p_{2}\Ib_{k_{2}}$, then
		$$\begin{aligned}
			\hat{\Fb}_{t}^{(2)}-(\hat{\Hb}_{r}^{(1)})^{-1}\Fb_{t}\left((\hat{\Hb}_{c}^{(1)})^{-1}\right)^{\top}&=\dfrac{1}{p_{1}p_{2}}\hat{\Rb}^{(1)\top}\left(\Rb-\hat{\Rb}^{(1)}(\hat{\Hb}_{r}^{(1)})^{-1}\right)\Fb_{t}\left(\Cb-\hat{\Cb}(\hat{\Hb}_{c}^{(1)})^{-1}\right)^{\top}\hat{\Cb}^{(1)}\\
			&+\dfrac{1}{p_{1}}\hat{\Rb}^{(1)\top}\left(\Rb-\hat{\Rb}^{(1)}(\hat{\Hb}_{r}^{(1)})^{-1}\right)\Fb_{t}\left((\hat{\Hb}_{c}^{(1)})^{-1}\right)^{\top}\\
			&+\dfrac{1}{p_{2}}(\hat{\Hb}_{r}^{(1)})^{-1}\Fb_{t}\left(\Cb-\hat{\Cb}(\hat{\Hb}_{c}^{(1)})^{-1}\right)^{\top}\hat{\Cb}^{(1)}\\
			&+\dfrac{1}{p_{1}p_{2}}\left(\hat{\Rb}^{(1)}-\Rb\hat{\Hb}_{r}^{(1)}\right)^{\top}\Eb_{t}\left(\hat{\Cb}^{(1)}-\Cb\hat{\Hb}_{c}^{(1)}\right)\\
			&+\dfrac{1}{p_{1}p_{2}}\left(\hat{\Rb}^{(1)}-\Rb\hat{\Hb}_{r}^{(1)}\right)^{\top}\Eb_{t}\Cb\hat{\Hb}_{c}^{(1)}+\dfrac{1}{p_{1}p_{2}}\hat{\Hb}_{r}^{(1)\top}\Rb^{\top}\Eb_{t}\left(\hat{\Cb}^{(1)}-\Cb\hat{\Hb}_{c}^{(1)}\right)\\
			&+\dfrac{1}{p_{1}p_{2}}\hat{\Hb}_{r}^{(1)\top}\Rb^{\top}\Eb_{t}\Cb\hat{\Hb}_{c}^{(1)}.
		\end{aligned}$$
		
		In Lemma \ref{Product of loadings}, we prove that
		$$\left\|\Rb^{\top}(\hat{\Rb}^{(1)}-\Rb\hat{\Hb}_{r}^{(1)})\right\|_{F}=O_{p}\left(\dfrac{\sqrt{p_{1}}}{\sqrt{Tp_{2}}\nu_{\min}(\Hb_{1})\nu_{\min}(\Hb_{2})}+\dfrac{1}{p_{2}\nu_{\min}(\Hb_{1})\nu_{\min}^{2}(\Hb_{2})}\right),$$
		$$\left\|\Cb^{\top}(\hat{\Cb}^{(1)}-\Cb\hat{\Hb}_{c}^{(1)})\right\|_{F}=O_{p}\left(\dfrac{1}{T\nu_{\min}(\Hb_{2})}+\dfrac{\sqrt{p_{2}}}{\sqrt{Tp_{1}}\nu_{\min}(\Hb_{1})\nu_{\min}(\Hb_{2})}+\dfrac{1}{p_{1}\nu_{\min}(\Hb_{1})\nu_{\min}(\Hb_{2})}\right).$$
		
		Therefore, by Cauchy-Schwartz inequality, Theorem \ref{no_iter_convergence}, Theorem \ref{convergence of an iteration}, and the bounds for $\|\Rb^{\top}\Eb_{t}\|_{F}^{2}$, $\|\Eb_{t}\Cb\|_{F}^{2}$ and $\|\Rb^{\top}\Eb_{t}\Cb\|_{F}^{2}$, we have
		$$\left\|\hat{\Fb}_{t}^{(2)}-(\hat{\Hb}_{r}^{(1)})^{-1}\Fb_{t}\left((\hat{\Hb}_{c}^{(1)})^{-1}\right)^{\top}\right\|_{F}=O_{p}\left(\dfrac{1}{\sqrt{T}p_{1}}+\dfrac{1}{\sqrt{T}p_{2}\nu_{\min}(\Hb_{2})}+\dfrac{1}{\sqrt{p_{1}p_{2}}}+\gamma_{f}\right),$$
		where $$\gamma_{f}=\dfrac{1}{\sqrt{Tp_{1}p_{2}}\nu_{\min}(\Hb_{1})\nu_{\min}(\Hb_{2})}+\dfrac{1}{p_{1}p_{2}\nu_{\min}(\Hb_{1})\nu_{\min}^{2}(\Hb_{2})}.$$
		
		Next, for any $t,i,j$
		$$\begin{aligned}
			\hat{S}_{t,ij}^{(2)}-S_{t,ij}&=\hat{\bR}_{i\cdot}^{(1)\top}\hat{\Fb}_{t}^{(2)}\hat{\bC}_{j\cdot}^{(1)}-\bR_{i\cdot}^{\top}\Fb_{t}\bC_{j\cdot}\\
			&=(\hat{\bR}_{i\cdot}^{(1)}-\hat{\Hb}_{r}^{(1)\top}\bR_{i\cdot}+\hat{\Hb}_{r}^{(1)\top}\bR_{i\cdot})^{\top}\hat{\Fb}_{t}^{(2)}(\hat{\bC}_{j\cdot}^{(1)}-\hat{\Hb}_{c}^{(1)\top}\bC_{j\cdot}+\hat{\Hb}_{c}^{(1)\top}\bC_{j\cdot})-\bR_{i\cdot}^{\top}\Fb_{t}\bC_{j\cdot}\\
			&=(\hat{\bR}_{i\cdot}^{(1)}-\hat{\Hb}_{r}^{(1)\top}\bR_{i\cdot})^{\top}\hat{\Fb}_{t}^{(2)}(\hat{\bC}_{j\cdot}^{(1)}-\hat{\Hb}_{c}^{(1)\top}\bC_{j\cdot})+\bR_{i\cdot}^{\top}\hat{\Hb}_{r}^{(1)}\hat{\Fb}_{t}^{(2)}(\hat{\bC}_{j\cdot}^{(1)}-\hat{\Hb}_{c}^{(1)\top}\bC_{j\cdot})\\
			&+(\hat{\bR}_{i\cdot}^{(1)}-\hat{\Hb}_{r}^{(1)\top}\bR_{i\cdot})^{\top}\hat{\Fb}_{t}^{(2)}\hat{\Hb}_{c}^{(1)\top}\bC_{j\cdot}+\bR_{i\cdot}^{\top}(\hat{\Hb}_{r}^{(1)}\hat{\Fb}_{t}^{(2)}\hat{\Hb}_{c}^{(1)\top}-\Fb_{t})\bC_{j\cdot}.
		\end{aligned}$$
		Then, by Cauchy-Schwartz inequality, Theorem \ref{no_iter_convergence}, Theorem \ref{convergence of an iteration} and the consistency of the estimators for the factor matrix, we have
		$$|\hat{S}_{t,ij}^{(2)}-S_{t,ij}|=O_{p}\left(\dfrac{1}{\sqrt{Tp_{1}}}+\dfrac{1}{\sqrt{p_{1}p_{2}}}+\dfrac{1}{\sqrt{Tp_{2}}\nu_{\min}(\Hb_{2})}+\dfrac{1}{\sqrt{Tp_{1}p_{2}}\nu_{\min}(\Hb_{1})\nu_{\min}^{2}(\Hb_{2})}+\dfrac{1}{p_{1}p_{2}\nu_{\min}(\Hb_{1})\nu_{\min}^{2}(\Hb_{2})}\right).$$
	\end{proof}

\subsection{		Proof of Theorem \ref{the iteration results}}	
	\begin{proof}

		Recall first that $$\hat{\Rb}^{(s+1)}=\sqrt{p_{1}}\left(\sum_{t=1}^{T}\Xb_{t}\hat{\Cb}^{(s)}\hat{\Fb}_{t}^{(s+1)\top}\right)\left[\left(\sum_{t=1}^{T}\hat{\Fb}_{t}^{(s+1)}\hat{\Cb}^{(s)\top}\Xb_{t}^{\top}\right)\left(\sum_{t=1}^{T}\Xb_{t}\hat{\Cb}^{(s)}\hat{\Fb}_{t}^{(s+1)\top}\right)\right]^{-1/2}.$$
		
		$$
		\begin{aligned}
			\sum_{t=1}^{T}\Xb_{t}\hat{\Cb}^{(s)}\hat{\Fb}_{t}^{(s+1)\top}&=\dfrac{1}{p_{1}p_{2}}\sum_{t=1}^{T}\Xb_{t}\hat{\Cb}^{(s)}\hat{\Cb}^{(s)\top}\Xb_{t}^{\top}\hat{\Rb}^{(s)}\\
			&=\dfrac{1}{p_{1}p_{2}}\sum_{t=1}^{T}(\Rb\Fb_{t}\Cb^{\top}+\Eb_{t})\hat{\Cb}^{(s)}\hat{\Cb}^{(s)\top}(\Rb\Fb_{t}\Cb^{\top}+\Eb_{t})^{\top}\hat{\Rb}^{(s)}\\
			&=\dfrac{1}{p_{1}p_{2}}\sum_{t=1}^{T}\Rb\Fb_{t}\Cb^{\top}\hat{\Cb}^{(s)}\hat{\Cb}^{(s)\top}\Cb\Fb_{t}^{\top}\Rb^{\top}\hat{\Rb}^{(s)}+\dfrac{1}{p_{1}p_{2}}\sum_{t=1}^{T}\Rb\Fb_{t}\Cb^{\top}\hat{\Cb}^{(s)}\hat{\Cb}^{(s)\top}\Eb_{t}^{\top}\hat{\Rb}^{(s)}\\
			&+\dfrac{1}{p_{1}p_{2}}\sum_{t=1}^{T}\Eb_{t}\hat{\Cb}^{(s)}\hat{\Cb}^{(s)\top}\Cb\Fb_{t}^{\top}\Rb^{\top}\hat{\Rb}^{(s)}+\dfrac{1}{p_{1}p_{2}}\sum_{t=1}^{T}\Eb_{t}\hat{\Cb}^{(s)}\hat{\Cb}^{(s)\top}\Eb_{t}^{\top}\hat{\Rb}^{(s)}\\
			&=\bdelta_{1}^{(s+1)}+\bdelta_{2}^{(s+1)}+\bdelta_{3}^{(s+1)}+\bdelta_{4}^{(s+1)}.
		\end{aligned}$$
		
		As
		$$\begin{aligned}
			\|\dfrac{1}{T}\bdelta_{1}^{(s+1)}\|_{F}^{2}&=\|\dfrac{1}{Tp_{1}p_{2}}\sum_{t=1}^{T}\Rb\Fb_{t}\Cb^{\top}\hat{\Cb}^{(s)}\hat{\Cb}^{(s)\top}\Cb\Fb_{t}^{\top}\Rb^{\top}\hat{\Rb}^{(s)}\|_{F}^{2}\\
			&\asymp p_{2}^{2}\|\dfrac{1}{T}\sum_{t=1}^{T}\Rb\Fb_{t}\Fb_{t}^{\top}\|_{F}^{2}\leq p_{2}^{2}\|\Rb\|_{F}^{2}\|\dfrac{1}{T}\sum_{t=1}^{T}\Fb_{t}\Fb_{t}^{\top}\|_{F}^{2}=O_{p}\left(p_{1}p_{2}^{2}\right),
		\end{aligned}$$
	we can also prove that $\|\bdelta_{1}^{(s+1)}\|_{F}^{2} \gtrsim T^{2}p_{1}p_{2}^{2}$, thus, $\|\bdelta_{1}^{(s+1)}\|_{F}^{2} \asymp O_{p}\left(T^{2}p_{1}p_{2}^{2}\right)$.

		By Lemma \ref{lemma11} (1) and (2), we can obtain,
		$$\begin{aligned}
		\|\bdelta_{2}^{(s+1)}\|_{F}^{2}&=\|\dfrac{1}{p_{1}p_{2}}\sum_{t=1}^{T}\Rb\Fb_{t}\Cb^{\top}\hat{\Cb}^{(s)}\hat{\Cb}^{(s)\top}\Eb_{t}^{\top}\hat{\Rb}^{(s)}\|_{F}^{2} \asymp \dfrac{1}{p_{1}^{2}}\|\sum_{t=1}^{T}\Rb\Fb_{t}\hat{\Cb}^{(s)\top}\Eb_{t}^{\top}\hat{\Rb}^{(s)}\|_{F}^{2}\\
		&=\dfrac{1}{p_{1}^{2}}\|\sum_{t=1}^{T}\Rb\Fb_{t}\hat{\Cb}^{(s)\top}\Eb_{t}^{\top}(\hat{\Rb}^{(s)}-\Rb\hat{\Hb}_{r}^{(s)})\|_{F}^{2}+\dfrac{1}{p_{1}^{2}}\|\sum_{t=1}^{T}\Rb\Fb_{t}\hat{\Cb}^{(s)\top}\Eb_{t}^{\top}\Rb\hat{\Hb}_{r}^{(s)}\|_{F}^{2}\\
		&\leq \dfrac{1}{p_{1}^{2}}\|\Rb\|_{F}^{2}\|\sum_{t=1}^{T}\Fb_{t}\hat{\Cb}^{(s)\top}\Eb_{t}^{\top}\|_{F}^{2}\|\hat{\Rb}^{(s)}-\Rb\hat{\Hb}_{r}^{(s)}\|_{F}^{2}+	\dfrac{1}{p_{1}^{2}}\|\Rb\|_{F}^{2}\|\sum_{t=1}^{T}\Fb_{t}\hat{\Cb}^{(s)\top}\Eb_{t}^{\top}\Rb\|_{F}^{2}\|\hat{\Hb}_{r}^{(s)}\|_{F}^{2}\\
		&=O_{p}\left(Tp_{1}p_{2}^{2}w_{r}^{(s)}w_{c}^{(s)}+Tp_{1}p_{2}w_{r}^{(s)}+Tp_{2}^{2}w_{c}^{(s)}+Tp_{2}\right).
		\end{aligned}$$
	
	    Similarly,
		$$
		\|\bdelta_{3}^{(s+1)}\|_{F}^{2}=\|\dfrac{1}{p_{1}p_{2}}\sum_{t=1}^{T}\Eb_{t}\hat{\Cb}^{(s)}\hat{\Cb}^{(s)\top}\Cb\Fb_{t}^{\top}\Rb^{\top}\hat{\Rb}^{(s)}\|_{F}^{2} \asymp  \|\sum_{t=1}^{T}\Eb_{t}\hat{\Cb}^{(s)}\Fb_{t}^{\top}\|_{F}^{2}=O_{p}\left(Tp_{1}p_{2}^{2}w_{c}^{(s)}+Tp_{1}p_{2}\right).$$
		
		$$\begin{aligned}
			\|\bdelta_{4}^{(s+1)}\|_{F}^{2}&=\|\dfrac{1}{p_{1}p_{2}}\sum_{t=1}^{T}\Eb_{t}\hat{\Cb}^{(s)}\hat{\Cb}^{(s)\top}\Eb_{t}^{\top}\hat{\Rb}^{(s)}\|_{F}^{2}\\
			& \leq \|\dfrac{1}{p_{1}p_{2}}\sum_{t=1}^{T}\Eb_{t}\hat{\Cb}^{(s)}\hat{\Cb}^{(s)\top}\Eb_{t}^{\top}(\hat{\Rb}^{(s)}-\Rb\hat{\Hb}_{r}^{(s)})\|_{F}^{2}+\|\dfrac{1}{p_{1}p_{2}}\sum_{t=1}^{T}\Eb_{t}\hat{\Cb}^{(s)}\hat{\Cb}^{(s)\top}\Eb_{t}^{\top}\Rb\hat{\Hb}_{r}^{(s)}\|_{F}^{2}\\
			&\leq \dfrac{1}{p_{1}^{2}p_{2}^{2}}\left(\sum_{t=1}^{T}\|\Eb_{t}\hat{\Cb}^{(s)}\|_{F}^{2}\right)^{2}\|\hat{\Rb}^{(s)}-\Rb\hat{\Hb}_{r}^{(s)}\|_{F}^{2}+\dfrac{1}{p_{1}^{2}p_{2}^{2}}\|\sum_{t=1}^{T}\Eb_{t}\hat{\Cb}^{(s)}\hat{\Cb}^{(s)\top}\Eb_{t}^{\top}\Rb\|_{F}^{2}\|\hat{\Hb}_{r}^{(s)}\|_{F}^{2}\\
			&=O_{p}\left(Tp_{2}+\dfrac{T^{2}}{p_{1}}+T^{2}p_{1}w_{r}^{(s)}+T^{2}p_{1}p_{2}w_{r}^{(s)}w_{c}^{(s)}+T^{2}p_{1}p_{2}^{2}w_{r}^{(s)}w_{c}^{(s)2}+\dfrac{T^{2}p_{2}^{2}}{p_{1}}w_{c}^{(s)2}+Tp_{2}^{2}w_{c}^{(s)2}\right),
		\end{aligned}$$
		the last equality is according to Lemma \ref{lemma9} (1) and Lemma \ref{lemma10} (1).
		
		Let $\Zb^{(s+1)}=\left(\sum_{t=1}^{T}\hat{\Fb}_{t}^{(s+1)}\hat{\Cb}^{(s)\top}\Xb_{t}^{\top}\right)\left(\sum_{t=1}^{T}\Xb_{t}\hat{\Cb}^{(s)}\hat{\Fb}_{t}^{(s+1)\top}\right)$,
		
		$$
		\begin{aligned}
			\Zb^{(s+1)}&=\dfrac{1}{p_{1}^{2}p_{2}^{2}}\left(\sum_{t=1}^{T}\hat{\Rb}^{(s)\top}\Xb_{t}\hat{\Cb}^{(s)}\hat{\Cb}^{(s)\top}\Xb_{t}^{\top}\right)\left(\sum_{t=1}^{T}\Xb_{t}\hat{\Cb}^{(s)}\hat{\Cb}^{(s)\top}\Xb_{t}^{\top}\hat{\Rb}^{(s)}\right)\\
			&=\sum_{i=1}^{4}\sum_{j=1}^{4}\bdelta_{i}^{(s+1)\top}\bdelta_{j}^{(s+1)},
		\end{aligned}
		$$
		then we can prove $\left\|(\Zb^{(s+1)})^{-1/2}\right\|_{F}^{2}=O_{p}\left(\dfrac{1}{T^{2}p_{1}p_{2}^{2}}\right).$
		
		$$
		\begin{aligned}
			\hat{\Rb}^{(s+1)}&=\sqrt{p_{1}}\left(\sum_{t=1}^{T}\Xb_{t}\hat{\Cb}^{(s)}\hat{\Fb}_{t}^{(s+1)\top}\right)\left[\left(\sum_{t=1}^{T}\hat{\Fb}_{t}^{(s+1)}\hat{\Cb}^{(s)\top}\Xb_{t}^{\top}\right)\left(\sum_{t=1}^{T}\Xb_{t}\hat{\Cb}^{(s)}\hat{\Fb}_{t}^{(s+1)\top}\right)\right]^{-1/2}\\
			&=\sqrt{p_{1}}\left(\sum_{t=1}^{T}\Xb_{t}\hat{\Cb}^{(s)}\hat{\Fb}_{t}^{(s+1)\top}\right)\left(\Zb^{(s+1)}\right)^{1/2}\\
			&=\Rb\left[\dfrac{1}{\sqrt{p_{1}}p_{2}}\left(\sum_{t=1}^{T}\Fb_{t}\Cb^{\top}\hat{\Cb}^{(s)}\hat{\Cb}^{(s)\top}\Cb\Fb_{t}^{\top}\Rb^{\top}\hat{\Rb}^{(s)}\right)\left(\Zb^{(s+1)}\right)^{-1/2}\right]\\
			&+\dfrac{1}{\sqrt{p_{1}}p_{2}}\left(\sum_{t=1}^{T}\Rb\Fb_{t}\Cb^{\top}\hat{\Cb}^{(s)}\hat{\Cb}^{(s)\top}\Eb_{t}^{\top}\hat{\Rb}^{(s)}\right)\left(\Zb^{(s+1)}\right)^{-1/2}\\
			&=\dfrac{1}{\sqrt{p_{1}}p_{2}}\left(\sum_{t=1}^{T}\Eb_{t}\hat{\Cb}^{(s)}\hat{\Cb}^{(s)\top}\Cb\Fb_{t}^{\top}\Rb^{\top}\hat{\Rb}^{(s)}\right)\left(\Zb^{(s+1)}\right)^{-1/2}\\
			&+\dfrac{1}{\sqrt{p_{1}}p_{2}}\left(\sum_{t=1}^{T}\Eb_{t}\hat{\Cb}^{(s)}\hat{\Cb}^{(s)\top}\Eb_{t}^{\top}\hat{\Rb}^{(s)}\right)\left(\Zb^{(s+1)}\right)^{-1/2}.\\
		\end{aligned}
		$$
		Denote $\hat{\Hb}_{r}^{(s+1)}=\dfrac{1}{\sqrt{p_{1}}p_{2}}\left(\sum_{t=1}^{T}\Fb_{t}\Cb^{\top}\hat{\Cb}^{(s)}\hat{\Cb}^{(s)\top}\Cb\Fb_{t}^{\top}\Rb^{\top}\hat{\Rb}^{(s)}\right)\left(\Zb^{(s+1)}\right)^{-1/2}$, we have
		$$
		\begin{aligned}
		\hat{\Rb}^{(s+1)}-\Rb\hat{\Hb}_{r}^{(s+1)}&=\dfrac{1}{\sqrt{p_{1}}p_{2}}\left(\sum_{t=1}^{T}\Rb\Fb_{t}\Cb^{\top}\hat{\Cb}^{(s)}\hat{\Cb}^{(s)\top}\Eb_{t}^{\top}\hat{\Rb}^{(s)}\right)\left(\Zb^{(s+1)}\right)^{-1/2}\\
		&+\dfrac{1}{\sqrt{p_{1}}p_{2}}\left(\sum_{t=1}^{T}\Eb_{t}\hat{\Cb}^{(s)}\hat{\Cb}^{(s)\top}\Cb\Fb_{t}^{\top}\Rb^{\top}\hat{\Rb}^{(s)}\right)\left(\Zb^{(s+1)}\right)^{-1/2}\\
		&+\dfrac{1}{\sqrt{p_{1}}p_{2}}\left(\sum_{t=1}^{T}\Eb_{t}\hat{\Cb}^{(s)}\hat{\Cb}^{(s)\top}\Eb_{t}^{\top}\hat{\Rb}^{(s)}\right)\left(\Zb^{(s+1)}\right)^{-1/2}\\
		&=\sqrt{p_{1}}(\bdelta_{2}^{(s+1)}+\bdelta_{3}^{(s+1)}+\bdelta_{4}^{(s+1)})(\Zb^{(s+1)})^{-1/2},
		\end{aligned}
		$$
		thus, $$
		\dfrac{1}{p_{1}}\|\hat{\Rb}^{(s+1)}-\Rb\hat{\Hb}_{r}^{(s+1)}\|_{F}^{2}
		=O_{p}\left(\dfrac{1}{Tp_{2}}+\dfrac{1}{p_{1}^{2}p_{2}^{2}}+\dfrac{1}{p_{2}^{2}}w_{r}^{(s)}+\dfrac{1}{T}w_{c}^{(s)}+\dfrac{w_{r}^{(s)}w_{c}^{(s)}}{p_{2}}+w_{r}^{(s)}w_{c}^{(s)2}+\dfrac{1}{p_{1}^{2}}w_{c}^{(s)2}\right).$$
		
		For the row-consistency of $\hat{\Rb}^{(s+1)}$, we have
		$$\begin{aligned}
		\hat{\bR}_{i\cdot}^{(s+1)}-\hat{\Hb}_{r}^{(s+1)\top}\bR_{i\cdot}&=
		\dfrac{1}{\sqrt{p_{1}}p_{2}}\left(\Zb^{(s+1)}\right)^{-1/2}\left(\sum_{t=1}^{T}\hat{\Rb}^{(s)\top}\Eb_{t}\hat{\Cb}^{(s)}\hat{\Cb}^{(s)\top}\Cb\Fb_{t}^{\top}\bR_{i\cdot}\right)\\
		&+\dfrac{1}{\sqrt{p_{1}}p_{2}}\left(\Zb^{(s+1)}\right)^{-1/2}\left(\sum_{t=1}^{T}\hat{\Rb}^{(s)\top}\Rb\Fb_{t}\Cb^{\top}\hat{\Cb}^{(s)}\hat{\Cb}^{(s)\top}\be_{t,i\cdot}\right)\\
		&+\dfrac{1}{\sqrt{p_{1}}p_{2}}\left(\Zb^{(s+1)}\right)^{-1/2}\left(\sum_{t=1}^{T}\hat{\Rb}^{(s)\top}\Eb_{t}\hat{\Cb}^{(s)}\hat{\Cb}^{(s)\top}\be_{t,i\cdot}\right).
		\end{aligned}$$
	
	    First, by Lemma \ref{lemma11} (1) and (2),
	    $$\begin{aligned}
	    \left\|\sum_{t=1}^{T}\hat{\Rb}^{(s)\top}\Eb_{t}\hat{\Cb}^{(s)}\hat{\Cb}^{(s)\top}\Cb\Fb_{t}^{\top}\bR_{i\cdot}\right\|_{F}^{2}&\asymp p_{2}^{2}\left\|\sum_{t=1}^{T}\hat{\Rb}^{(s)\top}\Eb_{t}\hat{\Cb}^{(s)}\Fb_{t}^{\top}\right\|_{F}^{2}\\
	    &\leq p_{2}^{2}\left\|\sum_{t=1}^{T}\Eb_{t}\hat{\Cb}^{(s)}\Fb_{t}^{\top}\right\|_{F}^{2}\left\|\hat{\Rb}^{(s)}-\Rb\hat{\Hb}_{r}^{(s)}\right\|_{F}^{2}+p_{2}^{2}\left\|\sum_{t=1}^{T}\Rb^{\top}\Eb_{t}\hat{\Cb}^{(s)}\Fb_{t}^{\top}\right\|_{F}^{2}\\
	    &=O_{p}\left(Tp_{1}^{2}p_{2}^{4}w_{r}^{(s)}w_{c}^{(s)}+Tp_{1}^{2}p_{2}^{3}w_{r}^{(s)}+Tp_{1}p_{2}^{4}w_{c}^{(s)}+Tp_{1}p_{2}^{3}\right),
	    \end{aligned}$$
	    then, $$\begin{aligned}
	    \left\|\dfrac{1}{\sqrt{p_{1}}p_{2}}\left(\Zb^{(s+1)}\right)^{-1/2}\left(\sum_{t=1}^{T}\hat{\Rb}^{(s)\top}\Eb_{t}\hat{\Cb}^{(s)}\hat{\Cb}^{(s)\top}\Cb\Fb_{t}^{\top}\bR_{i\cdot}\right)\right\|_{F}^{2}& \leq \dfrac{1}{p_{1}p_{2}^{2}}\left\|(\Zb^{(s+1)})^{-1/2}\right\|_{F}^{2}\left\|\sum_{t=1}^{T}\hat{\Rb}^{(s)\top}\Eb_{t}\hat{\Cb}^{(s)}\hat{\Cb}^{(s)\top}\Cb\Fb_{t}^{\top}\bR_{i\cdot}\right\|_{F}^{2}\\
	    &=O_{p}\left(\dfrac{w_{r}^{(s)}w_{c}^{(s)}}{T}+\dfrac{w_{r}^{(s)}}{Tp_{2}}+\dfrac{w_{c}^{(s)}}{Tp_{1}}+\dfrac{1}{Tp_{1}p_{2}}\right).
	    \end{aligned}$$

        Second, similar to the proof of Lemma \ref{lemma11} (1), we have
        $$\left\|\sum_{t=1}^{T}\Fb_{t}\hat{\Cb}^{(s)\top}\be_{t,i\cdot}\right\|_{F}^{2}=O_{p}\left(Tp_{2}^{2}w_{c}^{(s)}+Tp_{2}\right),$$

        $$\begin{aligned}
        \left\|\dfrac{1}{\sqrt{p_{1}}p_{2}}\left(\Zb^{(s+1)}\right)^{-1/2}\left(\sum_{t=1}^{T}\hat{\Rb}^{(s)\top}\Rb\Fb_{t}\Cb^{\top}\hat{\Cb}^{(s)}\hat{\Cb}^{(s)\top}\be_{t,i\cdot}\right)\right\|_{F}^{2}& \asymp p_{1}\left\|\left(\Zb^{(s+1)}\right)^{-1/2}\sum_{t=1}^{T}\Fb_{t}\hat{\Cb}^{(s)\top}\be_{t,i\cdot}\right\|_{F}^{2} \\
        & \leq p_{1}\left\|\left(\Zb^{(s+1)}\right)^{-1/2}\right\|_{F}^{2} \left\|\sum_{t=1}^{T}\Fb_{t}\hat{\Cb}^{(s)\top}\be_{t,i\cdot}\right\|_{F}^{2}\\
        &=O_{p}\left(\dfrac{w_{c}^{(s)}}{T}+\dfrac{1}{Tp_{2}}\right).
        \end{aligned}$$

        Third, similar to the proof of Lemma \ref{lemma9} (1) and Lemma \ref{lemma10} (1), we have
        $$\left\|\sum_{t=1}^{T}\hat{\Cb}^{(s)\top}\be_{t,i\cdot}\right\|_{F}^{2}=O_{p}\left(Tp_{2}^{2}w_{c}^{(s)}+Tp_{2}\right),$$

        $$\left\|\sum_{t=1}^{T}\Rb^{\top}\Eb_{t}\hat{\Cb}^{(s)}\hat{\Cb}^{(s)\top}\be_{t,i\cdot}\right\|_{F}^{2}=O_{p}\left(Tp_{1}p_{2}^{3}+T^{2}p_{2}^{2}+(T^{2}p_{2}^{4}+Tp_{1}p_{2}^{4})w_{c}^{(s)2}\right),$$
        hence,
        $$\begin{aligned}
        &\left\|\sum_{t=1}^{T}\hat{\Rb}^{(s)\top}\Eb_{t}\hat{\Cb}^{(s)}\hat{\Cb}^{(s)\top}\be_{t,i\cdot}\right\|_{F}^{2} \leq \left\|\sum_{t=1}^{T}(\hat{\Rb}^{(s)}-\Rb\hat{\Hb}_{r}^{(s)})^{\top}\Eb_{t}\hat{\Cb}^{(s)}\hat{\Cb}^{(s)\top}\be_{t,i\cdot}\right\|_{F}^{2}+\left\|\sum_{t=1}^{T}\Rb^{\top}\Eb_{t}\hat{\Cb}^{(s)}\hat{\Cb}^{(s)\top}\be_{t,i\cdot}\right\|_{F}^{2}\\
        &\leq \left\|\hat{\Rb}^{(s)}-\Rb\hat{\Hb}_{r}^{(s)}\right\|_{F}^{2}\left\|\sum_{t=1}^{T}\Eb_{t}\hat{\Cb}^{(s)}\right\|_{F}^{2}\left\|\sum_{t=1}^{T}\hat{\Cb}^{(s)\top}\be_{t,i\cdot}\right\|_{F}^{2}+\left\|\sum_{t=1}^{T}\Rb^{\top}\Eb_{t}\hat{\Cb}^{(s)}\hat{\Cb}^{(s)\top}\be_{t,i\cdot}\right\|_{F}^{2}\\
        &=O_{p}\left(Tp_{1}p_{2}^{3}+T^{2}p_{2}^{2}+T^{2}p_{1}^{2}p_{2}^{2}w_{r}^{(s)}+Tp_{1}p_{2}^{4}w_{c}^{(s)2}+T^{2}p_{2}^{4}w_{c}^{(s)2}+T^{2}p_{1}^{2}p_{2}^{4}w_{c}^{(s)2}w_{r}^{s}+T^{2}p_{1}^{2}p_{2}^{3}w_{r}^{(s)}w_{c}^{(s)}\right)
        \end{aligned}$$

        $$\begin{aligned}
        &\left\|\dfrac{1}{\sqrt{p_{1}}p_{2}}\left(\Zb^{(s+1)}\right)^{-1/2}\left(\sum_{t=1}^{T}\hat{\Rb}^{(s)\top}\Eb_{t}\hat{\Cb}^{(s)}\hat{\Cb}^{(s)\top}\be_{t,i\cdot}\right)\right\|_{F}^{2} \leq \dfrac{1}{p_{1}p_{2}^{2}}\left\|\left(\Zb^{(s+1)}\right)^{-1/2}\right\|_{F}^{2}\left\|\sum_{t=1}^{T}\hat{\Rb}^{(s)\top}\Eb_{t}\hat{\Cb}^{(s)}\hat{\Cb}^{(s)\top}\be_{t,i\cdot}\right\|_{F}^{2}\\
        &=O_{p}\left(\dfrac{1}{Tp_{1}p_{2}}+\dfrac{1}{p_{1}^{2}p_{2}^{2}}+\dfrac{w_{r}^{(s)}}{p_{2}^{2}}+\dfrac{w_{c}^{(s)2}}{Tp_{1}}+\dfrac{w_{c}^{(s)2}}{p_{1}^{2}}+w_{r}^{(s)}w_{c}^{(s)2}+\dfrac{w_{r}^{(s)}w_{c}^{(s)}}{p_{2}}\right).
        \end{aligned}$$
        As a result $$\|\hat{\bR}_{i\cdot}^{(s+1)}-\hat{\Hb}_{r}^{(s+1)\top}\bR_{i\cdot}\|_{2}^{2}=O_{p}\left(\dfrac{1}{Tp_{2}}+\dfrac{1}{p_{1}^{2}p_{2}^{2}}+\dfrac{1}{p_{2}^{2}}w_{r}^{(s)}+\dfrac{1}{T}w_{c}^{(s)}+\dfrac{w_{r}^{(s)}w_{c}^{(s)}}{p_{2}}+w_{r}^{(s)}w_{c}^{(s)2}+\dfrac{1}{p_{1}^{2}}w_{c}^{(s)2}\right).$$

		As for the $\hat{\Cb}^{(s+1)}$,
		$$\hat{\Cb}^{(s+1)}=\sqrt{p_{2}}\left(\sum_{t=1}^{T}\Xb_{t}^{\top}\hat{\Rb}^{(s+1)}\hat{\Fb}_{t}^{(s+1)}\right)\left[\left(\sum_{t=1}^{T}\hat{\Fb}_{t}^{(s+1)\top}\hat{\Rb}^{(s+1)\top}\Xb_{t}\right)\left(\sum_{t=1}^{T}\Xb_{t}^{\top}\hat{\Rb}^{(s+1)}\hat{\Fb}_{t}^{(s+1)}\right)\right]^{-1/2}.$$
		$$\begin{aligned}
			\sum_{t=1}^{T}\Xb_{t}^{\top}\hat{\Rb}^{(s+1)}\hat{\Fb}_{t}^{(s+1)}&=\dfrac{1}{p_{1}p_{2}}\sum_{t=1}^{T}\Xb_{t}^{\top}\hat{\Rb}^{(s+1)}\hat{\Rb}^{(s)\top}\Xb_{t}\hat{\Cb}^{(s)}\\
			&=\dfrac{1}{p_{1}p_{2}}\sum_{t=1}^{T}(\Rb\Fb_{t}\Cb^{\top}+\Eb_{t})^{\top}\hat{\Rb}^{(s+1)}\hat{\Rb}^{(s)\top}(\Rb\Fb_{t}\Cb^{\top}+\Eb_{t})\hat{\Cb}^{(s)}\\
			&=\dfrac{1}{p_{1}p_{2}}\sum_{t=1}^{T}\Cb\Fb_{t}^{\top}\Rb^{\top}\hat{\Rb}^{(s+1)}\hat{\Rb}^{(s)\top}\Rb\Fb_{t}\Cb^{\top}\hat{\Cb}^{(s)}+\dfrac{1}{p_{1}p_{2}}\sum_{t=1}^{T}\Eb_{t}^{\top}\hat{\Rb}^{(s+1)}\hat{\Rb}^{(s)\top}\Rb\Fb_{t}\Cb^{\top}\hat{\Cb}^{(s)}\\
			&+\dfrac{1}{p_{1}p_{2}}\sum_{t=1}^{T}\Cb\Fb_{t}^{\top}\Rb^{\top}\hat{\Rb}^{(s+1)}\hat{\Rb}^{(s)\top}\Eb_{t}\hat{\Cb}^{(s)}+\dfrac{1}{p_{1}p_{2}}\sum_{t=1}^{T}\Eb_{t}^{\top}\hat{\Rb}^{(s+1)}\hat{\Rb}^{(s)\top}\Eb_{t}\hat{\Cb}^{(s)}\\
			&=\bDelta_{1}^{(s+1)}+\bDelta_{2}^{(s+1)}+\bDelta_{3}^{(s+1)}+\bDelta_{4}^{(s+1)}.
		\end{aligned}$$
		
		As $$\begin{aligned}
			\|\dfrac{1}{T}\bDelta_{1}^{(s+1)}\|_{F}&=\|\dfrac{1}{Tp_{1}p_{2}}\sum_{t=1}^{T}\Cb\Fb_{t}^{\top}\Rb^{\top}\hat{\Rb}^{(s+1)}\hat{\Rb}^{(s)\top}\Rb\Fb_{t}\Cb^{\top}\hat{\Cb}^{(s)}\|_{F}^{2}\\
			&\asymp p_{1}^{2}\|\dfrac{1}{T}\sum_{t=1}^{T}\Cb\Fb_{t}^{\top}\Fb_{t}\|_{F}^{2}=O_{p}\left(p_{1}^{2}p_{2}\right),
		\end{aligned}$$
	$\|\bDelta_{1}^{(s+1)}\|_{F}^{2} \geq \|\bDelta_{1}^{(s+1)}\|_{2}^{2} \gtrsim T^2p_1^2p_2$, then $\|\bDelta_{1}^{(s+1)}\|_{F}^{2} \asymp O_{p}\left(T^{2}p_{1}^{2}p_{2}\right)$.
		
		By Lemma \ref{lemma11} (3),
		$$
		\|\bDelta_{2}^{(s+1)}\|_{F}^{2}=\|\dfrac{1}{p_{1}p_{2}}\sum_{t=1}^{T}\Eb_{t}^{\top}\hat{\Rb}^{(s+1)}\hat{\Rb}^{(s)\top}\Rb\Fb_{t}\Cb^{\top}\hat{\Cb}^{(s)}\|_{F}^{2}\asymp\|\sum_{t=1}^{T}\Eb_{t}^{\top}\hat{\Rb}^{(s+1)}\Fb_{t}\|_{F}=O_{p}\left(Tp_{1}^{2}p_{2}w_{r}^{(s+1)}+Tp_{1}p_{2}\right).$$
		
		Similarly, by Lemma \ref{lemma11} (3) and (4), we get
		$$\begin{aligned}
			\|\bDelta_{3}^{(s+1)}\|_{F}^{2}&=\|\dfrac{1}{p_{1}p_{2}}\sum_{t=1}^{T}\Cb\Fb_{t}^{\top}\Rb^{\top}\hat{\Rb}^{(s+1)}\hat{\Rb}^{(s)\top}\Eb_{t}\hat{\Cb}^{(s)}\|_{F}^{2}\asymp \dfrac{1}{p_{2}^{2}}\|\sum_{t=1}^{T}\Cb\Fb_{t}^{\top}\hat{\Rb}^{(s)\top}\Eb_{t}\hat{\Cb}^{(s)}\|_{F}^{2}\\
			&\leq \dfrac{1}{p_{2}^{2}}\|\sum_{t=1}^{T}\Cb\Fb_{t}^{\top}\hat{\Rb}^{(s)\top}\Eb_{t}(\hat{\Cb}^{(s)}-\Cb\hat{\Hb}_{c}^{(s)})\|_{F}^{2}+\dfrac{1}{p_{2}^{2}}\|\sum_{t=1}^{T}\Cb\Fb_{t}^{\top}\hat{\Rb}^{(s)\top}\Eb_{t}\Cb\hat{\Hb}_{c}^{(s)}\|_{F}^{2}\\
			&\leq \dfrac{1}{p_{2}^{2}}\|\Cb\|_{F}^{2}\|\sum_{t=1}^{T}\Fb_{t}^{\top}\hat{\Rb}^{(s)\top}\Eb_{t}\|_{F}^{2}\|\hat{\Cb}^{(s)}-\Cb\hat{\Hb}_{c}^{(s)}\|_{F}^{2}+\dfrac{1}{p_{2}^{2}}\|\Cb\|_{F}^{2}\|\sum_{t=1}^{T}\Fb_{t}^{\top}\hat{\Rb}^{(s)\top}\Eb_{t}\Cb\|_{F}^{2}\|\hat{\Hb}_{c}^{(s)}\|_{F}^{2}\\
			&=O_{p}\left(Tp_{1}^{2}p_{2}w_{r}^{(s)}w_{c}^{(s)}+Tp_{1}p_{2}w_{c}^{(s)}+Tp_{1}^{2}w_{r}^{(s)}+Tp_{1}\right).
		\end{aligned}$$
		According to Lemma \ref{lemma9} (2) and Lemma \ref{lemma10} (2), we have
		$$\begin{aligned}
			\|\bDelta_{4}^{(s+1)}\|_{F}^{2}&=\|\dfrac{1}{p_{1}p_{2}}\sum_{t=1}^{T}\Eb_{t}^{\top}\hat{\Rb}^{(s+1)}\hat{\Rb}^{(s)\top}\Eb_{t}\hat{\Cb}^{(s)}\|_{F}^{2}\\
			&\leq \dfrac{1}{p_{1}^{2}p_{2}^{2}}\|\sum_{t=1}^{T}\Eb_{t}^{\top}\hat{\Rb}^{(s+1)}\hat{\Rb}^{(s)\top}\Eb_{t}(\hat{\Cb}^{(s)}-\Cb\hat{\Hb}_{c}^{(s)})\|_{F}^{2}+\dfrac{1}{p_{1}^{2}p_{2}^{2}}\|\sum_{t=1}^{T}\Eb_{t}^{\top}\hat{\Rb}^{(s+1)}\hat{\Rb}^{(s)\top}\Eb_{t}\Cb\hat{\Hb}_{c}^{(s)}\|_{F}^{2}\\
			&\leq \dfrac{1}{p_{1}^{2}p_{2}^{2}}\sum_{t=1}^{T}\|\Eb_{t}^{\top}\hat{\Rb}^{(s+1)}\|_{F}^{2}\sum_{t=1}^{T}\|\Eb_{t}^{\top}\hat{\Rb}^{(s)}\|_{F}^{2}\|\hat{\Cb}^{(s)}-\Cb\hat{\Hb}_{c}^{(s)}\|_{F}^{2}+\dfrac{1}{p_{1}^{2}p_{2}^{2}}\|\sum_{t=1}^{T}\Eb_{t}^{\top}\hat{\Rb}^{(s+1)}\hat{\Rb}^{(s)\top}\Eb_{t}\Cb\|_{F}^{2}\|\hat{\Hb}_{c}^{(s)}\|_{F}^{2}\\
			&=O_{p}\left(Tp_{1}+\dfrac{T^{2}}{p_{2}}+T^{2}p_{2}w_{c}^{(s)}+T^{2}p_{1}p_{2}w_{r}^{(s)}w_{c}^{(s)}+T^{2}p_{1}^{2}p_{2}w_{r}^{(s)}w_{r}^{(s+1)}w_{c}^{(s)}+Tp_{1}^{2}w_{r}^{(s)}w_{r}^{(s+1)}+\dfrac{T^{2}p_{1}^{2}}{p_{2}}w_{r}^{(s)}w_{r}^{(s+1)}\right).
		\end{aligned}$$
		
		Let $\Yb^{(s+1)}=\left(\sum_{t=1}^{T}\hat{\Fb}_{t}^{(s+1)\top}\hat{\Rb}^{(s+1)\top}\Xb_{t}\right)\left(\sum_{t=1}^{T}\Xb_{t}^{\top}\hat{\Rb}^{(s+1)}\hat{\Fb}_{t}^{(s+1)}\right)$, then $\Yb^{(s+1)}=\sum_{i=1}^{4}\sum_{j=1}^{4}\bDelta_{i}^{(s+1)\top}\bDelta_{j}^{(s+1)}$, we can prove $$\left\|\left(\Yb^{(s+1)}\right)^{-1/2}\right\|_{F}=O_{p}\left(\dfrac{1}{T^{2}p_{1}^{2}p_{2}}\right).$$
		
		$$\begin{aligned}
			\hat{\Cb}^{(s+1)}&=\sqrt{p_{2}}\left(\sum_{t=1}^{T}\Xb_{t}^{\top}\hat{\Rb}^{(s+1)}\hat{\Fb}_{t}^{(s+1)}\right)\left[\left(\sum_{t=1}^{T}\hat{\Fb}_{t}^{(s+1)\top}\hat{\Rb}^{(s+1)\top}\Xb_{t}\right)\left(\sum_{t=1}^{T}\Xb_{t}^{\top}\hat{\Rb}^{(s+1)}\hat{\Fb}_{t}^{(s+1)}\right)\right]^{-1/2}\\
			&=\sqrt{p_{2}}\left(\sum_{t=1}^{T}\Xb_{t}^{\top}\hat{\Rb}^{(s+1)}\hat{\Fb}_{t}^{(s+1)}\right)\left(\Yb^{(s+1)}\right)^{-1/2}\\
			&=\Cb\left[\dfrac{1}{p_{1}\sqrt{p_{2}}}\left(\sum_{t=1}^{T}\Fb_{t}^{\top}\Rb^{\top}\hat{\Rb}^{(s+1)}\hat{\Rb}^{(s)\top}\Rb\Fb_{t}\Cb^{\top}\hat{\Cb}^{(s)}\right)\left(\Yb^{(s+1)}\right)^{-1/2}\right]\\
			&+\dfrac{1}{p_{1}\sqrt{p_{2}}}\left(\sum_{t=1}^{T}\Eb_{t}^{\top}\hat{\Rb}^{(s+1)}\hat{\Rb}^{(s)\top}\Rb\Fb_{t}\Cb^{\top}\hat{\Cb}^{(s)}\right)\left(\Yb^{(s+1)}\right)^{-1/2}\\
			&+\dfrac{1}{p_{1}\sqrt{p_{2}}}\left(\sum_{t=1}^{T}\Cb\Fb_{t}^{\top}\Rb^{\top}\hat{\Rb}^{(s+1)}\hat{\Rb}^{(s)\top}\Eb_{t}\hat{\Cb}^{(s)}\right)\left(\Yb^{(s+1)}\right)^{-1/2}\\
			&+\dfrac{1}{p_{1}\sqrt{p_{2}}}\left(\sum_{t=1}^{T}\Eb_{t}^{\top}\hat{\Rb}^{(s+1)}\hat{\Rb}^{(s)\top}\Eb_{t}\hat{\Cb}^{(s)}\right)\left(\Yb^{(s+1)}\right)^{-1/2}.\\
		\end{aligned}$$
		Denote $\hat{\Hb}_{c}^{(s+1)}=\dfrac{1}{p_{1}\sqrt{p_{2}}}\left(\sum_{t=1}^{T}\Fb_{t}^{\top}\Rb^{\top}\hat{\Rb}^{(s+1)}\hat{\Rb}^{(s)\top}\Rb\Fb_{t}\Cb^{\top}\hat{\Cb}^{(s)}\right)\left(\Yb^{(s+1)}\right)^{-1/2}$, then
		$$\begin{aligned}
			\hat{\Cb}^{(s+1)}-\Cb\hat{\Hb}_{c}^{(s+1)}&=\dfrac{1}{p_{1}\sqrt{p_{2}}}\left(\sum_{t=1}^{T}\Eb_{t}^{\top}\hat{\Rb}^{(s+1)}\hat{\Rb}^{(s)\top}\Rb\Fb_{t}\Cb^{\top}\hat{\Cb}^{(s)}\right)\left(\Yb^{(s+1)}\right)^{-1/2}\\
			&+\dfrac{1}{p_{1}\sqrt{p_{2}}}\left(\sum_{t=1}^{T}\Cb\Fb_{t}^{\top}\Rb^{\top}\hat{\Rb}^{(s+1)}\hat{\Rb}^{(s)\top}\Eb_{t}\hat{\Cb}^{(s)}\right)\left(\Yb^{(s+1)}\right)^{-1/2}\\
			&+\dfrac{1}{p_{1}\sqrt{p_{2}}}\left(\sum_{t=1}^{T}\Eb_{t}^{\top}\hat{\Rb}^{(s+1)}\hat{\Rb}^{(s)\top}\Eb_{t}\hat{\Cb}^{(s)}\right)\left(\Yb^{(s+1)}\right)^{-1/2}\\
			&=\sqrt{p_{2}}(\bDelta_{2}^{(s+1)}+\bDelta_{3}^{(s+1)}+\bDelta_{4}^{(s+1)})(\Yb^{(s+1)})^{-1/2}.
		\end{aligned}$$
		thus
		$$
		\dfrac{1}{p_{2}}\|\hat{\Cb}^{(s+1)}-\Cb\hat{\Hb}_{c}^{(s+1)}\|_{F}^{2}=O_{p}\left(\dfrac{1}{Tp_{1}}+\dfrac{1}{p_{1}^{2}p_{2}^{2}}+\gamma_{c}^{(s+1)}\right),$$
		where $\gamma_{c}^{(s+1)}=\dfrac{1}{Tp_{2}}w_{r}^{(s)}+\dfrac{1}{T}w_{r}^{(s+1)}+\dfrac{1}{p_{1}^{2}}w_{c}^{(s)}+\dfrac{1}{p_{1}}w_{r}^{(s)}w_{c}^{(s)}+\dfrac{1}{T}w_{r}^{(s)}w_{c}^{(s)}+w_{r}^{(s)}w_{r}^{(s+1)}w_{c}^{(s)}+\dfrac{1}{p_{2}^{2}}w_{r}^{(s)}w_{r}^{(s+1)}$.
		
		Similar to the proof the row-consistency of $\hat{\Rb}^{(s+1)}$, for any $j\in[p_{2}]$, we can obtain  $$\|\hat{\bC}_{j\cdot}^{(s+1)}-\hat{\Hb}_{c}^{(s+1)\top}\bC_{j\cdot}\|_{2}^{2}=O_{p}\left(\dfrac{1}{Tp_{1}}+\dfrac{1}{p_{1}^{2}p_{2}^{2}}+\gamma_{c}^{(s+1)}\right),$$
		where $\gamma_{c}^{(s+1)}=\dfrac{1}{Tp_{2}}w_{r}^{(s)}+\dfrac{1}{T}w_{r}^{(s+1)}+\dfrac{1}{p_{1}^{2}}w_{c}^{(s)}+\dfrac{1}{p_{1}}w_{r}^{(s)}w_{c}^{(s)}+\dfrac{1}{T}w_{r}^{(s)}w_{c}^{(s)}+w_{r}^{(s)}w_{r}^{(s+1)}w_{c}^{(s)}+\dfrac{1}{p_{2}^{2}}w_{r}^{(s)}w_{r}^{(s+1)}$.
		
	\end{proof}

\subsection{Proof of Theorem \ref{the iteration results2}}

    \begin{proof}
    By definition,
    $$\begin{aligned}
   	 \hat{\Fb}_{t}^{(s+1)}&=\dfrac{1}{p_{1}p_{2}}\hat{\Rb}^{(s)\top}\Xb_{t}\hat{\Cb}^{(s)}=\dfrac{1}{p_{1}p_{2}}\hat{\Rb}^{(s)\top}\Rb\Fb_{t}\Cb^{\top}\hat{\Cb}^{(s)}+\dfrac{1}{p_{1}p_{2}}\hat{\Rb}^{(s)\top}\Eb_{t}\hat{\Cb}^{(s)}\\
   	 &=\dfrac{1}{p_{1}p_{2}}\hat{\Rb}^{(s)\top}\left(\Rb-\hat{\Rb}^{(s)}(\hat{\Hb}_{r}^{(s)})^{-1}+\hat{\Rb}^{(s)}(\hat{\Hb}_{r}^{(s)})^{-1}\right)\Fb_{t}\left(\Cb-\hat{\Cb}^{(s)}(\hat{\Hb}_{c}^{(s)})^{-1}+\hat{\Cb}^{(s)}(\hat{\Hb}_{c}^{(s)})^{-1}\right)^{\top}\hat{\Cb}^{(s)}\\
   	 &+\dfrac{1}{p_{1}p_{2}}\left(\hat{\Rb}^{(s)}-\Rb\hat{\Hb}_{r}^{(s)}+\Rb\hat{\Hb}_{r}^{(s)}\right)^{\top}\Eb_{t}\left(\hat{\Cb}^{(s)}-\Cb\hat{\Hb}_{c}^{(s)}+\Cb\hat{\Hb}_{c}^{(s)}\right).
    \end{aligned}$$

    Note that $\hat{\Rb}^{(s)\top}\hat{\Rb}^{(s)}=p_{1}\Ib_{k_{1}}$, $\hat{\Cb}^{(s)\top}\hat{\Cb}^{(s)}=p_{2}\Ib_{k_{2}}$, then
    $$\begin{aligned}
   	 \hat{\Fb}_{t}^{(s+1)}-(\hat{\Hb}_{r}^{(s)})^{-1}\Fb_{t}\left((\hat{\Hb}_{c}^{(s)})^{-1}\right)^{\top}&=\dfrac{1}{p_{1}p_{2}}\hat{\Rb}^{(s)\top}\left(\Rb-\hat{\Rb}^{(s)}(\hat{\Hb}_{r}^{(s)})^{-1}\right)\Fb_{t}\left(\Cb-\hat{\Cb}(\hat{\Hb}_{c}^{(s)})^{-1}\right)^{\top}\hat{\Cb}^{(s)}\\
   	 &+\dfrac{1}{p_{1}}\hat{\Rb}^{(s)\top}\left(\Rb-\hat{\Rb}^{(s)}(\hat{\Hb}_{r}^{(s)})^{-1}\right)\Fb_{t}\left((\hat{\Hb}_{c}^{(s)})^{-1}\right)^{\top}\\
   	 &+\dfrac{1}{p_{2}}(\hat{\Hb}_{r}^{(s)})^{-1}\Fb_{t}\left(\Cb-\hat{\Cb}(\hat{\Hb}_{c}^{(s)})^{-1}\right)^{\top}\hat{\Cb}^{(s)}\\
   	 &+\dfrac{1}{p_{1}p_{2}}\left(\hat{\Rb}^{(s)}-\Rb\hat{\Hb}_{r}^{(s)}\right)^{\top}\Eb_{t}\left(\hat{\Cb}^{(s)}-\Cb\hat{\Hb}_{c}^{(s)}\right)\\
   	 &+\dfrac{1}{p_{1}p_{2}}\left(\hat{\Rb}^{(s)}-\Rb\hat{\Hb}_{r}^{(s)}\right)^{\top}\Eb_{t}\Cb\hat{\Hb}_{c}^{(s)}+\dfrac{1}{p_{1}p_{2}}\hat{\Hb}_{r}^{(s)\top}\Rb^{\top}\Eb_{t}\left(\hat{\Cb}^{(s)}-\Cb\hat{\Hb}_{c}^{(s)}\right)\\
   	 &+\dfrac{1}{p_{1}p_{2}}\hat{\Hb}_{r}^{(s)\top}\Rb^{\top}\Eb_{t}\Cb\hat{\Hb}_{c}^{(s)}.
    \end{aligned}$$

    Therefore, by Cauchy-Schwartz inequality, Lemma \ref{loading product of (s+1)-th} and Theorem \ref{the iteration results}, we have
    $$\left\|\hat{\Fb}_{t}^{(s+1)}-(\hat{\Hb}_{r}^{(s)})^{-1}\Fb_{t}\left((\hat{\Hb}_{c}^{(s)})^{-1}\right)^{\top}\right\|_{F}=O_{p}\left(\dfrac{\sqrt{w_{r}^{(s)}}}{\sqrt{p_{2}}}+\dfrac{\sqrt{w_{r}^{(s-1)}}}{\sqrt{Tp_{2}}}+\dfrac{\sqrt{w_{c}^{(s)}}}{\sqrt{p_{1}}}+\dfrac{\sqrt{w_{c}^{(s-1)}}}{\sqrt{Tp_{1}}}+\dfrac{1}{\sqrt{p_{1}p_{2}}}+\gamma_{f}^{(s+1)}\right),$$
    where $\gamma_{f}^{(s+1)}=\sqrt{w_{r}^{(s)}w_{c}^{(s)}}+\dfrac{\sqrt{w_{r}^{(s-1)}w_{c}^{(s-1)}}}{\sqrt{T}}+\dfrac{\sqrt{w_{r}^{(s-1)}}w_{c}^{(s-1)}}{p_{1}}$.

    Next, for any $t,i,j$
    $$\begin{aligned}
    	\hat{S}_{t,ij}^{(s+1)}-S_{t,ij}&=\hat{\bR}_{i\cdot}^{(s)\top}\hat{\Fb}_{t}^{(s+1)}\hat{\bC}_{j\cdot}^{(s)}-\bR_{i\cdot}^{\top}\Fb_{t}\bC_{j\cdot}\\
    	&=(\hat{\bR}_{i\cdot}^{(s)}-\hat{\Hb}_{r}^{(s)\top}\bR_{i\cdot}+\hat{\Hb}_{r}^{(s)\top}\bR_{i\cdot})^{\top}\hat{\Fb}_{t}^{(s+1)}(\hat{\bC}_{j\cdot}^{(s)}-\hat{\Hb}_{c}^{(s)\top}\bC_{j\cdot}+\hat{\Hb}_{c}^{(s)\top}\bC_{j\cdot})-\bR_{i\cdot}^{\top}\Fb_{t}\bC_{j\cdot}\\
    	&=(\hat{\bR}_{i\cdot}^{(s)}-\hat{\Hb}_{r}^{(s)\top}\bR_{i\cdot})^{\top}\hat{\Fb}_{t}^{(s+1)}(\hat{\bC}_{j\cdot}^{(s)}-\hat{\Hb}_{c}^{(s)\top}\bC_{j\cdot})+\bR_{i\cdot}^{\top}\hat{\Hb}_{r}^{(s)}\hat{\Fb}_{t}^{(s+1)}(\hat{\bC}_{j\cdot}^{(s)}-\hat{\Hb}_{c}^{(s)\top}\bC_{j\cdot})\\
    	&+(\hat{\bR}_{i\cdot}^{(s)}-\hat{\Hb}_{r}^{(s)\top}\bR_{i\cdot})^{\top}\hat{\Fb}_{t}^{(s+1)}\hat{\Hb}_{c}^{(s)\top}\bC_{j\cdot}+\bR_{i\cdot}^{\top}(\hat{\Hb}_{r}^{(s)}\hat{\Fb}_{t}^{(s+1)}\hat{\Hb}_{c}^{(s)\top}-\Fb_{t})\bC_{j\cdot}.
    \end{aligned}$$
    Then, by Cauchy-Schwartz inequality, Theorem \ref{no_iter_convergence}, Theorem \ref{convergence of an iteration} and the consistency of the estimators for the factor matrices, we have
    $$|\hat{S}_{t,ij}^{(s+1)}-S_{t,ij}|=O_{p}\left(\sqrt{w_{r}^{(s)}}+\dfrac{\sqrt{w_{r}^{(s-1)}}}{\sqrt{Tp_{2}}}+\sqrt{w_{c}^{(s)}}+\dfrac{\sqrt{w_{c}^{(s-1)}}}{\sqrt{Tp_{1}}}+\dfrac{1}{\sqrt{p_{1}p_{2}}}+\gamma^{(s+1)}\right),$$
    where $\gamma^{(s+1)}=\dfrac{\sqrt{w_{r}^{(s-1)}w_{c}^{(s-1)}}}{\sqrt{T}}+\dfrac{\sqrt{w_{r}^{(s-1)}}w_{c}^{(s-1)}}{p_{1}}$.

    \end{proof}

\subsection{Proof of Theorem \ref{ER}}
\begin{proof}
	Similar to the proof of Theorem \ref{Convergence}, for any bounded $m_{1} \geq k_{1}, m_{2} \geq k_{2}$, we have $$\|\tilde{\Fb}_{t}-\tilde{\Hb}_{1}\Fb_{t}\tilde{\Hb}_{2}^{\top}\|_{F}=\|\tilde{\bm{\cE}}_{t}\|_{F}=O_{p}\left(\dfrac{1}{\sqrt{p_{1}}}\right),$$
	where $\tilde{\Hb}_{1}=\hat{\Rb}^{(S-1)\top}\Rb/p_1,    \tilde{\Hb}_{2}=\hat{\Cb}^{(S-1)\top}\Cb/p_2, \tilde{\bm{\cE}}_{t}=\hat{\Rb}^{(S-1)\top}\Eb_{t}\hat{\Cb}^{(S-1)}/(p_1p_2)$, $S$ is the number of iterations until convergence or reach a prefixed maximum number of iteration. thus $$\left\|\dfrac{1}{T}\sum_{t=1}^{T}\tilde{\Fb}_{t}\tilde{\Fb}_{t}^{\top}-\dfrac{1}{T}\sum_{t=1}^{T}\tilde{\Hb}_{1}\Fb_{t}\tilde{\Hb}_{2}^{\top}\tilde{\Hb}_{2}\Fb_{t}^{\top}\tilde{\Hb}_{1}^{\top}\right\|_{F}=O_{p}\left(\dfrac{1}{\sqrt{p_{1}}}\right).$$
	
	We first show that $\text{rank}(\tilde{\Hb}_{1})=k_1, \text{rank}(\tilde{\Hb}_{2})=k_2$.
	
	Denote $\hat{\Rb}^{(s)}_{k_1}$ as the first $k_1$ columns of $\hat{\Rb}^{(s)}$, $\hat{\Cb}^{(s)}_{k_2}$ is defined similarly. Then $$\hat{\Rb}^{(s)}_{k_1}=\sqrt{p_1}\left(\sum_{t=1}^{T}\Xb_{t}\hat{\Cb}^{(s-1)}\hat{\Fb}_{t}^{(s)\top}\right)\Zb^{(s)}_{k_1},$$
	where $\Zb_{k_1}^{(s)}$ is a matrix which is consisted of the first $k_1$ columns of $$\left(\Zb^{(s)}\right)^{-1/2}=\left[\left(\sum_{t=1}^{T}\hat{\Fb}_{t}^{(s)}\hat{\Cb}^{(s-1)\top}\Xb_{t}^{\top}\right)\left(\sum_{t=1}^{T}\Xb_{t}\hat{\Cb}^{(s-1)}\hat{\Fb}_{t}^{(s)\top}\right)\right]^{-1/2},$$
	hence, $\|\Zb_{k_1}^{(s)}\|_{2} \leq 1/\lambda_{\min}\left((\Zb^{(s)})^{1/2}\right)$.
	
	Assume that $k_1=k_2=1, m_1 > k_1, m_2 > k_2$, take $s=1$ for example. Similar to the proof of Theorem \ref{no_iter_convergence}, we have
	$$\begin{aligned}
	\nu_{\min}(\bdelta_{1})&=\nu_{\min}\left(\dfrac{\sqrt{p_1}p_2}{T}\sum_{t=1}^{T}\Fb_{t}\Hb_{2}^{\top}\Hb_{2}\Fb_{t}^{\top}\Hb_{1}\right) \gtrsim \sqrt{p_1}p_2\lambda_{\min}\left(\dfrac{1}{T}\sum_{t=1}^{T}\Fb_{t}\Hb_{2}^{\top}\Hb_{2}\Fb_{t}^{\top}\right)\nu_{\min}(\Hb_{1})\\
	&\gtrsim \sqrt{p_1}p_2\lambda_{\min}\left(\dfrac{1}{T}\sum_{t=1}^{T}\Fb_{t}\Fb_{t}^{\top}\right)\nu_{\min}(\Hb_{2}^{\top}\Hb_{2})\nu_{\min}(\Hb_{1})\gtrsim \sqrt{p_1}p_2\nu_{\min}^{2}(\Hb_{2})\nu_{\min}(\Hb_{1}).
	\end{aligned}$$
$\|\bdelta_{2}\|_{F}=O_{p}\left(\sqrt{p_2/T}\nu_{\min}(\Hb_{2})\right), \|\bdelta_{3}\|_{F}=O_{p}\left(\sqrt{p_1p_2/T}\nu_{\min}(\Hb_{1})\nu_{\min}(\Hb_{2})\right), \|\bdelta_{4}\|_{F}=O_{p}\left(\sqrt{p_2/T}+1/\sqrt{p_1}\right)$, thus $$\lambda_{\min}\left((\Zb^{(s)})^{1/2}\right) \gtrsim \sqrt{p_1}p_2\nu_{\min}(\Hb_{1})\nu_{\min}^{2}(\Hb_{2}).$$

Let $$\hat{\Hb}_{r,k_1}^{(1)}=\left(\dfrac{1}{T\sqrt{p_1}p_2}\sum_{t=1}^{T}\Fb_{t}\Cb^{\top}\Wb_{2}\Wb_{2}^{\top}\Cb\Fb_{t}^{\top}\Rb^{\top}\Wb_{1}\right)\Zb_{k_1}^{(1)},$$
hence, we can prove $$\dfrac{1}{p_1}\|\hat{\Rb}_{k_1}^{{(1)}}-\Rb\hat{\Hb}_{r,k_1}^{(1)}\|_{F}^{2}=o_{p}(1).$$
Finally, we have $\Ib_{k_{1}}=p_1^{-1}\hat{\Rb}_{k_1}^{(1)\top}\Rb\hat{\Hb}_{r,k_1}^{(1)}+o_{p}(1)=\hat{\Hb}_{r,k_1}^{(1)\top}\hat{\Hb}_{r,k_1}^{(1)}+o_{p}(1)$. That is to say,  $k_1=\text{rank}(p_1^{-1}\hat{\Rb}_{k_1}^{(1)\top}\Rb) \leq \text{rank}(p_1^{-1}\hat{\Rb}^{(1)\top}\Rb) \leq k_1$. Thus, $\text{rank}(p_1^{-1}\hat{\Rb}^{(1)\top}\Rb)=k_1$. Similarly, $\text{rank}(p_2^{-1}\hat{\Cb}^{(1)\top}\Cb)=k_2$. Follow a similar proof, we claim that $\text{rank}(\tilde{\Hb}_{1})=k_1, \text{rank}(\tilde{\Hb}_{2})=k_2$.
	
Due to $\text{rank}(\tilde{\Hb}_{1})=k_1, \text{rank}(\tilde{\Hb}_{2})=k_2$, hence, $\text{rank}(\tilde{\Hb}_{1}^{\top}\tilde{\Hb}_{1})=k_1, \text{rank}(\tilde{\Hb}_{2}^{\top}\tilde{\Hb}_{2})=k_2$, $$\text{rank}\left(\dfrac{1}{T}\sum_{t=1}^{T}\tilde{\Hb}_1\Fb_{t}\tilde{\Hb}_{2}^{\top}\tilde{\Hb}_{2}\Fb_{t}^{\top}\tilde{\Hb}_{1}^{\top}\right)=\text{rank}\left(\dfrac{1}{T}\sum_{t=1}^{T}\Fb_{t}\Fb_{t}^{\top}\right)=k_1,$$
thus, $T^{-1}\sum_{t=1}^{T}\tilde{\Hb}_{1}\Fb_{t}\tilde{\Hb}_{2}^{\top}\tilde{\Hb}_{2}\Fb_{t}^{\top}\tilde{\Hb}_{1}^{\top}$ has exactly $m_1-k_1$ multiple roots of $\lambda=0$.

For $i=1,\dots, k_1$, by Weyl theorem, we have
$$\begin{aligned}
	\left|\lambda_{i}\left(\dfrac{1}{T}\sum_{t=1}^{T}\tilde{\Fb}_{t}\tilde{\Fb}_{t}\right)-\lambda_{i}\left(\dfrac{1}{T}\sum_{t=1}^{T}\tilde{\Hb}_{1}\Fb_{t}\tilde{\Hb}_{2}^{\top}\tilde{\Hb}_{2}\Fb_{t}^{\top}\tilde{\Hb}_{1}^{\top}\right)\right|& \leq \left\|\dfrac{1}{T}\sum_{t=1}^{T}\tilde{\Fb}_{t}\tilde{\Fb}_{t}^{\top}-\dfrac{1}{T}\sum_{t=1}^{T}\tilde{\Hb}_{1}\Fb_{t}\tilde{\Hb}_{2}^{\top}\tilde{\Hb}_{2}\Fb_{t}^{\top}\tilde{\Hb}_{1}^{\top}\right\|_{F}\\
	&=O_{p}\left(\dfrac{1}{\sqrt{p_{1}}}\right).
\end{aligned}$$
	For $i > k_{1}$, we have $$\left|\lambda_{i}\left(\dfrac{1}{T}\sum_{t=1}^{T}\tilde{\Fb}_{t}\tilde{\Fb}_{t}\right)\right|=O_{p}\left(\dfrac{1}{\sqrt{p_1}}\right).$$
	
	As a result, $$
		\lambda_{i}\left(\dfrac{1}{T}\sum_{t=1}^{T}\tilde{\Fb}_{t}\tilde{\Fb}_{t}^{\top}\right)=\begin{cases}
			\lambda_{i}\left(\dfrac{1}{T}\sum_{t=1}^{T}\tilde{\Hb}_{1}\Fb_{t}\tilde{\Hb}_{2}^{\top}\tilde{\Hb}_{2}\Fb_{t}^{\top}\tilde{\Hb}_{1}^{\top}\right)+O_{p}\left(\dfrac{1}{\sqrt{p_1}}\right), \quad i\leq k_1, \\
			O_{p}\left(\dfrac{1}{\sqrt{p_{1}}}\right), \quad i > k_1.
	\end{cases}$$
Note that $\|T^{-1}\sum_{t=1}^{T}\tilde{\Hb}_{1}\Fb_{t}\tilde{\Hb}_{2}^{\top}\tilde{\Hb}_{2}\Fb_{t}^{\top}\tilde{\Hb}_{1}^{\top}\|_{F}=O_{p}(1)$ and $\text{rank}(T^{-1}\sum_{t=1}^{T}\tilde{\Hb}_{1}\Fb_{t}\tilde{\Hb}_{2}^{\top}\tilde{\Hb}_{2}\Fb_{t}^{\top}\tilde{\Hb}_{1}^{\top})=k_1$, as $T,p_1,p_2 \to \infty$, we can calculate the eigenvalue ratios.

$$\max_{i \leq k_{1}-1}\dfrac{\lambda_{i}\left(\dfrac{1}{T}\sum_{t=1}^{T}\tilde{\Fb}_{t}\tilde{\Fb}_{t}^{\top}\right)}{\lambda_{i+1}\left(\dfrac{1}{T}\sum_{t=1}^{T}\tilde{\Fb}_{t}\tilde{\Fb}_{t}^{\top}\right)}=O_{p}(1),$$
$$\max_{i \geq k_{1}+1}\dfrac{\lambda_{i}\left(\dfrac{1}{T}\sum_{t=1}^{T}\tilde{\Fb}_{t}\tilde{\Fb}_{t}^{\top}\right)}{\lambda_{i+1}\left(\dfrac{1}{T}\sum_{t=1}^{T}\tilde{\Fb}_{t}\tilde{\Fb}_{t}^{\top}\right)} \leq O_{p}(1),$$
$$\dfrac{\lambda_{k_1}\left(\dfrac{1}{T}\sum_{t=1}^{T}\tilde{\Fb}_{t}\tilde{\Fb}_{t}^{\top}\right)}{\lambda_{k_1+1}\left(\dfrac{1}{T}\sum_{t=1}^{T}\tilde{\Fb}_{t}\tilde{\Fb}_{t}^{\top}\right)} \to \infty ,$$
which concludes the consistency.

Similarly, we can get $\PP(\hat{k}_2=k_2) \to 1$, as $T,p_1,p_2 \to \infty$.

\end{proof}

\section{Propositions and Lemmas}

In this section, we first give some propositions and lemmas which are essential for the proofs of main theorems.
	\begin{proposition}\label{Pro}
	Suppose $m_{1} \geq k_{1}, m_{2} \geq k_{2}$ and $T, p_{1}, p_{2} \to \infty$.Under Assumptions \ref{Assumption 1}-\ref{Assumption 3}, we have
		
	\begin{itemize}
   \item  $ \lambda_{\min}(\dfrac{1}{T}\hat{\Fb}^{\top}\hat{\Fb}) \geq c(p_{1}p_{2})^{-1}$ with probability approaching one for some $c >0$.

   \end{itemize}

	\end{proposition}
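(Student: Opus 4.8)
The plan is to show that the smallest‑eigenvalue floor of order $(p_1p_2)^{-1}$ is supplied entirely by the idiosyncratic part $\bm{\cE}$, once the (generally rank–deficient) signal part $\Fb\Hb^{\top}$ has been projected away. Recall $\hat{\Fb}=\Fb\Hb^{\top}+\bm{\cE}$, and that $\Fb$ has full column rank $k_1k_2$ since $\lambda_{\min}(\Fb^{\top}\Fb/T)>c$ a.s.\ by Assumption~\ref{Assumption 3}(1). Let $\bm{Q}=\Ib_{T}-\Pb_{\Fb}$ be the orthogonal projection onto $\mathrm{span}(\Fb)^{\perp}$ in $\RR^{T}$. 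Because $\mathrm{span}(\Fb\Hb^{\top})\subseteq\mathrm{span}(\Fb)$, we have $\bm{Q}\hat{\Fb}=\bm{Q}\bm{\cE}$, so for every unit vector $\bv\in\RR^{m_1m_2}$, $T^{-1}\|\hat{\Fb}\bv\|_2^2\ge T^{-1}\|\bm{Q}\hat{\Fb}\bv\|_2^2=T^{-1}\|\bm{Q}\bm{\cE}\bv\|_2^2$, and hence $\lambda_{\min}(T^{-1}\hat{\Fb}^{\top}\hat{\Fb})\ge\lambda_{\min}(T^{-1}\bm{\cE}^{\top}\bm{Q}\bm{\cE})$. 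This is the step that makes the bound hold for all $m_i\ge k_i$, not just $m_i=k_i$, since it never appeals to invertibility of $\Hb$.

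Next I would remove $\bm{Q}$ at negligible cost. Writing $T^{-1}\bm{\cE}^{\top}\bm{Q}\bm{\cE}=T^{-1}\bm{\cE}^{\top}\bm{\cE}-T^{-1}\bm{\cE}^{\top}\Pb_{\Fb}\bm{\cE}$ and bounding $T^{-1}\bm{\cE}^{\top}\Pb_{\Fb}\bm{\cE}=T^{-1}\bm{\cE}^{\top}\Fb(\Fb^{\top}\Fb)^{-1}\Fb^{\top}\bm{\cE}\preceq\|\Fb^{\top}\bm{\cE}\|_2^{2}\,\{T\lambda_{\min}(\Fb^{\top}\Fb)\}^{-1}\Ib$, it suffices to recall $\|\Fb^{\top}\bm{\cE}\|_2=O_p(\sqrt{T/(p_1p_2)})$ (the estimate already used in the proof of Theorem~\ref{Vec}, via Lemma~\ref{lemma1}) together with $\lambda_{\min}(\Fb^{\top}\Fb)\ge cT$; the correction term is therefore $O_p(1/(Tp_1p_2))=o_p((p_1p_2)^{-1})$ because $T\to\infty$. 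Thus $\lambda_{\min}(T^{-1}\hat{\Fb}^{\top}\hat{\Fb})\ge\lambda_{\min}(T^{-1}\bm{\cE}^{\top}\bm{\cE})-o_p((p_1p_2)^{-1})$.

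It remains to lower–bound $\lambda_{\min}(T^{-1}\bm{\cE}^{\top}\bm{\cE})$. Since $\mathrm{Vec}(\bm{\cE}_t)=(p_1p_2)^{-1}\Wb^{\top}\mathrm{Vec}(\Eb_t)$ with $\Wb=\Wb_2\otimes\Wb_1$, one has $T^{-1}\bm{\cE}^{\top}\bm{\cE}=(p_1p_2)^{-2}\Wb^{\top}(T^{-1}\Eb^{\top}\Eb)\Wb$. Split $T^{-1}\Eb^{\top}\Eb=\EE(T^{-1}\Eb^{\top}\Eb)+\bm{\Delta}$. For the mean part, Assumption~\ref{Assumption 3}(3) gives $\EE(T^{-1}\Eb^{\top}\Eb)\succeq c_0\Ib$, so for any unit $\bv$, $\bv^{\top}\Wb^{\top}\EE(T^{-1}\Eb^{\top}\Eb)\Wb\bv=(\Wb\bv)^{\top}\EE(T^{-1}\Eb^{\top}\Eb)(\Wb\bv)\ge c_0\|\Wb\bv\|_2^{2}\ge c_0\lambda_{\min}(\Wb^{\top}\Wb)$, and $\Wb^{\top}\Wb=(\Wb_2^{\top}\Wb_2)\otimes(\Wb_1^{\top}\Wb_1)$ has $\lambda_{\min}=\lambda_{\min}(\Wb_1^{\top}\Wb_1)\lambda_{\min}(\Wb_2^{\top}\Wb_2)\ge c_1c_2\,p_1p_2$ by Assumption~\ref{Assumption 1}(2). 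For the fluctuation, the crucial observation is that $\Wb^{\top}\bm{\Delta}\Wb$ is only $m_1m_2\times m_1m_2$, so a plain second‑moment bound on its Frobenius norm — expanding entrywise, using boundedness of the weights (Assumption~\ref{Assumption 1}(1)) and the fourth‑order covariance summability (Assumption~\ref{Assumption 3}(4)), analogously to Lemma~\ref{basic-lemma} — yields $\EE\|\Wb^{\top}\bm{\Delta}\Wb\|_F^{2}=O((p_1p_2)^{2}/T)$, hence $\|\Wb^{\top}\bm{\Delta}\Wb\|_2=O_p(p_1p_2/\sqrt{T})=o_p(p_1p_2)$. Combining, $\lambda_{\min}(T^{-1}\bm{\cE}^{\top}\bm{\cE})\ge(p_1p_2)^{-2}\{c_0c_1c_2p_1p_2-o_p(p_1p_2)\}\ge\tfrac12 c_0c_1c_2(p_1p_2)^{-1}$ with probability approaching one, which with the previous paragraph gives the claim with, say, $c=\tfrac14 c_0c_1c_2$.

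The main obstacle is the fluctuation estimate in the last paragraph: one must avoid any attempt at operator‑norm concentration of the $p_1p_2\times p_1p_2$ matrix $T^{-1}\Eb^{\top}\Eb$, which is hopeless when $T\ll p_1p_2$, and instead exploit that the relevant object $\Wb^{\top}\bm{\Delta}\Wb$ is confined to the fixed dimension $m_1m_2$, so that only an entrywise variance computation — closed precisely under Assumption~\ref{Assumption 3}(4) — is needed. A secondary subtlety, already flagged above, is that $\Fb\Hb^{\top}$ is rank‑deficient whenever $m_1m_2>k_1k_2$, which is exactly why the lower bound is carried by $\bm{\cE}$ and why the projection $\bm{Q}$ is used in place of a naive eigenvalue inequality for a sum of matrices.
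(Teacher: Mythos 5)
Your proof is correct, and it takes a genuinely different route from the paper's. The paper keeps the signal term: it writes $\tfrac1T\hat{\Fb}^{\top}\hat{\Fb}=\tfrac1T\Hb\Fb^{\top}\Fb\Hb^{\top}+\Delta$, evaluates the quadratic form at the minimizing unit vector $\bv$, normalizes the cross term $\tfrac{2}{T}\bv^{\top}\Hb\Fb^{\top}\bm{\cE}\bv$ by $\alpha_{\bv}=(\tfrac1T\bv^{\top}\Hb\Fb^{\top}\Fb\Hb^{\top}\bv)^{1/2}$ to get an $O_p(1)$ quantity $X$, and then runs a two-case analysis (plus the separate case $\bv^{\top}\Hb=0$) to show the cross term cannot eat the floor $\tfrac{c_0}{2p_1p_2}$ coming from $\EE\bm{\cE}^{\top}\bm{\cE}$. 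You instead project onto $\mathrm{span}(\Fb)^{\perp}$, which annihilates the signal and the cross term simultaneously, reducing the problem to $\lambda_{\min}(T^{-1}\bm{\cE}^{\top}\bm{\cE})$ minus the correction $T^{-1}\bm{\cE}^{\top}\Pb_{\Fb}\bm{\cE}$, which your bound correctly shows is $O_p\{(Tp_1p_2)^{-1}\}=o_p\{(p_1p_2)^{-1}\}$ using the same estimate $\EE\|\tfrac1T\Fb^{\top}\bm{\cE}\|_2^2=O\{(Tp_1p_2)^{-1}\}$ that the paper uses to control $X$. Both arguments then rest on the same two ingredients: the Kronecker lower bound $\lambda_{\min}(\Wb^{\top}\Wb)\ge c_1c_2p_1p_2$ combined with Assumption \ref{Assumption 3}(3) for the expectation, and the fluctuation bound $\|\tfrac1T(\bm{\cE}^{\top}\bm{\cE}-\EE\bm{\cE}^{\top}\bm{\cE})\|_2=O_p\{(p_1p_2\sqrt{T})^{-1}\}$, which is exactly Lemma \ref{lemma1}(4) driven by Assumption \ref{Assumption 3}(4) (you cite Lemma \ref{basic-lemma}, but the relevant computation is Lemma \ref{lemma1}(4); the bound you state is the same). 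What your route buys is the elimination of the case analysis and a transparent explanation of why over-specifying $m_1,m_2$ is harmless (rank deficiency of $\Fb\Hb^{\top}$ never matters once it is projected out); what the paper's route buys is that in its ``Case two'' the signal contributes a positive term $\tfrac34\alpha_{\bv}^2$ on top of the noise floor, although that extra strength is not needed for the stated conclusion. Both arguments, yours and the paper's, use $T\to\infty$ at the same point, to make the respective correction terms negligible relative to $(p_1p_2)^{-1}$.
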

	
    \begin{proof}
    It is easy to get that
	$$\hat{\Fb}=\Fb\Hb^{\top}+\bm{\cE},$$ where $\bm{\cE}=\Eb\Wb/({p_{1}p_{2}}), \Wb=\Wb_{2} \otimes \Wb_{1}.$
	Let
	$$\Delta:=\dfrac{1}{T}\Hb\Fb^{\top}\bm{\cE}+\dfrac{1}{T}\bm{\cE}^{\top}\Fb\Hb^{\top}+\dfrac{1}{T}\EE\bm{\cE}^{\top}\bm{\cE}+\dfrac{1}{T}(\bm{\cE}^{\top}\bm{\cE}-\EE\bm{\cE}^{\top}\bm{\cE}).$$
	Then we have
	$$\dfrac{1}{T}\hat{\Fb}^{\top}\hat{\Fb}=\dfrac{1}{T}\Hb\Fb^{\top}\Fb\Hb^{\top}+\Delta.$$
	By the assumption $$\lambda_{\min}\left(\dfrac{1}{T}\EE\Eb^{\top}\Eb \right)=\lambda_{\min}\left(\dfrac{1}{T}\sum_{t=1}^{T}\EE\text{Vec}(\Eb_{t})\text{Vec}(\Eb_{t})^{\top}\right) \geq c_{0},$$ and the  property of the Kronecker product, we can get  $$\lambda_{\min}\left(\dfrac{1}{T}\EE\bm{\cE}^{\top}\bm{\cE}\right)
	\geq \lambda_{\min}\left(\dfrac{1}{T}\EE\Eb^{\top}\Eb\right)\lambda_{\min}\left(\dfrac{1}{p_{1}^{2}p_{2}^2}\Wb^{\top}\Wb\right) \geq c_{0}(p_{1}p_{2})^{-1}$$ for some $c_{0} >0$. In addition, in the following Lemma \ref{lemma1} (4), we show that $$\|\dfrac{1}{T}(\bm{\cE}^{\top}\bm{\cE}-\EE\bm{\cE}^{\top}\bm{\cE})\|_{2}=O_{p}\left(\dfrac{1}{p_{1}p_{2}\sqrt{T}}\right).$$ Therefore, the inequality $\|\dfrac{1}{T}(\bm{\cE}^{\top}\bm{\cE}-\EE\bm{\cE}^{\top}\bm{\cE})\|_{2} \leq \dfrac{1}{2}\lambda_{\min}(\dfrac{1}{T}\EE\bm{\cE}^{\top}\bm{\cE})$ hold with large probability. We now continue the argument conditioning on this event.
		
	Let $\bv$ be the unit vector such that $\bv\top \dfrac{1}{T}\hat{\Fb}^{\top}\hat{\Fb}\bv=\lambda_{\min}(\dfrac{1}{T}\hat{\Fb}^{\top}\hat{\Fb})$, as $\bv\top \dfrac{1}{T}\hat{\Fb}^{\top}\hat{\Fb}\bv=\bv\top\dfrac{1}{T}\Hb\Fb^{\top}\Fb\Hb^{\top}\bv+\bv \top \Delta\bv$, then we have $$\lambda_{\min}(\dfrac{1}{T}\hat{\Fb}^{\top}\hat{\Fb}) \geq \dfrac{1}{T}\bv\top\Hb\Fb^{\top}\Fb\Hb^{\top}\bv+\dfrac{2}{T}\bv\top\Hb\Fb^{\top}\bm{\cE}\bv+\dfrac{c_{0}}{2p_{1}p_{2}}.$$
	If $\bv\top\Hb=0$, then $\lambda_{\min}(\dfrac{1}{T}\hat{\Fb}^{\top}\hat{\Fb}) \geq \dfrac{c_{0}}{2p_{1}p_{2}}$. If $\bv\top \Hb \neq 0$, we have $\dfrac{1}{T}\bv\top\Hb\Fb^{\top}\Fb\Hb^{\top}\bv>0$ with large probability because $\dfrac{1}{T}\Fb^{\top}\Fb$ is positive definite. Let
	$$\alpha_{\bv}^{2}=\dfrac{1}{T}\bv\top\Hb\Fb^{\top}\Fb\Hb^{\top}\bv, \ \ X=(\dfrac{\alpha_{\bv}^{2}}{Tp_{1}p_{2}})^{-1/2}2\bv\top\dfrac{1}{T}\Hb\Fb^{\top}\bm{\cE}\bv, \ \ 2\bv\top\dfrac{1}{T}\Hb\Fb^{\top}\bm{\cE}\bv=X\sqrt{\dfrac{\alpha_{\bv}^{2}}{Tp_{1}p_{2}}}.$$
	Then $$\lambda_{\min}(\dfrac{1}{T}\hat{\Fb}^{\top}\hat{\Fb}) \geq \alpha_{\bv}^{2}+X\sqrt{\dfrac{\alpha_{\bv}^{2}}{Tp_{1}p_{2}}}+\dfrac{c_{0}}{2p_{1}p_{2}}.$$
		
	First, we prove that $X=O_{p}(1)$. By the Assumption \ref{Assumption 3}, it holds that $\lambda_{\min}(\dfrac{1}{T}\Fb^{\top}\Fb) >c>0 $, then we have
	$$\alpha_{\bv}^{2} \geq \lambda_{\min}(\dfrac{1}{T}\Fb^{\top}\Fb)\bv\top\Hb\Hb^{\top}\bv > c\|\bv\top \Hb\|_{2}^{2}.$$
	By  Lemma \ref{lemma1} (3) in the following, we have $\|\dfrac{1}{T}\Fb^{\top}\bm{\cE}\|_{2}^{2}=O_{p}\left(\dfrac{1}{Tp_{1}p_{2}}\right)$ and as a result,
	$$|X|^{2} \leq 4Tp_{1}p_{2}\alpha_{\bv}^{-2}\|\bv\top\Hb\|_{2}^{2}\|\dfrac{1}{T}\Fb^{\top}\bm{\cE}\|_{2}^{2} \leq O_{p}(1)\alpha_{\bv}^{-2}\|\bv\top\Hb\|_{2}^{2} \leq O_{p}(1)c^{-1}\|\bv\top\Hb\|_{2}^{-2}\|\bv\top\Hb\|_{2}^{2}=O_{p}\left(1\right).$$
		
	Then we consider two cases. \vspace{1em}
		
	\textbf{Case one}: $\alpha_{\bv}^{2} \leq 4|X|\sqrt{\dfrac{\alpha_{\bv}^{2}}{Tp_{1}p_{2}}}.$ Then $|\alpha_{\bv}| \leq 4|X|\dfrac{1}{\sqrt{Tp_{1}p_{2}}}$ and
	$$\lambda_{\min}(\dfrac{1}{T}\hat{\Fb}^{\top}\hat{\Fb}) \geq \dfrac{c_{0}}{2p_{1}p_{2}}-|X||\alpha_{\bv}|\dfrac{1}{\sqrt{Tp_{1}p_{2}}} \geq \dfrac{c_{0}}{2p_{1}p_{2}}-4|X|^{2}\dfrac{1}{Tp_{1}p_{2}} \geq \dfrac{c_{0}}{4p_{1}p_{2}},$$
where the last inequality holds  with probability  approaching to 1 by the fact that $X=O_{p}(1)$ and $T \to \infty $.

\vspace{1em}

	\textbf{Case two}: $\alpha_{\bv}^{2} >4|X|\sqrt{\dfrac{\alpha_{\bv}^{2}}{Tp_{1}p_{2}}},$ then
	$$\lambda_{\min}(\dfrac{1}{T}\hat{\Fb}^{\top}\hat{\Fb})\geq \alpha_{\bv}^{2}-|X|\sqrt{\dfrac{\alpha_{\bv}^{2}}{Tp_{1}p_{2}}}+\dfrac{c_{0}}{2p_{1}p_{2}} \geq \dfrac{3}{4}\alpha_{\bv}^{2}+\dfrac{c_{0}}{2p_{1}p_{2}} \geq \dfrac{c_{0}}{2p_{1}p_{2}}.$$
		
	From what has been discussed above, we conclude that  with probability approaching one, we have
$$\lambda_{\min}(\dfrac{1}{T}\hat{\Fb}^{\top}\hat{\Fb}) > c_{0}/(p_{1}p_{2}).$$
		
	\end{proof}
	
	\begin{lemma}\label{lemma1}
	For any $m_{1} \geq 1$ and $m_{2} \geq 1$, (note that $m_{1},m_{2}$ can be either smaller, equal to or larger than $k_{1},k_{2}$),
		
	(1) $ \|\EE(\bm{\cE}_{t}^{\top}\bm{\cE}_{t})\|_{2}=O\left(\dfrac{1}{p_{1}}\right) \, \text{and} \, \|\bm{\cE}_{t}\|_{2}=O_{p}\left(\dfrac{1}{\sqrt{p_{1}}}\right), t \in [T],$
		
	(2) $\|\bm{\cE}\|_{2}=O_{p}\left(\sqrt{\dfrac{T}{p_{1}p_{2}}}\right),$
		
	(3) $\EE\|\dfrac{1}{T}\Fb^{\top}\bm{\cE}\|_{2}^{2} \leq O\left(\dfrac{1}{Tp_{1}p_{2}}\right),$
		
	(4) $\left\|\dfrac{1}{T}(\bm{\cE}^{\top}\bm{\cE}-\EE\bm{\cE}^{\top}\bm{\cE})\right\|_{2} \leq O_{p}\left(\dfrac{1}{p_{1}p_{2}\sqrt{T}}\right).$
		
	\end{lemma}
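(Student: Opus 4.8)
The plan is to prove the four bounds in order, each reducing to a short second-moment estimate that couples the uniform boundedness of the entries of $\Wb_1,\Wb_2$ (Assumption \ref{Assumption 1} (1)) with one of the weak-dependence conditions in Assumption \ref{Assumption 3}. Throughout I would use the identity $\text{Vec}(\bm{\cE}_t)=\tfrac1{p_1p_2}\Wb^\top\text{Vec}(\Eb_t)$ with $\Wb=\Wb_2\otimes\Wb_1$, together with the elementary consequences of Assumption \ref{Assumption 1} (1) that $\|\Wb\|_2^2\le\|\Wb\|_F^2=\|\Wb_1\|_F^2\|\Wb_2\|_F^2=O(p_1p_2)$, since $m_1,m_2$ are fixed.

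For part (1) I would bound $\|\EE(\bm{\cE}_t^\top\bm{\cE}_t)\|_2\le\EE\|\bm{\cE}_t\|_F^2=\text{tr}\big(\EE[\text{Vec}(\bm{\cE}_t)\text{Vec}(\bm{\cE}_t)^\top]\big)=\tfrac1{p_1^2p_2^2}\text{tr}\big(\EE[\text{Vec}(\Eb_t)\text{Vec}(\Eb_t)^\top]\,\Wb\Wb^\top\big)$, then apply $\text{tr}(AB)\le\|A\|_2\,\text{tr}(B)$ for positive semidefinite $A,B$, Jensen's inequality with Assumption \ref{Assumption 3} (3) to get $\|\EE[\text{Vec}(\Eb_t)\text{Vec}(\Eb_t)^\top]\|_2<c$, and $\text{tr}(\Wb\Wb^\top)=\|\Wb\|_F^2=O(p_1p_2)$. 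This yields $\EE\|\bm{\cE}_t\|_F^2=O(1/(p_1p_2))$, and then $\|\bm{\cE}_t\|_2\le\|\bm{\cE}_t\|_F$ with Markov's inequality gives the high-probability bound. Part (2) is immediate from (1): $\|\bm{\cE}\|_2^2\le\|\bm{\cE}\|_F^2=\sum_{t=1}^T\|\bm{\cE}_t\|_F^2$, so $\EE\|\bm{\cE}\|_F^2=O(T/(p_1p_2))$ and Markov finishes it.

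For part (3), I would write $\tfrac1T\Fb^\top\bm{\cE}=\tfrac1{Tp_1p_2}\big(\sum_{t}\text{Vec}(\Fb_t)\text{Vec}(\Eb_t)^\top\big)\Wb$ and use $\|\cdot\|_2\le\|\cdot\|_F$ with $\|\Wb\|_2^2=O(p_1p_2)$ to reduce to showing $\EE\big\|\sum_{t}\text{Vec}(\Fb_t)\text{Vec}(\Eb_t)^\top\big\|_F^2=O(T)$; expanding the Frobenius norm gives $\sum_{t,s}\langle\Fb_t,\Fb_s\rangle_F\,\text{tr}(\Eb_t\Eb_s^\top)$, and conditioning on the $\sigma$-algebra generated by $\{\Fb_t\}$, bounding $|\langle\Fb_t,\Fb_s\rangle_F|\le\|\Fb_t\|_F\|\Fb_s\|_F$, and invoking Assumption \ref{Assumption 3} (2i) closes this step. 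For part (4), since $\tfrac1T(\bm{\cE}^\top\bm{\cE}-\EE\bm{\cE}^\top\bm{\cE})$ has a fixed number $m_1^2m_2^2$ of entries, its spectral norm is controlled (up to a constant) by its largest entry, so I would bound the variance of a generic entry $\tfrac1T\sum_t(\cE_{t,\alpha}\cE_{t,\beta}-\EE[\cE_{t,\alpha}\cE_{t,\beta}])$, namely $\tfrac1{T^2}\sum_{t,s}\text{Cov}(\cE_{t,\alpha}\cE_{t,\beta},\cE_{s,\alpha}\cE_{s,\beta})$. Substituting $\cE_{t,\alpha}=\tfrac1{p_1p_2}\sum_{a,b}w_{1,ai_\alpha}e_{t,ab}w_{2,bj_\alpha}$ (with $\alpha\leftrightarrow(i_\alpha,j_\alpha)$) turns each covariance into an eightfold sum over $(a,b,a',b',u,v,u',v')$ of four uniformly bounded weights times $\text{Cov}(e_{t,ab}e_{t,a'b'},e_{s,uv}e_{s,u'v'})$, divided by $p_1^4p_2^4$; fixing $(a,b,a',b')$, Assumption \ref{Assumption 3} (4) bounds $\sum_s\sum_{u,v,u',v'}|\text{Cov}(e_{t,ab}e_{t,a'b'},e_{s,uv}e_{s,u'v'})|\le c$, and summing over the remaining $p_1^2p_2^2$ quadruples gives $\sum_s|\text{Cov}(\cE_{t,\alpha}\cE_{t,\beta},\cE_{s,\alpha}\cE_{s,\beta})|=O(1/(p_1^2p_2^2))$, hence the variance is $O(1/(Tp_1^2p_2^2))$ and Chebyshev's inequality yields each entry $=O_p(1/(\sqrt{T}p_1p_2))$.

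The main obstacle is the bookkeeping in parts (3) and (4): in (3) one must set up the conditioning on $\{\Fb_t\}$ so that Assumption \ref{Assumption 3} (2i) applies cleanly and absorb the factor cross-products via Cauchy--Schwarz; in (4) one must organize the eightfold covariance sum so that exactly one quadruple of spatial indices is left "free" when Assumption \ref{Assumption 3} (4) is invoked, after which the bounded weights and the remaining free summations supply precisely the powers of $p_1,p_2$ producing the stated rate.
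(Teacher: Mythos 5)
Your proposal is correct and follows essentially the same route as the paper: each part is a second-moment/trace computation that combines the uniform boundedness of the diversified weights (so $\|\Wb_1\|_F^2=O(p_1)$, $\|\Wb_2\|_F^2=O(p_2)$) with the weak-dependence conditions of Assumption \ref{Assumption 3}, exactly as in the paper's proof. The only differences are cosmetic — in (1) you invoke the vectorized covariance bound of Assumption \ref{Assumption 3}(3) where the paper uses $\|\EE(\Eb_t\Eb_t^\top)\|_2\le c$, in (3) you pull $\|\Wb\|_2$ out before expanding rather than keeping it inside the conditional trace, and the "four" bounded weights in (4) should be eight — none of which affects correctness or the stated rates.
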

	
	In this lemma, all $\EE(\cdot), \EE(\cdot|\cdot)$ and $\text{\text{Cov}}(\cdot)$ are calculated conditionally on $\Wb_{1}, \Wb_{2}$. The proof of the lemma is given below.
	\begin{proof}
	(1) By Assumption \ref{Assumption 3}, we have $\|\EE(\Eb_{t}\Eb_{t}^{\top})\|_{2} \leq cp_{2}$, thus,
	$$
	\begin{aligned}
	&\|\EE(\Eb_{t}^{\top}\Wb_{1}\Wb_{1}^{\top}\Eb_{t})\|_{2} \leq \text{tr}(\EE\Eb_{t}^{\top}\Wb_{1}\Wb_{1}^{\top}\Eb_{t})\leq \EE(\text{tr}(\Eb_{t}^{\top}\Wb_{1}\Wb_{1}^{\top}\Eb_{t}))\\
	&=\text{tr}(\Wb_{1}^{\top}\EE(\Eb_{t}\Eb_{t}^{\top})\Wb_{1}) \leq m_{1}\|\Wb_{1}\|_{2}^{2}\|\EE(\Eb_{t}\Eb_{t}^{\top})\|_{2} \leq cm_{1}p_{1}p_{2},
	\end{aligned}$$
where the penultimate inequality is derived as follows: let $\bv_{i}$ be the $i$-th eigenvector of $\Wb_{1}^{\top}\EE(\Eb_{t}\Eb_{t}^{\top})\Wb_{1}$, then
	$$\text{tr}(\Wb_{1}^{\top}\EE(\Eb_{t}\Eb_{t}^{\top})\Wb_{1})=\sum_{i=1}^{m_{1}}\bv_{i}^{\top}\Wb_{1}^{\top}\EE(\Eb_{t}\Eb_{t}^{\top})\Wb_{1}\bv_{i} \leq \|\EE(\Eb_{t}\Eb_{t}^{\top})\|_{2}\sum_{i=1}^{m_{1}}\|\Wb_{1}\bv_{i}\|_{2}^{2} \leq m_{1}\|\EE(\Eb_{t}\Eb_{t}^{\top})\|_{2}\|\Wb_{1}\|_{2}^{2}.$$
		
	$$
\begin{aligned}
\|\EE(\bm{\cE}_{t}^{\top}\bm{\cE}_{t})\|_{2}&=\dfrac{1}{p_{1}^{2}p_{2}^{2}}\|\EE(\Wb_{2}^{\top}\Eb_{t}^{\top}\Wb_{1}\Wb_{1}^{\top}\Eb_{t}\Wb_{2})\|_{2}=\dfrac{1}{p_{1}^{2}p_{2}^{2}}\|\Wb_{2}^{\top}\EE(\Eb_{t}^{\top}\Wb_{1}\Wb_{1}^{\top}\Eb_{t})\Wb_{2}\|_{2}\\
& \leq \dfrac{1}{p_{1}^{2}p_{2}^{2}}\|\Wb_{2}\|_{2}^{2}\|\EE(\Eb_{t}^{\top}\Wb_{1}\Wb_{1}^{\top}\Eb_{t})\|_{2},
\end{aligned}$$
	which imply $$\|\EE(\bm{\cE}_{t}^{\top}\bm{\cE}_{t})\|_{2} \leq \dfrac{c}{p_{1}}.$$	
	Thus, we have $\EE\|\bm{\cE}_{t}\|_{2}^{2} \leq \text{tr}(\EE\bm{\cE}_{t}^{\top}\bm{\cE}_{t}) \leq m_{2}\|\EE\bm{\cE}_{t}^{\top}\bm{\cE}_{t}\|_{2} \leq \dfrac{cm_{1}m_{2}}{p_{1}}.$

		\vspace{2em}

	(2) It holds that $\bm{\cE}=\dfrac{1}{p_{1}p_{2}}\Eb\Wb, \Wb=\Wb_{2} \otimes \Wb_{1}.$ By the Assumption \ref{Assumption 3},  $$\begin{aligned}
\Big\|\dfrac{1}{T}\EE(\Eb^{\top}\Eb)\Big\|_{2}&=\Big\|\dfrac{1}{T}\sum_{t=1}^{T}\EE\text{Vec}(\Eb_{t})\text{Vec}(\Eb_{t})^{\top}\Big\|_{2}\\
&=\|\EE\text{Vec}(\Eb_{t})\text{Vec}(\Eb_{t})^{\top}\|_{2} \\
&\leq \EE\|\EE\left(\text{Vec}(\Eb_{t})\text{Vec}(\Eb_{t})^{\top}|\Fb\right)\|_{2} \leq c.
\end{aligned}$$ Thus
	$$\EE\Big\|\dfrac{1}{p_{1}p_{2}}\Eb\Wb\Big\|_{2}^{2} \leq \dfrac{1}{p_{1}^{2}p_{2}^{2}}\text{tr}\EE(\Wb^{\top}\Eb^{\top}\Eb\Wb) \leq \dfrac{1}{p_{1}^{2}p_{2}^{2}}\|\Wb\|_{F}^{2}\|\EE\Eb^{\top}\Eb\|_{2} \leq \dfrac{cT}{p_{1}p_{2}}.$$
	
	\vspace{2em}
	
	(3) By the Assumption \ref{Assumption 3}, we have
	$$\dfrac{1}{T}\sum_{s=1}^{T}\sum_{T=1}^{T}\EE\|\Fb_{t}\|_{F}\|\Fb_{s}\|_{F}\|\EE(\text{Vec}(\Eb_{s})\text{Vec}(\Eb_{t})^{\top}|\Fb)\|_{2}<c, $$
    then we can obtain
	$$
	\begin{aligned}
	&\EE\|\dfrac{1}{T}\Fb^{\top}\bm{\cE}\|_{2}^{2}=\dfrac{1}{T^{2}p_{1}^{2}p_{2}^{2}}\EE\|\Fb^{\top}\Eb\Wb\|_{2}^{2}=\dfrac{1}{T^{2}p_{1}^{2}p_{2}^{2}}\EE\|\sum_{t=1}^{T}\Wb^{\top}\text{Vec}(\Eb_{t})\text{Vec}(\Fb_{t})^{\top}\|_{2}^{2}\\
	&=\dfrac{1}{T^{2}p_{1}^{2}p_{2}^{2}}\EE\left[\EE\left(\|\sum_{t=1}^{T}\Wb^{\top}\text{Vec}(\Eb_{t})\text{Vec}(\Fb_{t})^{\top}\|_{2}^{2}\right)\Bigg|\Fb\right]\\
	&\leq \dfrac{1}{T^{2}p_{1}^{2}p_{2}^{2}}\EE\left[\EE\left(\text{tr}(\sum_{t=1}^{T}\sum_{s=1}^{T}\text{Vec}(\Fb_{t})\text{Vec}(\Eb_{t})^{\top}\Wb\Wb^{\top}\text{Vec}(\Eb_{s})\text{Vec}(\Fb_{s})^{\top})\right)\Bigg|\Fb\right]\\
	&=\dfrac{1}{T^{2}p_{1}^{2}p_{2}^{2}}\EE\left\{\text{tr}\left[\sum_{t=1}^{T}\sum_{s=1}^{T}\text{Vec}(\Fb_{t})\EE\left(\text{Vec}(\Eb_{t})^{\top}\Wb\Wb^{\top}\text{Vec}(\Eb_{s})\Big|\Fb\right)\text{Vec}(\Fb_{s})^{\top}\right]\right\}\\
	&=\dfrac{1}{T^{2}p_{1}^{2}p_{2}^{2}}\EE\left\{\text{tr}\left[\sum_{t=1}^{T}\sum_{s=1}^{T}\text{Vec}(\Fb_{t})\text{Vec}(\Fb_{s})^{\top}\EE\left(\text{Vec}(\Eb_{t})^{\top}\Wb\Wb^{\top}\text{Vec}(\Eb_{s})\Big|\Fb\right)\right]\right\}\\
	&\leq\dfrac{1}{T^{2}p_{1}^{2}p_{2}^{2}}\sum_{i=1}^{k_{1}}\sum_{j=1}^{k_{2}}\sum_{t=1}^{T}\sum_{s=1}^{T}\EE\left[ f_{t,ij}f_{s,ij}\EE\left(\text{Vec}(\Eb_{t})^{\top}\Wb\Wb^{\top}\text{Vec}(\Eb_{s})\Big|\Fb\right)\right]\\
	&=\dfrac{1}{T^{2}p_{1}^{2}p_{2}^{2}}\sum_{i=1}^{k_{1}}\sum_{j=1}^{k_{2}}\sum_{t=1}^{T}\sum_{s=1}^{T}\EE\left\{ f_{t,ij}f_{s,ij}\text{tr}\left[\Wb^{\top}\EE\left(\text{Vec}(\Eb_{s})\text{Vec}(\Eb_{t})^{\top}\Big|\Fb\right)\Wb\right]\right\}\\
	&\leq\dfrac{1}{T^{2}p_{1}^{2}p_{2}^{2}}\sum_{i=1}^{k_{1}}\sum_{j=1}^{k_{2}}\sum_{t=1}^{T}\sum_{s=1}^{T}\EE\left[| f_{t,ij}f_{s,ij}|\|\Wb\|_{F}^{2}\left\|\EE\left(\text{Vec}(\Eb_{s})\text{Vec}(\Eb_{t})^{\top}\Big|\Fb\right)\right\|_{2}\right]\\
	&\leq\dfrac{c}{T^{2}p_{1}p_{2}}\sum_{t=1}^{T}\sum_{s=1}^{T}\EE\|\Fb_{t}\|_{F}\|\Fb_{s}\|_{F}\left\|\EE\left(\text{Vec}(\Eb_{s})\text{Vec}(\Eb_{t})^{\top}\Big|\Fb\right)\right\|_{2} \leq \dfrac{c}{Tp_{1}p_{2}}.
	\end{aligned}$$
	Thus, $\EE\|\dfrac{1}{T}\Fb^{\top}\bm{\cE}\|_{2}^{2} \leq O\left(\dfrac{1}{Tp_{1}p_{2}}\right).$
		
	(4) By the Assumption \ref{Assumption 3} (4), we have
	$$
	\begin{aligned}
	&\EE\|\dfrac{1}{T}(\bm{\cE}^{\top}\bm{\cE}-\EE\bm{\cE}^{\top}\bm{\cE})\|_{2}^{2}\leq \EE\|\dfrac{1}{T}(\bm{\cE}^{\top}\bm{\cE}-\EE\bm{\cE}^{\top}\bm{\cE})\|_{F}^{2}\\
	&\leq \sum_{k,q=1}^{m_{1}m_{2}}\EE\left(\dfrac{1}{Tp_{1}^{2}p_{2}^{2}}\sum_{t=1}^{T}\sum_{i,j=1}^{p_{1}p_{2}}w_{ik}w_{jq}(e_{ti}e_{tj}-\EE e_{ti}e_{tj})\right)^{2}\\
	&= \sum_{k,q=1}^{m_{1}m_{2}}\EE\left(\dfrac{1}{Tp_{1}^{2}p_{2}^{2}}\sum_{t=1}^{T}\sum_{i,j=1}^{p_{1}p_{2}}w_{ik}w_{jq}\left(\text{Vec}(\Eb_{t})_{i}\text{Vec}(\Eb_{t})_{j}-\EE\text{Vec}(\Eb_{t})_{i}\text{Vec}(\Eb_{t})_{j}\right)\right)^{2}\\
	&\leq \dfrac{1}{Tp_{1}^{2}p_{2}^{2}}\sum_{k,q=1}^{m_{1}m_{2}}\dfrac{1}{Tp_{1}^{2}p_{2}^{2}}\sum_{t,s=1}^{T}\sum_{i,j,u,v=1}^{p_{1}p_{2}}\big|w_{ik}w_{jq}w_{uk}w_{vq}\big|\big|\text{Cov}\left(\text{Vec}(\Eb_{t})_{i}\text{Vec}(\Eb_{t})_{j},\text{Vec}(\Eb_{s})_{u}\text{Vec}(\Eb_{s})_{v}\right)\big|\\
	&\leq \dfrac{c}{Tp_{1}^{2}p_{2}^{2}}\dfrac{1}{T p_{1}^2 p_{2}^{2}}\sum_{t,s=1}^{T}\sum_{i_{1},j_{1},u_{1},v_{1}=1}^{p_{1}}\sum_{i_{2},j_{2},u_{2},v_{2}=1}^{p_{2}}|\text{Cov}(e_{t,i_{1}i_{2}}e_{t,j_{1}j_{2}},e_{s,u_{1}u_{2}}e_{s,v_{1}v_{2}})| \leq \dfrac{c}{Tp_{1}^{2}p_{2}^{2}}.
	\end{aligned}$$
		
	\end{proof}

     \begin{lemma}\label{basic-lemma}
    	Under Assumption \ref{Assumption 1} (1), Assumption \ref{Assumption 5}-\ref{Assumption 7}, as $\min\{T,p_{1},p_{2}\} \to \infty$, we have
    	
    	(1) $\sum_{t=1}^{T}\EE\|\Eb_{t}^{\top}\Rb\|_{F}^{2}=O\left(Tp_{1}p_{2}\right), \sum_{t=1}^{T}\EE\|\Eb_{t}\Cb\|_{F}^{2}=O\left(Tp_{1}p_{2}\right),$ $\EE\|\sum_{t=1}^{T}\Eb_{t}^{\top}\Fb_{t}\|_{F}^{2}=O\left(Tp_{1}p_{2}\right)$,
    	
    \vspace{1em}

    	(2) $\EE\|\sum_{t=1}^{T}\Fb_{t}\Cb^{\top}\Eb_{t}^{\top}\|_{F}^{2}=O\left(Tp_{1}p_{2}\right), \EE\|\sum_{t=1}^{T}\Fb_{t}^{\top}\Rb^{\top}\Eb_{t}\|_{F}^{2}=O\left(Tp_{1}p_{2}\right)$,
    	
            \vspace{1em}

    	$\EE\|\sum_{t=1}^{T}\Fb_{t}\Cb^{\top}\Eb_{t}^{\top}\Rb\|_{F}^{2}=O\left(Tp_{1}p_{2}\right), \EE\|\sum_{t=1}^{T}\Fb_{t}^{\top}\Rb^{\top}\Eb_{t}\Cb\|_{F}^{2}=O\left(Tp_{1}p_{2}\right)$,
    	
        \vspace{1em}

    	(3) $	\EE\|\sum_{t=1}^{T}\Fb_{t}\Wb_{2}^{\top}\Eb_{t}^{\top}\Wb_{1}\|_{F}^{2}=O\left(Tp_{1}p_{2}\right), \EE\|\sum_{t=1}^{T}\Eb_{t}^{\top}\Wb_{1}\Fb_{t}\|_{F}^{2}=O\left(Tp_{1}p_{2}\right)$,
    	
        \vspace{1em}

    	$\EE\|\sum_{t=1}^{T}\Fb_{t}\Wb_{2}^{\top}\Eb_{t}^{\top}\|_{F}^{2}=O\left(Tp_{1}p_{2}\right), \EE\|\sum_{t=1}^{T}\Wb_{1}^{\top}\Eb_{t}\Wb_{2}\|_{F}^{2}=O\left(Tp_{1}p_{2}\right)$,
    	
        \vspace{1em}

    	(4) $\EE\|\sum_{t=1}^{T}\Rb^{\top}\Eb_{t}\Wb_{2}\Fb_{t}^{\top}\|_{F}^{2}=O\left(Tp_{1}p_{2}\right), \EE\|\sum_{t=1}^{T}\Rb^{\top}\Eb_{t}\Wb_{2}\|_{F}^{2}=O\left(Tp_{1}p_{2}\right)$
    	
        \vspace{1em}

    	(5)
    	$\EE\|\sum_{t=1}^{T}\Eb_{t}\Wb_{2}\Wb_{2}^{\top}\Eb_{t}^{\top}\Wb_{1}\|_{F}^{2}=O\left(Tp_{1}^{2}p_{2}^{3}+T^{2}p_{1}p_{2}^{2}\right)$.
    \end{lemma}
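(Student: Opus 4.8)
The five estimates in Lemma~\ref{basic-lemma} are all moment bounds of the same flavour, and I would establish them by a common three-step recipe. \emph{Step one:} expand the squared Frobenius norm into an explicit multiple sum over the scalar entries of the matrices involved ($e_{t,ij}$ and the entries of $\Fb_t$, $\Rb$, $\Cb$, $\Wb_1$, $\Wb_2$). \emph{Step two:} use the entrywise boundedness of the loadings (Assumption~\ref{Assumption 6}) and of the projection matrices (Assumption~\ref{Assumption 1}(1)) to pull every loading/weight entry out of the sum as an $O(1)$ factor, leaving only a sum of (co)variances of the idiosyncratic terms. \emph{Step three:} bound the remaining sum over time and cross-section by the weak-dependence conditions on $\{\Eb_t\}$ (Assumption~\ref{Assumption 5}) or, when $\Fb_t$ appears, by the summability conditions on the normalized partial sums $\bxi_{i,j}=\text{Vec}(T^{-1/2}\sum_t\Fb_t e_{t,ij})$ (Assumption~\ref{Assumption 7}). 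In parts (1)--(4) the bookkeeping is uniform: exactly one index in the row direction and one in the column direction remain ``free'' (surviving the $\|\cdot\|_F^2$), while the remaining indices are summed against an $O(1)$ summability constant, which produces the $O(Tp_1p_2)$ rate.

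For the no-cross-$t$ terms in (1)--(2), e.g.\ $\sum_t\EE\|\Eb_t^\top\Rb\|_F^2=\sum_t\sum_{a,j}\sum_{i,i'}R_{ia}R_{i'a}\,\EE(e_{t,ij}e_{t,i'j})$, Assumption~\ref{Assumption 6} removes the $R$'s and the bound $\sum_{i',j'}|\EE(e_{t,ij}e_{t,i'j'})|\le c$ (Assumption~\ref{Assumption 5}(2) with $s=t$), summed over the free indices $i,j$, gives $O(Tp_1p_2)$. For the terms carrying $\Fb_t$ together with $\sum_t$, such as $\EE\|\sum_t\Fb_t\Cb^\top\Eb_t^\top\|_F^2$ and the $\Wb_1,\Wb_2$ analogues in (3)--(4), I would reorganize the summation so that the partial sums $T^{1/2}\bxi_{i,j}=\sum_t\text{Vec}(\Fb_t e_{t,ij})$ appear explicitly, then use Assumption~\ref{Assumption 7}(1)--(2) to control $\sum_{i',j'}\|\EE(\bxi_{i,j}\otimes\bxi_{i',j'})\|_{\max}$; the $T^{-1/2}$ normalization in the definition of $\bxi_{i,j}$ is exactly what restores the factor $T$, so the rate is again $O(Tp_1p_2)$. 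The weight versions go through verbatim since $\Wb_1,\Wb_2$ are bounded entrywise just like $\Rb,\Cb$.

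Part (5) is the genuine obstacle: $\sum_t\Eb_t\Wb_2\Wb_2^\top\Eb_t^\top\Wb_1$ is \emph{quartic} in the entries of $\Eb_t$, the summands are only weakly dependent across $t$ (not independent), and the target rate $O(Tp_1^2p_2^3+T^2p_1p_2^2)$ has two regimes. My plan is to split $\sum_t\Eb_t\Wb_2\Wb_2^\top\Eb_t^\top\Wb_1=\sum_t\EE[\Eb_t\Wb_2\Wb_2^\top\Eb_t^\top]\Wb_1+\sum_t(\Eb_t\Wb_2\Wb_2^\top\Eb_t^\top-\EE[\cdot])\Wb_1$ and treat the ``mean'' and ``fluctuation'' parts separately. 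For the mean part, using $(\Wb_2\Wb_2^\top)_{jj'}=O(1)$ and $\sum_{i',j'}|\EE(e_{t,ij}e_{t,i'j'})|\le c$ shows that each entry of $\EE[\Eb_t\Wb_2\Wb_2^\top\Eb_t^\top]\Wb_1$ is $O(p_2)$, so its squared Frobenius norm is $O(p_1p_2^2)$; summing over $t$ and squaring gives $O(T^2p_1p_2^2)$. For the fluctuation part, $\EE$ of its squared Frobenius norm expands into $\sum_{t,s}$ of covariances of the form $\text{Cov}(e_{t,ij}e_{t,i'j'},\,e_{s,ih}e_{s,l'h'})$, which are controlled by the fourth-order summability condition Assumption~\ref{Assumption 5}(3); summing that constant over the remaining (non-free) indices and over $t$ yields $O(Tp_1^2p_2^2)$, which is absorbed into $O(Tp_1^2p_2^3)$. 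Adding the two pieces gives the stated bound.

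The main difficulty throughout—and acutely in part (5)—is the combinatorial bookkeeping: for each term one must correctly identify which indices are ``free'' (and hence contribute a dimension factor $p_1$ or $p_2$) versus which are summed against a fixed $O(1)$ constant, and in the quartic case one must avoid a naive variance computation (which would require independence across $t$) and instead route the cross-time dependence through Assumption~\ref{Assumption 5}(3). Parts (1)--(4) are routine once this pattern is in place; (5), and the factor-weighted variants of the same type, are where the fourth-moment conditions (Assumptions~\ref{Assumption 5}(3) and \ref{Assumption 7}(3)) are actually invoked.
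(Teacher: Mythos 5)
Your proposal is correct, and for parts (1)--(4) it follows essentially the paper's own route: expand the Frobenius norms entrywise, pull out the uniformly bounded loading/weight entries via Assumptions \ref{Assumption 6} and \ref{Assumption 1}(1), and close the sums with the summability conditions in Assumption \ref{Assumption 5} for pure-noise terms and with the $\bxi_{i,j}$ conditions in Assumption \ref{Assumption 7} for the factor-weighted terms (the paper invokes Assumption \ref{Assumption 7}(1) directly for part (2), whereas you route everything through $\bxi_{i,j}$ and Assumption \ref{Assumption 7}(2); both are fine). For part (5) your organization differs mildly: you split $\sum_t\Eb_t\Wb_2\Wb_2^\top\Eb_t^\top\Wb_1$ at the matrix level into its mean and fluctuation, bounding each entry of the mean by $O(p_2)$ via Assumption \ref{Assumption 5}(2) (giving the $T^2p_1p_2^2$ regime) and the fluctuation via the fourth-order covariance summability of Assumption \ref{Assumption 5}(3) (giving $O(Tp_1^2p_2^2)$), while the paper first extracts $\|\Wb_2\|_F^2=O(p_2)$ and then bounds $\EE\|\sum_t\Wb_1^\top\Eb_t\Wb_2\,e_{t,ij}\|^2$ for each free pair $(i,j)$ by the same covariance-plus-squared-mean split. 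The two treatments rest on identical assumptions and produce the same two regimes; your version is in fact marginally sharper ($Tp_1^2p_2^2$ in place of the paper's $Tp_1^2p_2^3$ for the fluctuation piece), which is harmless since the lemma only asserts an upper bound.
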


    \begin{proof}
    	Assume $m_{1}=k_{1}=1, m_{2}=k_{2}=1$.
    	
    	(1) By Assumption \ref{Assumption 5}, we can get $$\EE\|\Eb_{t}^{\top}\Rb\|_{F}^{2}=\sum_{j=1}^{p_{2}}\EE(\sum_{i=1}^{p_{1}}e_{t,ij}r_{i})^{2}=\sum_{j=1}^{p_{2}}\sum_{i_{1},i_{2}=1}^{p_{1}}\EE(r_{i_{1}}r_{i_{2}}e_{t,i_{1}j}e_{t,i_{2}j})=O\left(p_{1}p_{2}\right).$$
    	Similarly, $\sum_{t=1}^{T}\EE\|\Eb_{t}\Cb\|_{F}^{2}=O\left(Tp_{1}p_{2}\right)$.
    	By Assumption \ref{Assumption 7}(2),
    	$$\EE\|\sum_{t=1}^{T}\Eb_{t}^{\top}\Fb_{t}\|_{F}^{2}=\sum_{i=1}^{p_{1}}\sum_{j=1}^{p_{2}}\EE(\sum_{t=1}^{T}\Fb_{t}e_{t,ij})^{2}=T\sum_{i=1}^{p_{1}}\sum_{j=1}^{p_{2}}\EE(\bxi_{i,j})^{2}=O\left(Tp_{1}p_{2}\right).$$
    	
    	(2) The results hold directly by Assumption \ref{Assumption 7} (1).
    	
    	(3) By Assumption \ref{Assumption 1} (1) and Assumption \ref{Assumption 7} (2), we have
    	
    	$$\begin{aligned}
    		\EE\|\sum_{t=1}^{T}\Fb_{t}\Wb_{2}^{\top}\Eb_{t}^{\top}\Wb_{1}\|_{F}^{2}&=\EE(\sum_{t=1}^{T}\sum_{i=1}^{p_{1}}\sum_{j=1}^{p_{2}}\Fb_{t}w_{1,i}e_{t,ij}w_{2,j})^{2}=T\EE(\sum_{i=1}^{p_{1}}\sum_{j=1}^{p_{2}}\bxi_{i,j}w_{1,i}w_{2,j})^{2}\\
    		&=T\sum_{i_{1},i_{2}=1}^{p_{1}}\sum_{j_{1},j_{2}=1}^{p_{2}}\EE(\bxi_{i_{1},j_{1}}\bxi_{i_{2},j_{2}})w_{1,i_{1}}w_{1,i_{2}}w_{2,j_{1}}w_{2,j_{2}}=O\left(Tp_{1}p_{2}\right).
    	\end{aligned}$$
    	
    	$$\begin{aligned}
    		\EE\|\sum_{t=1}^{T}\Eb_{t}^{\top}\Wb_{1}\Fb_{t}\|_{F}^{2}&=\sum_{j=1}^{p_{2}}\EE(\sum_{t=1}^{T}\sum_{i=1}^{p_{1}}e_{t,ij}w_{1,j}\Fb_{t})^{2}=T\sum_{j=1}^{p_{2}}\EE(\sum_{i=1}^{p_{1}}\bxi_{i,j}w_{1,j})^{2}\\
    		&=T\sum_{j=1}^{p_{2}}\sum_{i,i^{\prime}=1}^{p_{1}}\EE(\bxi_{i,j}\bxi_{i^{\prime},j})w_{1,i}w_{1,i^{\prime}}=O\left(Tp_{1}p_{2}\right).
    	\end{aligned}$$
    	
    	Analogously,  $\EE\|\sum_{t=1}^{T}\Fb_{t}\Wb_{2}^{\top}\Eb_{t}^{\top}\|_{F}^{2}=O(Tp_{1}p_{2})$. By Assumption \ref{Assumption 5} (2), we can also obtain
    	
    	$$
    	\begin{aligned}
    		\EE\|\sum_{t=1}^{T}\Wb_{1}^{\top}\Eb_{t}\Wb_{2}\|_{F}^{2}=\EE(\sum_{t=1}^{T}\sum_{i=1}^{p_{1}}\sum_{j=1}^{p_{2}}w_{1,i}w_{2,j}e_{t,ij})^{2}&=\sum_{s,t=1}^{T}\sum_{i_{1},i_{2}=1}^{p_{1}}\sum_{j_{1},j_{2}=1}^{p_{2}}w_{1,i_{1}}w_{1,i_{2}}w_{2,j_{1}}w_{2,j_{2}}\EE(e_{t,i_{1}j_{1}}e_{s,i_{2}j_{2}})^{2}\\
    		&=O\left(Tp_{1}p_{2}\right).
    	\end{aligned}$$
    	
    	(4) $$\begin{aligned}
    		\EE\|\sum_{t=1}^{T}\Rb^{\top}\Eb_{t}\Wb_{2}\Fb_{t}\|_{F}^{2}&=\EE(\sum_{t=1}^{T}\sum_{i=1}^{p_{1}}\sum_{j=1}^{p_{2}}r_{i}w_{2,j}e_{t,ij}\Fb_{t})^{2}=T\EE(\sum_{i=1}^{p_{1}}\sum_{j=1}^{p_{2}}r_{i}w_{2,j}\bxi_{i,j})^{2}\\
    		&=T\sum_{i_{1},i_{2}=1}^{p_{1}}\sum_{j_{1},j_{2}=1}^{p_{2}}r_{i_{1}}r_{i_{2}}w_{2,j_{1}}w_{2,j_{2}}\EE(\bxi_{i_{1},j_{1}}\bxi_{i_{2},j_{2}})=O\left(Tp_{1}p_{2}\right).
    	\end{aligned}$$
    	
    	$$\begin{aligned}
    		\EE\|\sum_{t=1}^{T}\Rb^{\top}\Eb_{t}\Wb_{2}\|_{F}^{2}=\EE(\sum_{t=1}^{T}\sum_{i=1}^{p_{1}}\sum_{j=1}^{p_{2}}r_{i}w_{2,j}e_{t,ij})^{2}&=\sum_{s,t=1}^{T}\sum_{i_{1},i_{2}=1}^{p_{1}}\sum_{j_{1},j_{2}=1}^{p_{2}}r_{i_{1}}r_{i_{2}}w_{2,j_{1}}w_{2,j_{2}}\EE(e_{t,i_{1}j_{1}}e_{s,i_{2}j_{2}})^{2}\\
    		&=O\left(Tp_{1}p_{2}\right).
    	\end{aligned}$$
    	
    	(5) Note that $$\EE\|\sum_{t=1}^{T}\Eb_{t}\Wb_{2}\Wb_{2}^{\top}\Eb_{t}^{\top}\Wb_{1}\|_{F}^{2}\leq \|\Wb_{2}\|_{F}^{2}\|\sum_{t=1}^{T}\Wb_{1}^{\top}\Eb_{t}\Wb_{2}\Eb_{t}\|_{F}^{2}\leq \|\Wb_{2}\|_{F}^{2}\sum_{i=1}^{p_{1}}\sum_{j=1}^{p_{2}}\EE\|\sum_{t=1}^{T}\Wb_{1}^{\top}\Eb_{t}\Wb_{2}e_{t,ij}\|^{2},$$
    	while for any $i,j$, by Assumption \ref{Assumption 5},
    	$$\begin{aligned}
    		&\EE\|\sum_{t=1}^{T}\bW_{1}^{\top}\Eb_{t}\Wb_{2}e_{t,ij}\|^{2}=\EE(\sum_{t=1}^{T}\sum_{i^{\prime}=1}^{p_{1}}\sum_{j^{\prime}=1}^{p_{2}}w_{1,i^{\prime}}e_{t,i^{\prime}j^{\prime}}w_{2,j^{\prime}}e_{t,ij})^{2}\\
    		&\leq  c\sum_{t,s=1}^{T}\sum_{i_{1},i_{2}=1}^{p_{1}}\sum_{j_{1},j_{2}=1}^{p_{2}}|\text{Cov}(e_{t,ij}e_{t,i_{1}j_{1}},e_{s,ij}e_{s,i_{2}j_{2}})|+ c(\sum_{t=1}^{T}\sum_{i^{\prime}=1}^{p_{1}}\sum_{j^{\prime}=1}^{p_{2}}|\EE e_{t,i^{\prime}j^{\prime}}e_{t,ij}|)^{2}\\
    		&=O\left(Tp_{1}p_{2}+T^{2}\right).
    	\end{aligned}$$
    	
    \end{proof}

    \begin{lemma}\label{need-in-proof}
    	Under Assumption \ref{Assumption 1} (1), Assumption \ref{Assumption 2} and Assumption \ref{Assumption 4}-\ref{Assumption 7}, as $\min\{T, p_{1}, p_{2}\} \to \infty$, we have
    	$$\|\sum_{s=1}^{T}\Eb_{s}^{\top}(\hat{\Rb}^{(1)}-\Rb\hat{\Hb}_{r}^{(1)})\Fb_{s}\|_{F}^{2}=O_{p}\left(\dfrac{p_{1}^{2}}{p_{2}\nu_{\min}^{2}(\Hb_{2})}+\dfrac{p_{1}}{\nu_{\min}^{2}(\Hb_{1})\nu_{\min}^{4}(\Hb_{2})}+\dfrac{T}{p_{2}\nu_{\min}^{2}(\Hb_{1})\nu_{\min}^{4}(\Hb_{2})}\right).$$
    \end{lemma}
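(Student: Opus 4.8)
The natural starting point is the decomposition of $\hat{\Rb}^{(1)}-\Rb\hat{\Hb}_{r}^{(1)}$ already obtained in the proof of Theorem~\ref{no_iter_convergence}: by equation~(\ref{S1.1}),
$$\bm{\Delta}_{R}:=\hat{\Rb}^{(1)}-\Rb\hat{\Hb}_{r}^{(1)}=\cI\cI+\cI\cI\cI+\cI\cV .$$
As in the other proofs I would work under the harmless reduction $m_{1}=k_{1}=1$, $m_{2}=k_{2}=1$, and treat $\Wb_{1},\Wb_{2}$ as deterministic (Assumption~\ref{Assumption 1}). Inserting this decomposition into $\sum_{s=1}^{T}\Eb_{s}^{\top}\bm{\Delta}_{R}\Fb_{s}$ and using the triangle inequality, it suffices to bound the three Frobenius norms $\|\sum_{s}\Eb_{s}^{\top}\cI\cI\,\Fb_{s}\|_{F}$, $\|\sum_{s}\Eb_{s}^{\top}\cI\cI\cI\,\Fb_{s}\|_{F}$ and $\|\sum_{s}\Eb_{s}^{\top}\cI\cV\,\Fb_{s}\|_{F}$. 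For each of these I would substitute the explicit expression for $\cI\cI$, $\cI\cI\cI$, $\cI\cV$ (using $\Wb_{2}^{\top}\Cb=p_{2}\Hb_{2}$ and $\Rb^{\top}\Wb_{1}=p_{1}\Hb_{1}^{\top}$), pull out the matrix $\Zb^{-1/2}$ via the rate $\|\Zb^{-1/2}\|_{F}=O_{p}\big((\sqrt{p_{1}}\,p_{2}\,\nu_{\min}(\Hb_{1})\nu_{\min}^{2}(\Hb_{2}))^{-1}\big)$ established there, and reduce to bounding sums over the time indices $s,t$ of products of the idiosyncratic and factor matrices and the deterministic weight/loading matrices; for $\cI\cI$ and $\cI\cI\cI$ the summand is bilinear in $\Eb$, while for $\cI\cV$ it is cubic.

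The decisive step is to split each of these sums into its time-diagonal part $s=t$ and its off-diagonal part $s\neq t$. For the off-diagonal part, conditioning on $\{\Fb_{t}\}$ the summands are cross-products that are (conditionally) mean-zero and weakly correlated in the extra time index, and I would bound their second moments using the temporal and cross-sectional summability conditions in Assumptions~\ref{Assumption 5} and~\ref{Assumption 7} together with the contraction structure supplied by $\Wb_{2}\Wb_{2}^{\top}$; these are precisely what produce the extra powers of $T$ and of $p_{2}$ relative to a crude submultiplicative bound. For the diagonal part there are only $T$ summands, so it suffices to bound it term by term using operator-norm control of $\Eb_{s}^{\top}\Eb_{s}$ and the fourth-moment summability in Assumption~\ref{Assumption 5}(3). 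Along the way I would use $\sum_{s}\EE\|\Eb_{s}^{\top}\Rb\|_{F}^{2}=O(Tp_{1}p_{2})$ and $\sum_{s}\|\Fb_{s}\|_{F}^{2}=O_{p}(T)$ (Lemma~\ref{basic-lemma}(1) and Assumption~\ref{Assumption 4}) for the $\cI\cI$ term; a variant of Lemma~\ref{basic-lemma}(3)--(4) carrying one extra $\Hb_{2}$ factor (hence an extra $\nu_{\min}(\Hb_{2})$) for the $\cI\cI\cI$ term; and the quartic-type estimate Lemma~\ref{basic-lemma}(5) for the inner factor $\sum_{t}\Eb_{t}\Wb_{2}\Wb_{2}^{\top}\Eb_{t}^{\top}\Wb_{1}$, combined again with Assumption~\ref{Assumption 5}(3), for the $\cI\cV$ term. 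Summing the three contributions and keeping the dominant orders would give exactly $O_{p}\big(p_{1}^{2}/(p_{2}\nu_{\min}^{2}(\Hb_{2}))+p_{1}/(\nu_{\min}^{2}(\Hb_{1})\nu_{\min}^{4}(\Hb_{2}))+T/(p_{2}\nu_{\min}^{2}(\Hb_{1})\nu_{\min}^{4}(\Hb_{2}))\big)$.

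The step I expect to be the main obstacle is the bookkeeping for the $\cI\cV$ piece, which is cubic in the idiosyncratic errors (effectively quartic once $\Zb^{-1/2}$ is unravelled): one must enumerate the possible coincidences among the time indices and the spatial-index ties induced by $\Wb_{2}\Wb_{2}^{\top}$ and by the contractions $\Eb_{s}^{\top}\Eb_{t}$, and match each resulting pattern against the correct weak-dependence assumption, being careful not to discard the factors $T$ and $p_{2}$ --- a naive bound of the form $\|\sum_{s}\Eb_{s}^{\top}\cI\cV\,\Fb_{s}\|_{F}\le\|\cI\cV\|_{F}\sum_{s}\|\Eb_{s}\|_{2}\|\Fb_{s}\|_{2}$ overshoots the target by a factor of order $Tp_{2}$, so the refined treatment is essential. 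A secondary and more routine task is to state and verify the handful of Lemma~\ref{basic-lemma}-type refinements mentioned above, each of which follows from the same second-moment computations already used to prove Lemma~\ref{basic-lemma}.
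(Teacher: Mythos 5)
Your skeleton coincides with the paper's: start from the decomposition (\ref{S1.1}), treat the three pieces $\cI\cI,\cI\cI\cI,\cI\cV$ separately, pull out $\|\Zb^{-1/2}\|_{F}=O_{p}\big((\sqrt{p_{1}}p_{2}\nu_{\min}(\Hb_{1})\nu_{\min}^{2}(\Hb_{2}))^{-1}\big)$, and feed in Lemma \ref{basic-lemma}. But your concrete recipe misplaces where the real work is and, as stated, would not deliver the claimed rate. For the $\cI\cI$ term you propose to use $\sum_{s}\EE\|\Eb_{s}^{\top}\Rb\|_{F}^{2}=O(Tp_{1}p_{2})$ together with $\sum_{s}\|\Fb_{s}\|_{F}^{2}=O_{p}(T)$; combining these across $s$ by Cauchy--Schwarz only bounds $\|\sum_{s}\Eb_{s}^{\top}\Rb\Fb_{s}\|_{F}$ by $O_{p}(T\sqrt{p_{1}p_{2}})$, i.e.\ it discards the cancellation over time and inflates the squared bound by a factor $T$, producing a contribution of order $T/(\nu_{\min}^{2}(\Hb_{1})\nu_{\min}^{2}(\Hb_{2}))$, which is not absorbed by the stated rate (it exceeds $T/(p_{2}\nu_{\min}^{2}(\Hb_{1})\nu_{\min}^{4}(\Hb_{2}))$ since $p_{2}\nu_{\min}^{2}(\Hb_{2})\gg 1$, and also the other two terms when $T$ is large). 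The paper instead uses $\|\sum_{s}\Eb_{s}^{\top}\Rb\Fb_{s}\|_{F}^{2}=O_{p}(Tp_{1}p_{2})$ (Lemma \ref{basic-lemma}(2), i.e.\ Assumption \ref{Assumption 7}(1)): because the inner $t$-sum in $\cI\cI$ is a fixed matrix, the double sum factorizes exactly into a product of two single sums, each of which concentrates, so no diagonal/off-diagonal bookkeeping is needed there at all.

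The more serious issue is that you identify $\cI\cV$ as the hard piece and dismiss $\cI\cI\cI$ as a routine variant of Lemma \ref{basic-lemma}(3)--(4); it is the other way around. For $\cI\cV$ no enumeration of index coincidences is required: the paper simply bounds $\|\sum_{s}\Eb_{s}^{\top}\cI\cV\Fb_{s}\|_{F}$ by $\|\sum_{s}\Eb_{s}^{\top}\Fb_{s}\|_{F}\,\|\sum_{t}\Eb_{t}\Wb_{2}\Wb_{2}^{\top}\Eb_{t}^{\top}\Wb_{1}\|_{F}\,\|\Zb^{-1/2}\|_{F}$ and Lemma \ref{basic-lemma}(1) and (5) already give the second and third terms of the target. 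For $\cI\cI\cI$, by contrast, any such product-of-norms bound ($\|\sum_{s}\Eb_{s}^{\top}\Fb_{s}\|_{F}\cdot\|\sum_{t}\Eb_{t}\Wb_{2}\Fb_{t}^{\top}\|_{F}\cdot\ldots$) yields $O_{p}(p_{1}^{2}/\nu_{\min}^{2}(\Hb_{2}))$, overshooting the required $p_{1}^{2}/(p_{2}\nu_{\min}^{2}(\Hb_{2}))$ by a factor $p_{2}$; the crux of the lemma is the new quartic-type moment estimate $\EE\|\sum_{s}\Eb_{s}^{\top}(\sum_{t}\Eb_{t}\Wb_{2}\Fb_{t}^{\top})\Fb_{s}\|_{F}^{2}=O(T^{2}p_{1}^{2}p_{2}+T^{2}p_{1}p_{2}^{2})$, proved via the $\bxi_{i,j}$ mean and covariance summability of Assumption \ref{Assumption 7}(2)--(3), exploiting the contraction of $\Eb_{s}^{\top}$ and $\Eb_{t}$ over the row index. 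Your generic "time-diagonal versus off-diagonal, conditionally mean-zero" scheme does not supply this estimate (and the conditional mean-zero claim is itself unwarranted: Assumption \ref{Assumption 5} allows serially correlated errors, only summability is imposed), nor does a "Lemma \ref{basic-lemma}(3)--(4) variant with an extra $\Hb_{2}$", which is a single-sum bilinear bound. To complete the proof you must state and prove that specific coupled double-sum bound; the heavy machinery you reserve for $\cI\cV$ is not needed there.
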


    \begin{proof}
    	By the proof in Theorem \ref{no_iter_convergence}, $\hat{\Rb}^{(1)}-\Rb\hat{\Hb}_{r}^{(1)}=\cI\cI+\cI\cI\cI+\cI\cV$, then $$\sum_{s=1}^{T}\Eb_{s}^{\top}(\hat{\Rb}^{(1)}-\Rb\hat{\Hb}_{r}^{(1)})\Fb_{s}=\sum_{s=1}^{T}\Eb_{s}^{\top}(\cI\cI+\cI\cI\cI+\cI\cV)\Fb_{s}.$$
    	For the first term, by Lemma\ref{basic-lemma} (2) and (3), we get
    	$$\begin{aligned}
    		\|\sum_{t=1}^{T}\Eb_{s}^{\top}\cI\cI\Fb_{s}\|_{F}^{2}&=\dfrac{1}{T^{2}p_{1}p_{2}^{2}}\|\sum_{s=1}^{T}\Eb_{s}^{\top}\Rb[(\sum_{t=1}^{T}\Fb_{t}\Cb^{\top}\Wb_{2}\Wb_{2}^{\top}\Eb_{t}^{\top}\Wb_{1})\Zb^{-1/2}]\Fb_{s}\|_{F}^{2}\\
    		& \leq \dfrac{1}{T^{2}p_{1}p_{2}^{2}}\|\sum_{s=1}^{T}\Eb_{s}^{\top}\Rb\Fb_{s}\|_{F}^{2}\|(\sum_{t=1}^{T}\Fb_{t}\Cb^{\top}\Wb_{2}\Wb_{2}^{\top}\Eb_{t}^{\top}\Wb_{1})\Zb^{-1/2}\|_{F}^{2}\\
    		& \leq \dfrac{1}{T^{2}p_{1}p_{2}^{2}}\|\sum_{s=1}^{T}\Eb_{s}^{\top}\Rb\Fb_{s}\|_{F}^{2}\|\sum_{t=1}^{T}\Fb_{t}\Wb_{2}^{\top}\Eb_{t}^{\top}\Wb_{1}\|_{F}^{2}\|\Zb^{-1/2}\|_{F}^{2}\|\Cb^{\top}\Wb_{2}\|_{2}^{2}\\
    		&=O_{p}\left(\dfrac{1}{\nu_{\min}^{2}(\Hb_{1})\nu_{\min}^{2}(\Hb_{2})}\right).
    	\end{aligned}$$
    	For the second term,
    	$$\begin{aligned}
    		\|\sum_{s=1}^{T}\Eb_{s}^{\top}\cI\cI\cI\Fb_{s}\|_{F}^{2}&=\dfrac{1}{T^{2}p_{1}p_{2}^{2}}\|\sum_{s=1}^{T}\Eb_{s}^{\top}[(\sum_{t=1}^{T}\Eb_{t}\Wb_{2}\Wb_{2}^{\top}\Cb\Fb_{t}^{\top}\Rb^{\top}\Wb_{1})\Zb^{-1/2}]\Fb_{s}\|_{F}^{2}\\
    		& \leq \dfrac{1}{T^{2}p_{1}p_{2}^{2}}\|\sum_{s=1}^{T}\Eb_{s}^{\top}(\sum_{t=1}^{T}\Eb_{t}\Wb_{2}\Fb_{t}^{\top})\Fb_{s}\|_{F}^{2}\|\Wb_{2}^{\top}\Cb\|_{F}^{2}\|\Rb^{\top}\Wb_{1}\|_{F}^{2}\|\Zb^{-1/2}\|_{F}^{2}\\
    		&=O_{p}\left(\dfrac{p_{1}^{2}}{p_{2}\nu_{\min}^{2}(\Hb_{2})}+\dfrac{p_{1}}{\nu_{\min}^{2}(\Hb_{2})}\right),
    	\end{aligned}$$
    	where $$\begin{aligned}
    		\EE\|\sum_{s=1}^{T}\Eb_{s}^{\top}(\sum_{t=1}^{T}\Eb_{t}\Wb_{2}\Fb_{t}^{\top})\Fb_{s}\|_{F}^{2}&=\sum_{j=1}^{p_{2}}\EE(\sum_{s,t=1}^{T}\sum_{i=1}^{p_{1}}\sum_{j_{1}=1}^{p_{2}}w_{2,j_{1}}\Fb_{s}\Fb_{t}e_{s,ij}e_{t,ij_{1}})^{2}= T^{2}\sum_{j=1}^{p_{2}}\EE(\sum_{i=1}^{p_{1}}\sum_{j_{1}=1}^{p_{2}}w_{2,j_{1}}\bxi_{i,j}\bxi_{i,j_{1}})^{2}\\
    		&=T^{2}\sum_{j=1}^{p_{2}}\left[(\sum_{i=1}^{p_{1}}\sum_{j_{1}=1}^{p_{2}}\EE\bxi_{i,j}\bxi_{i,j_{1}})^{2}+\sum_{i_{1},i_{2}=1}^{p_{1}}\sum_{j_{1}j_{2}=1}^{p_{2}}\text{Cov}(\bxi_{i_{1},j}\bxi_{i_{1},j_{1}},\bxi_{i_{2},j}\bxi_{i_{2},j_{2}})\right]\\
    		&=O\left(T^{2}p_{1}^{2}p_{2}+T^{2}p_{1}p_{2}^{2}\right).
    	\end{aligned}$$
    	For the third term, by lemma \ref{basic-lemma} (1) and (5), we have
    	$$\begin{aligned}
    		\|\sum_{s=1}^{T}\Eb_{s}^{\top}\cI\cV\Fb_{s}\|_{F}^{2}&=\dfrac{1}{T^{2}p_{1}p_{2}^{2}}\|\sum_{s=1}^{T}\Eb_{s}[(\sum_{t=1}^{T}\Eb_{t}\Wb_{2}\Wb_{2}^{\top}\Eb_{t}^{\top}\Wb_{1})\Zb^{-1/2}]\Fb_{s}\|_{F}^{2}\\
    		& \leq \dfrac{1}{T^{2}p_{1}p_{2}^{2}}\|\sum_{s=1}^{T}\Eb_{s}^{\top}\Fb_{s}\|_{F}^{2}\|(\sum_{t=1}^{T}\Eb_{t}\Wb_{2}\Wb_{2}^{\top}\Eb_{t}^{\top}\Wb_{1})\Zb^{-1/2}\|_{F}^{2}\\
    		&=O_{p}\left(\dfrac{p_{1}}{\nu_{\min}^{2}(\Hb_{1})\nu_{\min}^{4}(\Hb_{2})}+\dfrac{T}{p_{2}\nu_{\min}^{2}(\Hb_{1})\nu_{\min}^{4}(\Hb_{2})}\right).
    	\end{aligned}$$
    	As a result, $$\|\sum_{s=1}^{T}\Eb_{s}^{\top}(\hat{\Rb}^{(1)}-\Rb\hat{\Hb}_{r}^{(1)})\Fb_{s}\|_{F}^{2}=O_{p}\left(\dfrac{p_{1}^{2}}{p_{2}\nu_{\min}^{2}(\Hb_{2})}+\dfrac{p_{1}}{\nu_{\min}^{2}(\Hb_{1})\nu_{\min}^{4}(\Hb_{2})}+\dfrac{T}{p_{2}\nu_{\min}^{2}(\Hb_{1})\nu_{\min}^{4}(\Hb_{2})}\right).$$
    	
    \end{proof}

    \begin{lemma}\label{Product of loadings}
    	Under Assumption \ref{Assumption 1} (1), Assumption \ref{Assumption 2} and Assumption \ref{Assumption 4}-\ref{Assumption 7}, take $\hat{\Rb}^{(1)}, \hat{\Cb}^{(1)}$ as the resultant estimators from the one-step iteration, as $\min\{T, p_{1}, p_{2}\} \to \infty$, then we have
    	$$\left\|\Rb^{\top}(\hat{\Rb}^{(1)}-\Rb\hat{\Hb}_{r}^{(1)})\right\|_{F}=O_{p}\left(\dfrac{\sqrt{p_{1}}}{\sqrt{Tp_{2}}\nu_{\min}(\Hb_{1})\nu_{\min}(\Hb_{2})}+\dfrac{1}{p_{2}\nu_{\min}(\Hb_{1})\nu_{\min}^{2}(\Hb_{2})}\right),$$
    	$$\left\|\Cb^{\top}(\hat{\Cb}^{(1)}-\Cb\hat{\Hb}_{c}^{(1)})\right\|_{F}=O_{p}\left(\dfrac{1}{T\nu_{\min}(\Hb_{2})}+\dfrac{\sqrt{p_{2}}}{\sqrt{Tp_{1}}\nu_{\min}(\Hb_{1})\nu_{\min}(\Hb_{2})}+\dfrac{1}{p_{1}\nu_{\min}(\Hb_{1})\nu_{\min}(\Hb_{2})}\right).$$
    \end{lemma}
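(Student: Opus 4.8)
The plan is to re-use the exact expansions of $\hat{\Rb}^{(1)}-\Rb\hat{\Hb}_r^{(1)}$ and $\hat{\Cb}^{(1)}-\Cb\hat{\Hb}_c^{(1)}$ already derived in the proofs of Theorems \ref{no_iter_convergence} and \ref{convergence of an iteration}, but to left-multiply by $\Rb^\top$ (respectively $\Cb^\top$) \emph{before} taking Frobenius norms. As in those proofs it suffices to treat the scalar case $m_1=k_1=1$, $m_2=k_2=1$. The crucial observation is that this multiplication either produces an exact factor $\Rb^\top\Rb=p_1\Ib_{k_1}$ (resp.\ $\Cb^\top\Cb=p_2\Ib_{k_2}$), or turns a loss-inducing factor $\|\Rb\|_2=\sqrt{p_1}$ into a sum $\sum_t\Rb^\top\Eb_t(\cdots)$ that is much smaller than $\|\Rb\|_2\|\Eb_t\|(\cdots)$ because of the cross-sectional summability in Assumption \ref{Assumption 5}; this is exactly what gives the extra $\sqrt{p_1}$ (resp.\ $\sqrt{p_2}$) improvement over the crude bound $\|\Rb\|_2(p_1w_r^{(1)})^{1/2}$ that Theorem \ref{no_iter_convergence} alone would give.

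For the first bound, recall $\hat{\Rb}^{(1)}-\Rb\hat{\Hb}_r^{(1)}=\cI\cI+\cI\cI\cI+\cI\cV$, so that $\Rb^\top(\hat{\Rb}^{(1)}-\Rb\hat{\Hb}_r^{(1)})=\Rb^\top\cI\cI+\Rb^\top\cI\cI\cI+\Rb^\top\cI\cV$. For $\Rb^\top\cI\cI$ I would substitute $\Rb^\top\Rb=p_1\Ib_{k_1}$ and bound the result by $(\sqrt{p_1}/(Tp_2))\|\sum_t\Fb_t\Wb_2^\top\Eb_t^\top\Wb_1\|_F\,\|\Wb_2^\top\Cb\|_2\,\|\Zb^{-1/2}\|_F$; plugging in Lemma \ref{basic-lemma}(3), $\|\Wb_2^\top\Cb\|_2\asymp p_2\nu_{\min}(\Hb_2)$, $\|\Rb^\top\Wb_1\|_2\asymp p_1\nu_{\min}(\Hb_1)$, and $\|\Zb^{-1/2}\|_F=O_p(1/(\sqrt{p_1}p_2\nu_{\min}(\Hb_1)\nu_{\min}^2(\Hb_2)))$ yields exactly $\sqrt{p_1}/(\sqrt{Tp_2}\nu_{\min}(\Hb_1)\nu_{\min}(\Hb_2))$. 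For $\Rb^\top\cI\cI\cI$ I would use the factor $\Rb^\top\Eb_t$ together with Lemma \ref{basic-lemma}(4) ($\EE\|\sum_t\Rb^\top\Eb_t\Wb_2\Fb_t^\top\|_F^2=O(Tp_1p_2)$) and the same bounds on $\|\Wb_2^\top\Cb\|_2$, $\|\Rb^\top\Wb_1\|_2$, $\|\Zb^{-1/2}\|_F$; the resulting order $\sqrt{p_1}/(\sqrt{Tp_2}\nu_{\min}(\Hb_2))$ is dominated by the $\Rb^\top\cI\cI$ rate since $\nu_{\min}(\Hb_1)\le\nu_{\max}(\Hb_1)=O(1)$. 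For $\Rb^\top\cI\cV$ I would first establish, by the same covariance-summability argument that proves Lemma \ref{basic-lemma}(5) but now contracting the free row index against $\Rb$ (using both parts of Assumptions \ref{Assumption 5}(2)--(3) and the boundedness of $\Rb$), the sharper moment bound $\EE\|\sum_t\Rb^\top\Eb_t\Wb_2\Wb_2^\top\Eb_t^\top\Wb_1\|_F^2=O(Tp_1^2p_2^2+T^2p_1^2p_2^2)$; combined with $\|\Zb^{-1/2}\|_F$ this gives $1/(p_2\nu_{\min}(\Hb_1)\nu_{\min}^2(\Hb_2))$, which is the second term in the claim. Summing the three contributions yields the first display.

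For the second bound I would start from $\hat{\Cb}^{(1)}-\Cb\hat{\Hb}_c^{(1)}=\sqrt{p_2}(\bdelta_2^{(1)}+\bdelta_3^{(1)}+\bdelta_4^{(1)})(\Zb^{(1)})^{-1/2}$ and left-multiply by $\Cb^\top$. In $\Cb^\top\bdelta_3^{(1)}$ the factor $\Cb^\top\Cb=p_2\Ib_{k_2}$ appears, and after the further split $\Rb^\top\hat{\Rb}^{(1)}=p_1\hat{\Hb}_r^{(1)}+\Rb^\top(\hat{\Rb}^{(1)}-\Rb\hat{\Hb}_r^{(1)})$ — whose second summand is controlled by the first display — Lemma \ref{basic-lemma}(3) applied to $\sum_t\Fb_t^\top\Wb_1^\top\Eb_t\Wb_2$ together with $\|(\Zb^{(1)})^{-1/2}\|_F=O_p(1/(Tp_1\sqrt{p_2}\nu_{\min}(\Hb_1)\nu_{\min}(\Hb_2)))$ produces the leading rate $\sqrt{p_2}/(\sqrt{Tp_1}\nu_{\min}(\Hb_1)\nu_{\min}(\Hb_2))$. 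The terms $\Cb^\top\bdelta_2^{(1)}$ and $\Cb^\top\bdelta_4^{(1)}$ each carry a factor $\Cb^\top\Eb_t^\top$: for the former I would invoke Lemma \ref{basic-lemma}(2) (again splitting $\hat{\Rb}^{(1)}$ as above), yielding $1/(T\nu_{\min}(\Hb_2))$ after absorbing a subdominant $\sqrt{p_2/(Tp_1)}$, and for the latter a Lemma \ref{basic-lemma}(5)-type fourth-moment bound on $\sum_t\Cb^\top\Eb_t^\top\Rb\Wb_1^\top\Eb_t\Wb_2$, giving the last term $1/(p_1\nu_{\min}(\Hb_1)\nu_{\min}(\Hb_2))$. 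Collecting the three pieces yields the stated bound.

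The main obstacle will be the two quadratic-in-$\Eb_t$ terms, $\Rb^\top\cI\cV$ and $\Cb^\top\bdelta_4^{(1)}$: here the naive estimate $\|\Rb\|_2\times(\text{Theorem \ref{no_iter_convergence}/\ref{convergence of an iteration} rate})$ is \emph{not} sharp enough (it overshoots the first term by a factor $1/\nu_{\min}(\Hb_2)$), so one must prove the new Lemma \ref{basic-lemma}(5)-type moment inequalities for the $\Rb$- and $\Cb$-weighted double sums, with the off-diagonal (cross-time) contribution bounded by $O(T^2p_1^2p_2^2)$ via the factorisation $\Rb^\top\Eb_t\Wb_2\cdot\Wb_2^\top\Eb_t^\top\Wb_1$ and the estimate $\EE[\Rb^\top\Eb_t\Wb_2\Wb_2^\top\Eb_t^\top\Wb_1]=O(p_1p_2)$. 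One must also, in the column case, carry the residual $\Rb^\top(\hat{\Rb}^{(1)}-\Rb\hat{\Hb}_r^{(1)})$ through every $\bdelta_i^{(1)}$ and verify it is always of strictly smaller order than the leading terms, using $\nu_{\min}(\Hb_i)\le c$ and $\nu_{\min}^2(\Hb_i)\gg 1/p_i$; this accounting, while routine, is where several intermediate orders coincide and care is needed.
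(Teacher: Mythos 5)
Your proposal is correct and follows essentially the same route as the paper: left-multiply the already-derived decompositions $\hat{\Rb}^{(1)}-\Rb\hat{\Hb}_{r}^{(1)}=\cI\cI+\cI\cI\cI+\cI\cV$ and $\hat{\Cb}^{(1)}-\Cb\hat{\Hb}_{c}^{(1)}=\sqrt{p_{2}}(\bdelta_{2}^{(1)}+\bdelta_{3}^{(1)}+\bdelta_{4}^{(1)})(\Zb^{(1)})^{-1/2}$ by $\Rb^{\top}$ and $\Cb^{\top}$, use $\Rb^{\top}\Rb=p_{1}\Ib_{k_{1}}$, $\Cb^{\top}\Cb=p_{2}\Ib_{k_{2}}$, the split $\hat{\Rb}^{(1)}=\Rb\hat{\Hb}_{r}^{(1)}+(\hat{\Rb}^{(1)}-\Rb\hat{\Hb}_{r}^{(1)})$, the moment bounds of Lemma \ref{basic-lemma}, and the rates for $\|\Zb^{-1/2}\|_{F}$ and $\|(\Zb^{(1)})^{-1/2}\|_{F}$, exactly as the paper does. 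The only cosmetic difference is in the quadratic-in-$\Eb_{t}$ terms $\Rb^{\top}\cI\cV$ and $\Cb^{\top}\bdelta_{4}^{(1)}$, where the paper simply factorizes $\Rb^{\top}\Eb_{t}\Wb_{2}\cdot\Wb_{2}^{\top}\Eb_{t}^{\top}\Wb_{1}$ and applies a Cauchy--Schwarz bound with the existing parts of Lemma \ref{basic-lemma} instead of proving your new fourth-moment inequality, which yields the same order.
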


    \begin{proof}
    	$$\begin{aligned}
    		\left\|\Rb^{\top}\cI\cI\right\|_{F}&=\dfrac{1}{T\sqrt{p_{1}}p_{2}}\left\|\Rb^{\top}\Rb(\sum_{t=1}^{T}\Fb_{t}\Cb^{\top}\Wb_{2}\Wb_{2}^{\top}\Eb_{t}^{\top}\Wb_{1})\Zb^{-1/2}\right\|_{F}=\dfrac{\sqrt{p_{1}}}{Tp_{2}}\left\|(\sum_{t=1}^{T}\Fb_{t}\Cb^{\top}\Wb_{2}\Wb_{2}^{\top}\Eb_{t}^{\top}\Wb_{1})\Zb^{-1/2}\right\|_{F}\\
    		&\leq \dfrac{\sqrt{p_{1}}}{Tp_{2}}\left\|\sum_{t=1}^{T}\Fb_{t}\Wb_{2}^{\top}\Eb_{t}^{\top}\Wb_{1}\right\|_{F}\left\|\Zb^{-1/2}\right\|_{F}\left\|\Cb^{\top}\Wb_{2}\right\|_{F}=O_{p}\left(\dfrac{\sqrt{p_{1}}}{\sqrt{Tp_{2}}\nu_{\min}(\Hb_{1})\nu_{\min}(\Hb_{2})}\right),
    	\end{aligned}$$
    	
    	$$\begin{aligned}
    		\left\|\Rb^{\top}\cI\cI\cI\right\|_{F}&=\dfrac{1}{T\sqrt{p_{1}}p_{2}}\left\|(\sum_{t=1}^{T}\Rb^{\top}\Eb_{t}\Wb_{2}\Wb_{2}^{\top}\Cb\Fb_{t}^{\top}\Rb^{\top}\Wb_{1})\Zb^{-1/2}\right\|_{F}\\
    		& \leq \dfrac{1}{T\sqrt{p_{1}}p_{2}}\left\|\sum_{t=1}^{T}\Rb^{\top}\Eb_{t}\Wb_{2}\Fb_{t}^{\top}\right\|_{F}\left\|\Wb_{2}^{\top}\Cb\right\|_{F}\left\|\Rb^{\top}\Wb_{1}\right\|_{F}\left\|\Zb^{-1/2}\right\|_{F}\\
    		&=O_{p}\left(\dfrac{\sqrt{p_{1}}}{\sqrt{Tp_{2}}\nu_{\min}(\Hb_{2})}\right),
    	\end{aligned}$$
    	
    	$$\begin{aligned}
    		\left\|\Rb^{\top}\cI\cV\right\|_{F}&=\dfrac{1}{T\sqrt{p_{1}}p_{2}}\left\|(\sum_{t=1}^{T}\Rb^{\top}\Eb_{t}\Wb_{2}\Wb_{2}^{\top}\Eb_{t}^{\top}\Wb_{1})\Zb^{-1/2}\right\|_{F}\\
    		& \leq \dfrac{1}{T\sqrt{p_{1}}p_{2}}\sqrt{\left\|\sum_{t=1}^{T}\Rb^{\top}\Eb_{t}\Wb_{2}\right\|_{F}^{2} \times \left\|\sum_{t=1}^{T}\Wb_{2}^{\top}\Eb_{t}^{\top}\Wb_{1}\right\|_{F}^{2}}\left\|\Zb^{-1/2}\right\|_{F}\\
    		&=O_{p}\left(\dfrac{1}{p_{2}\nu_{\min}(\Hb_{1})\nu_{\min}^{2}(\Hb_{2})}\right).
    	\end{aligned}$$
    	As $\Rb^{\top}(\hat{\Rb}^{(1)}-\Rb\hat{\Hb}_{r}^{(1)})=\Rb^{\top}(\cI\cI+\cI\cI\cI+\cI\cV)$, combining the above three items, we get
    	$$\left\|\Rb^{\top}(\hat{\Rb}^{(1)}-\Rb\hat{\Hb}_{r}^{(1)})\right\|_{F}=O_{p}\left(\dfrac{\sqrt{p_{1}}}{\sqrt{Tp_{2}}\nu_{\min}(\Hb_{1})\nu_{\min}(\Hb_{2})}+\dfrac{1}{p_{2}\nu_{\min}(\Hb_{1})\nu_{\min}^{2}(\Hb_{2})}\right).$$
    	As for the second formula, by the proof of Theorem \ref{convergence of an iteration}, 	we have
    	$$\hat{\Cb}^{(1)}-\Cb\hat{\Hb}_{c}^{(1)}=\sqrt{p_{2}}\left(\bdelta_{2}^{(1)}(\Zb^{(1)})^{-1/2}+\bdelta_{3}^{(1)}(\Zb^{(1)})^{-1/2}+\bdelta_{4}^{(1)}(\Zb^{(1)})^{-1/2}\right).$$
    	For the first term,
    	$$\begin{aligned}
    		&\sqrt{p_{2}}\left\|\Cb^{\top}\bdelta_{2}^{(1)}(\Zb^{(1)})^{-1/2}\right\|_{F}=\dfrac{1}{p_{1}\sqrt{p_{2}}}\left\|(\sum_{t=1}^{T}\Cb^{\top}\Eb_{t}^{\top}\hat{\Rb}^{(1)}\Wb_{1}^{\top}\Rb\Fb_{t}\Cb^{\top}\Wb_{2})(\Zb^{(1)})^{-1/2}\right\|_{F}\\
    		& \leq \dfrac{1}{p_{1}\sqrt{p_{2}}}\left\|\sum_{t=1}^{T}\Cb^{\top}\Eb_{t}^{\top}\hat{\Rb}^{(1)}\Fb_{t}\right\|_{F}\left\|\Wb_{1}^{\top}\Rb\right\|_{2}\left\|\Cb^{\top}\Wb_{2}\right\|_{2}\left\|(\Zb_{t}^{(1)})^{-1/2}\right\|_{F}\\
    		&=O_{p}\left(\dfrac{1}{T\nu_{\min}(\Hb_{2})}+\dfrac{1}{T\sqrt{p_{1}}\nu_{\min}(\Hb_{1})\nu_{\min}^{2}(\Hb_{2})}+\dfrac{\sqrt{p_{2}}}{\sqrt{Tp_{1}}}\right),
    	\end{aligned}$$
    	where $$\begin{aligned}
    		\left\|\sum_{t=1}^{T}\Cb^{\top}\Eb_{t}^{\top}\hat{\Rb}^{(1)}\Fb_{t}\right\|_{F} &\leq \left\|\sum_{t=1}^{T}\Cb^{\top}\Eb_{t}^{\top}(\hat{\Rb}^{(1)}-\Rb\hat{\Hb}_{r}^{(1)})\Fb_{t}\right\|_{F}+\left\|\sum_{t=1}^{T}\Cb^{\top}\Eb_{t}^{\top}\Rb\hat{\Hb}_{r}^{(1)}\Fb_{t}\right\|_{F}\\
    		& \leq \left\|\sum_{t=1}^{T}\Cb^{\top}\Eb_{t}^{\top}\Fb_{t}\right\|_{F}\left\|\hat{\Rb}^{(1)}-\Rb\hat{\Hb}_{r}^{(1)}\right\|_{F}+\left\|\sum_{t=1}^{T}\Cb^{\top}\Eb_{t}^{\top}\Rb\Fb_{t}\right\|_{F}\left\|\hat{\Hb}_{r}^{(1)}\right\|_{F}\\
    		&=O_{p}\left(\dfrac{p_{1}}{\nu_{\min}(\Hb_{2})}+\dfrac{\sqrt{p_{1}}}{\nu_{\min}(\Hb_{1})\nu_{\min}^{2}(\Hb_{2})}+\sqrt{Tp_{1}p_{2}}\right).
    	\end{aligned}$$
    	For the second term,
    	$$\begin{aligned}
    		\sqrt{p_{2}}\left\|\Cb^{\top}\bdelta_{3}^{(1)}(\Zb^{(1)})^{-1/2}\right\|_{F}&=\dfrac{\sqrt{p_{2}}}{p_{1}}\left\|(\sum_{t=1}^{T}\Fb_{t}^{\top}\Rb^{\top}\hat{\Rb}^{(1)}\Wb_{1}^{\top}\Eb_{t}\Wb_{2})(\Zb^{(1)})^{-1/2}\right\|_{F}\\
    		& \asymp \sqrt{p_{2}}\left\|(\sum_{t=1}^{T}\Fb_{t}^{\top}\Wb_{1}^{\top}\Eb_{t}\Wb_{2})(\Zb^{(1)})^{-1/2}\right\|_{F}\\
    		& \leq \sqrt{p_{2}}\left\|\sum_{t=1}^{T}\Fb_{t}^{\top}\Wb_{1}^{\top}\Eb_{t}\Wb_{2}\right\|_{F}\left\|(\Zb^{(1)})^{-1/2}\right\|_{F}\\
    		&=O_{p}\left(\dfrac{\sqrt{p_{2}}}{\sqrt{Tp_{1}}\nu_{\min}(\Hb_{1})\nu_{\min}(\Hb_{2})}\right).
    	\end{aligned}$$
    	For the third term,
    	$$\begin{aligned}
    		&\sqrt{p_{2}}\left\|\Cb^{\top}\bdelta_{4}^{(1)}(\Zb^{(1)})^{-1/2}\right\|_{F}=\dfrac{1}{p_{1}\sqrt{p_{2}}}\left\|(\sum_{t=1}^{T}\Cb^{\top}\Eb_{t}^{\top}\hat{\Rb}^{(1)}\Wb_{1}^{\top}\Eb_{t}\Wb_{2})(\Zb^{(1)})^{-1/2}\right\|_{F}\\
    		& \leq \dfrac{1}{p_{1}\sqrt{p_{2}}}\sqrt{\left\|\sum_{t=1}^{T}\Cb^{\top}\Eb_{t}^{\top}\hat{\Rb}^{(1)}\right\|_{F}^{2}\times\left\|\sum_{t=1}^{T}\Wb_{1}^{\top}\Eb_{t}\Wb_{2}\right\|_{F}^{2}}\left\|(\Zb^{(1)})^{-1/2}\right\|_{F}\\
    		&=O_{p}\left(\dfrac{1}{\sqrt{Tp_{1}p_{2}}\nu_{\min}(\Hb_{1})\nu_{\min}^{2}(\Hb_{2})}+\dfrac{1}{\sqrt{Tp_{2}}p_{1}\nu_{\min}^{2}(\Hb_{1})\nu_{\min}^{3}(\Hb_{2})}+\dfrac{1}{p_{1}\nu_{\min}(\Hb_{1})\nu_{\min}(\Hb_{2})}\right),
    	\end{aligned}$$
    where	the last equation holds due to
    	$$\begin{aligned}
    		&\left\|\sum_{t=1}^{T}\Cb^{\top}\Eb_{t}^{\top}\hat{\Rb}^{(1)}\right\|_{F}^{2}=\left\|\sum_{t=1}^{T}\Cb^{\top}\Eb_{t}^{\top}(\hat{\Rb}^{(1)}-\Rb\hat{\Hb}_{r}^{(1)}+\Rb\hat{\Hb}_{r}^{(1)})\right\|_{F}^{2}\\
    		& \leq \left\|\sum_{t=1}^{T}\Cb^{\top}\Eb_{t}^{\top}(\hat{\Rb}^{(1)}-\Rb\hat{\Hb}_{r}^{(1)})\right\|_{F}^{2}+\left\|\sum_{t=1}^{T}\Cb^{\top}\Eb_{t}^{\top}\Rb\hat{\Hb}_{r}^{(1)}\right\|_{F}^{2}\\
    		&=O_{p}\left(\dfrac{p_{1}^{2}}{\nu_{\min}^{2}(\Hb_{2})}+\dfrac{p_{1}}{\nu_{\min}^{2}(\Hb_{1})\nu_{\min}^{4}(\Hb_{2})}+Tp_{1}p_{2}\right).
    	\end{aligned}$$
    	Hence, $$
    	\left\|\Cb^{\top}(\hat{\Cb}^{(1)}-\Cb\hat{\Hb}_{c}^{(1)})\right\|_{F}=O_{p}\left(\dfrac{1}{T\nu_{\min}(\Hb_{2})}+\dfrac{\sqrt{p_{2}}}{\sqrt{Tp_{1}}\nu_{\min}(\Hb_{1})\nu_{\min}(\Hb_{2})}+\dfrac{1}{p_{1}\nu_{\min}(\Hb_{1})\nu_{\min}(\Hb_{2})}\right).$$
    	
    \end{proof}

    \begin{lemma}\label{loading product of (s+1)-th}
    	Under Assumption \ref{Assumption 1} (1), Assumption \ref{Assumption 2} and Assumption \ref{Assumption 4}-\ref{Assumption 7}, take $\hat{\Rb}^{(s+1)}, \hat{\Cb}^{(s+1)}$ as the result of a $(s+1)$th step iteration, as $\min\{T, p_{1}, p_{2}\} \to \infty$, then we have
    	$$\|\Rb^{\top}(\hat{\Rb}^{(s+1)}-\Rb\hat{\Hb}_{r}^{(s+1)})\|_{F}=O_{p}\left(\dfrac{\sqrt{p_{1}}}{\sqrt{Tp_{2}}}+\dfrac{p_{1}\sqrt{w_{r}^{(s)}}}{\sqrt{Tp_{2}}}+\dfrac{\sqrt{w_{r}^{(s)}}}{p_{2}}+\dfrac{\sqrt{p_{1}w_{c}^{(s)}}}{\sqrt{T}}+\dfrac{p_{1}\sqrt{w_{r}^{(s)}w_{c}^{(s)}}}{\sqrt{T}}+\sqrt{w_{r}^{(s)}}w_{c}^{(s)}\right),$$
    	$$\|\Cb^{\top}(\hat{\Cb}^{(s+1)}-\Cb\hat{\Hb}_{c}^{(s+1)})\|_{F}=O_{p}\left(\dfrac{\sqrt{p_{2}}}{\sqrt{Tp_{1}}}+\dfrac{1}{p_{1}}+\dfrac{\sqrt{w_{r}^{(s)}}}{\sqrt{p_{1}}}+\dfrac{\sqrt{p_{2}w_{r}^{(s)}}}{\sqrt{T}}+\dfrac{p_{2}\sqrt{w_{c}^{(s)}}}{\sqrt{Tp_{1}}}+\dfrac{p_{2}\sqrt{w_{r}^{(s)}w_{c}^{(s)}}}{\sqrt{T}}+\sqrt{w_{r}^{(s)}w_{r}^{(s+1)}}\right).$$
    \end{lemma}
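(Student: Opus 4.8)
The plan is to run the same argument as in the proof of Lemma~\ref{Product of loadings}, but with the deterministic matrices $\Wb_1,\Wb_2$ replaced by $\hat{\Rb}^{(s)},\hat{\Cb}^{(s)}$ and with the one-step rates replaced by the recursive quantities $w_r^{(s)},w_c^{(s)}$. First I would start from the decomposition already obtained in the proof of Theorem~\ref{the iteration results},
$$\hat{\Rb}^{(s+1)}-\Rb\hat{\Hb}_{r}^{(s+1)}=\sqrt{p_{1}}\bigl(\bdelta_{2}^{(s+1)}+\bdelta_{3}^{(s+1)}+\bdelta_{4}^{(s+1)}\bigr)(\Zb^{(s+1)})^{-1/2},$$
and left-multiply by $\Rb^{\top}$. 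The point of this multiplication is exactly the one exploited in Lemma~\ref{Product of loadings}: wherever $\Rb$ sits on the left of a $\bdelta$ term it combines with $\Rb^{\top}$ into $\Rb^{\top}\Rb=p_{1}\Ib_{k_{1}}$, removing a factor of $p_{1}$; and wherever an idiosyncratic block $\Eb_{t}$ sits on the left it becomes $\Rb^{\top}\Eb_{t}$, whose summed Frobenius energy $\sum_{t}\EE\|\Rb^{\top}\Eb_{t}\|_{F}^{2}=O(Tp_{1}p_{2})$ (Lemma~\ref{basic-lemma}(1)) is far smaller than the crude bound $\|\Eb_{t}\|^{2}\|\Rb\|^{2}$ that is used when proving the plain rate $w_{r}^{(s+1)}$. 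Combining these observations with $\|(\Zb^{(s+1)})^{-1/2}\|_{F}^{2}=O_{p}(1/(T^{2}p_{1}p_{2}^{2}))$ from the proof of Theorem~\ref{the iteration results} already produces the leading $\sqrt{p_{1}}/\sqrt{Tp_{2}}$ term.

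Next, in each of $\Rb^{\top}\bdelta_{2}^{(s+1)},\Rb^{\top}\bdelta_{3}^{(s+1)},\Rb^{\top}\bdelta_{4}^{(s+1)}$ I would insert the expansions $\hat{\Rb}^{(s)}=\Rb\hat{\Hb}_{r}^{(s)}+(\hat{\Rb}^{(s)}-\Rb\hat{\Hb}_{r}^{(s)})$ and $\hat{\Cb}^{(s)}=\Cb\hat{\Hb}_{c}^{(s)}+(\hat{\Cb}^{(s)}-\Cb\hat{\Hb}_{c}^{(s)})$, producing a handful of sub-terms. The ``leading'' sub-terms, in which both decorated matrices are replaced by their $\Rb\hat{\Hb}_{r}^{(s)}$ and $\Cb\hat{\Hb}_{c}^{(s)}$ parts, are controlled by the moment bounds of Lemma~\ref{basic-lemma}(1)--(2) for the single-$\Eb_{t}$ pieces coming from $\bdelta_{2}^{(s+1)},\bdelta_{3}^{(s+1)}$, and by Lemmas~\ref{lemma9}--\ref{lemma11} for the double-$\Eb_{t}$ pieces coming from $\bdelta_{4}^{(s+1)}$, together with the asymptotic orthogonality of $\hat{\Hb}_{r}^{(s)},\hat{\Hb}_{c}^{(s)}$, so that $\hat{\Hb}_{c}^{(s)}\hat{\Hb}_{c}^{(s)\top}=\Ib_{k_{2}}+o_{p}(1)$ collapses the middle factors. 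Each ``error'' sub-term carries one or two factors of $\|\hat{\Rb}^{(s)}-\Rb\hat{\Hb}_{r}^{(s)}\|_{F}=\sqrt{p_{1}w_{r}^{(s)}}$ or $\|\hat{\Cb}^{(s)}-\Cb\hat{\Hb}_{c}^{(s)}\|_{F}=\sqrt{p_{2}w_{c}^{(s)}}$; multiplying through by $\sqrt{p_{1}}$ and by $\|(\Zb^{(s+1)})^{-1/2}\|_{F}$ and simplifying the powers of $p_1,p_2,T$ yields respectively the terms $p_{1}\sqrt{w_{r}^{(s)}}/\sqrt{Tp_{2}}$, $\sqrt{w_{r}^{(s)}}/p_{2}$, $\sqrt{p_{1}w_{c}^{(s)}/T}$, $p_{1}\sqrt{w_{r}^{(s)}w_{c}^{(s)}/T}$ and $\sqrt{w_{r}^{(s)}}\,w_{c}^{(s)}$, and dropping the dominated ones gives the claimed bound for $\|\Rb^{\top}(\hat{\Rb}^{(s+1)}-\Rb\hat{\Hb}_{r}^{(s+1)})\|_{F}$.

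For the column statement I would argue symmetrically, starting from $\hat{\Cb}^{(s+1)}-\Cb\hat{\Hb}_{c}^{(s+1)}=\sqrt{p_{2}}(\bDelta_{2}^{(s+1)}+\bDelta_{3}^{(s+1)}+\bDelta_{4}^{(s+1)})(\Yb^{(s+1)})^{-1/2}$, left-multiplying by $\Cb^{\top}$ (using $\Cb^{\top}\Cb=p_{2}\Ib_{k_{2}}$ and the small size of $\Cb^{\top}\Eb_{t}^{\top}$), and invoking $\|(\Yb^{(s+1)})^{-1/2}\|^{2}=O_{p}(1/(T^{2}p_{1}^{2}p_{2}))$ from the proof of Theorem~\ref{the iteration results}. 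The one new ingredient is that the $\bDelta_{i}^{(s+1)}$ also contain $\hat{\Rb}^{(s+1)}$, so I would additionally split $\hat{\Rb}^{(s+1)}=\Rb\hat{\Hb}_{r}^{(s+1)}+(\hat{\Rb}^{(s+1)}-\Rb\hat{\Hb}_{r}^{(s+1)})$, using the already-established $w_{r}^{(s+1)}$-rate of Theorem~\ref{the iteration results} — and, where a factor $\Rb^{\top}$ is adjacent to $\hat{\Rb}^{(s+1)}-\Rb\hat{\Hb}_{r}^{(s+1)}$, the first half of the present lemma — to bound the corresponding errors; this is what produces the $\sqrt{p_{2}w_{r}^{(s)}/T}$, $\sqrt{w_{r}^{(s)}}/\sqrt{p_{1}}$, $p_{2}\sqrt{w_{c}^{(s)}}/\sqrt{Tp_{1}}$, $p_{2}\sqrt{w_{r}^{(s)}w_{c}^{(s)}/T}$ and $\sqrt{w_{r}^{(s)}w_{r}^{(s+1)}}$ terms. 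I expect the genuine obstacle to be purely the bookkeeping: each $\bdelta_{i}^{(s+1)}$ (resp. $\bDelta_{i}^{(s+1)}$) generates four to eight sub-terms, and for every one of them one must check both that the right moment estimate from Lemmas~\ref{basic-lemma}, \ref{lemma9}--\ref{lemma11} still applies after the $\Rb^{\top}$ or $\Cb^{\top}$ multiplication — a few require mildly generalized versions of those lemmas with a fixed bounded matrix wedged between $\Rb^{\top}$ and an $\Eb_{t}\Eb_{t}^{\top}$ block, proved by the same cross-covariance summability arguments under Assumption~\ref{Assumption 5} — and that, after multiplying by the spectral bounds for $(\Zb^{(s+1)})^{-1/2}$ or $(\Yb^{(s+1)})^{-1/2}$, the resulting exponents collapse exactly to the stated list.
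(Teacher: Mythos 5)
Your proposal follows essentially the same route as the paper's own proof: you start from the identity $\hat{\Rb}^{(s+1)}-\Rb\hat{\Hb}_{r}^{(s+1)}=\sqrt{p_{1}}(\bdelta_{2}^{(s+1)}+\bdelta_{3}^{(s+1)}+\bdelta_{4}^{(s+1)})(\Zb^{(s+1)})^{-1/2}$, left-multiply by $\Rb^{\top}$ to exploit $\Rb^{\top}\Rb=p_{1}\Ib_{k_{1}}$ and the smallness of $\Rb^{\top}\Eb_{t}$, split $\hat{\Rb}^{(s)},\hat{\Cb}^{(s)}$ (and, for the column part, $\hat{\Rb}^{(s+1)}$) into the oracle part plus an error of size $\sqrt{p_{1}w_{r}^{(s)}}$ or $\sqrt{p_{2}w_{c}^{(s)}}$, and invoke the same auxiliary moment bounds (Lemma \ref{basic-lemma}, Lemmas \ref{lemma9}--\ref{lemma11}) together with $\|(\Zb^{(s+1)})^{-1/2}\|_{F}^{2}=O_{p}(1/(T^{2}p_{1}p_{2}^{2}))$ and $\|(\Yb^{(s+1)})^{-1/2}\|_{F}^{2}=O_{p}(1/(T^{2}p_{1}^{2}p_{2}))$, which is exactly how the paper obtains the stated rates (the column bound being handled there "by a similar proof"). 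The argument and the bookkeeping you describe are correct and match the paper's proof.
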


    \begin{proof}
    	By the proof of Theorem \ref{the iteration results},
    	$$\hat{\Rb}^{(s+1)}-\Rb\hat{\Hb}_{r}^{(s+1)}=\sqrt{p_{1}}(\bdelta_{2}^{(s+1)}+\bdelta_{3}^{(s+1)}+\bdelta_{4}^{(s+1)})(\Zb^{(s+1)})^{-1/2}.$$
    	By  Lemma \ref{lemma11} (1) and (2), we have
    	$$\begin{aligned}
    	\left\|\sqrt{p_{1}}\Rb^{\top}\bdelta_{2}^{(s+1)}(\Zb^{(s+1)})^{-1/2}\right\|_{F}&\asymp\left\|\sqrt{p_{1}}\left(\sum_{t=1}^{T}\Fb_{t}\hat{\Cb}^{(s)\top}\Eb_{t}^{\top}\hat{\Rb}^{(s)}\right)\left(\Zb^{(s+1)}\right)^{-1/2}\right\|_{F}\\
    	&\leq \sqrt{p_{1}}\left\|\sum_{t=1}^{T}\Fb_{t}\hat{\Cb}^{(s)\top}\Eb_{t}^{\top}\hat{\Rb}^{(s)}\right\|_{F}\left\|\left(\Zb^{(s+1)}\right)^{-1/2}\right\|_{F}\\
    	&=O_{p}\left(\dfrac{p_{1}\sqrt{w_{r}^{(s)}w_{c}^{(s)}}}{\sqrt{T}}+\dfrac{p_{1}\sqrt{w_{r}^{(s)}}}{\sqrt{Tp_{2}}}+\dfrac{\sqrt{p_{1}w_{c}^{(s)}}}{\sqrt{T}}+\dfrac{\sqrt{p_{1}}}{\sqrt{Tp_{2}}}\right),
    	\end{aligned}$$
    	where the last equation holds due to
    	$$\left\|\sum_{t=1}^{T}\Fb_{t}\hat{\Cb}^{(s)\top}\Eb_{t}^{\top}\hat{\Rb}^{(s)}\right\|_{F} \leq \left\|\sum_{t=1}^{T}\Fb_{t}\hat{\Cb}^{(s)\top}\Eb_{t}^{\top}\right\|_{F}\left\|\hat{\Rb}^{(s)}-\Rb\hat{\Hb}_{r}^{(s)}\right\|_{F}+\left\|\sum_{t=1}^{T}\Fb_{t}\hat{\Cb}^{(s)\top}\Eb_{t}^{\top}\Rb\right\|_{F}\left\|\hat{\Hb}_{r}^{(s)}\right\|_{F}.$$
    	
    	$$\begin{aligned}
    	\left\|\sqrt{p_{1}}\Rb^{\top}\bdelta_{3}^{(s+1)}(\Zb^{(s+1)})^{-1/2}\right\|_{F}&\asymp \sqrt{p_{1}}\left\|\left(\sum_{t=1}^{T}\Rb^{\top}\Eb_{t}\hat{\Cb}^{(s)}\Fb_{t}^{\top}\right)\left(\Zb^{(s+1)}\right)^{-1/2}\right\|_{F}\\
    	&\leq \sqrt{p_{1}}\left\|\sum_{t=1}^{T}\Rb^{\top}\Eb_{t}\hat{\Cb}^{(s)}\Fb_{t}^{\top}\right\|_{F}\left\|\left(\Zb^{(s+1)}\right)^{-1/2}\right\|_{F}\\
    	&=O_{p}\left(\dfrac{\sqrt{p_{1}w_{c}^{(s)}}}{\sqrt{T}}+\dfrac{\sqrt{p_{1}}}{\sqrt{Tp_{2}}}\right).
    	\end{aligned}$$
    	By the fact that $$\begin{aligned}
    	\left\|\sum_{t=1}^{T}\Rb^{\top}\Eb_{t}\hat{\Cb}^{(s)}\hat{\Cb}^{(s)\top}\Eb_{t}^{\top}\hat{\Rb}^{(s)}\right\|_{F}&\leq \left\|\sum_{t=1}^{T}\Rb^{\top}\Eb_{t}\hat{\Cb}^{(s)}\hat{\Cb}^{(s)\top}\Eb_{t}^{\top}(\hat{\Rb}^{(s)}-\Rb\hat{\Hb}_{r}^{(s)})\right\|_{F}+\left\|\sum_{t=1}^{T}\Rb^{\top}\Eb_{t}\hat{\Cb}^{(s)}\hat{\Cb}^{(s)\top}\Eb_{t}^{\top}\Rb\hat{\Hb}_{r}^{(s)}\right\|_{F}\\
    	&\leq \left\|\sum_{t=1}^{T}\Rb^{\top}\Eb_{t}\hat{\Cb}^{(s)}\hat{\Cb}^{(s)\top}\Eb_{t}^{\top}\right\|_{F}\left\|\hat{\Rb}^{(s)}-\Rb\hat{\Hb}_{r}^{(s)}\right\|_{F}+\left\|\sum_{t=1}^{T}\Rb^{\top}\Eb_{t}\hat{\Cb}^{(s)}\right\|_{F}^{2}\left\|\hat{\Hb}_{r}^{(s)}\right\|_{F},
    	\end{aligned}$$
    and Lemma \ref{lemma10} (1), we can get
    	$$\begin{aligned}
    	\left\|\sqrt{p_{1}}\Rb^{\top}\bdelta_{4}^{(s+1)}(\Zb^{(s+1)})^{-1/2}\right\|_{F}&=\left\|\dfrac{1}{\sqrt{p_{1}}p_{2}}\left(\sum_{t=1}^{T}\Rb^{\top}\Eb_{t}\hat{\Cb}^{(s)}\hat{\Cb}^{(s)\top}\Eb_{t}^{\top}\hat{\Rb}^{(s)}\right)\left(\Zb^{(s+1)}\right)^{-1/2}\right\|_{F}\\
    	&\leq \dfrac{1}{\sqrt{p_{1}}p_{2}}\left\|\sum_{t=1}^{T}\Rb^{\top}\Eb_{t}\hat{\Cb}^{(s)}\hat{\Cb}^{(s)\top}\Eb_{t}^{\top}\hat{\Rb}^{(s)}\right\|_{F}\left\|\left(\Zb^{(s+1)}\right)^{-1/2}\right\|_{F}\\
    	&=O_{p}\left(\dfrac{\sqrt{p_{1}w_{r}^{(s)}}}{\sqrt{Tp_{2}}}+\dfrac{\sqrt{w_{r}^{(s)}}}{p_{2}}+\sqrt{w_{r}^{(s)}}w_{c}^{(s)}+\dfrac{\sqrt{p_{1}w_{r}^{(s)}}w_{c}^{(s)}}{\sqrt{T}}+\dfrac{\sqrt{w_{c}^{(s)}}}{\sqrt{Tp_{1}}p_{2}}+\dfrac{1}{\sqrt{Tp_{1}p_{2}}p_{2}}\right).
    	\end{aligned}$$
    	Hence, $$\|\Rb^{\top}(\hat{\Rb}^{(s+1)}-\Rb\hat{\Hb}_{r}^{(s+1)})\|_{F}=O_{p}\left(\dfrac{\sqrt{p_{1}}}{\sqrt{Tp_{2}}}+\dfrac{p_{1}\sqrt{w_{r}^{(s)}}}{\sqrt{Tp_{2}}}+\dfrac{\sqrt{w_{r}^{(s)}}}{p_{2}}+\dfrac{\sqrt{p_{1}w_{c}^{(s)}}}{\sqrt{T}}+\dfrac{p_{1}\sqrt{w_{r}^{(s)}w_{c}^{(s)}}}{\sqrt{T}}+\sqrt{w_{r}^{(s)}}w_{c}^{(s)}\right).$$
    	And by a similar proof, we can get
$$\|\Cb^{\top}(\hat{\Cb}^{(s+1)}-\Cb\hat{\Hb}_{c}^{(s+1)})\|_{F}=O_{p}\left(\dfrac{\sqrt{p_{2}}}{\sqrt{Tp_{1}}}+\dfrac{1}{p_{1}}+\dfrac{\sqrt{w_{r}^{(s)}}}{\sqrt{p_{1}}}+\dfrac{\sqrt{p_{2}w_{r}^{(s)}}}{\sqrt{T}}+\dfrac{p_{2}\sqrt{w_{c}^{(s)}}}{\sqrt{Tp_{1}}}+\dfrac{p_{2}\sqrt{w_{r}^{(s)}w_{c}^{(s)}}}{\sqrt{T}}+\sqrt{w_{r}^{(s)}w_{r}^{(s+1)}}\right).$$

    \end{proof}

	\begin{lemma}\label{lemma9}
		Suppose that $T, p_{1}, p_{2}$ tend to infinity, and $m_{1}=k_{1}, m_{2}=k_{2}$ are fixed. If  Assumption \ref{Assumption 1} (1), Assumption \ref{Assumption 2} and Assumption \ref{Assumption 4}-\ref{Assumption 7} hold, then we have
		$$(1) \ \sum_{t=1}^{T}\|\Eb_{t}\hat{\Cb}^{(s)}\|_{F}^{2}=O_{p}\left(Tp_{1}p_{2}+Tp_{1}p_{2}^{2}w_{c}^{(s)}\right);$$
		$$(2) \ \sum_{t=1}^{T}\|\Eb_{t}^{\top}\hat{\Rb}^{(s)}\|_{F}^{2}=O_{p}\left(Tp_{1}p_{2}+Tp_{1}^{2}p_{2}w_{r}^{(s)}\right).$$
	\end{lemma}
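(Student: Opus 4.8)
The plan is to split $\hat{\Cb}^{(s)}=(\hat{\Cb}^{(s)}-\Cb\hat{\Hb}_{c}^{(s)})+\Cb\hat{\Hb}_{c}^{(s)}$ and to bound the two resulting contributions separately, relying only on the deterministic sub-multiplicativity $\|\bm{A}\bm{B}\|_{F}\leq\|\bm{A}\|_{F}\|\bm{B}\|_{2}$, so that the fact that $\hat{\Cb}^{(s)}$ is itself a function of $\{\Eb_{t}\}$ never has to be handled by an independence argument. Concretely, by the elementary bound $\|\bm{A}+\bm{B}\|_{F}^{2}\le 2\|\bm{A}\|_{F}^{2}+2\|\bm{B}\|_{F}^{2}$,
\begin{equation*}
\sum_{t=1}^{T}\|\Eb_{t}\hat{\Cb}^{(s)}\|_{F}^{2}\leq 2\sum_{t=1}^{T}\|\Eb_{t}(\hat{\Cb}^{(s)}-\Cb\hat{\Hb}_{c}^{(s)})\|_{F}^{2}+2\sum_{t=1}^{T}\|\Eb_{t}\Cb\hat{\Hb}_{c}^{(s)}\|_{F}^{2}.
\end{equation*}

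For the first sum I would bound each term by $\|\Eb_{t}\|_{F}^{2}\,\|\hat{\Cb}^{(s)}-\Cb\hat{\Hb}_{c}^{(s)}\|_{2}^{2}$ and then pull the factor $\|\hat{\Cb}^{(s)}-\Cb\hat{\Hb}_{c}^{(s)}\|_{2}^{2}\le\|\hat{\Cb}^{(s)}-\Cb\hat{\Hb}_{c}^{(s)}\|_{F}^{2}=p_{2}w_{c}^{(s)}$, which is free of $t$, out of the sum, leaving $p_{2}w_{c}^{(s)}\sum_{t=1}^{T}\|\Eb_{t}\|_{F}^{2}$. Since Assumption \ref{Assumption 5} gives $\EE e_{t,ij}^{2}\le c$, one has $\EE\sum_{t=1}^{T}\|\Eb_{t}\|_{F}^{2}\le cTp_{1}p_{2}$, hence $\sum_{t=1}^{T}\|\Eb_{t}\|_{F}^{2}=O_{p}(Tp_{1}p_{2})$ by Markov's inequality, so the first sum is $O_{p}(Tp_{1}p_{2}^{2}w_{c}^{(s)})$. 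For the second sum I would use $\|\Eb_{t}\Cb\hat{\Hb}_{c}^{(s)}\|_{F}\le\|\Eb_{t}\Cb\|_{F}\|\hat{\Hb}_{c}^{(s)}\|_{2}$ together with $\|\hat{\Hb}_{c}^{(s)}\|_{2}=O_{p}(1)$ (which holds since $\hat{\Hb}_{c}^{(s)}$ is asymptotically orthogonal, cf.\ the construction in the proof of Theorem \ref{the iteration results}) and $\sum_{t=1}^{T}\EE\|\Eb_{t}\Cb\|_{F}^{2}=O(Tp_{1}p_{2})$ from Lemma \ref{basic-lemma}(1), which makes it $O_{p}(Tp_{1}p_{2})$. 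Adding the two bounds gives part (1); part (2) follows by the mirror-image argument, replacing $\hat{\Cb}^{(s)},\Cb,w_{c}^{(s)}$ by $\hat{\Rb}^{(s)},\Rb,w_{r}^{(s)}$ and using $\|\hat{\Rb}^{(s)}-\Rb\hat{\Hb}_{r}^{(s)}\|_{F}^{2}=p_{1}w_{r}^{(s)}$, $\|\hat{\Hb}_{r}^{(s)}\|_{2}=O_{p}(1)$, and $\sum_{t=1}^{T}\EE\|\Eb_{t}^{\top}\Rb\|_{F}^{2}=O(Tp_{1}p_{2})$.

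There is essentially no deep obstacle here; the one point that needs care is that $\hat{\Cb}^{(s)}$ and $\hat{\Rb}^{(s)}$ depend on the very noise matrices being summed, so one must avoid any step that treats them as independent of $\Eb_{t}$ — the decomposition above does exactly this, invoking only the sub-multiplicativity of the Frobenius norm and the previously established rates $w_{c}^{(s)},w_{r}^{(s)}$, which is also why the conclusion is stated recursively in those quantities. A secondary point is that I deliberately bound $\|\Eb_{t}(\hat{\Cb}^{(s)}-\Cb\hat{\Hb}_{c}^{(s)})\|_{F}$ via $\|\Eb_{t}\|_{F}$ rather than via $\|\Eb_{t}\|_{2}$: the weak-dependence moment conditions of Assumption \ref{Assumption 5} control $\EE\|\Eb_{t}\|_{F}^{2}$ cheaply but do not readily yield operator-norm bounds, and since this crude route already reproduces the exact target rate $Tp_{1}p_{2}^{2}w_{c}^{(s)}$, nothing is lost.
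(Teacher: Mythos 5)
Your proposal is correct and follows essentially the same route as the paper's own proof: the same split $\hat{\Cb}^{(s)}=(\hat{\Cb}^{(s)}-\Cb\hat{\Hb}_{c}^{(s)})+\Cb\hat{\Hb}_{c}^{(s)}$, with the deviation term absorbed via $\|\Eb_{t}\|_{F}^{2}\,\|\hat{\Cb}^{(s)}-\Cb\hat{\Hb}_{c}^{(s)}\|_{F}^{2}=p_{2}w_{c}^{(s)}\|\Eb_{t}\|_{F}^{2}$ and the main term controlled by $\sum_{t}\EE\|\Eb_{t}\Cb\|_{F}^{2}=O(Tp_{1}p_{2})$ (which the paper recomputes directly from Assumption \ref{Assumption 5} rather than citing Lemma \ref{basic-lemma}), and the mirror argument for part (2). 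Your extra care with the factor $2$ and the explicit $\|\hat{\Hb}_{c}^{(s)}\|_{2}=O_{p}(1)$ are harmless refinements of the same argument.
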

	
	\begin{proof}
		(1) As $\sum_{t=1}^{T}\|\Eb_{t}\Cb\|_{F}^{2}=\sum_{t=1}^{T}\sum_{i=1}^{p_{1}}(\sum_{k=1}^{p_{2}}e_{t,ik}c_{k})^{2} \leq c\sum_{t=1}^{T}\sum_{i=1}^{p_{1}}\sum_{k,k^{\prime}=1}^{p_{2}}e_{t,ik}e_{t,ik^{\prime}}=O_{p}\left(Tp_{1}p_{2}\right),$ $\sum_{t=1}^{T}\|\Eb_{t}\|_{F}^{2}=\sum_{t=1}^{T}\sum_{i=1}^{p_{1}}\sum_{j=1}^{p_{2}}e_{t,ij}^{2}=O_{p}\left(Tp_{1}p_{2}\right)$, then
		$$\sum_{t=1}^{T}\|\Eb_{t}\hat{\Cb}^{(s)}\|_{F}^{2} \leq \sum_{t=1}^{T}\|\Eb_{t}\Cb\hat{\Hb}_{c}^{(s)}\|_{F}^{2}+\sum_{t=1}^{T}\|\Eb_{t}(\hat{\Cb}^{(s)}-\Cb\hat{\Hb}_{c}^{(s)})\|_{F}^{2}=O_{p}\left(Tp_{1}p_{2}+Tp_{1}p_{2}^{2}w_{c}^{(s)}\right).$$
		
		(2) $\sum_{t=1}^{T}\|\Rb^{\top}\Eb_{t}\|_{F}^{2}=\sum_{t=1}^{T}\sum_{j=1}^{p_{2}}(\sum_{i=1}^{p_{1}}r_{i}e_{t,ij})^{2}\leq c\sum_{t=1}^{T}\sum_{j=1}^{p_{2}}\sum_{i_{1},i_{2}=1}^{p_{1}}e_{t,i_{1}j}e_{t,i_{2}j}=O_{p}\left(Tp_{1}p_{2}\right)$,
		$$\sum_{t=1}^{T}\|\Eb_{t}^{\top}\hat{\Rb}^{(s)}\|_{F}^{2}\leq \sum_{t=1}^{T}\|\Eb_{t}^{\top}(\hat{\Rb}^{(s)}-\Rb\hat{\Hb}_{r}^{(s)})\|_{F}^{2}+\sum_{t=1}^{T}\|\Eb_{t}^{\top}\Rb\hat{\Hb}_{r}^{(s)}\|_{F}^{2}=O_{p}\left(Tp_{1}^{2}p_{2}w_{r}^{(s)}+Tp_{1}p_{2}\right).$$
	\end{proof}
	
	\begin{lemma}\label{lemma10}
		Suppose that $T, p_{1}, p_{2}$ tend to infinity, and $m_{1}=k_{1}, m_{2}=k_{2}$ are fixed. If  Assumption \ref{Assumption 1} (1), Assumption \ref{Assumption 2} and Assumption \ref{Assumption 4}-\ref{Assumption 7} hold, then we have
\begin{itemize}
  \item (1)	$$ \ \|\sum_{t=1}^{T}\Eb_{t}\hat{\Cb}^{(s)}\hat{\Cb}^{(s)\top}\Eb_{t}^{\top}\Rb\|_{F}^{2}=O_{p}\left(Tp_{1}^{2}p_{2}^{3}+T^{2}p_{1}p_{2}^{2}+(Tp_{1}^{2}p_{2}^{4}+T^{2}p_{1}p_{2}^{4})w_{c}^{(s)2}\right);$$
  \item (2) 	$$\ \|\sum_{t=1}^{T}\Eb_{t}^{\top}\hat{\Rb}^{(s+1)}\hat{\Rb}^{(s)\top}\Eb_{t}\Cb\|_{F}^{2}=O_{p}\left(Tp_{1}^{3}p_{2}^{2}+T^{2}p_{1}^{2}p_{2}+(Tp_{1}^{4}p_{2}^{2}+T^{2}p_{1}^{4}p_{2})w_{r}^{(s)}w_{r}^{(s+1)}\right).$$
\end{itemize}

	\end{lemma}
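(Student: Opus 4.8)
The plan is to reduce each of the two bounds to the ``main-term'' estimate of Lemma \ref{basic-lemma}(5) (respectively its transpose) together with remainder terms driven by the estimation errors of the loading matrices, and then to verify, via Cauchy--Schwarz and the rates already established, that the remainders are subordinate. Throughout I keep $m_1=k_1,m_2=k_2$ fixed and $\min\{T,p_1,p_2\}\to\infty$. For (1), write $\hat{\Cb}^{(s)}=\Cb\hat{\Hb}_c^{(s)}+(\hat{\Cb}^{(s)}-\Cb\hat{\Hb}_c^{(s)})$, recall $\hat{\Cb}^{(s)\top}\hat{\Cb}^{(s)}=p_2\Ib_{k_2}$, $\|\hat{\Hb}_c^{(s)}\|_2=O_p(1)$, and $\|\hat{\Cb}^{(s)}-\Cb\hat{\Hb}_c^{(s)}\|_F^2=O_p(p_2 w_c^{(s)})$ (with the sharper row-wise bound $\max_j\|\hat{\bC}_{j\cdot}^{(s)}-\hat{\Hb}_c^{(s)\top}\bC_{j\cdot}\|_2^2=O_p(w_c^{(s)})$ available from the row-consistency part of Theorem \ref{the iteration results}). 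Expanding $\hat{\Cb}^{(s)}\hat{\Cb}^{(s)\top}$ splits $\sum_{t}\Eb_t\hat{\Cb}^{(s)}\hat{\Cb}^{(s)\top}\Eb_t^\top\Rb$ into the term $\sum_t\Eb_t\Cb\hat{\Hb}_c^{(s)}\hat{\Hb}_c^{(s)\top}\Cb^\top\Eb_t^\top\Rb$, two cross terms each with one factor of $\hat{\Cb}^{(s)}-\Cb\hat{\Hb}_c^{(s)}$, and one term with two such factors. The first term is $\asymp\|\sum_t\Eb_t\Cb\Cb^\top\Eb_t^\top\Rb\|_F$, and a direct second-moment computation---using $\|\Cb\Cb^\top\|_2\le\|\Cb\|_F^2=O(p_2)$, the bounded entries of $\Rb$, and the summability in Assumption \ref{Assumption 5}(2)--(3)---gives the bound $O_p((Tp_1^2p_2^3+T^2p_1p_2^2)^{1/2})$, exactly as for $\EE\|\sum_t\Eb_t\Wb_2\Wb_2^\top\Eb_t^\top\Wb_1\|_F^2$ in Lemma \ref{basic-lemma}(5); this yields the two leading terms of the claimed rate.

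For the cross and quadratic pieces I exploit the near-projection structure of $p_2^{-1}\hat{\Cb}^{(s)}\hat{\Cb}^{(s)\top}$ and Cauchy--Schwarz in $t$: writing such a piece as $\sum_t\Ab_t\Bb_t$ with $\Ab_t$ carrying the factor(s) $\Eb_t(\hat{\Cb}^{(s)}-\Cb\hat{\Hb}_c^{(s)})$ or $\Rb^\top\Eb_t(\hat{\Cb}^{(s)}-\Cb\hat{\Hb}_c^{(s)})$ and $\Bb_t$ the remaining (small) matrix, one has $\|\sum_t\Ab_t\Bb_t\|_F\le(\sum_t\|\Ab_t\|_F^2)^{1/2}(\sum_t\|\Bb_t\|_F^2)^{1/2}$, and the factors are controlled by the auxiliary bounds $\sum_t\|\Eb_t\Cb\|_F^2=\sum_t\|\Rb^\top\Eb_t\Cb\|_F^2=\sum_t\|\Eb_t\|_F^2=O_p(Tp_1p_2)$ of Lemma \ref{basic-lemma}, by $\sum_t\|\Eb_t\hat{\Cb}^{(s)}\|_F^2$ and $\sum_t\|\Eb_t(\hat{\Cb}^{(s)}-\Cb\hat{\Hb}_c^{(s)})\|_F^2$ of Lemma \ref{lemma9}, and by $\|\hat{\Cb}^{(s)}\|_2^2=p_2$ and $\|\hat{\Cb}^{(s)}-\Cb\hat{\Hb}_c^{(s)}\|_F^2=O_p(p_2 w_c^{(s)})$. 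The cross terms carry only $w_c^{(s)}$ to the first power and are absorbed into $Tp_1^2p_2^3+T^2p_1p_2^2$, whereas the two-error term produces the $(T^2p_1p_2^4+Tp_1^2p_2^4)w_c^{(s)2}$ contribution; where a crude norm bound is not sharp enough, one substitutes the explicit ``noise $\times$ small matrix'' expansion of $\hat{\Cb}^{(s)}-\Cb\hat{\Hb}_c^{(s)}$ from the proof of Theorem \ref{the iteration results} and applies Assumption \ref{Assumption 5}(3) to the resulting multi-index sums of idiosyncratic products. Part (2) is the symmetric argument with $p_1\leftrightarrow p_2$ and $\Rb\leftrightarrow\Cb$ (the main term now matching the transpose of Lemma \ref{basic-lemma}(5), of order $O_p((Tp_1^3p_2^2+T^2p_1^2p_2)^{1/2})$); the only new feature is that the two distinct loading errors $\hat{\Rb}^{(s)}-\Rb\hat{\Hb}_r^{(s)}$ and $\hat{\Rb}^{(s+1)}-\Rb\hat{\Hb}_r^{(s+1)}$ enter, so the quadratic remainder produces the product $w_r^{(s)}w_r^{(s+1)}$ rather than a square. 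At the stage of the induction in Theorem \ref{the iteration results} where this lemma is invoked, both $w_r^{(s+1)}$ and its row-wise analogue have already been established, so no circularity arises.

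The step I expect to be the main obstacle is the bookkeeping in the second paragraph: each remainder must be partitioned into $O(1)$ sub-terms, the ``right'' Cauchy--Schwarz split chosen for each (and, where necessary, the explicit error expansion used in place of an operator-norm bound), and every sub-term then checked to be either absorbed into $Tp_1^2p_2^3+T^2p_1p_2^2$ (resp.\ $Tp_1^3p_2^2+T^2p_1^2p_2$) or into the stated $w_c^{(s)2}$ (resp.\ $w_r^{(s)}w_r^{(s+1)}$) term, using only that $w_c^{(s)},w_r^{(s)},w_r^{(s+1)}$ vanish at the already-known rates. Tracking the exact powers of $p_1,p_2,T$ across these many sub-terms is the delicate part; by contrast the moment computation for the main term is routine once Lemma \ref{basic-lemma}(5) is in hand.
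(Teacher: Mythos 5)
Your proposal is correct and follows essentially the same route as the paper: expand $\hat{\Cb}^{(s)}$ (resp.\ $\hat{\Rb}^{(s)},\hat{\Rb}^{(s+1)}$) around $\Cb\hat{\Hb}_c^{(s)}$ (resp.\ $\Rb\hat{\Hb}_r^{(s)},\Rb\hat{\Hb}_r^{(s+1)}$), bound the leading quadratic-in-noise term by the same fourth-moment/covariance-summability computation that underlies Lemma \ref{basic-lemma}(5), and control the cross and double-error remainders by Cauchy--Schwarz with $\|\hat{\Cb}^{(s)}-\Cb\hat{\Hb}_c^{(s)}\|_F^2=O_p(p_2 w_c^{(s)})$, the single-error terms being absorbed since $w_c^{(s)},w_r^{(s)}\to 0$ and the double-error terms yielding the $w_c^{(s)2}$ and $w_r^{(s)}w_r^{(s+1)}$ contributions. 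This matches the paper's proof, including the non-circular use of $w_r^{(s+1)}$ inside the iteration.
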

	
	\begin{proof}
		(1) For simplicity, we fix $k_{1}=k_{2}=m_{1}=m_{2}=1$. Note that
		$$
		\begin{aligned}
			\|\sum_{t=1}^{T}\Eb_{t}\hat{\Cb}^{(s)}\hat{\Cb}^{(s)\top}\Eb_{t}^{\top}\Rb\|_{F}^{2}& \leq \|\sum_{t=1}^{T}\Eb_{t}\Cb\hat{\Hb}_{c}^{(s)}\hat{\Hb}_{c}^{(s)\top}\Cb^{\top}\Eb_{t}^{\top}\Rb\|_{F}^{2}+\|\sum_{t=1}^{T}\Eb_{t}(\hat{\Cb}^{(s)}-\Cb\hat{\Hb}_{c}^{(s)})\hat{\Hb}_{c}^{(s)\top}\Cb^{\top}\Eb_{t}^{\top}\Rb\|_{F}^{2}\\
			&+\|\sum_{t=1}^{T}\Eb_{t}\Cb\hat{\Hb}_{c}^{(s)}(\hat{\Cb}^{(s)}-\Cb\hat{\Hb}_{c}^{(s)})^{\top}\Eb_{t}^{\top}\Rb\|_{F}^{2}+\|\sum_{t=1}^{T}\Eb_{t}(\hat{\Cb}^{(s)}-\Cb\hat{\Hb}_{c}^{(s)})(\hat{\Cb}^{(s)}-\Cb\hat{\Hb}_{c}^{(s)})^{\top}\Eb_{t}^{\top}\Rb\|_{F}^{2}.
		\end{aligned}
		$$
		
		For the first term,
		$$
		\begin{aligned}
			\|\sum_{t=1}^{T}\Eb_{t}\Cb\hat{\Hb}_{c}^{(s)}\hat{\Hb}_{c}^{(s)\top}\Cb^{\top}\Eb_{t}^{\top}\Rb\|_{F}^{2}& \leq \|\sum_{t=1}^{T}\Eb_{t}\Cb^{\top}\Eb_{t}^{\top}\Rb\|_{F}^{2}\|\Cb\|_{F}^{2}\|\hat{\Hb}_{c}^{(s)}\|_{F}^{4}\\
			& \leq p_{2} \sum_{i^{\prime}=1}^{p_{1}}\sum_{j^{\prime}=1}^{p_{2}}(\sum_{t=1}^{T}\sum_{i=1}^{p_{1}}\sum_{j=1}^{p_{2}}r_{i}c_{j}e_{t,ij}e_{t,i^{\prime}j^{\prime}})^{2}\\
			&=O_{p}\left(Tp_{1}^{2}p_{2}^{3}+T^{2}p_{1}p_{2}^{2}\right),	
		\end{aligned}
		$$
		where
		$$
		\begin{aligned}
			\EE(\sum_{t=1}^{T}\sum_{i=1}^{p_{1}}\sum_{j=1}^{p_{2}}r_{i}c_{j}e_{t,ij}e_{t,i^{\prime}j^{\prime}})^{2}& \leq c\sum_{t,s=1}^{T}\sum_{i_{1},i_{2}=1}^{p_{1}}\sum_{j_{1},j_{2}=1}^{p_{2}}\text{Cov}(e_{t,i_{1}j_{2}}e_{t,i^{\prime}j^{\prime}},e_{s,i_{2}j_{2}}e_{s,i^{\prime}j^{\prime}})+c(\sum_{t=1}^{T}\sum_{i=1}^{p_{1}}\sum_{j=1}^{p_{2}}\EE e_{t,ij}e_{t,i^{\prime}j^{\prime}})^{2}\\
			&=O\left(Tp_{1}p_{2}+T^{2}\right).
		\end{aligned}
		$$
		
		For the second term,
		$$\begin{aligned}
			\|\sum_{t=1}^{T}\Eb_{t}(\hat{\Cb}^{(s)}-\Cb\hat{\Hb}_{c}^{(s)})\hat{\Hb}_{c}^{(s)\top}\Cb^{\top}\Eb_{t}^{\top}\Rb\|_{F}^{2}& \leq \|\sum_{t=1}^{T}\Eb_{t}\Cb^{\top}\Eb_{t}^{\top}\bR\|_{F}^{2}\|\hat{\Cb}^{(s)}-\Cb\hat{\Hb}_{c}^{(s)}\|_{F}^{2}\|\hat{\Hb}_{c}^{(s)}\|_{F}^{2}\\
			&=O_{p}\left((Tp_{1}^{2}p_{2}^{3}+T^{2}p_{1}p_{2}^{2})w_{c}^{(s)}\right).
		\end{aligned}$$
		
		For the third term,
		$$\begin{aligned}
			\|\sum_{t=1}^{T}\Eb_{t}\Cb\hat{\Hb}_{c}^{(s)}(\hat{\Cb}^{(s)}-\Cb\hat{\Hb}_{c}^{(s)})^{\top}\Eb_{t}^{\top}\Rb\|_{F}^{2}& \leq \|\sum_{t=1}^{T}\Eb_{t}\Cb\Rb^{\top}\Eb_{t}\|_{F}^{2}\|\hat{\Cb}^{(s)}-\Cb\hat{\Hb}_{c}^{(s)}\|_{F}^{2}\|\hat{\Hb}_{c}^{(s)}\|_{F}^{2}\\
			&=\sum_{i^{\prime}=1}^{p_{1}}\sum_{j^{\prime}=1}^{p_{2}}(\sum_{t=1}^{T}\sum_{i=1}^{p_{1}}\sum_{j=1}^{p_{2}}r_{i}c_{j}e_{t,i^{\prime}j}e_{t,ij^{\prime}})^{2}\|\hat{\Cb}^{(s)}-\Cb\hat{\Hb}_{c}^{(s)}\|_{F}^{2}\|\hat{\Hb}_{c}^{(s)}\|_{F}^{2}\\
			&=O_{p}\left((Tp_{1}^{2}p_{2}^{3}+T^{2}p_{1}p_{2}^{2})w_{c}^{(s)}\right),
		\end{aligned}$$
		where $$
		\begin{aligned}
			\EE(\sum_{t=1}^{T}\sum_{i=1}^{p_{1}}\sum_{j=1}^{p_{2}}r_{i}c_{j}e_{t,i^{\prime}j}e_{t,ij^{\prime}})^{2} &\leq c\sum_{t,s=1}^{T}\sum_{i_{1}i_{2}=1}^{p_{1}}\sum_{j_{1}j_{2}=1}^{p_{2}}\text{Cov}(e_{t,i_{\prime}j_{1}}e_{t,i_{1}j_{\prime}},e_{s,i^{\prime}j_{2}}e_{s,i_{2}j_{\prime}})+c(\sum_{t=1}^{T}\sum_{i=1}^{p_{1}}\sum_{j=1}^{p_{2}}\EE e_{t,i^{\prime}j}e_{t,ij^{\prime}})^{2}\\
			&=O\left(Tp_{1}p_{2}+T^{2}\right).
		\end{aligned}$$
		
		For the last term,
		$$
		\begin{aligned}
			\|\sum_{t=1}^{T}\Eb_{t}(\hat{\Cb}^{(s)}-\Cb\hat{\Hb}_{c}^{(s)})(\hat{\Cb}^{(s)}-\Cb\hat{\Hb}_{c}^{(s)})^{\top}\Eb_{t}^{\top}\Rb\|_{F}^{2}& \leq \|\hat{\Cb}^{(s)}-\Cb\hat{\Hb}_{c}^{(s)}\|_{F}^{2}\|\sum_{t=1}^{T}\Eb_{t}(\hat{\Cb}^{(s)}-\Cb\hat{\Hb}_{c}^{(s)})^{\top}\Eb_{t}^{\top}\Rb\|_{F}^{2}\\
			& \leq \|\hat{\Cb}^{(s)}-\Cb\hat{\Hb}_{c}^{(s)}\|_{F}^{4}\sum_{i=1}^{p_{1}}\sum_{j=1}^{p_{2}}\|\sum_{t=1}^{T}\Eb_{t}^{\top}\Rb e_{t,ij}\|_{F}^{2}\\
			& \leq \|\hat{\Cb}^{(s)}-\Cb\hat{\Hb}_{c}^{(s)}\|_{F}^{4}\sum_{i=1}^{p_{1}}\sum_{j=1}^{p_{2}}\sum_{k=1}^{p_{2}}(\sum_{t=1}^{T}\sum_{i^{\prime}=1}^{p_{1}}r_{i^{\prime}}e_{t,i^{\prime}k}e_{t,ij})^{2}\\
			&=O_{p}\left((T^{2}p_{1}p_{2}^{4}+Tp_{1}^{2}p_{2}^{4})w_{c}^{(s)2}\right),
		\end{aligned}
		$$
		where $$
		\begin{aligned}
			\sum_{i=1}^{p_{1}}\sum_{j=1}^{p_{2}}\sum_{k=1}^{p_{2}}\EE(\sum_{t=1}^{T}\sum_{i^{\prime}=1}^{p_{1}}r_{i^{\prime}}e_{t,i^{\prime}k}e_{t,ij})^{2} &\leq c\sum_{i=1}^{p_{1}}\sum_{j=1}^{p_{2}}\sum_{k=1}^{p_{2}}(\sum_{t=1}^{T}\sum_{i^{\prime}=1}^{p_{1}}\EE e_{t,i^{\prime}k}e_{t,ij})^{2}\\
			&+c\sum_{i=1}^{p_{1}}\sum_{j=1}^{p_{2}}\sum_{k=1}^{p_{2}}\sum_{t,s=1}^{T}\sum_{i_{1}i_{2}=1}^{p_{1}}\text{Cov}(e_{t,i_{1}k}e_{t,ij},e_{s,i_{2}k}e_{s,ij})\\
			&=O_{p}\left(T^{2}p_{1}p_{2}^{2}+Tp_{1}^{2}p_{2}^{2}\right).
		\end{aligned}
		$$
		
		(2) Note that
		$$\begin{aligned}
			\|\sum_{t=1}^{T}\Eb_{t}^{\top}\hat{\Rb}^{(s+1)}\hat{\Rb}^{(s)\top}\Eb_{t}\Cb\|_{F}^{2}&=\|\sum_{t=1}^{T}\Eb_{t}^{\top}(\hat{\Rb}^{(s+1)}-\Rb\hat{\Hb}_{r}^{(s+1)}+\Rb\hat{\Hb}_{r}^{(s+1)})(\hat{\Rb}^{(s)}-\Rb\hat{\Hb}_{r}^{(s)}+\Rb\hat{\Hb}_{r}^{(s)})^{\top}\Eb_{t}\Cb\|_{F}^{2}\\
			&\leq \|\sum_{t=1}^{T}\Eb_{t}^{\top}\Rb\hat{\Hb}_{r}^{(s+1)}\hat{\Hb}_{r}^{(s)\top}\Rb^{\top}\Eb_{t}\Cb\|_{F}^{2}+\|\sum_{t=1}^{T}\Eb_{t}^{\top}\Rb\hat{\Hb}_{r}^{(s+1)}(\hat{\Rb}^{(s)}-\Rb\hat{\Hb}_{r}^{(s)})^{\top}\Eb_{t}\Cb\|_{F}^{2}\\
			&+\|\sum_{t=1}^{T}\Eb_{t}^{\top}(\hat{\Rb}^{(s+1)}-\Rb\hat{\Hb}_{r}^{(s+1)})\hat{\Hb}_{r}^{(s)\top}\Rb^{\top}\Eb_{t}\Cb\|_{F}^{2}\\
			&+\|\sum_{t=1}^{T}\Eb_{t}^{\top}(\hat{\Rb}^{(s+1)}-\Rb\hat{\Hb}_{r}^{(s+1)})(\hat{\Rb}^{(s)}-\Rb\hat{\Hb}_{r}^{(s)})^{\top}\Eb_{t}\Cb\|_{F}^{2}.
		\end{aligned}$$
		
		For the first term,
		$$
		\|\sum_{t=1}^{T}\Eb_{t}^{\top}\Rb\hat{\Hb}_{r}^{(s+1)}\hat{\Hb}_{r}^{(s)\top}\Rb^{\top}\Eb_{t}\Cb\|_{F}^{2}\leq \|\sum_{t=1}^{T}\Eb_{t}^{\top}\Rb^{\top}\Eb_{t}\Cb\|_{F}^{2}\|\Rb\|_{F}^{2}\|\hat{\Hb}_{r}^{(s+1)}\|_{F}^{2}\|\hat{\Hb}_{r}^{(s)}\|_{F}^{2}=O_{p}\left(Tp_{1}^{3}p_{2}^{2}+T^{2}p_{1}^{2}p_{2}\right).$$
		
		For the second term,
		$$\begin{aligned}
			\|\sum_{t=1}^{T}\Eb_{t}^{\top}\Rb\hat{\Hb}_{r}^{(s+1)}(\hat{\Rb}^{(s)}-\Rb\hat{\Hb}_{r}^{(s)})^{\top}\Eb_{t}\Cb\|_{F}^{2}&\leq \|\sum_{t=1}^{T}\Eb_{t}^{\top}\Rb\Cb^{\top}\Eb_{t}^{\top}\|_{F}^{2}\|\hat{\Rb}^{(s)}-\Rb\hat{\Hb}_{r}^{(s)}\|_{F}^{2}\|\hat{\Hb}_{r}^{(s+1)}\|_{F}^{2}\\
			&=O_{p}\left((Tp_{1}^{3}p_{2}^{2}+T^{2}p_{1}^{2}p_{2})w_{r}^{(s)}\right).
		\end{aligned}$$
		
		For the third term,
		$$\begin{aligned}
			\|\sum_{t=1}^{T}\Eb_{t}^{\top}(\hat{\Rb}^{(s+1)}-\Rb\hat{\Hb}_{r}^{(s+1)})\hat{\Hb}_{r}^{(s)\top}\Rb^{\top}\Eb_{t}\Cb\|_{F}^{2}&\leq \|\sum_{t=1}^{T}\Eb_{t}^{\top}\Rb^{\top}\Eb_{t}\Cb\|_{F}^{2}\|\hat{\Rb}^{(s+1)}-\Rb\hat{\Hb}_{r}^{(s+1)}\|_{F}^{2}\|\hat{\Hb}_{r}^{(s)}\|_{F}^{2}\\
			&=O_{p}\left((Tp_{1}^{3}p_{2}^{2}+T^{2}p_{1}^{2}p_{2})w_{r}^{(s+1)}\right).
		\end{aligned}$$
		
		For the last term,
		$$\begin{aligned}
			&\|\sum_{t=1}^{T}\Eb_{t}^{\top}(\hat{\Rb}^{(s+1)}-\Rb\hat{\Hb}_{r}^{(s+1)})(\hat{\Rb}^{(s)}-\Rb\hat{\Hb}_{r}^{(s)})^{\top}\Eb_{t}\Cb\|_{F}^{2}\\
			&\leq \|\sum_{t=1}^{T}(\hat{\Rb}^{(s)}-\Rb\hat{\Hb}_{r}^{(s)})^{\top}\Eb_{t}\Cb\Eb_{t}^{\top}\|_{F}^{2}\|\hat{\Rb}^{(s+1)}-\Rb\hat{\Hb}_{r}^{(s+1)}\|_{F}^{2}\\
			&\leq \|\hat{\Rb}^{(s)}-\Rb\hat{\Hb}_{r}^{(s)}\|_{F}^{2}\|\hat{\Rb}^{(s+1)}-\Rb\hat{\Hb}_{r}^{(s+1)}\|_{F}^{2}\sum_{i^{\prime}=1}^{p_{1}}\sum_{j^{\prime}=1}^{p_{2}}\|\sum_{t=1}^{T}\Eb_{t}\Cb e_{t,i^{\prime}j^{\prime}}\|_{F}^{2}\\
			&\leq \|\hat{\Rb}^{(s)}-\Rb\hat{\Hb}_{r}^{(s)}\|_{F}^{2}\|\hat{\Rb}^{(s+1)}-\Rb\hat{\Hb}_{r}^{(s+1)}\|_{F}^{2}\sum_{i^{\prime}=1}^{p_{1}}\sum_{j^{\prime}=1}^{p_{2}}\sum_{i=1}^{p_{1}}(\sum_{t=1}^{T}\sum_{j=1}^{p_{2}}c_{j}e_{t,ij}e_{t,i^{\prime}j^{\prime}})^{2}\\
			&=O_{p}\left((Tp_{1}^{4}p_{2}^{2}+T^{2}p_{1}^{4}p_{2})w_{r}^{(s)}w_{r}^{(s+1)}\right),
		\end{aligned}$$
		where$$\begin{aligned}
			\EE(\sum_{t=1}^{T}\sum_{j=1}^{p_{2}}c_{j}e_{t,ij}e_{t,i^{\prime}j^{\prime}})^{2}&\leq c\sum_{t,s=1}^{T}\sum_{j_{1},j_{2}=1}^{p_{2}}\text{Cov}(e_{t,ij_{1}}e_{t,i^{\prime}j^{\prime}},e_{s,ij_{2}}e_{s,i^{\prime}j^{\prime}})+c(\sum_{t=1}^{T}\sum_{j=1}^{p_{2}}\EE e_{t,ij}e_{t,i^{\prime}j^{\prime}})^{2}\\
			=O\left(Tp_{2}+T^{2}\right).
		\end{aligned}$$
		
	\end{proof}
	
	\begin{lemma} \label{lemma11}
		Suppose that $T, p_{1}, p_{2}$ tend to infinity, and $m_{1}=k_{1}, m_{2}=k_{2}$ are fixed. If  Assumption \ref{Assumption 1} (1), Assumption \ref{Assumption 2} and Assumption \ref{Assumption 4}-\ref{Assumption 7} hold, then we have
		$$(1) \ \|\sum_{t=1}^{T}\Eb_{t}\hat{\Cb}^{(s)}\Fb_{t}^{\top}\|_{F}^{2}=O_{p}\left(Tp_{1}p_{2}^{2}w_{c}^{(s)}+Tp_{1}p_{2}\right).$$
		$$(2) \ \|\sum_{t=1}^{T}\Rb^{\top}\Eb_{t}\hat{\Cb}^{(s)}\Fb_{t}^{\top}\|_{F}^{2}=O_{p}\left(Tp_{1}p_{2}^{2}w_{c}^{(s)}+Tp_{1}p_{2}\right).$$
		$$(3) \ \|\sum_{t=1}^{T}\Eb_{t}^{\top}\hat{\Rb}^{(s)}\Fb_{t}\|_{F}^{2}=O_{p}\left(Tp_{1}^{2}p_{2}w_{r}^{(s)}+Tp_{1}p_{2}\right).$$
		$$(4) \ \|\sum_{t=1}^{T}\Cb^{\top}\Eb_{t}^{\top}\hat{\Rb}^{(s)}\Fb_{t}\|_{F}^{2}=O_{p}\left(Tp_{1}^{2}p_{2}w_{r}^{(s)}+Tp_{1}p_{2}\right)$$
	\end{lemma}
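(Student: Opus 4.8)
All four bounds have the same structure, so I will spell out the argument for (1) and then indicate the (entirely parallel) modifications for (2)--(4). As in the proofs of Lemmas \ref{lemma9} and \ref{lemma10}, it suffices to treat $k_1=k_2=1$: then $\Fb_t$ is a scalar, $\hat\Cb^{(s)}$ and $\hat\Rb^{(s)}$ are $p_2$- and $p_1$-vectors, and $\hat\Hb_c^{(s)},\hat\Hb_r^{(s)}$ are scalars tending to $\pm1$. The idea is to write $\hat\Cb^{(s)}=\Cb\hat\Hb_c^{(s)}+\bv$ with $\bv:=\hat\Cb^{(s)}-\Cb\hat\Hb_c^{(s)}$, so that $\|\bv\|^2=\|\hat\Cb^{(s)}-\Cb\hat\Hb_c^{(s)}\|_F^2=p_2w_c^{(s)}$, and then to bound
\[
\Big\|\sum_{t=1}^{T}\Eb_t\hat\Cb^{(s)}\Fb_t^\top\Big\|_F^2\le 2\Big\|\sum_{t=1}^{T}\Eb_t\Cb\hat\Hb_c^{(s)}\Fb_t^\top\Big\|_F^2+2\Big\|\sum_{t=1}^{T}\Eb_t\bv\Fb_t^\top\Big\|_F^2
\]
by showing that the first ("baseline") term is $O_p(Tp_1p_2)$ and the second ("error-driven") term is $O_p(Tp_1p_2^2w_c^{(s)})$.

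For the baseline term, $\hat\Hb_c^{(s)}$ is a scalar that is asymptotically orthogonal, hence $O_p(1)$, and it does not depend on $t$; pulling it out leaves $\|\sum_t\Eb_t\Cb\Fb_t^\top\|_F^2=\|\sum_t\Fb_t\Cb^\top\Eb_t^\top\|_F^2=O(Tp_1p_2)$ by Lemma \ref{basic-lemma} (2). The only subtlety is that $\hat\Hb_c^{(s)}$ is data-dependent rather than deterministic, which is dealt with (as elsewhere in the paper) by working on the event $\{\|\hat\Hb_c^{(s)}\|_F\le M\}$, of probability arbitrarily close to one.

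The error-driven term is the crux. Expanding coordinates, the $i$-th entry of $\sum_t\Eb_t\bv\Fb_t^\top$ equals $\sum_{j=1}^{p_2}v_j\sum_{t=1}^{T}\Fb_te_{t,ij}=\sqrt{T}\sum_{j=1}^{p_2}v_j\,\bxi_{i,j}$, with $\bxi_{i,j}=\mathrm{Vec}(T^{-1/2}\sum_t\Fb_te_{t,ij})$ as in Assumption \ref{Assumption 7}; hence
\[
\EE\Big\|\sum_{t=1}^{T}\Eb_t\bv\Fb_t^\top\Big\|_F^2=T\sum_{i=1}^{p_1}\sum_{j,j'=1}^{p_2}v_jv_{j'}\,\EE(\bxi_{i,j}\otimes\bxi_{i,j'}).
\]
Bounding this by $\big(\sum_j|v_j|\big)^2\le p_2\|\bv\|^2$ times the uniform bound $\EE\|\bxi_{i,j}\|^2\le c$ from Assumption \ref{Assumption 7} (1) (applied with the standard-basis vectors) combined with Cauchy--Schwarz gives $\EE\|\sum_t\Eb_t\bv\Fb_t^\top\|_F^2\le cTp_1p_2\|\bv\|^2=O(Tp_1p_2\cdot p_2w_c^{(s)})$, which is exactly the claimed order. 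I expect this moment computation — translating the sum into the $\bxi_{i,j}$ notation of Assumption \ref{Assumption 7} and controlling the resulting quadruple sum — to be the main obstacle; in particular the crude step $\sum_{j,j'}|v_jv_{j'}|\le\|\bv\|_1^2\le p_2\|\bv\|_2^2$ is precisely what produces the extra factor $p_2$ in the stated rate, and one must be careful that the $\bxi_{i,j}$ moment bound holds uniformly in $(i,j)$.

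Statements (2)--(4) follow the identical template. For (2) one replaces $\Eb_t\bv$ by $\Rb^\top\Eb_t\bv=\sum_i r_i(\Eb_t\bv)_i$; the contraction against $\Rb$ over the $p_1$ row indices costs only an $O(1)$ factor because $\|\Rb\|_{\max}=O(1)$ by Assumption \ref{Assumption 6}, while the baseline term is handled by the $\|\sum_t\Fb_t\Cb^\top\Eb_t^\top\Rb\|_F^2=O(Tp_1p_2)$ item of Lemma \ref{basic-lemma} (2), again giving $O_p(Tp_1p_2+Tp_1p_2^2w_c^{(s)})$. For (3) and (4) the two directions are interchanged: write $\hat\Rb^{(s)}=\Rb\hat\Hb_r^{(s)}+\bv'$ with $\|\bv'\|^2=p_1w_r^{(s)}$, use the $\|\sum_t\Fb_t^\top\Rb^\top\Eb_t\|_F^2$ and $\|\sum_t\Fb_t^\top\Rb^\top\Eb_t\Cb\|_F^2$ items of Lemma \ref{basic-lemma} (2) (transposed) for the baseline, and note that the $\ell_1$-versus-$\ell_2$ loss is now $\|\bv'\|_1^2\le p_1\|\bv'\|_2^2$, which yields the factor $p_1^2p_2w_r^{(s)}$; for (4) the extra $\Cb^\top$-contraction over the $p_2$ indices is again absorbed by $\|\Cb\|_{\max}=O(1)$. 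This gives the rate $O_p(Tp_1^2p_2w_r^{(s)}+Tp_1p_2)$ in (3) and (4).
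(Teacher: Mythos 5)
Your overall plan coincides with the paper's: split $\hat\Cb^{(s)}=\Cb\hat\Hb_c^{(s)}+(\hat\Cb^{(s)}-\Cb\hat\Hb_c^{(s)})$, control the baseline term through Lemma \ref{basic-lemma} (2) together with $\|\hat\Hb_c^{(s)}\|_F=O_p(1)$, and let the error term carry the factor $\|\hat\Cb^{(s)}-\Cb\hat\Hb_c^{(s)}\|_F^2=p_2w_c^{(s)}$. However, your treatment of the error-driven term has a genuine flaw: you write $\EE\bigl\|\sum_{t}\Eb_t\bv\Fb_t^\top\bigr\|_F^2=T\sum_i\sum_{j,j'}v_jv_{j'}\EE(\bxi_{i,j}\otimes\bxi_{i,j'})$ and then bound it via $\|\bv\|_1^2\le p_2\|\bv\|_2^2$ and the moment bounds of Assumption \ref{Assumption 7}. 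But $\bv=\hat\Cb^{(s)}-\Cb\hat\Hb_c^{(s)}$ is a function of the whole sample (it depends on all the $\Eb_t$ and $\Fb_t$ through the iterative estimator), so it cannot be pulled outside the expectation as if it were deterministic; there is no conditioning argument that makes this legitimate, and indeed the right-hand side of your "expectation" bound is itself random. The same problem recurs in (2)--(4), where the claim that the contraction against $\Rb$ (or $\Cb$) "costs only an $O(1)$ factor because $\|\Rb\|_{\max}=O(1)$" is not a proof: an entrywise max-norm bound over $p_1$ indices would by itself cost a factor $p_1$; the cancellation that keeps the rate at $Tp_1p_2$ comes from Assumption \ref{Assumption 7} (1) via Lemma \ref{basic-lemma} (2), not from boundedness of the loadings.

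The repair is exactly the paper's one-line argument: apply Cauchy--Schwarz pathwise (deterministically) to separate the random weight from the noise-factor sum, e.g.
\begin{equation*}
\Bigl\|\sum_{t=1}^{T}\Eb_t\bigl(\hat\Cb^{(s)}-\Cb\hat\Hb_c^{(s)}\bigr)\Fb_t^\top\Bigr\|_F^2\le\Bigl\|\sum_{t=1}^{T}\Eb_t\Fb_t^\top\Bigr\|_F^2\,\bigl\|\hat\Cb^{(s)}-\Cb\hat\Hb_c^{(s)}\bigr\|_F^2 ,
\end{equation*}
and only then invoke the moment bounds $\EE\|\sum_t\Eb_t^\top\Fb_t\|_F^2=O(Tp_1p_2)$, $\EE\|\sum_t\Fb_t^\top\Rb^\top\Eb_t\|_F^2=O(Tp_1p_2)$, $\EE\|\sum_t\Cb^\top\Eb_t^\top\Fb_t\|_F^2=O(Tp_1p_2)$ from Lemma \ref{basic-lemma} (1)--(2), which involve only $(\Eb_t,\Fb_t,\Rb,\Cb)$ and hence are legitimate expectations. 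With $\|\hat\Cb^{(s)}-\Cb\hat\Hb_c^{(s)}\|_F^2=p_2w_c^{(s)}$ (resp. $\|\hat\Rb^{(s)}-\Rb\hat\Hb_r^{(s)}\|_F^2=p_1w_r^{(s)}$) this gives all four stated rates directly; note in particular that the extra factor $p_2$ (resp. $p_1$) comes from this Frobenius-norm normalization, not from an $\ell_1$-versus-$\ell_2$ loss as you suggest.
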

	
	\begin{proof}
		(1)
		$$\begin{aligned}
			\|\sum_{t=1}^{T}\Eb_{t}\hat{\Cb}^{(s)}\Fb_{t}^{\top}\|_{F}^{2}&\leq \|
			\sum_{t=1}^{T}\Eb_{t}(\hat{\Cb}^{(s)}-\Cb\hat{\Hb}_{c}^{(s)})\Fb_{t}^{\top}\|_{F}^{2}+\|\sum_{t=1}^{T}\Eb_{t}\Cb\hat{\Hb}_{c}^{(s)}\Fb_{t}^{\top}\|_{F}^{2}\\
			&\leq \|\sum_{t=1}^{T}\Eb_{t}\Fb_{t}^{\top}\|_{F}^{2}\|\hat{\Cb}^{(s)}-\Cb\hat{\Hb}_{c}^{(s)}\|_{F}^{2}+\|\sum_{t=1}^{T}\Eb_{t}\Cb\Fb_{t}^{\top}\|_{F}^{2}\|\hat{\Hb}_{c}^{(s)}\|_{F}^{2}\\
			&=O_{p}\left(Tp_{1}p_{2}^{2}w_{c}^{(s)}+Tp_{1}p_{2}\right).
		\end{aligned}$$
		
		(2)
		$$\begin{aligned}
			\|\sum_{t=1}^{T}\Rb^{\top}\Eb_{t}\hat{\Cb}^{(s)}\Fb_{t}^{\top}\|_{F}^{2}&\leq \|\sum_{t=1}^{T}\Rb^{\top}\Eb_{t}(\hat{\Cb}^{(s)}-\Cb\hat{\Hb}_{c}^{(s)})\Fb_{t}^{\top}\|_{F}^{2}+\|\sum_{t=1}^{T}\Rb^{\top}\Eb_{t}\Cb\hat{\Hb}_{c}^{(s)}\Fb_{t}^{\top}\|_{F}^{2}\\
			&\leq \|\sum_{t=1}^{T}\Fb_{t}^{\top}\Rb^{\top}\Eb_{t}\|_{F}^{2}\|\hat{\Cb}^{(s)}-\Cb\hat{\Hb}_{c}^{(s)}\|_{F}^{2}+\|\sum_{t=1}^{T}\Rb^{\top}\Eb_{t}\Cb\Fb_{t}^{\top}\|_{F}^{2}\|\hat{\Hb}_{c}^{(s)}\|_{F}^{2}\\
			&=O_{p}\left(Tp_{1}p_{2}^{2}w_{c}^{(s)}+Tp_{1}p_{2}\right).
		\end{aligned}$$
		
		(3) $$\begin{aligned}
			\|\sum_{t=1}^{T}\Eb_{t}^{\top}\hat{\Rb}^{(s)}\Fb_{t}\|_{F}^{2}&\leq \|\sum_{t=1}^{T}\Eb_{t}^{\top}(\hat{\Rb}^{(s)}-\Rb\hat{\Hb}_{r}^{(s)})\Fb_{t}\|_{F}^{2}+\|\sum_{t=1}^{T}\Eb_{t}^{\top}\Rb\hat{\Hb}_{r}^{(s)}\Fb_{t}\|_{F}^{2}\\
			&\leq \|\sum_{t=1}^{T}\Eb_{t}^{\top}\Fb_{t}\|_{F}^{2}\|\hat{\Rb}^{(s)}-\Rb\hat{\Hb}_{r}^{(s)}\|_{F}^{2}+\|\sum_{t=1}^{T}\Eb_{t}^{\top}\Rb\Fb_{t}\|_{F}^{2}\|\hat{\Hb}_{r}^{(s)}\|_{F}^{2}\\
			&=O_{p}\left(Tp_{1}^{2}p_{2}w_{r}^{(s)}+Tp_{1}p_{2}\right).
		\end{aligned}$$
		
		(4) $$\begin{aligned}
			\|\sum_{t=1}^{T}\Cb^{\top}\Eb_{t}^{\top}\hat{\Rb}^{(s)}\Fb_{t}\|_{F}^{2}&\leq \|\sum_{t=1}^{T}\Cb^{\top}\Eb_{t}^{\top}(\hat{\Rb}^{(s)}-\Rb\hat{\Hb}_{r}^{(s)})\Fb_{t}\|_{F}^{2}+\|\sum_{t=1}^{T}\Cb^{\top}\Eb_{t}^{\top}\Rb\hat{\Hb}_{r}^{(s)}\Fb_{t}\|_{F}^{2}\\
			&\leq \|\sum_{t=1}^{T}\Cb^{\top}\Eb_{t}^{\top}\Fb_{t}\|_{F}^{2}\|\hat{\Rb}^{(s)}-\Rb\hat{\Hb}_{r}^{(s)}\|_{F}^{2}+\|\sum_{t=1}^{T}\Cb^{\top}\Eb_{t}^{\top}\Rb\Fb_{t}\|_{F}^{2}\|\hat{\Hb}_{r}^{(s)}\|_{F}^{2}\\
			&=O_{p}\left(Tp_{1}^{2}p_{2}w_{r}^{(s)}+Tp_{1}p_{2}\right).
		\end{aligned}$$
		
	\end{proof}

\end{document}